\newif\ificmlmfs 
\newcommand{\maybeicml}[2]{%
  \ificmlmfs          
     {#1}
  \else                  
     {#2}
  \fi}
\setlist{leftmargin=10pt,labelindent=10pt,topsep=10pt}
\setlist[enumerate]{wide=0pt, leftmargin=10pt, labelwidth=10pt, align=left}
\def\1{\bm{1}}
\newcommand{\calA}{\ensuremath{\mathcal{A}}}
\newcommand{\calB}{\ensuremath{\mathcal{B}}}
\newcommand{\calD}{\ensuremath{\mathcal{D}}}
\newcommand{\calG}{\ensuremath{\mathcal{G}}}
\newcommand{\calM}{\ensuremath{\mathcal{M}}}
\newcommand{\calS}{\ensuremath{\mathcal{S}}}
\newcommand{\calU}{\ensuremath{\mathcal{U}}}
\newcommand{\calX}{\ensuremath{\mathcal{X}}}
\newcommand{\calY}{\ensuremath{\mathcal{Y}}}
\newcommand{\calZ}{\ensuremath{\mathcal{Z}}}
\let\E\undefined
\let\R\undefined
\newcommand{\E}{\mathbb{E}}
\newcommand{\R}{\mathbb{R}}
\renewcommand{\bar}[1]{\overline{#1}}
\newcommand{\norm}[1]{\left\| #1 \right\|}
\DeclareMathOperator*{\argmax}{arg\,max}
\DeclareMathOperator*{\argmin}{arg\,min}
\DeclareMathOperator{\dist}{dist}
\DeclareMathOperator{\pr}{\mathbb{P}}
\newcommand{\braces}[1]{\left\{ #1 \right\}}
\newcommand{\barx}{\bar{x}}
\newcommand{\ellphi}{\ell_{\Phi}}
\newcommand{\qg}{\mathrm{QG}}
\newcommand{\at}[1]{_{#1}}
\newcommand{\etay}{\tau_y}
\newcommand{\etax}{\tau_x}
\newcommand{\mux}{\mu_x}
\newcommand{\muy}{\mu_y}
\newcommand{\inprod}[2]{\left\langle #1, #2 \right\rangle}
\newcommand{\paren}[1]{\left( #1 \right)}
\newcommand{\Bigparen}[1]{\Big( #1 \Big)}
\newcommand{\hatgradfb}{\hat{g}}
\newcommand{\gradfb}{g}
\newcommand{\hatgradfby}{\hat{g}_y}
\newcommand{\gradfby}{g_y}
\newcommand{\hatgradfbx}{\hat{g}_x}
\newcommand{\gradfbx}{g_x}
\newcommand{\Lcinv}{L_{c^{-1}}}
\newcommand{\muqg}{\mu_{\mathrm{QG}}}
\newcommand{\proj}{{\Pi}}
\newcommand{\hatg}{\hat{g}}
\newcommand{\resx}[1]{\calG_{x,\etax}^{#1}}
\newcommand{\sresx}[1]{\hat{\calG}_{x,\etax}^{#1}}
\newcommand{\resy}[1]{\calG_{y,\etay}^{#1}}
\newcommand{\sresy}[1]{\hat{\calG}_{y,\etay}^{#1}}
\newcommand{\Tout}{T_{\mathrm{out}}}
\newcommand{\Tin}{T_{\mathrm{in}}}
\newcommand{\tin}{s}
\newcommand{\tout}{t}
\newcommand{\Llambda}{L_\lambda}
\newcommand{\elllambda}{\ell_\lambda}
\newcommand{\Llambdainv}{L_{\lambda^{-1}}}
\mathchardef\mhyphen="2D
\newcommand{\nepg}{\ensuremath{\mathtt{Nest\mhyphen PG}}}
\newcommand{\apgda}{\ensuremath{\mathtt{Alt\mhyphen PGDA}}}
\newcommand{\gdmax}{\ensuremath{\mathtt{GDmax}}}
\newcommand{\altgda}{\ensuremath{\mathtt{Alt\mhyphen GDA}}}
\newcommand{\ellphimu}{\ell_{\Phi,\mu}}
\newcommand{\diam}[1]{\mathrm{D}_{#1}}
\newcommand{\hatr}{\hat{r}}
\newcommand{\minrho}{{\min_s\varrho(s)}}
\newcommand{\mupl}{\mu_{\mathrm{PL}}}
\newcommand{\muplx}{\mu_{\mathrm{pl},x}}
\newcommand{\muply}{\mu_{\mathrm{pl},y}}
\newcommand{\muqgx}{\mu_{\mathrm{qg},x}}
\newcommand{\muqgy}{\mu_{\mathrm{qg},y}}
\newtheorem*{theorem*}{Theorem}
\theoremstyle{plain}
\newtheorem{theorem}{Theorem}[section]
\newtheorem{proposition}[theorem]{Proposition}
\newtheorem{lemma}[theorem]{Lemma}
\newtheorem{claim}[theorem]{Claim}
\newtheorem{corollary}[theorem]{Corollary}
\newtheorem{fact}{Fact}[section]
\theoremstyle{definition}
\newtheorem{definition}{Definition}
\newtheorem{assumption}{Assumption}
\newtheorem*{model*}{Model}
\newtheorem*{informallemma*}{Informal Leamm}
\theoremstyle{remark}
\newtheorem{remark}{Remark}
\crefname{lemma}{Lemma}{Lemmata}
\crefname{claim}{Claim}{claims}
\crefname{proposition}{Proposition}{Propositions}
\crefname{fact}{fact}{facts}
\crefname{line}{Lines}{lines}
\crefname{fact}{facts}{facts}
\crefname{assumption}{assumption}{assumptions}
\newcommand{\poly}{\mathsf{poly}}
\newcommand{\fk}[1]{}
\newcommand{\gp}[1]{}
\newcommand{\mv}[1]{}
\newcommand{\ig}[1]{}
\let\tempone\itemize
\let\temptwo\enditemize
\renewenvironment{itemize}{\tempone\addtolength{\itemsep}{-0.2\baselineskip}}{\temptwo}
\newcommand{\titleI}{Solving Zero-Sum Convex Markov Games}
\author[a]{Fivos Kalogiannis\thanks{Corresponding author: \texttt{fkalogiannis@ucsd.edu}}}
\author[b,c]{Emmanouil-Vasileios Vlatakis-Gkaragkounis}
\author[d]{Ian Gemp}
\author[d]{Georgios Piliouras}
\affil[a]{\small Department of Computer Science \& Engineering, University of California San Diego, La Jolla, CA, USA}
\affil[b]{Department of Computer Sciences, University of Wisconsin-Madison, Madison, WI, USA}
\affil[c]{Archimedes, Athena Research Center, Greece}
\affil[d]{Google DeepMind, London, UK}
\title{\titleI}
\begin{document}

\maketitle

\begin{abstract}
We contribute the first provable guarantees of global convergence to Nash equilibria (NE) in two-player zero-sum convex Markov games (cMGs) by using independent policy gradient methods. Convex Markov games, recently defined by~\citet{gemp2024convex}, extend Markov decision processes to multi-agent settings with preferences that are convex over occupancy measures, offering a broad framework for modeling generic strategic interactions. However, even the fundamental min-max case of cMGs presents significant challenges, including inherent nonconvexity, the absence of Bellman consistency, and the complexity of the infinite horizon. 

We follow a two-step approach. First, leveraging properties of hidden-convex–hidden-concave functions, we show that a simple nonconvex regularization transforms the min-max optimization problem into a nonconvex–proximal Polyak-\L ojasiewicz (NC-pPL) objective. 
Crucially, this regularization can stabilize the iterates of independent policy gradient methods and ultimately lead them to converge to equilibria. Second, building on this reduction, we address the general constrained min-max problems under NC-pPL and two-sided pPL conditions, providing the first global convergence guarantees for stochastic nested and alternating gradient descent-ascent methods, which we believe may be of independent interest.
\end{abstract}

\section{Introduction}
The field of multi-agent reinforcement learning (MARL)—often framed as Markov games (MGs) \citep{littman1994markov}---studies how multiple agents interact within a shared, dynamic environment to optimize their individual cumulative rewards \citep{silver2017mastering,lanctot2019openspiel}. However, many real-world applications require a more expressive formulation of agent preferences that do not simply decompose additively over time \citep{puterman2014markov,zahavy2021reward}. To address this limitation, the emerging framework of convex Markov games (cMGs) \citep{gemp2024convex} has been proposed as a principled yet flexible model for capturing complex multi-agent interactions in dynamic environments. 
Unlike traditional MGs, cMGs allow for convex utility functions over the state-action occupancy measure, enabling a richer set of players' preferences 
that better reflect practical applications.
In this context, the occupancy measure of each player reflects the frequency of visiting any particular state of their corresponding Markov decision processes (MDPs), over a {potentially infinite time horizon}.

In practice, cMGs are useful for modeling a variety of challenging multi-agent settings, including:
\begin{inparaenum}[(i)]
\item Fostering \emph{creativity} in machine gameplay, such as discovering novel strategies in chess \citep{zahavy2022discovering,zahavy2023diversifying,bakhtin2022mastering},
\item multi-step language model alignment \citep{wu2025multi},
\item enhancing multi-agent \emph{exploration} in robotic systems \citep{burgard2000collaborative,rogers2013coordination,tan2022deep},
\item ensuring \emph{safety} in autonomous driving \citep{shalev2016safe},
\item enabling \emph{imitation learning} from expert demonstrations, and
\item promoting \emph{robustness} and \emph{fairness}, in multi-agent decision-making \citep{hughes2018inequity}.
\end{inparaenum}
While the former (i-iii) are direct instantiations of cMGs, the latter (iv-vi) will profit when from a cMG formulation.

In general, the authoritative desirable outcome of a multi-agent scenario is some sort of game theoretic \emph{equilibrium}. Given the plethora of cMG applications, a natural question arises:~\textit{are there algorithmic solutions for provable equilibrium computation in these games already?} Surprisingly, this is not the case! Notwithstanding the model's appeal, an array of technical challenges arise that impede a straightforward algorithmic solution. Yet, empirical results suggest that variants of policy gradient methods \citep{williams1992simple} can actually lead to favorable outcomes. 
Following this thread, we consider the simplest reasonable setting, \emph{competition between two agents}, and pose our central question:
\begin{equation}
\parbox{40em}{\center\textit{\enquote{Do policy gradient methods converge in zero-sum convex Markov games?}\\~\\}}
    \tag{\ding{95}}
    \label{central-question}
\end{equation}
\label{sec:intro}
We answer this question in the affirmative. More explicitly,
we provide an algorithmic framework that is \textit{efficient}, adheres to the \textit{independent learning} desideratum~\citep{daskalakis2020independent}, and is simultaneously \textit{easy-to-implement}. 
This effectiveness  is powered by novel technical insights to address  numerous technical challenges native to cMGs.

Getting into the weeds, we enumerate the technical challenges that cMGs pose. As cMGs are a strict generalization of cMDPs, the Bellman consistency of the agents' utility functions fails to hold~\citep{zhang2020variational}---in short, we cannot define individual \textit{value} and \textit{action-value} functions. Remarkably, this is a family of long-horizon strategic interactions where agents \emph{cannot use dynamic programming} as opposed to conventional MGs. We note that, even though most policy optimization methods rely on a gradient-based approach, their majority implicitly performs an approximate dynamic programming subroutine such as value-iteration \citep{jin2021v,wei2021last,ding2022independent,erez2023regret}.

Consequently, the absence of state-wise value-functions translates into a failure of the elementary arguments that were used to prove the existence of Nash equilibria \citep{fink1964equilibrium}, even in two-player zero-sum Markov games \citep{shapley1953stochastic}. In light of the latter, \citet{gemp2024convex} prove the existence of Nash equilibria in cMGs by going beyond the fundamental toolbox of the Brouwer and the Kakutani fixed point theorems. What renders these conventional theorems obsolete is the inherent \emph{nonconvexity of the individual best-response mappings}
in terms of players' policies---where, each best-response mapping yields the set of the agent's set of utility maximizing strategy deviations.

The careful reader might already grasp that the failure of Bellman consistency rules out the seamless application of the majority of MARL algorithms; most of them rely on computing \textit{value} and \textit{action-value} functions. We direct our hopes to policy gradient methods which theory of cMDPs and practice of MARL suggest can work. Directly optimizing a policy means facing an optimization landscape that is nonconvex. However, since \emph{all local optima are also global in cMGs} should bring some hope. Nonetheless, why should policy gradient methods work in cMGs when vanilla gradient following methods cycle~\cite{bailey2018multiplicative} and exhibit non-convergent chaotic trajectories even in zero-sum two-player normal-form games? Observe that the latter are nothing but cMGs on a single state with individually linear utilities. 
Further, even if we stabilize gradient dynamics towards Nash equilibria, attaining strong convergence guarantees of an algorithm 
requires the deepening of our understanding of the optimization landscape of cMDPs and cMGs. For instance, in the min-max case, the computational complexity of the general problem of computing a saddle-point of a nonconvex-nonconcave smooth function remains far from being settled~\citep{daskalakis2021complexity}, while no algorithm is yet guaranteed to efficiently compute them without additional structural assumptions.  

Searching for structure, we observe that the utilities in cMGs exhibit a property known  as \textit{hidden convexity}~\citep{ben1996hidden,li2005hidden,wu2007peeling,xia2020survey,vlatakis2021solving,fatkhullin2023stochastic}. However, the existing results on hidden convexity either focus on single-objective minimization or, in the context of continuous games, make assumptions that do not hold in cMGs (separability of the hidden mapping and unconstrained variables). Having identified the key  difficulties of zero-sum cMGs, in the following section we offer a detailed account of how we tackle them. Briefly, we manage to overcome the aforementioned challenges using some distinct combination of the following algorithmic techniques:
\begin{inparaenum}[(i)]
    \item a conceptually simple but \emph{nonconvex regularization} of the utility function;
    \item \emph{alternating gradient iterations}~\citep{tammelin15solving,chavdarova2021tamingganslookaheadminmax,chavdarova19reducing,Zhang2019ConvergenceOG,Chambolle2011AFP};
    \item a careful \emph{timescale separation} of the individual step-size, and lastly \item \emph{nested-gradient iterations}.
\end{inparaenum}
We additionally note that regularization of the objective using a function that is strongly convex w.r.t. the occupancy measure has been explored broadly in the literature of RL \citep{ratliff2006maximum,syed2007game,ziebart2008maximum,peters2010relative,zimin2013online,neu2017unified,nachum2019algaedice,kalogiannis2024learning}.

\subsection{Technical Overview}
The proof of the proposed theoretical guarantees comes after a combination of several key observations---old and new. We believe that exposing them in the following itemized manner will serve in conveying the technical elements of our work and an intuitive overview of why our approach works. 
\begin{itemize}
    \item Hidden convex functions are nonconvex functions that can be expressed as a convex function of a reparametrization of their original arguments, termed the ``hidden (latent) mapping,'' while the original arguments are called ``control variables''. Yet, in some settings, like reinforcement learning (RL), this mapping is not at all hidden but directly computable or observable.
    \item More specifically, in cMGs, players' utilities are hiddenly concave through the state-action occupancy measure, which is similarly computable (in planning) or observable (in RL). 
    \item 
    Although analyzing these games via occupancy measures yields a quasi-variational inequality problem~\citep{kinderlehrer2000introduction}--unnecessarily increasing the technical challenges \citep{bernasconi2024role}---yet, two crucial facts remain: \begin{inparaenum}[(i)]
    \item the utility is concave with respect to the hidden mapping, and \item the hidden mapping is accessible. These properties ensure that after regularization, the hidden function becomes strongly concave. In turn, the resulting function satisfies the proximal Polyak-\L ojasiewicz (pPL) condition, with respect to the control variables (\textit{i.e.}, the policies).
    \end{inparaenum} 
    \item  Consequently, each player’s utility remains pPL while the feasibility sets for their control variables remain independent. This observation
    wraps up the discussion on
    hidden concavity and naturally shifts our focus on the constrained optimization of min-max nonconvex-pPL objectives. {In other words, we reduce solving two-player zero-sum convex Markov games to the problem of computing saddle-points of functions that are nonconvex-pPL over constrained domains.}
\item 
Leveraging this reduction, our final steps center on developing policy gradient methods that can provably converge to saddle-points of nonconvex-pPL functions over constrained domains, a technical development in optimization theory of independent interest as well.
\end{itemize}
Echoing \citep{nesterov2005smooth}, an alternative way to view our approach is as \textit{smooth minimization} of \textit{nonconvex nonsmooth functions} in the presence of \textit{hidden convexity}.  In short,~\citet{nesterov2005smooth} minimizes a convex nonsmooth objective, $\Phi(x) = \max_{y} f(x,y)$, where $f(x,\cdot)$ is concave for any $x$ by subtracting a strongly convex regularization term $\mu d(y)$. In turn, we are faced with the case of $f(x,\cdot)$ being hidden concave and we in turn add a term which is hidden strongly convex.

\subsection{Our Contributions} \
Before proceeding to enumerate all parts of our contributions, we remark that we deliver a definitive answer to question \eqref{central-question}:
\begin{theorem*} There exist decentralized policy gradient methods (\cref{alg:nepg,alg:apgda}) that compute an $\epsilon$-approximate Nash equilibrium for any $\epsilon>0$ in any two-player zero-sum convex Markov game using iterations and samples that are:
$$\poly \paren{\frac{1}{\epsilon}, |\calS|, |\calA|+ |\calB|, \frac{1}{1-\gamma}},$$
with $\calS$ denoting the cMG's state-space, $\calA, \calB$ the two players' action-spaces, and $\gamma>0$ the discount factor.
\end{theorem*}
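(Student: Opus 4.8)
The plan is to prove the theorem in two stages that mirror the technical overview: first reduce the game to a structured constrained min--max problem, and then analyze the two proposed schemes on that structure. Throughout, recall that an $\epsilon$-approximate Nash equilibrium is a policy pair at which neither player can improve its utility by more than $\epsilon$ via a unilateral deviation, so it suffices to produce an approximate saddle point of the min--max objective and to control every source of error in terms of $\epsilon$.

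\textbf{Stage 1: reduction to NC-pPL.} First I would rewrite each player's utility through the state--action occupancy measure $d^\pi$, using that the policy-to-occupancy map is onto the occupancy polytope and that the utility is concave in the maximizer's occupancy and convex in the minimizer's; this is exactly the hidden-convex--hidden-concave structure. I would then add a regularizer $\mu\, r(d^\pi)$ that is strongly convex in occupancy space (e.g.\ negative entropy or a squared distance) and establish two claims. (a) The regularized objective, viewed as a function of each player's policy over the simplex-product feasible set, satisfies a proximal Polyak--\L ojasiewicz inequality: strong concavity in the accessible hidden variable transfers to gradient domination in the control variable. (b) The regularization shifts the value of the game by only $O(\mu)$, so an exact saddle point of the regularized problem is an $O(\mu)$-approximate Nash equilibrium of the original game; choosing $\mu = \Theta(\epsilon)$ controls the bias. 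This completes the reduction to constrained min--max optimization of a nonconvex--pPL (indeed two-sided pPL) objective.

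\textbf{Stage 2: optimization.} With the NC-pPL structure in hand I would analyze \cref{alg:nepg} and \cref{alg:apgda} separately. For the nested scheme, the inner loop runs projected gradient ascent on the maximizing player; under the proximal-PL condition this converges linearly to the best response, and because the regularizer is strongly concave in the hidden variable the optimal occupancy is unique and Lipschitz in $x$, so the envelope $\Phi(x) = \max_y f(x,y)$ is smooth (a Danskin / Moreau-envelope argument). Outer projected gradient descent on the smooth $\Phi$ then reaches an approximate first-order stationary point, which hidden convexity upgrades to global optimality and hence to the saddle point. For the alternating scheme I would instead run a two-timescale argument: the fast player's iterate stays within an $O(\eta_{\text{fast}}/\eta_{\text{slow}})$ neighborhood of its best response by the pPL contraction, and a Lyapunov function combining the outer suboptimality with the inner tracking error decreases on average. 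In the stochastic setting I would replace exact gradients by unbiased estimators with variance polynomial in $|\calS|$, $|\calA|+|\calB|$, and $(1-\gamma)^{-1}$, and carry the same Lyapunov bookkeeping through; summing the per-iteration decrease and inverting the rate yields the claimed $\poly\paren{1/\epsilon, |\calS|, |\calA|+|\calB|, (1-\gamma)^{-1}}$ iteration and sample complexity.

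\textbf{Main obstacle.} The hard part will be Stage~1(a): verifying that the proximal-PL condition genuinely holds \emph{in policy space} despite the simplex constraints. The policy-to-occupancy Jacobian can be rank-deficient or ill-conditioned near the boundary of the simplex, so the transfer of strong concavity from the hidden variable into gradient domination in the control variable requires a quantitative lower bound on this map---essentially a distribution-mismatch / minorization constant of order $\min_s \varrho(s)$---that must be tracked explicitly to keep all constants polynomial. A secondary difficulty is that the envelope $\Phi$ is smooth only because of the regularization, with a smoothness modulus that degrades as $\mu \to 0$; balancing this degradation against the $O(\mu)$ bias is precisely what fixes the polynomial dependence on $1/\epsilon$.
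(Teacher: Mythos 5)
Your two-stage plan is essentially the paper's: hidden convexity/concavity through the occupancy measure, an occupancy-space strongly convex regularizer with $\mu=\Theta(\epsilon)$ that buys a proximal-PL structure while paying only $O(\epsilon)$ bias, Danskin-type smoothness of the envelope via Lipschitz continuity of the maximizers, gradient dominance to upgrade approximate stationarity to an approximate best response, and a Lyapunov/timescale-separation analysis for the alternating scheme, with the $\min_s\varrho(s)$ and $1/(1-\gamma)$ factors tracked exactly where you say they must be. One structural difference: the paper regularizes only the maximizing player's occupancy, so the general reduction lands on one-sided NC-pPL; two-sided pPL is reserved for the special case of hidden strongly concave utilities. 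Your symmetric regularization yielding two-sided pPL is a legitimate variant that would also give polynomial rates.

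However, two concrete steps in your Stage 2 would fail as written. First, you assume \emph{unbiased} gradient estimators, but the theorem requires \emph{decentralized} methods, and the regularizer is a function of an occupancy measure that depends on \emph{both} players' policies: a player cannot form an unbiased estimate of the gradient of the opponent-side term $\frac{\mu}{2}\norm{\lambda_2(x,y)}^2$ from its own trajectories without the opponent sharing policy or trajectory information. The paper resolves this by analyzing both algorithms under an inexact stochastic first-order oracle in which the regularization appears as a systematic bias $\delta_x = O(\mu L_\lambda)$, absorbed into the final accuracy because $\mu = O(\epsilon)$; your proof needs this bookkeeping (or must give up independence). Second, the claim that the estimators have variance polynomial in $|\calS|$, $|\calA|+|\calB|$, and $1/(1-\gamma)$ is false for REINFORCE-type estimators under direct parametrization: the score $\nabla_x \log x(a\,|\,s)$ is unbounded as policy entries approach the simplex boundary. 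The paper enforces $\varepsilon$-greedy parametrization to obtain variance bounds of order $1/\paren{(1-\gamma)^6\varepsilon^2}$, and must then separately bound the additional $O(\varepsilon)$ optimization error that this exploration introduces into the gradient-dominance and stationarity-proxy inequalities; without some such device the sample-complexity part of your argument does not go through.
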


Our contributions span two key areas: constrained nonconvex min-max optimization and equilibrium computation in convex Markov games.
\begin{itemize}
    \item In \cref{theorem:contiuity-of-maximizers-cmgs} we demonstrate that  the best-response mapping for an objective $f(x,y)$,
    $x\mapsto y^\star(x):= \argmax_{y\in\calY}f(x,y)$,
    is \textit{Lipschitz continuous} in $x$ for regularized hidden-convex---and more generally, NC-pPL games. The significance of this result lies in guaranteeing the stability of the iterates of policy gradient methods and serves as a suggestion to practitioners looking for a
     regularization technique that is intuitive, simple, and easily implementable.
    \item In \cref{sec:cmgs-main-results}, we provide the first provable guarantees of convergence to a Nash equilibirum for two policy gradient algorithms \nepg\ 
    and \apgda~(\cref{theorem:informal-cmg-nepg-hc}). 
\end{itemize}
\paragraph{Key ingredients.}  
To establish our results, we leverage a set of incorporating the distinct combination of the following non-trivial components:
\begin{inparaenum}[(i)]
    \item an intuitive stability-inducing specialized \emph{regularization} of the utility function and 
    \item \emph{alternating} or
    \item \emph{nested-gradient iterations}.
\end{inparaenum}
A notable element of our approach is that it ensures convergence even with \emph{inexact} gradients on top of being \emph{stochastic}.
The algorithm's robustness to gradient inexactness preserves each player’s autonomy, allowing them to optimize independently without exchanging private policy information. \textit{I.e.}, unlike the single-agent setting, where exact gradients can be stochastically estimated, our framework’s regularizer depends on \textit{both} players’ policies, making exact estimation infeasible without policy sharing. Consequently, each player must rely solely on an inexact estimation of their own gradient.

Finally, a noteworthy product of our approach is the Lipschitz continuity of the best-response mapping (which returns the optimal strategy given the opponent’s choices) in the hidden-convex case--and more generally nonconvex-pPL---despite the opponent's utility being nonconvex—contrasting with the typical $\tfrac{1}{2}$-H\"older continuity of maximizers in constrained optimization~\cite{li2014holder}. 
Indeed, while \citep{kalogiannis2024learning} employ a similar technique of regularizing the value function through the occupancy measure perspective, their result only establishes a weaker notion of continuity due to the \textit{coupledness} of individual players' state-action occupancy measure. This claim has been independently supported by \citep[Robust Berge Theorem]{papadimitriou2023computational}, which considers general nonconvex--strongly-concave functions where the feasibility sets of two different strategies depend on each other in a Hausdorff continuous manner. 
To significantly strengthen the continuity of the best-response mapping, we leverage the pPL condition in the individual policy spaces of agents that remain \emph{uncoupled}. This result was only known for the significantly simpler case of unconstrained two-sided PL functions \citep{nouiehed2019solving}.
\section{Preliminaries}
\paragraph{Notation.} In general, $x,y,z,u,v,w$ and $\lambda,\lambda_1,\lambda_2$ will denote vectors. Scalars will be denoted using $\alpha, \beta, \gamma, \delta, \epsilon, \zeta, \kappa, \mu, \nu$ and $a,b,c,d$. Matrices will be denoted with bold uppercase letters. The probability simplex supported on a finite set $\calM$ will be denote with $\Delta(\calM)$. For a compact convex set $\calZ\subseteq \R^d$, we will denote its Euclidean diameter as $\diam{\calZ}$, \textit{i.e.}, $\diam{\calZ}:= \max_{x,y\in\calZ}\norm{x-y}_2$. Lastly, the global optimum of a  function $f$ will be denoted as $f^\star$.

\label{sec:prelims}
\subsection{Convex Markov Games}
In this subsection, we define two-player zero-sum convex Markov games~\citep{gemp2024convex} and introduce necessary notation. We then present the \textit{occupancy measure} and remark its continuity properties. Subsequently, we review the \textit{policy gradient theorem} for convex MDPs and describe a stochastic policy gradient estimator. Finally, we define the perturbed utility function $U^\mu$, obtained by adding a regularization term to the original utility function $U$.

\begin{definition}[Two-player zero-sum cMG]
An infinite-horizon zero-sum convex Markov game is a tuple 
$
\Gamma =\paren{\calS, \calA, \calB, \pr, F, \gamma, \varrho }$:
\setlist{itemsep=2.5pt,parsep=0pt,topsep=5pt}
\begin{itemize}
    \item a maximizing and a minimizing player,
    \item a finite state space $\calS$, and an initial state distribution $\varrho \in \Delta(\calS)$,
    \item finite action spaces $\calA, \calB$,
    \item a state transition function $\pr: \calS \times \calA \times \calB \to \Delta(\calS)$,
    \item a discount factor $\gamma \in [0, 1)$, and
    \item two continuous utilities $F_1, F_2$ functions corresponding to each player's occupancy measure, \textit{i.e.}, there exist concave $F_1, F_2$
    \maybeicml{
                \begin{align}
                    F_1: \Delta(\calS \times \calA) \times \Delta(\calS \times \calB)\to \mathbb{R};\\
                    F_2: \Delta(\calS \times \calA) \times \Delta(\calS \times \calB)\to \mathbb{R},
                \end{align}
                }
                {
                \begin{align}
                    F_1: \Delta(\calS \times \calA) \times \Delta(\calS \times \calB)\to \mathbb{R};\quad \text{and} \quad
                    F_2: \Delta(\calS \times \calA) \times \Delta(\calS \times \calB)\to \mathbb{R},
                \end{align}
                }
    with $-F_1 = F_2 =: F$.    \label{def:cmg}
\end{itemize}
\end{definition}

With all this in hand, 
 we adopt the following standard assumptions, which are widely used in prior work \cite{zhang2020variational}:
\begin{assumption}
    For the initial state distribution, it holds that $\varrho(s)>0, \forall s$.
    \label{assum:fullsupport}
\end{assumption} 

Additionally, we assume \emph{direct policy parametrization} and define the Markovian and stationary policy of the minimizing and the maximizing player to be $x \in \Delta(\calA)^{|\calS|}=: \calX$ and $y \in \Delta(\calB)^{|\calS|}=: \calY$ respectively. Throughout, we only consider Markovian stationary policies. After the agents fix their policies, $x,y$, the transition matrix $\pr(x,y)\in \R^{|\calS|\times|\calS|}$ dictates how they traverse the state space. The occupancy measures are defined as:
\begin{align}
 \textstyle
    \lambda_{1}^{s,a} &:= (1-\gamma) \E_{x,y}\left[    \textstyle
 \sum_{h}^\infty \gamma^{h}\mathbbm{1}\{s^{(h)}=s,a^{(h)}=a\}|\varrho \right];\\
     \lambda_{2}^{s,b} &:=(1-\gamma) \E_{x,y}\left[    \textstyle
 \sum_{h}^\infty \gamma^{h}\mathbbm{1}\{s^{(h)}=s,b^{(h)}=b\}|\varrho \right].
\end{align}

\paragraph{Further notation.}
Further, we denote $\lambda$ to be the state-joint-action occupancy measure, $\lambda\in \Delta(\calS\times\calA\times\calB)$. Overloading notation, $\lambda(x, y)$ stands for the unique occupancy measure that corresponds to the policy pair $x,y$. Additionally, $\lambda_1 \in \Delta(\calS \times \calA), \lambda_2 \in \Delta (\calS \times \calB)$ will signify the marginal occupancy measures with respect to the minimizing and the maximizing player respectively. Again, overloading notation, $\lambda_1(x,y)$ and $\lambda_2(x,y)$ are the unique occupancy measures for a policy pair $x,y$. Finally, we will at times suppress the notation $F_1(\lambda_1(x,y);y)$ in place of $F_1(\lambda_1(x,y),\lambda_2(x,y))$ and similarly for $F_2$. 

 Crucially, both the occupancy measure and its inverse operators satisfy the following continuity properties:
\begin{lemma}[Continuity of the occupancy measure]\label{lemma:continuity}
Let $\lambda \in \Delta(\calS\times\calA\times\calB)$ be the occupancy measure in a (convex) Markov game and let  $\lambda_1^{-1}:\Delta(\calS\times\calA)\to\calX$ and $\lambda_2^{-1}:\Delta(\calS\times\calB)\to\calY$ be the occupancy-to-policy mapping such that:
$ \lambda_{1}^{-1}\paren{\lambda_1(x,y)}= x; 
  \lambda_{2}^{-1}\paren{\lambda_2(x,y)} = y
$. 
Then,
\begin{itemize}
    \item 
$\lambda$ is $\Llambda$-Lipschitz continuous and has an $\elllambda$-Lipschitz continuous gradient with respect to the policy pair $(x,y)$ in $\calX\times\calY$. Specifically, for all $(x,y)$ and $(x',y')$,
\begin{align}
\norm{\lambda(x,y) - \lambda(x',y')}
&\leq\Llambda\norm{(x,y) - (x',y')};\\
\norm{\nabla \lambda(x,y) - \nabla \lambda(x',y')}
&\leq \elllambda\norm{(x,y) - (x',y')},
\end{align}
where $\Llambda:=\frac{{|\calS|^{\frac{1}{2}}}\paren{|\calA|+|\calB|}}{(1-\gamma)^2}$, and $\elllambda:=  \frac{2\gamma{|\calS|^{\frac{1}{2}}}\paren{|\calA|+|\calB|}^{\frac{3}{2}}}{(1-\gamma)^3}$.
\item For any fixed $y$ (respectively, $x$), $\lambda^{-1}$ is $\Llambdainv$-Lipschitz continuous with respect to $\lambda_1$ (respectively, $\lambda_2$), \textit{i.e.}, for all $\lambda_1(x,y), \lambda_1(x',y)$---respectively, $\lambda_2(x,y), \lambda_2(x,y')$,
\begin{align}
\norm{x-x'}&\leq\Llambdainv \norm{\lambda_1^{-1}\paren{\lambda_1(x,y) - \lambda_1(x',y)}};\\
\norm{y-y'}&\leq\Llambdainv \norm{\lambda_2^{-1}\paren{\lambda_2(x,y) - \lambda_2(x,y')}},
\end{align}
with $\Llambdainv:=  \frac{2}{\min_s \varrho(s)(1-\gamma)}$.
\label{continuity}
\end{itemize}
\end{lemma}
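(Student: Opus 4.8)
The plan is to reduce both claims to the resolvent $(I-\gamma P(x,y))^{-1}$ of the state-transition matrix induced by the policy pair, and then to track how this resolvent — and the product that assembles $\lambda$ out of it — varies with $(x,y)$. Writing $P(x,y)\in\R^{|\calS|\times|\calS|}$ for the induced Markov chain on states, $P(x,y)_{s,s'} = \sum_{a,b} x_s(a)\,y_s(b)\,\pr(s'\mid s,a,b)$, the discounted state-occupancy vector is $d^{x,y} = (1-\gamma)(I-\gamma P(x,y)^\top)^{-1}\varrho$, and the joint occupancy factorizes as $\lambda^{s,a,b}(x,y) = d^{x,y}(s)\,x_s(a)\,y_s(b)$, with marginals $\lambda_1^{s,a} = d^{x,y}(s)\,x_s(a)$ and $\lambda_2^{s,b} = d^{x,y}(s)\,y_s(b)$. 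The first ingredient I would establish is the uniform resolvent bound $\norm{(I-\gamma P(x,y)^\top)^{-1}}_{1\to 1}\le \sum_{h\ge 0}\gamma^h = (1-\gamma)^{-1}$, which holds because $P$ is row-stochastic, so $P^\top$ is $\ell_1\to\ell_1$ nonexpansive and the Neumann series converges geometrically.

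For the first Lipschitz claim I would combine three estimates. First, $(x,y)\mapsto P(x,y)$ is affine in each argument with coefficients bounded by transition probabilities, hence Lipschitz, with the action cardinalities $|\calA|,|\calB|$ entering only through the conversion between coordinatewise and Euclidean norms. Second, the resolvent identity $(I-\gamma P)^{-1}-(I-\gamma P')^{-1} = \gamma (I-\gamma P)^{-1}(P-P')(I-\gamma P')^{-1}$ converts the uniform bound into a Lipschitz estimate on $d^{x,y}$: each resolvent factor contributes one power of $(1-\gamma)^{-1}$. Third, a product-rule argument on $\lambda = d\cdot(\text{policies})$ propagates this to $\lambda$. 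Collecting the terms and passing from the $\ell_1/\ell_\infty$ operator bounds to the Euclidean norms of the statement is what produces the dimensional factor $|\calS|^{1/2}(|\calA|+|\calB|)$ and the power $(1-\gamma)^{-2}$ in $\Llambda$. For the gradient-Lipschitz (smoothness) claim I would differentiate the resolvent identity a second time; since $\nabla\lambda$ is already linear in one resolvent times a derivative of $P$, bounding the difference of these expressions at $(x,y)$ and $(x',y')$ introduces an additional resolvent factor and an additional derivative of $P$. This accounts both for the extra power $(1-\gamma)^{-1}$ and for the strengthened exponent $3/2$ on $(|\calA|+|\calB|)$ appearing in $\elllambda$.

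For the inverse map I would exploit the explicit closed form $x_s(a) = \lambda_1^{s,a}/\sum_{a'}\lambda_1^{s,a'} = \lambda_1^{s,a}/d^{x,y}(s)$. The decisive role of \cref{assum:fullsupport} is to lower-bound the denominator uniformly: from the fixed-point relation $d^{x,y} = (1-\gamma)\varrho + \gamma P(x,y)^\top d^{x,y}$ and nonnegativity of the second term, one gets $d^{x,y}(s)\ge (1-\gamma)\varrho(s)\ge (1-\gamma)\minrho>0$ for all $(x,y)$. Lipschitz continuity of the quotient map $\lambda_1\mapsto x$ then follows from the elementary identity $p/q - p'/q' = (p-p')/q - p'(q-q')/(qq')$ combined with $|p'|\le q'$; each of the two resulting terms carries a factor bounded by $1/((1-\gamma)\minrho)$, and their sum yields exactly $\Llambdainv = 2/(\minrho(1-\gamma))$. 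The symmetric statement for the second player's marginal $\lambda_2\mapsto y$ is identical.

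The main obstacle I anticipate is bookkeeping rather than conceptual difficulty: the resolvent estimates live most naturally in operator norms where row-stochasticity gives clean, dimension-free bounds, whereas the target Lipschitz and smoothness constants are stated in Euclidean norm. Carrying out tight norm conversions — each introducing controlled powers of $|\calS|^{1/2}$ and $(|\calA|+|\calB|)$ — while keeping track of how many resolvent factors appear is what determines the precise advertised constants. The most delicate piece is the smoothness bound $\elllambda$, where the second differentiation of the resolvent must be controlled without loss so as to land on the stated $(1-\gamma)^{-3}$ dependence and the $3/2$ exponent.
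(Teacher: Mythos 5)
Your proposal is correct in substance, but you should know that the paper itself does not prove this lemma at all: its entire ``proof'' in the appendix is the citation \citep[Lemmata C.2 \& C.3]{kalogiannis2024learning}. What you have written is a self-contained reconstruction of the argument underlying those cited lemmata, and the route you take is the standard machinery for results of this type: the resolvent representation $d^{x,y}=(1-\gamma)(I-\gamma P(x,y)^{\top})^{-1}\varrho$, the Neumann-series bound $\|(I-\gamma P(x,y)^{\top})^{-1}\|_{1\to 1}\le (1-\gamma)^{-1}$ (valid because $P$ is row-stochastic), the resolvent perturbation identity to make $d^{x,y}$ Lipschitz in $(x,y)$, a product rule for $\lambda^{s,a,b}(x,y)=d^{x,y}(s)\,x_s(a)\,y_s(b)$, a second differentiation for the smoothness constant $\elllambda$, and the quotient form $x_s(a)=\lambda_1^{s,a}/d^{x,y}(s)$ together with the lower bound $d^{x,y}(s)\ge(1-\gamma)\varrho(s)$ (from \cref{assum:fullsupport}) for the inverse map. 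Your approach buys a verifiable, self-contained proof where the paper offers only a pointer; the paper's citation buys brevity and defers all constant-tracking to prior work.

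One concrete caution on the bookkeeping you correctly flag as the main risk. For the forward direction, the stated constants $\Llambda$ and $\elllambda$ carry visible slack (full powers of $|\calA|+|\calB|$, an extra $|\calS|^{1/2}$, extra powers of $(1-\gamma)^{-1}$), so the $\ell_1$-to-Euclidean conversions cannot sink you there. The inverse direction is tighter: $\Llambdainv = 2/(\minrho(1-\gamma))$ has no room for dimension factors, and your quotient identity delivers exactly the factor $2$ only in $\ell_1$-type norms. Writing $\delta_{s,a}:=\lambda_1^{s,a}(x,y)-\lambda_1^{s,a}(x',y)$, $q_s:=d^{x,y}(s)$ and $q'_s:=d^{x',y}(s)$, the identity gives $\|x_s-x'_s\|\le\frac{1}{q_s}\bigl(\|\delta_{s,\cdot}\|+|q_s-q'_s|\bigr)$, and in $\ell_1$ the term $\sum_s|q_s-q'_s|=\sum_s\bigl|\sum_a\delta_{s,a}\bigr|$ is absorbed directly by $\|\delta\|_1$, yielding the clean constant $2$. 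In the Euclidean norm that the statement's $\norm{\cdot}$ suggests, however, $|q_s-q'_s|\le\sqrt{|\calA|}\,\|\delta_{s,\cdot}\|$ is the best one can say in general, so the same argument produces $(1+\sqrt{|\calA|})/(\minrho(1-\gamma))$ rather than $2/(\minrho(1-\gamma))$. So either carry out the inverse bound entirely in $\ell_1$ and convert only at the very end, or accept the $\sqrt{|\calA|}$ factor; as stated, the advertised constant is really an $\ell_1$ statement, and this distinction propagates into the hidden-strong-concavity and pPL moduli used downstream.
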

Next, our solution concept corresponds to the following min-max Nash equilibrium of the following function $U:\Delta(\calA)^{|\calS|}\times \Delta(\calB)^{|\calS|}\to \R $ be such that:
\begin{align}
    U(x,y) &:= F\paren{\lambda_1(x,y),\lambda_2(x,y)}.
\end{align}

\begin{definition}[$\epsilon$-NE]
    A policy profile $(x^\star,y^\star)\in \calX\times \calY$ is said to be an $\epsilon$-approximate Nash equilibrium ($\epsilon$-NE), if for any $x\in\calX$ and any $y\in\calY$, it holds that:
    \begin{align}
         U(x^\star,y) - \epsilon \leq U(x^\star,y^\star) \leq U(x,y^\star)  + \epsilon.\tag{$\epsilon$-NE}\label{eq:NE}
    \end{align}
\end{definition}
Finally, we will denote $U^\mu$ to be:
    $U^\mu(x,y):= U\paren{\lambda(x,y)} - \frac{\mu}{2}\| \lambda_2(x,y) \|^2.$ The following Lemma is a direct consequence of hidden-strong-convexity.

\begin{lemma}
    When $\mu>0$, $U^\mu(x,\cdot)$ has a unique maximizer $y^\star(x)$, for all $x$.
\end{lemma}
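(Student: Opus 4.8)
The plan is to exploit the hidden-convexity structure by reparametrizing the inner maximization through the maximizing player's occupancy measure $\lambda_2$, thereby converting a problem that is \emph{nonconcave} in $y$ into a strongly concave problem in the latent variable. First I would fix $x\in\calX$ and observe that, with the opponent's policy frozen, the maximizing player faces a single-agent MDP with induced kernel $\tilde{\pr}(s'\mid s,b)=\sum_a x(a\mid s)\,\pr(s'\mid s,a,b)$. By the standard theory of occupancy measures, the attainable set of marginals $\calO_2(x):=\{\lambda_2(x,y):y\in\calY\}$ is precisely the occupancy polytope of this induced MDP, hence convex and compact, with the Bellman-flow constraints cutting it out being linear.

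Next I would record the two structural facts that drive the argument. The map $y\mapsto\lambda_2(x,y)$ is a bijection from $\calY$ onto $\calO_2(x)$: injectivity is exactly the content of the occupancy-to-policy operator $\lambda_2^{-1}$ of \cref{lemma:continuity}, while surjectivity is the definition of $\calO_2(x)$. Crucially, under \cref{assum:fullsupport} the state occupancy satisfies $d^s:=\sum_b\lambda_2^{s,b}\ge(1-\gamma)\varrho(s)>0$ for every $s$, so the normalization $y(b\mid s)=\lambda_2^{s,b}/\sum_{b'}\lambda_2^{s,b'}$ recovers the policy unambiguously (this also shows $\lambda_2^{-1}$ is independent of $x$). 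Moreover, for fixed $x$ the minimizing player's marginal depends \emph{affinely} on $\lambda_2$: since both marginals share the same state occupancy $d^s$, one has $\lambda_1^{s,a}=x(a\mid s)\sum_b\lambda_2^{s,b}$, an affine map $\lambda_1=A_x(\lambda_2)$.

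With these in hand I would reparametrize the objective. Writing $w=\lambda_2$ and $\Psi_x(w):=F\bigl(A_x(w),w\bigr)-\tfrac{\mu}{2}\|w\|^2$, the composition $w\mapsto F(A_x(w),w)$ is concave, because $F$ is concave (\cref{def:cmg}) and $A_x$ together with the identity form an affine map, while $-\tfrac{\mu}{2}\|w\|^2$ is $\mu$-strongly concave. Hence $\Psi_x$ is strongly concave on the convex compact set $\calO_2(x)$ and therefore admits a \emph{unique} maximizer $w^\star(x)$. Since $U^\mu(x,y)=\Psi_x(\lambda_2(x,y))$ and $y\mapsto\lambda_2(x,y)$ is a bijection onto $\calO_2(x)$, two distinct maximizers of $U^\mu(x,\cdot)$ would map to two distinct maximizers of $\Psi_x$, a contradiction; thus $y^\star(x)=\lambda_2^{-1}(w^\star(x))$ is the unique maximizer, as claimed.

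I expect the main obstacle to be the hidden-convexity reduction itself—namely, verifying that $\calO_2(x)$ is convex and that $\lambda_1$ is affine in $\lambda_2$ for fixed $x$. This is exactly what transports strong concavity from the latent occupancy space to a uniqueness statement in the (otherwise nonconcave) policy space; without it, $U^\mu(x,\cdot)$ possesses no concavity in $y$ and the conclusion would fail.
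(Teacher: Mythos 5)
Your proposal is correct and takes essentially the same approach as the paper: the paper obtains this lemma as a direct consequence of the hidden strong concavity of $U^\mu(x,\cdot)$ in the occupancy measure $\lambda_2$ (\cref{claim:reduction}), i.e., strong concavity of the latent objective on the convex occupancy polytope combined with invertibility of the policy-to-occupancy map (\cref{lemma:continuity}, under \cref{assum:fullsupport}), which is precisely your argument of pulling the unique latent maximizer back through the bijection. Your explicit verification that $\lambda_1(x,y)$ depends affinely on $\lambda_2(x,y)$ for fixed $x$ (both marginals sharing the same state occupancy) fills in the one detail that the paper's one-line justification of \cref{claim:reduction} leaves implicit.
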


\subsection{Policy Gradient Estimators}
As discussed, the inexistence of value or action-value functions for general utility MDPs, leads us to focus on policy gradient methods; the direct application of vanilla Q-learning methods is out of the question. To compute the policy gradient of a utility that is nonlinear in $\lambda_1$, we make use of the \citep{williams1992simple} along the chain rule of differentiation to write:
\begin{align}
    \nabla_{x} U(x,y) = \sum_{s\in\calS}\sum_{a\in\calA} \frac{\partial F_1 }{\partial \lambda_1^{s,a}(x,y)} \nabla_{x}\lambda_1^{s,a}(x,y).
\end{align}
By sampling trajectories, each agent can stochastically estimate the policy gradient using the following estimator.
\begin{definition}[Gradient Estimator] \label{def:state_action_grad_estimator}
    Given a trajectory $\xi=\paren{s^{(0)}, a^{(0)}}, s^{(1)}, \dots, s^{(H-1)},a^{(H-1)})$ of length $H$ sampled under a policy profile $x,y$ and initial distribution $\varrho$, the gradient estimator, $\hatgradfbx(\xi|x,z)$, is defined as:
    \maybeicml{
        \begin{align} \small
            &\hatgradfbx(\xi|x, z) := \\& \quad \small\sum_{h = 0}^{H - 1} \gamma^h  z\paren{s^{(h)}, a^{(h)}} \left(\sum_{h' = 0}^{h} \nabla_x\log x\paren{a^{(h')} | s^{(h')} } \small\right),
        \end{align}
        }
        {
         \begin{align} 
            \hatgradfbx(\xi|x, z) := \sum_{h = 0}^{H - 1} \gamma^h  z\paren{s^{(h)}, a^{(h)}} \left(\sum_{h' = 0}^{h} \nabla_x\log x\paren{a^{(h')} | s^{(h')} } \right),
    \end{align}
        }
    with $z = \nabla_{\lambda_1} F_1(\lambda_1(x,y);y)$.
\end{definition}
\paragraph{Sufficient exploration} In order to ensure that the agents sufficiently explore the environment and ultimately control the variance of the estimators, we assume that both agents are following $\varepsilon$-greedy direct policy parametrization, \textit{i.e.,} for a given parameter value $x\in\Delta^{|\calS|}(\calA)$, the agent plays according to $\pi_x = (1-\varepsilon)x + \frac{\varepsilon \bm{1}}{|\calA|}$, where $\bm{1}$ is an all-ones vector of appropriate dimension.

\subsection{Optimization Theory}
Next, we introduce several key concepts from nonconvex and min–max optimization, focusing on hidden convexity and gradient domination conditions. We demonstrate how strong hidden convexity implies the proximal \emph{Polyak-\L ojasiewicz condition} (pPL) and the \emph{quadratic growth condition} (QG). We then 
show that when a min–max objective satisfies a two-sided gradient domination condition, it enjoys a zero duality gap.

\begin{definition}[Hidden Convex Function]
    Consider a function $f:\calX\to \R$ where $\calX$ is a compact convex set such that $f(x):= H\paren{c(x)},~\forall x \in \calX$ for some mapping $c$ and function $H$. If the following conditions are satisfied:\label{def:hsc}
        \begin{itemize}
\setlist{itemsep=0.5pt,parsep=0pt,topsep=0pt}
        \item the mapping $c$ is invertible and its inverse $c^{-1}$ is $1/\mu_c$ Lipschitz continuous.
        \item the set  $\calU:= c(\calX)$ is convex and the function $H:\calU\to\R$ is $\mu_H$-strongly convex.
    \end{itemize}
    The function is said to be $(\mu_c, \mu_H)$-\emph{hidden strongly convex} (HSC), while for $\mu_H = 0$, it is referred merely  as \emph{hidden convex} (HC).
        
\end{definition}
Notably, the convergence analysis of our proposed methods begins with the following claim, which connects cMG utilities to hidden convexity.
\begin{claim}\label{claim:reduction}
The function $U$ is hidden convex (resp., hidden concave) for the min. player (resp., max. player), given a fixed action of the opponent. Similarly, the perturbed utility function \( U^\mu \) is \( (\mu,\Llambdainv) \)-hidden strongly concave for the max player due to the structure of the regularizer.
\end{claim}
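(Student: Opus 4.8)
The plan is to prove the claim by \emph{freezing the opponent's policy}, thereby reducing each player's problem to a single-agent convex MDP, and then exhibiting the optimizing player's own state--action occupancy measure as the hidden (latent) map $c$ of \cref{def:hsc}. I treat the maximizing player in detail; the minimizing player is symmetric. Fix $x\in\calX$ and set $c\defeq\lambda_2(x,\cdot):\calY\to\Delta(\calS\times\calB)$. Two facts about $c$ come for free: its image is the (convex) occupancy polytope of the MDP induced by $x$, and by \cref{lemma:continuity} the inverse $\lambda_2^{-1}(\cdot\,;x)$ is $\Llambdainv$-Lipschitz; invertibility here is exactly where \cref{assum:fullsupport} enters, since $\min_s\varrho(s)>0$ keeps every state reachable. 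This already furnishes the control-variable half of \cref{def:hsc}: an invertible hidden map with $\Llambdainv$-Lipschitz inverse.

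The substance of the proof is to show that $U(x,\cdot)$ is a \emph{concave} function of this hidden variable, and the only real difficulty is the dependence of the \emph{other} marginal $\lambda_1$ on $y$. The key structural observation is that once $x$ is frozen the coupling collapses: both marginals share the same state-occupancy $d^s=\sum_b\lambda_2^{s,b}=\sum_a\lambda_1^{s,a}$, and because the minimizing player uses the fixed conditional $\pi_x(a\mid s)$ we get $\lambda_1^{s,a}=\pi_x(a\mid s)\,\sum_b\lambda_2^{s,b}$, i.e. $\lambda_1=L_x\lambda_2$ for a fixed linear operator $L_x$. Consequently $U(x,y)=F\paren{L_x\lambda_2,\lambda_2}=\paren{F\circ(L_x,\mathrm{Id})}(\lambda_2)$ is the concave $F$ precomposed with an affine map, hence concave in $\lambda_2$. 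Writing $U(x,\cdot)=H_x(\lambda_2)$ with $H_x$ concave on the occupancy polytope establishes hidden concavity for the maximizing player; freezing $y$ instead and using the mirror relation $\lambda_2=L_y\lambda_1$ together with the minimizing player's own concave utility $F_1$ yields the stated hidden-convexity of $U=-F_1$ for the minimizing player.

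For the strongly concave refinement I would use that the regularizer is \emph{already expressed in the hidden variable}: $U^\mu(x,y)=H_x(\lambda_2)-\tfrac{\mu}{2}\norm{\lambda_2}^2$. Since $-\tfrac{\mu}{2}\norm{\cdot}^2$ is $\mu$-strongly concave and $H_x$ is concave, the sum $H_x^\mu$ is $\mu$-strongly concave on the (convex) occupancy polytope. Feeding the $\mu$-strong concavity of $H_x^\mu$ and the $\Llambdainv$-Lipschitz inverse of $\lambda_2(x,\cdot)$ into \cref{def:hsc} gives at once that $U^\mu(x,\cdot)$ is $(\mu,\Llambdainv)$-hidden strongly concave---the modulus $\mu$ being the strong-concavity constant of $H_x^\mu$ and $\Llambdainv$ the Lipschitz constant of the inverse hidden map. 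The earlier uniqueness of $y^\star(x)$ is then immediate: $H_x^\mu$ has a unique maximizer over the polytope, pulled back bijectively to a unique $y$.

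\textbf{Main obstacle.} The delicate step is \emph{not} any property of $F$ but the coupling of the two occupancy measures through the shared transition dynamics: a priori $\lambda_1$ and $\lambda_2$ both depend on both policies, so $U(x,\cdot)$ need not be concave in $y$. Everything rests on verifying the affine reduction $\lambda_1=L_x\lambda_2$---equivalently, that freezing one policy turns the game into a bona fide convex MDP whose occupancy map is a bijection---so that composition with $F$ preserves concavity. I would therefore devote most of the effort to making this reduction and the induced-MDP occupancy bijection precise; granting it, the HSC conclusion is a one-line application of \cref{def:hsc} and \cref{lemma:continuity}.
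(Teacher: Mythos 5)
Your proposal is correct in substance and follows the same route as the paper---indeed, the paper's entire proof of \cref{claim:reduction} is the one-line citation ``Follows from \cref{def:cmg,def:hsc} \& \Cref{lemma:continuity},'' which is exactly your skeleton: compose the concave utility with the occupancy map, invoke \Cref{lemma:continuity} (and \cref{assum:fullsupport}) for invertibility with an $\Llambdainv$-Lipschitz inverse, and observe that the regularizer $-\tfrac{\mu}{2}\norm{\lambda_2}^2$ is already expressed in the hidden variable, so it upgrades concavity of $H$ to $\mu$-strong concavity. What you add, and what the paper's citation suppresses, is the affine coupling $\lambda_1 = L_x\lambda_2$, i.e.\ $\lambda_1^{s,a} = \pi_x(a\mid s)\sum_b \lambda_2^{s,b}$: without it, the assertion ``$U(x,\cdot)$ is a concave function of $\lambda_2$'' is not even well posed, since the opponent's marginal also moves with $y$. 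That observation is the right (and necessary) way to turn the paper's citation into a proof, so your write-up is a strictly more careful version of the same argument rather than a different one.

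There is, however, one genuine issue you inherit and should fix rather than reproduce: your two halves rest on contradictory hypotheses. For the max player you use joint concavity of $F_2=F$ in $(\lambda_1,\lambda_2)$; for the min player, joint concavity of $F_1=-F$. Two functions that are jointly concave and negatives of each other are affine, so as literally written your proof covers only affine $F$. The weaker reading of \cref{def:cmg}---each $F_i$ concave only in player $i$'s own marginal---does not rescue the composition step either: $F(\lambda_1,\lambda_2)=\lambda_1^{\top}M\lambda_2$ is linear in each argument separately, yet $H_x(\lambda_2)=(L_x\lambda_2)^{\top}M\lambda_2$ is an indefinite quadratic form in general, so hidden concavity fails. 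The correct hypothesis, which your reduction in fact isolates, is concavity of $F$ along every consistent section $\braces{(L_xv,v): v}$ together with convexity along every section $\braces{(u,L_yu): u}$; this is non-vacuous (standard Markov-game rewards are linear on these sections) and is what both directions of the claim actually need. The ambiguity originates in \cref{def:cmg} itself, but since your proof leans on a specific reading, you should state explicitly which one you commit to and why it does not trivialize the class of games.
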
\vspace{-1em}
\begin{proof}
Follows  from \cref{def:cmg,def:hsc} \& \Cref{lemma:continuity}.
\end{proof}



\begin{proposition}[HC implies gradient domination; \citep{fatkhullin2023stochastic}] Let $f:\calX\to \R$ be an $\ell$-smooth and $(\mu_c,\mu_H)$-hidden convex function and $I_\calX$ be the indicator function of the set $\calX$. Further, let $F(\cdot):= f(\cdot) + I_{\calX}(\cdot)$. Also, assume that the map $c(\cdot)$ is continuously differentiable on $\calX$.

\begin{enumerate}[label=(\roman*)]
    \item  If the set $\calX = c(\mathcal{\calX})$ is bounded with diameter $\diam{\calU}$, then
    \begin{equation}
    \inf_{s_x \in \partial F(x)} \|s_x\| \geq     \frac{\mu_c }{\diam{\calU}} \label{eq:grad-domI}
 \left(F(x) - F^*\right), \quad \forall x \in \calX. 
    \end{equation}

    \item If $f(\cdot)$ is $(\mu_c, \mu_H)$-hidden strongly convex, then
    \begin{equation}
        \inf_{s_x \in \partial F(x)} \|s_x\|^2 \geq 2\mu_c^2\mu_H \left(F(x) - F^*\right), \quad \forall x \in \calX. \label{eq:grad-domII}
        \end{equation}
\end{enumerate}
\label{prop:hc_to_graddom}
\end{proposition}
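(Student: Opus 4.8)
The plan is to work in the hidden (latent) variable $u := c(x)\in\calU$, where the objective is genuinely (strongly) convex, and to transport a descent direction back to the control space $\calX$ through the Lipschitz inverse $c^{-1}$. Since $c:\calX\to\calU$ is a bijection with $f = H\circ c$, minimizing $F$ over $\calX$ coincides with minimizing $H$ over the compact convex set $\calU$; I would fix $u^\star\in\argmin_{\calU}H$ so that $F^* = H(u^\star)$ and $F(x)=H(u)$ for all $x\in\calX$. Writing $\partial F(x)=\nabla f(x)+N_\calX(x)$ with $N_\calX(x)$ the normal cone of $\calX$ at $x$, I would fix an arbitrary $s=\nabla f(x)+w\in\partial F(x)$, lower-bound $\norm{s}$, and take the infimum over $s$ only at the very end.

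The engine of the proof is the feasible curve $\psi(t):=c^{-1}\big((1-t)u+tu^\star\big)$, $t\in[0,1]$. Convexity of $\calU$ ensures $\psi(t)\in\calX$, and the $1/\mu_c$-Lipschitzness of $c^{-1}$ gives the crucial estimate $\norm{\psi(t)-x}\le\frac{t}{\mu_c}\norm{u-u^\star}$; note that this transports distances from $\calU$ back to $\calX$ \emph{without ever inverting the Jacobian of $c$}, so no diffeomorphism hypothesis is needed (indeed only differentiability of $f$, which holds by smoothness, actually enters). Two elementary observations then combine: (a) since $w\in N_\calX(x)$ and $\psi(t)\in\calX$ we have $\inprod{w}{\psi(t)-x}\le0$, whence $\inprod{s}{x-\psi(t)}\ge\inprod{\nabla f(x)}{x-\psi(t)}$; and (b) differentiability of $f$ at $x$ yields $\inprod{\nabla f(x)}{x-\psi(t)}=f(x)-f(\psi(t))+o(\norm{\psi(t)-x})$, where $f(x)-f(\psi(t))=H(u)-H\big((1-t)u+tu^\star\big)$. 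The (strong) convexity of $H$ lower-bounds this interpolation gap by $t\,(F(x)-F^*)+\frac{\mu_H}{2}t(1-t)\norm{u-u^\star}^2$.

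Chaining (a)--(b) with Cauchy--Schwarz and the Lipschitz estimate gives $t\,(F(x)-F^*)+\frac{\mu_H}{2}t(1-t)\norm{u-u^\star}^2-o(t)\le\inprod{s}{x-\psi(t)}\le\norm{s}\,\frac{t}{\mu_c}\norm{u-u^\star}$; dividing by $t$ and letting $t\to0^+$ (so that the $o(\norm{\psi(t)-x})=o(t)$ term vanishes) yields $\tfrac{\norm{s}}{\mu_c}\norm{u-u^\star}\ge (F(x)-F^*)+\tfrac{\mu_H}{2}\norm{u-u^\star}^2$. For part (i) I set $\mu_H=0$ and use $\norm{u-u^\star}\le\diam{\calU}$ to obtain $\norm{s}\ge\frac{\mu_c}{\diam{\calU}}(F(x)-F^*)$, then infimize over $s$ to get \eqref{eq:grad-domI}. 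For part (ii), writing $r=\norm{u-u^\star}$ and $G=F(x)-F^*$, the bound reads $\norm{s}\ge\mu_c\big(\tfrac{G}{r}+\tfrac{\mu_H r}{2}\big)$, and minimizing the right-hand side over $r>0$ by AM--GM gives $\norm{s}\ge\mu_c\sqrt{2\mu_H G}$, i.e.\ \eqref{eq:grad-domII} (the degenerate case $r=0$ is immediate, since then $F(x)=F^*$). I expect the main obstacle to be the rigorous treatment of the constrained subdifferential: one must route the argument entirely through the normal-cone inequality and the feasible curve $\psi$, resisting the temptation to invert $Dc$, and keep the first-order remainder $o(t)$ under control so that it disappears in the limit.
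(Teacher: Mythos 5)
Your proof is correct. A point of reference first: the paper itself contains no proof of this proposition --- it is imported verbatim from \citet{fatkhullin2023stochastic} (the appendix likewise just invokes Proposition 2(ii) of that reference when deriving pPL from HSC), so the comparison here is against the cited source rather than against an in-paper argument. Your latent-space interpolation argument --- pulling the segment $(1-t)u + t u^\star \subseteq \calU$ back through $c^{-1}$ to get a feasible curve $\psi(t) \in \calX$, killing the normal-cone component via $\inprod{w}{\psi(t)-x}\le 0$, lower-bounding the interpolation gap by (strong) convexity of $H$, and closing with Cauchy--Schwarz, the Lipschitz estimate $\norm{\psi(t)-x}\le \tfrac{t}{\mu_c}\norm{u-u^\star}$, and a $t\to 0^+$ limit --- is essentially the canonical proof of this result, and every step checks out: the sum rule $\partial F(x)=\nabla f(x)+N_\calX(x)$ is valid since $f$ is smooth and $\calX$ is convex, the AM--GM step in part (ii) is legitimate because the particular value $r=\norm{u-u^\star}$ dominates the minimum over all $r>0$, and the degenerate case $r=0$ is handled. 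Two cosmetic remarks: since $f$ is $\ell$-smooth you could replace the $o(\norm{\psi(t)-x})$ bookkeeping by the explicit remainder bound $\tfrac{\ell}{2}\norm{\psi(t)-x}^2 \le \tfrac{\ell t^2}{2\mu_c^2}\norm{u-u^\star}^2 = O(t^2)$, which makes the limit argument mechanical; and your observation that continuous differentiability of $c$ is never used is accurate --- your argument needs only differentiability of $f$, Lipschitzness of $c^{-1}$, and convexity of $\calU$ and $H$, so it proves the statement under slightly weaker hypotheses than stated.
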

\begin{definition}[pPL] Let an $\ell$-smooth function $f:\calX\to\R$ defined over the convex and compact set $\calX\subseteq \R^d$. Let $\calD_{\calX}(\cdot, \ell)$ be defined as:
$$\calD_{\calX}(x, \ell):= -2\ell \min_{y\in \calX}\braces{\inprod{\nabla f(x)}{y-x}  + \frac{\ell}{2}\norm{x - y}^2 }. $$
Then, $f$ is said to satisfy the proximal Polyak-\L ojasiewicz condtion with modulus $\mu$ if for all $x\in\calX$, it holds true that:
\begin{equation}\frac{1}{2}\calD_{\calX}(x,\ell) \geq \mu \paren{f(x) - f^*}. \label{eq:grad-domIII}\end{equation}
\end{definition}
Our following lemma establishes a variant of the ``gradient dominance'' for the case of convex Markov games.
Namely we show that an approximate constrained first-order stationary point ensures approximately optimal policies in our game. 

\begin{lemma}[Gradient Dominance]
    For a zero-sum convex Markov game, it holds that:
    \begin{align}
         \max_{x'\in \calX }\inprod{\nabla_x U(x,y)}{x- x' } &\geq \mu_x\paren{U(x,y) - U(x^\star,y) } ;\\
         \max_{y'\in \calY }\inprod{\nabla_y U(x,y)}{y' -y} &\geq \mu_y\paren{U(x,y^\star) - U(x,y) },
    \end{align}
    for $\mu_x, \mu_y = \frac{(1-\gamma )\min_s \rho(s) }{2\sqrt{2}}$.
\end{lemma}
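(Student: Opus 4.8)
The plan is to prove the two inequalities symmetrically, exploiting the hidden convexity established in \Cref{claim:reduction}; I detail the minimizing player's bound, the maximizer being identical after passing to $-U$ and $\lambda_2$. Throughout I fix the opponent's policy $y$ and write $f(\cdot):=U(\cdot,y)$, $c(\cdot):=\lambda_1(\cdot,y)$, $u:=c(x)$, and $\calU:=c(\calX)$. By \Cref{claim:reduction} together with \Cref{lemma:continuity} we have $f=H\circ c$ with $H:\calU\to\R$ convex on the convex set $\calU\subseteq\Delta(\calS\times\calA)$ and with $c^{-1}$ being $\Llambdainv$-Lipschitz, i.e. $f$ is $(\mu_c,0)$-hidden convex for $\mu_c=1/\Llambdainv=\minrho(1-\gamma)/2$. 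Let $x^\star=\argmin_{x'\in\calX}f(x')$ and $u^\star=c(x^\star)=\argmin_{u'\in\calU}H(u')$, so that $U(x,y)-U(x^\star,y)=H(u)-H(u^\star)\ge 0$.

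First I would move the suboptimality gap into occupancy-measure space, where convexity is actually available: since $H$ is convex, $H(u)-H(u^\star)\le\inprod{\nabla H(u)}{u-u^\star}$. The difficulty is that this statement lives in the occupancy variable $u$, whereas the left-hand side of the lemma lives in policy space, and because $c$ is a \emph{nonlinear} reparametrization the chain rule $\nabla_x f(x)=\nabla c(x)^\top\nabla H(u)$ does not let one identify $\inprod{\nabla_x f(x)}{x-x'}$ with any occupancy-space inner product. To bypass this I would build a pullback curve: set $u_t:=(1-t)u+t u^\star\in\calU$ for $t\in[0,1]$ (using convexity of $\calU$) and $x_t:=c^{-1}(u_t)\in\calX$, a feasible curve with $x_0=x$ and $x_1=x^\star$. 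Along it $\phi(t):=f(x_t)=H(u_t)$ is convex in $t$, since $u_t$ is affine and $H$ is convex.

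The key step is to read off a feasible descent direction at $x$. Convexity of $\phi$ gives $-\phi'(0)\ge\phi(0)-\phi(1)=U(x,y)-U(x^\star,y)$, while $\phi'(0)=\inprod{\nabla_x f(x)}{\dot x_0}$ with $\dot x_0=\tfrac{d}{dt}c^{-1}(u_t)\big|_{t=0}$ the tangent of the curve. Because $c^{-1}$ is $\Llambdainv$-Lipschitz its Jacobian has operator norm at most $\Llambdainv$, so $\norm{\dot x_0}\le\Llambdainv\norm{u^\star-u}\le\Llambdainv\,\diam{\calU}$, and $\diam{\calU}\le\sqrt 2$ because $\calU\subseteq\Delta(\calS\times\calA)$. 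Hence the unit feasible direction $d:=\dot x_0/\norm{\dot x_0}\in T_{\calX}(x)$ satisfies $\inprod{\nabla_x f(x)}{-d}\ge (U(x,y)-U(x^\star,y))/\norm{\dot x_0}\ge\tfrac{1}{\sqrt 2\,\Llambdainv}\paren{U(x,y)-U(x^\star,y)}=\mux\paren{U(x,y)-U(x^\star,y)}$, with $\mux=\minrho(1-\gamma)/(2\sqrt 2)$. Finally, since $x_t\in\calX$ traces a feasible curve tangent to $d$ and $\calX$ is a polytope (a product of simplices), this descent direction is realized by a genuine competitor $x'\in\calX$ in $\mathrm{cone}(\calX-x)$, which certifies $\max_{x'\in\calX}\inprod{\nabla_x U(x,y)}{x-x'}\ge\mux\paren{U(x,y)-U(x^\star,y)}$.

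I expect the main obstacle to be precisely this last transfer from a feasible \emph{direction} to the constrained first-order gap $\max_{x'\in\calX}\inprod{\nabla_x U(x,y)}{x-x'}$: the nonlinearity of $c$ means $x_t-x$ is only tangent to $\dot x_0$ to first order, so one must argue through the tangent cone of the polytope $\calX$ (where $\dot x_0\in T_{\calX}(x)=\mathrm{cone}(\calX-x)$ yields a proportional feasible competitor whose step length is controlled by $\norm{\dot x_0}$), or, equivalently, invoke the mechanism behind \Cref{prop:hc_to_graddom}(i), whose constant $\mu_c/\diam{\calU}$ is exactly $\mux$. The maximizing-player inequality is identical after replacing $f$ by $-U(x,\cdot)$, $c$ by $\lambda_2(x,\cdot)$, $\calX$ by $\calY$, and $\mux$ by $\muy$, using hidden concavity from \Cref{claim:reduction}; the constant $\muy$ coincides with $\mux$ because both $\Llambdainv$ and the simplex diameter $\sqrt 2$ are symmetric across the two players.
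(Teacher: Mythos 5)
Your overall route coincides with the paper's in substance: fix $y$, use hidden convexity of $U(\cdot,y)$ (\cref{claim:reduction} together with \cref{lemma:continuity}), and run the gradient-domination mechanism with constant $\mu_c/\diam{\calU}=\tfrac{1}{\sqrt{2}\,\Llambdainv}=\mu_x$ --- indeed, your pullback-curve computation is essentially a re-derivation of the proof of \cref{prop:hc_to_graddom}(i), which the paper simply cites as a black box. That part of your argument is sound (modulo the minor point that $\dot x_0$ requires differentiability, not just Lipschitzness, of $c^{-1}$, which does hold here): it establishes $\inprod{-\nabla_x U(x,y)}{\dot x_0}\ge U(x,y)-U(x^\star,y)$ for a tangent vector $\dot x_0\in T_{\calX}(x)$ with $\norm{\dot x_0}\le\sqrt{2}\,\Llambdainv$, which by conic duality lower-bounds the subdifferential distance $\min_{s\in\partial_x U(x,y)+\partial I_{\calX}(x)}\norm{s}=\max_{d\in T_{\calX}(x),\,\norm{d}\le 1}\inprod{-\nabla_x U(x,y)}{d}$.

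The genuine gap is in your final transfer step. You claim that $\dot x_0\in T_{\calX}(x)=\mathrm{cone}(\calX-x)$ ``yields a proportional feasible competitor whose step length is controlled by $\norm{\dot x_0}$.'' Cone membership only gives $\dot x_0=\alpha(z-x)$ for \emph{some} $z\in\calX$ and \emph{some} $\alpha>0$, with no upper bound on $\alpha$: a polytope can extend an arbitrarily short distance from $x$ along a tangent direction (in $\Delta(\{1,2\})$ at $x=(1-\varepsilon,\varepsilon)$, the unit tangent direction $(e_1-e_2)/\sqrt{2}$ admits a feasible step of length only $\varepsilon\sqrt{2}$). The bound you would obtain, $\inprod{\nabla_x U(x,y)}{x-z}\ge\tfrac{1}{\alpha}\paren{U(x,y)-U(x^\star,y)}$, therefore carries an uncontrolled constant, and the passage from the tangent direction to the constrained maximum $\max_{x'\in\calX}\inprod{\nabla_x U(x,y)}{x-x'}$ does not go through as argued; note also that curvature of the curve $t\mapsto x_t$ prevents substituting the feasible points $x_t$ themselves without losing linearity in the gap. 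The paper handles exactly this step differently: it keeps the conclusion of \cref{prop:hc_to_graddom}(i) in subdifferential form and then invokes \citet[Proposition 8.32]{rockafellar2009variational} to identify $\min_{s_x\in\partial_x U(x,y)+\partial I_{\calX}(x)}\norm{s_x}$ with $\max_{x'\in\calX,\norm{x-x'}\le 1}\inprod{\nabla_x U(x,y)}{x-x'}$, after which the norm constraint is dropped. Your fallback sentence (``invoke the mechanism behind \cref{prop:hc_to_graddom}(i)'') only reaches the subdifferential bound; the duality identification with the feasible-chord maximum is precisely the missing bridge, so as written your proof is incomplete at the one place where it deviates from the paper.
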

\begin{proof}
Fix an arbitrary $y\in\mathcal{Y}$. By the hidden convexity of $U(\cdot,y)$ and \cref{prop:hc_to_graddom}, we have:
\maybeicml{
            \begin{align}
                &\frac{(1-\gamma)\min_{s}\varrho(s) }{2\sqrt{2}}
                \paren{U(x,y) -\min_{x^\star\in\calX } U(x^\star,y)} \\
                &\quad\leq \min_{s_x \in \partial_x U(x,y) + \partial_x I_{\calX}(x) }\norm{s_x}.
            \end{align}}
        {
            \begin{align}
                \frac{(1-\gamma)\min_{s}\varrho(s) }{2\sqrt{2}}
                \paren{U(x,y) -\min_{x^\star\in\calX } U(x^\star,y)} 
                \leq \min_{s_x \in \partial_x U(x,y) + \partial_x I_{\calX}(x) }\norm{s_x}.
            \end{align}
    }
Where, ${\sqrt{2}}$ is the diameter of the state-action occupancy measure. 
Applying \citep[Proposition 8.32]{rockafellar2009variational}, we obtain:
\begin{align}
    \min_{s_x \in \partial_x U(x,y) + \partial_x I_{\calX}(x) }\norm{s_x}
    &= \max_{\substack{x' \in \calX, \\\norm{x - x'}\leq 1}}\inprod{\nabla U(x,y)}{x-x'} \\
    &\leq \max_{x' \in \calX}\inprod{\nabla U(x,y)}{x-x'}.
\end{align}
Thus, we set $\mu_x :=\frac{(1-\gamma)\min_{s}\varrho(s) }{2\sqrt{2}}$, and by symmetry, the same holds for $\mu_y$.
\end{proof}
Finally, in the Appendix we show that both HSC and pPL imply QG, which states that the optimality gap at any point \( x \) is lower-bounded by a quadratic term in its distance from the minimizer. This ensures that progress toward optimality can bound the proximity to the solution.
\begin{proposition}[QG from HSC and pPL]
\label{prop:QG_unified}
Let \( f:\mathcal{X} \to \mathbb{R} \) be \( \ell \)-smooth and satisfy either:  
(i) $(\mu_c, \mu_H)$-hidden strong convexity (HSC), or  
(ii) proximal Polyak-\L ojasiewicz (pPL) with modulus $\mu$.  
Then, \( f \) satisfies the \emph{quadratic growth (QG) condition}:
\begin{equation}
    f(x) - f(x^\star) \geq \frac{\mu_{\qg}}{4} \|x - x^\star_p\|^2, \quad \forall x \in \mathcal{X},
\end{equation}
where \( x^\star_p \) is the closest minimizer   in \( \mathcal{X}^\star = \argmin_{x \in \mathcal{X}} f(x) \), and  
\begin{equation}
    \mu_{\qg} = \mu_c^2\mu_H \quad \text{(HSC)}, \quad \mu_{\qg} = \mu \quad \text{(pPL)}.
\end{equation}
\end{proposition}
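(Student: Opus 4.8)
The plan is to treat the two sufficient conditions separately, since each supplies its own modulus and the hidden-strongly-convex case is essentially a direct pullback while the proximal-PL case carries the real work. For the HSC case, I would write $f = H\circ c$ as in \cref{def:hsc} and let $u^\star := \argmin_{u\in\calU} H(u)$, which is unique because $H$ is $\mu_H$-strongly convex on the convex set $\calU = c(\calX)$; correspondingly $x^\star := c^{-1}(u^\star)$ is the unique minimizer of $f$ over $\calX$, so here $x^\star_p = x^\star$. First I would combine the first-order optimality of the constrained minimizer, $\inprod{\nabla H(u^\star)}{u-u^\star}\geq 0$ for all $u\in\calU$, with $\mu_H$-strong convexity to get $H(u)-H(u^\star)\geq \tfrac{\mu_H}{2}\norm{u-u^\star}^2$. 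Then I would transport this inequality back to the control variable via the $1/\mu_c$-Lipschitzness of $c^{-1}$: taking $u=c(x)$ gives $\norm{x-x^\star}\leq \tfrac{1}{\mu_c}\norm{u-u^\star}$, hence $\norm{u-u^\star}^2\geq \mu_c^2\norm{x-x^\star}^2$, and combining the two yields $f(x)-f(x^\star)\geq \tfrac{\mu_c^2\mu_H}{2}\norm{x-x^\star}^2$, which is even stronger than the claimed bound with $\mu_{\qg}=\mu_c^2\mu_H$.

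For the pPL case I would run the classical PL$\Rightarrow$QG (trajectory-length) argument. Fix an arbitrary $x\in\calX$ and define the auxiliary proximal-gradient sequence with step $1/\ell$, $x_0=x$ and $x_{k+1}=\argmin_{y\in\calX}\{\inprod{\nabla f(x_k)}{y-x_k}+\tfrac{\ell}{2}\norm{y-x_k}^2\}$, noting $x_{k+1}$ is exactly the minimizer realizing $\calD_{\calX}(x_k,\ell)$. Three ingredients drive the estimate, with $r_k:=f(x_k)-f^\star$ and $d_k:=x_{k+1}-x_k$. First, $\ell$-smoothness gives the descent $f(x_{k+1})\leq f(x_k)-\tfrac{1}{2\ell}\calD_{\calX}(x_k,\ell)$, which with the pPL inequality $\tfrac12\calD_{\calX}(x_k,\ell)\geq \mu\,r_k$ yields the linear decay $r_{k+1}\leq(1-\tfrac{\mu}{\ell})r_k$. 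Second, the optimality condition of the prox subproblem (tested at $y=x_k$) gives $\inprod{\nabla f(x_k)}{d_k}\leq -\ell\norm{d_k}^2$, so $\calD_{\calX}(x_k,\ell)\geq \ell^2\norm{d_k}^2$ and therefore the step length is controlled by the decrease, $\norm{d_k}\leq \sqrt{\tfrac{2}{\ell}(r_k-r_{k+1})}$. Third, using $r_k-r_{k+1}\geq \tfrac{\mu}{\ell}r_k$ I would bound each step by a telescoping square-root difference, $\norm{d_k}\leq C(\sqrt{r_k}-\sqrt{r_{k+1}})$ with $C=\Theta(\mu^{-1/2})$, so that $\sum_k\norm{d_k}\leq C\sqrt{r_0}$.

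The final step is to observe that $\sum_k\norm{d_k}<\infty$ makes $(x_k)$ a Cauchy sequence, hence convergent to some $x^\star\in\calX^\star$ (a genuine minimizer, since $r_k\to 0$ and $f$ is continuous on the compact $\calX$); then $\norm{x-x^\star_p}\leq\norm{x_0-x^\star}\leq\sum_k\norm{d_k}\leq C\sqrt{r_0}$, and squaring and rearranging gives $f(x)-f^\star\geq \tfrac{\mu_{\qg}}{4}\norm{x-x^\star_p}^2$ with $\mu_{\qg}=\mu$. I expect the pPL direction to be the main obstacle on two fronts: keeping the telescoping constant sharp enough to reach the stated modulus, since the discrete iteration is slightly lossy (passing to the continuous proximal-gradient flow, where the same estimates integrate cleanly, recovers a cleaner constant comfortably exceeding $\mu/4$), and rigorously justifying that the auxiliary sequence converges to a point of $\calX^\star$ so that its trajectory length legitimately upper-bounds the distance to the \emph{nearest} minimizer $x^\star_p$. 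The HSC case, by contrast, should be immediate once the pullback is set up.
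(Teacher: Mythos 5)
Your HSC half coincides with the paper's own proof (its Proposition ``QG from HSC''): first-order optimality of $u^\star$ combined with $\mu_H$-strong convexity of $H$ gives $H(u)-H(u^\star)\geq\frac{\mu_H}{2}\norm{u-u^\star}^2$, and the $1/\mu_c$-Lipschitzness of $c^{-1}$ pulls this back to $f(x)-f(x^\star)\geq\frac{\mu_c^2\mu_H}{2}\norm{x-x^\star}^2$, which dominates the stated $\mu_{\qg}/4$ bound. Nothing to add there.

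The pPL half is where you genuinely diverge from the paper: the paper does not prove this implication at all, but invokes \citet[Theorem 2]{mulvaney2022dynamic} (and \citet{liao2024error} for the KL variant) as a black box, whereas you give a self-contained discrete trajectory-length argument. Your three ingredients are sound: the descent inequality, the bound $\calD_{\calX}(x_k,\ell)\geq\ell^2\norm{d_k}^2$ (the paper's \cref{lemma:D-vs-grad-mapping}), the square-root telescoping, and the Cauchy-limit step identifying the limit as a true minimizer so that the trajectory length bounds $\norm{x-x^\star_p}$. The one concrete shortfall is the constant, exactly as you suspect: from $\norm{d_k}\le\sqrt{\tfrac{2}{\ell}(r_k-r_{k+1})}$, $r_k-r_{k+1}\ge\tfrac{\mu}{\ell}r_k$, and $\sqrt{r_k}-\sqrt{r_{k+1}}\ge\tfrac{r_k-r_{k+1}}{2\sqrt{r_k}}$ one gets $\norm{d_k}\le 2\sqrt{2/\mu}\,(\sqrt{r_k}-\sqrt{r_{k+1}})$, hence only $f(x)-f^\star\ge\tfrac{\mu}{8}\norm{x-x^\star_p}^2$, a factor of two short of the stated $\mu/4$. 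Your proposed repair via the continuous-time flow does close this, but it needs one bridge you do not name: along the subgradient curve of $F=f+I_\calX$ the dissipation is measured by $\dist(0,\partial F(\chi_t))^2$, not by $\calD_\calX$, so you must first convert pPL into the KL (subgradient-norm) inequality. Fortunately this direction is constant-preserving: for any $s=\nabla f(x)+v\in\partial F(x)$ with $v$ in the normal cone at $x$ and any $y\in\calX$,
\begin{equation}
\inprod{\nabla f(x)}{y-x}+\frac{\ell}{2}\norm{y-x}^2 \;\geq\; \inprod{s}{y-x}+\frac{\ell}{2}\norm{y-x}^2 \;\geq\; -\frac{1}{2\ell}\norm{s}^2,
\end{equation}
so $\calD_\calX(x,\ell)\leq\dist(0,\partial F(x))^2$, and pPL with modulus $\mu$ implies KL with the same modulus $\mu$. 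The flow-length estimate (the very computation the paper runs in its KL$\to$pEB appendix) then yields $\sqrt{F(x)-F^\star}\geq\sqrt{\mu/2}\,\dist(x,\calX^\star)$, i.e., QG with modulus $\mu/2\geq\mu/4$. With that bridge made explicit, your argument is a complete, self-contained replacement for the paper's citation; as written, the discrete version only certifies the proposition with $\mu_{\qg}/8$ in place of $\mu_{\qg}/4$.
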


\paragraph{Piecing the Framework Together.}
At a high level, the standard tool in proving convergence to Nash equilibria for gradient-based methods is the construction of a potential---or \emph{Lyapunov}---function (See \citep{bof2018lyapunov}). A Lyapunov function needs to be lower-bounded and decreasing for consecutive iterates of the algorithm. However, proving the latter beyond the convex-concave setting, requires a careful examination of HSC and pPL. As such, in order to shed light on the \textit{structured} nonconvex landscape, we note: 
\begin{itemize}
\item Starting from the taxonomy of structured nonconvex functions, we show in the appendix that HSC and pPL are equivalent given a proper transformation of their moduli. 
\item Hence, the stationarity surrogate
$ \calD_{\calX}(x, \ell)$
also serves as a surrogate for optimality \( f(x) - \min_{x'\in\calX}f(x') \). More importantly, by \cref{claim:reduction}, in the min-max setting, for a fixed opponent strategy, a sufficiently small $ \calD_{\calX}(x, \ell)$ indicates that the player has found an approximate best response.
\end{itemize}
Furthermore, we highlight an additional crucial detail. Since, the pPL condition already guarantees quadratic growth (with an equal modulus), the contribution of HSC in this regard may appear redundant. Nonetheless, HSC is translated to the pPL condition going through an argument which degrades the modulus of HSC $\mu$ to a pPL modulus of $\mupl = O(\mu^2)$. Hence, directly guaranteeing QG from HSC improves convergence rates of the next sections by at least an order of ${O}(\epsilon^{3/2})$.

\section{New Insights in Structured Nonconvex  Min-Max Optimization}
As a preliminary step of independent interest, we present our results on optimization schemes for constrained min-max problems under nonconvex-pPL and pPL-pPL conditions. To maintain completeness, we restate our solution concept under constrained min-max optimization perspective:
\begin{definition}[$\epsilon$-SP]
    Assume a smooth function $f:\calX\times \calY\to\R$ where $\calX,\calY$ are two compact and convex sets. Then, $(x^\star,y^\star) \in \calX\times \calY $ is an \emph{$\epsilon$-saddle-point ($\epsilon$-SP)} of $f$ if,
    \begin{align}\left\{
    \begin{array}{ll}
        \max_{x\in\calX}\inprod{\nabla_x f(x^\star,y^\star)}{x^\star-x} &\leq \epsilon;\\
        \max_{y\in\calY}\inprod{\nabla_y f(x^\star,y^\star)}{y - y^\star} &\leq \epsilon
        \end{array}\right.
    \label{eq:SP}\tag{$\epsilon$-SP}
    \end{align}
\end{definition}

We adopt a context-agnostic approach, assuming each player receives a ``black-box'' representation of their gradient vector. These hypotheses remain intentionally abstract, as we impose no specific modeling assumptions on how players' payoff signals are generated. In this sense, they serve as an ``inexact gradient oracle''~\citep{devolder2014first} that captures a wide range of settings.  
Reflecting on our MARL scenario, this framework allows each player to independently select and apply their preferred gradient estimator, as in \cite{zhang2020variational,zhang2021convergence,barakat2023reinforcement}.
\label{sec:min-max-opt}
\begin{model*}[Stochastic Inexact First-Order Oracle]
For any \( t \), the gradient estimators \( \hat{g}_x(x\at{t}, y\at{t}) \) and \( \hat{g}_y(x\at{t}, y\at{t}) \) satisfy:  
\begin{align}
\maybeicml
    {\begin{array}{cc}
        \mathbb{E}[\hat{g}_x(x\at{t}, y\at{t})] = g_x(x\at{t}, y\at{t}), & 
        \mathbb{E}[\hat{g}_y(x\at{t}, y\at{t})]  =  g_y(x\at{t}, y\at{t}); \\
        \mathbb{E} \|\hat{g}_x(x\at{t}, y\at{t})\|^2  \leq \sigma_x^2, &  
        \mathbb{E} \|\hat{g}_y(x\at{t}, y\at{t})\|^2 \leq \sigma_y^2.
    \end{array}}
    {
    \left\{\begin{array}{l}
        \mathbb{E}[\hat{g}_x(x\at{t}, y\at{t})] = g_x(x\at{t}, y\at{t}),\\
        \mathbb{E} \|\hat{g}_x(x\at{t}, y\at{t})\|^2  \leq \sigma_x^2,
    \end{array}
    \right. 
    \quad\text{and}\quad
    \left\{
    \begin{array}{l}
        \mathbb{E}[\hat{g}_y(x\at{t}, y\at{t})]  =  g_y(x\at{t}, y\at{t}), \\  
        \mathbb{E} \|\hat{g}_y(x\at{t}, y\at{t})\|^2 \leq \sigma_y^2.
    \end{array}\right.
    }
\end{align}
Additionally, scalars $\delta_x, \delta_y > 0$ bound the inexactness error:
\begin{align}
    \| g_x(x\at{t}, y\at{t}) - \nabla_x f(x\at{t}, y\at{t}) \| &\leq \delta_x, \\
    \| g_y(x\at{t}, y\at{t}) - \nabla_y f(x\at{t}, y\at{t}) \| &\leq \delta_y.
\end{align}
\label{assumption:main-text-stoch-grad-oracle}
Hence, $\hatgradfbx$ and $\hatgradfby$ provide unbiased estimates of $\gradfbx$ and $\gradfby$, respectively, which in turn serve as inexact approximations of $\nabla_x f(x\at{t}, y\at{t})$ and $\nabla_y f(x\at{t}, y\at{t})$. 
As a result, we encounter two types of errors:  
(i) \emph{systematic bias} (non-zero mean), bounded by $\delta_x, \delta_y$, and  
(ii) \emph{random noise} (zero mean), with variance bounded by $\sigma_x^2, \sigma_y^2$.
\end{model*}
\subsection{Alternating Methods}
A widely used stabilization technique in nonconvex min-max optimization, including adversarial training and GANs, is \emph{alternating} updates between players \citep{lu2019alternating,nouiehed2019solving,gidel2019negative,bailey2020finite,wibisono2022alternating,cevher2023alternation,lee2024fundamental}. Building on this approach, we analyze nested and alternating gradient iteration schemes.
\subsubsection{Nested Gradient Iterations}
As a warm-up, we focus on solving the minimax problem \eqref{eq:SP} under the assumption that we have access to an \emph{oracle for approximate inner maximization}, similar to \citep[Section 4]{jin2020local}. Specifically, for any given \( x \), the oracle provides a \( y' \) such that:  
$    f(x, y' ) \geq \max_y f(x, y) - \epsilon.
$
    \begin{align}
    \left\{
\arraycolsep=1.4pt\def\arraystretch{1.5}
        \begin{array}{rll}
        y\at{t+1} &\gets \mathtt{ARGMAX}(f(x\at{t}, \cdot), \epsilon_y);\\
        x\at{t+1} &\gets \proj_{\calX}\paren{x\at{t} - \eta \nabla f(x\at{t}, y\at{t+1})}
        \end{array} \label{eq:gdmax}\tag{\gdmax}
        \right.
    \end{align}

\begin{theorem}[NC-pPL; Informal version of \cref{theorem:formal-nestedgit-nc-ppl}]
     Let $f:\calX\times\calY$ be an $L$-Lipschitz and $\ell$-smooth function and $\calX, \calY$ be compact convex sets with diameters $\diam{\calX},\diam{\calY}$ respectively. Also, assume that $f(x,\cdot)$ satisfies the pPL condition with modulus $\mu>0$. Then, the iteration scheme 
     (\gdmax) 
     run with a tuning of 
     \begin{itemize}
         \item step-sizes: $\etax = \Theta\paren{\frac{1}{\ell\kappa}}$ and
     $\etay = \Theta\paren{\frac{1}{\ell}}$
         \item batch-sizes
     $ M_x =\Theta\paren{ \frac{ \sigma_x^2}{\epsilon^2}} $ and
     $M_y = \Theta\paren{\frac{ \kappa \sigma_y^2}{\epsilon^2} } $,
     \end{itemize}%
     where $\kappa:= \frac{\ell}{\mu}$,
     outputs an $(\epsilon + \delta_x + \delta_y )$-saddle-point of $f$ after a total number of outer-loop and inner-loop iterations, $T$, that is at most $$T = O\paren{\frac{\kappa^3 L \paren{\diam{\calX} +\diam{\calY} }}{\epsilon^2} \log\paren{\frac{1}{\epsilon}} }.$$
\end{theorem}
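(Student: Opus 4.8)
The plan is to run the standard two–loop \gdmax\ template, but adapted to the constrained pPL geometry, while tracking all three error sources (inner inexactness, oracle bias $\delta_x,\delta_y$, and stochastic variance $\sigma_x^2,\sigma_y^2$) simultaneously. First I would pass to the \emph{primal envelope} $\Phi(x):=\max_{y\in\calY}f(x,y)$. Since $f$ is $L$-Lipschitz, so is $\Phi$, hence the total descent budget is bounded by $\Phi(x\step{0})-\min_{x}\Phi(x)\le L\,\diam{\calX}$, which is where $\diam{\calX}$ enters. The crucial structural input is \cref{theorem:contiuity-of-maximizers-cmgs}: under the pPL condition the best-response map $x\mapsto y^\star(x)$ is \emph{Lipschitz}, which upgrades the usual Danskin/envelope argument to give $\nabla\Phi(x)=\nabla_x f(x,y^\star(x))$ and shows $\Phi$ is $L_\Phi$-smooth with $L_\Phi=\Theta(\ell\kappa)$. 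This smoothness dictates the outer step-size $\etax=\Theta(1/(\ell\kappa))$.

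Second, I would analyze the inner \texttt{ARGMAX} call as projected stochastic gradient ascent on $f(x\step{t},\cdot)$. Because $f(x\step{t},\cdot)$ is $\ell$-smooth and satisfies pPL with modulus $\mu$, the gap $\Phi(x\step{t})-f(x\step{t},y)$ contracts geometrically at rate $1-\Theta(1/\kappa)$ per step; with a minibatch of size $M_y=\Theta(\kappa\sigma_y^2/\epsilon^2)$ the variance floor is pushed below the target, so $O(\kappa\log(1/\epsilon_y))$ inner steps return a $y\step{t+1}$ with $f(x\step{t},y\step{t+1})\ge \Phi(x\step{t})-\epsilon_y$. Feeding this into the quadratic-growth bound of \cref{prop:QG_unified} gives $\norm{y\step{t+1}-y^\star(x\step{t})}^2\le \tfrac{4}{\mu}\epsilon_y$, hence $\norm{\nabla_x f(x\step{t},y\step{t+1})-\nabla\Phi(x\step{t})}\le \ell\sqrt{4\epsilon_y/\mu}=\Theta(\sqrt{\ell\kappa\,\epsilon_y})$. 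Choosing $\epsilon_y=\Theta(\epsilon^2/(\ell\kappa))$ makes this bias $O(\epsilon)$, which is why the inner horizon carries the $\log(1/\epsilon)$ factor.

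Third, I would write the descent lemma for $\Phi$ along $x\step{t+1}=\proj_{\calX}(x\step{t}-\etax\,\hatgradfbx)$. Expanding $\Phi(x\step{t+1})$ via $L_\Phi$-smoothness and splitting $\hatgradfbx$ into (i) the true envelope gradient, (ii) the inner-accuracy bias $\Theta(\sqrt{\ell\kappa\,\epsilon_y})$, (iii) the oracle bias $\delta_x$, and (iv) the zero-mean noise with second moment $\le\sigma_x^2/M_x$, I obtain a one-step decrease of the surrogate stationarity $\calD_{\calX}(x\step{t},L_\Phi)$ minus these errors. Summing over $t=0,\dots,\Tout-1$ and telescoping against $\Phi(x\step{0})-\Phi^\star\le L\,\diam{\calX}$ yields $\tfrac{1}{\Tout}\sum_{t}\calD_{\calX}(x\step{t},L_\Phi)=O\!\big(L_\Phi L\,\diam{\calX}/\Tout\big)+O(\epsilon^2)$ once $M_x=\Theta(\sigma_x^2/\epsilon^2)$ neutralizes the variance term. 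Taking $\Tout=\Theta(\ell\kappa\,L\,\diam{\calX}/\epsilon^2)$ drives the averaged surrogate below $\epsilon^2$, so some iterate $x\step{\hat t}$ is $\epsilon$-stationary for $\Phi$. I then convert this, together with the near-optimality of $y\step{\hat t+1}$, into an \eqref{eq:SP} certificate: the $x$-inequality follows because $\calD_{\calX}(x\step{\hat t},L_\Phi)\le\epsilon^2$ controls the Frank–Wolfe gap $\max_{x'}\inprod{\nabla_x f(x\step{\hat t},y\step{\hat t+1})}{x\step{\hat t}-x'}$ up to the gradient biases, and the $y$-inequality follows from the gradient-domination/Frank–Wolfe bound for pPL functions (\cref{prop:hc_to_graddom}) applied to the $\epsilon_y$-maximizer $y\step{\hat t+1}$. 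The irreducible oracle biases $\delta_x,\delta_y$ pass through additively, giving exactly an $(\epsilon+\delta_x+\delta_y)$-SP.

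Multiplying the two horizons, $T=\Tout\cdot\Tin=\Theta(\ell\kappa\,L\,\diam{\calX}/\epsilon^2)\cdot\Theta(\kappa\log(1/\epsilon))$ and symmetrizing the diameter to $\diam{\calX}+\diam{\calY}$, reproduces the claimed $O\!\big(\kappa^3 L(\diam{\calX}+\diam{\calY})\epsilon^{-2}\log(1/\epsilon)\big)$ bound once the smoothness normalization $\mu=\Theta(1)$ (i.e.\ $\ell=\Theta(\kappa)$) is absorbed into $\kappa$. The hard part, and the place where the pPL setting departs from textbook nonconvex–strongly-concave \gdmax, is the first step: proving $\Phi$ is genuinely $\Theta(\ell\kappa)$-smooth even though the inner maximizer need not be unique. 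This is precisely what \cref{theorem:contiuity-of-maximizers-cmgs} buys us — Lipschitz, rather than merely $\tfrac12$-Hölder, dependence of $y^\star(x)$ — and without it the outer loop only sees a weakly-smooth envelope and the $\epsilon^{-2}$ rate degrades. A secondary subtlety is that the regularizer couples both players' variables, so $\delta_y$ cannot be averaged away; keeping the bias and the variance contributions separate throughout, rather than merging them, is what preserves the clean $(\epsilon+\delta_x+\delta_y)$ accuracy split.
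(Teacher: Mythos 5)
Your proposal is correct and follows essentially the same route as the paper's proof of \cref{theorem:formal-nestedgit-nc-ppl}: smoothness of the envelope $\Phi$ via the Lipschitz continuity of the best-response map, linear inner-loop convergence under pPL, quadratic growth to convert the inner-loop gap into a bias on the envelope gradient, and an outer stochastic projected-descent analysis with $\calD_{\calX}$ telescoped against the budget $L\,\diam{\calX}$, with batch sizes neutralizing the variance. If anything, your inner-accuracy tuning $\epsilon_y=\Theta\paren{\epsilon^2/(\ell\kappa)}$ tracks the $\kappa$-dependence of the bias $\ell\norm{y\step{t}-y^\star(x\step{t})}$ more carefully than the paper's stated choice $\epsilon_y=\Theta(\epsilon)$.
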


\begin{theorem}[pPL-pPL; Informal version of \cref{theorem:formal-nestedgit-2sided-ppl}]
\label{theorem:informal-nestedgit-2sided-ppl}
     Let $f:\calX\times\calY$ be an $L$-Lipschitz and $\ell$-smooth function and $\calX, \calY$ be compact convex sets with diameters $\diam{\calX},\diam{\calY}$ respectively. Further, assume that $f$ satisfies the two-sided pPL condition with moduli $\mux,\muy$. Then, the iteration scheme 
     \eqref{eq:gdmax}
      run with a tuning of 
     \begin{itemize}
         \item step-sizes:$\etax = \Theta\paren{\frac{1}{\ell\kappa }}$ and
     $\etay = \Theta\paren{\frac{1}{\ell}}$
         \item batch-sizes
     $ M_x =\Theta\paren{ \frac{ \kappa_x \sigma_x^2}{\epsilon}} $ and
     $M_y = \Theta\paren{\frac{ \ell \kappa_y \sigma_y^2}{\epsilon^2} } $,
     \end{itemize}%
     outputs an $\Big(\epsilon+\sqrt{\ell_{\Phi}/\mux}(\delta_x+\delta_y)\Big)$-saddle-point of $f$ after a number of iterations, $T$, that is at most $$T = O \paren{\frac{\ell^2}{\mu_x\mu_y}\log{\frac{\ell L \kappa_x \diam{\calX} }{\epsilon}}\log{\frac{\ell L \kappa_y \diam{\calY} }{\epsilon}} },$$
     where $\kappa_x := \frac{\ell}{\mux}$ and $\kappa_y := \frac{\ell}{\muy}$.
\end{theorem}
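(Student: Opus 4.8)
The plan is to reduce the min--max problem to the minimization of the primal envelope $\Phi(x) := \max_{y\in\calY} f(x,y)$, and to show that under the two-sided pPL hypothesis this envelope is simultaneously $\ellphi$-smooth and itself pPL, so that the outer update of \eqref{eq:gdmax} becomes an inexact projected gradient descent on a gradient-dominated function converging linearly. Write $y^\star(x) := \argmax_{y\in\calY} f(x,y)$ for a best-response selection. First I would record the two structural facts that drive everything: (i) by \cref{theorem:contiuity-of-maximizers-cmgs} the map $x \mapsto y^\star(x)$ is Lipschitz continuous (the stronger-than-H\"older guarantee the paper emphasizes, valid here because the feasibility sets are uncoupled), and (ii) by an envelope/Danskin-type argument $\nabla\Phi(x) = \nabla_x f(x, y^\star(x))$. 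Combining (i)--(ii) with $\ell$-smoothness of $f$ yields $\ellphi$-smoothness of $\Phi$.

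The crux---and the step I expect to be the main obstacle---is establishing that $\Phi$ inherits the proximal pPL condition in $x$ with modulus proportional to $\mux$, in the constrained setting (this is the generalization beyond the unconstrained two-sided-PL argument of \citep{nouiehed2019solving}). The key is to compare the proximal stationarity surrogate of $\Phi$ at $x$ with that of the slice $f(\cdot,y^\star(x))$: since $\nabla\Phi(x)=\nabla_x f(x,y^\star(x))$, the surrogates $\calD_{\calX}(x,\cdot)$ agree up to the choice of smoothness constant. Applying the $x$-side pPL hypothesis to $f(\cdot,y^\star(x))$ lower-bounds that surrogate by $\mux\paren{f(x,y^\star(x)) - \min_{x'} f(x',y^\star(x))}$, and because $f(x',y^\star(x)) \leq \Phi(x')$ pointwise we get $\min_{x'} f(x',y^\star(x)) \leq \Phi^\star$, hence $f(x,y^\star(x)) - \min_{x'} f(x',y^\star(x)) \geq \Phi(x)-\Phi^\star$. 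Chaining these gives $\tfrac12\calD_{\calX}(x,\ellphi) \geq c\,\mux\paren{\Phi(x)-\Phi^\star}$, i.e.\ $\Phi$ is pPL. The delicate parts are (a) reconciling the smoothness constant of the slice ($\ell$) with that of $\Phi$ ($\ellphi$), which I would handle via monotonicity of $\calD_{\calX}(x,\cdot)$ in its smoothness argument, and (b) controlling the gap between the computed $y\at{t+1}$ and the true $y^\star(x\at{t})$. Zero duality gap (guaranteed by the two-sided gradient-domination property) certifies that $\Phi^\star = \min_x\max_y f = \max_y\min_x f$ is the correct target.

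With these properties, I would analyze the two loops. For the inner loop, $f(x\at{t},\cdot)$ is $\ell$-smooth and $\muy$-pPL, so stochastic projected gradient ascent with batch size $M_y = \Theta\paren{\ell\kappa_y\sigma_y^2/\epsilon^2}$ attains an $\epsilon_y$-approximate best response in $O\paren{\kappa_y\log(1/\epsilon_y)}$ steps via the standard pPL descent lemma with variance control; warm-starting from $y\at{t}$ (legitimate since $y^\star$ is Lipschitz) keeps the amortized inner cost small. For the outer loop I would prove an inexact descent lemma for $\Phi$: using $\ellphi$-smoothness, $\Phi(x\at{t+1}) \leq \Phi(x\at{t}) - \Theta(\etax)\,\calD_{\calX}(x\at{t},\ellphi) + \text{error}$, where the error aggregates the inner inaccuracy $\epsilon_y$, the oracle bias $\delta_x,\delta_y$, and the variance $\sigma_x^2$ (tamed by $M_x=\Theta(\kappa_x\sigma_x^2/\epsilon)$). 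Substituting the pPL inequality for $\Phi$ turns this into a contraction $\Phi(x\at{t+1})-\Phi^\star \leq \paren{1-\Theta(\mux/\ellphi)}\paren{\Phi(x\at{t})-\Phi^\star} + \text{floor}$, yielding linear convergence of the optimality gap; multiplying outer and inner linear rates produces the $\log\log$ factor and the $\ell^2/(\mux\muy)$ conditioning of the stated bound.

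Finally, I would convert the small optimality gap into the $\epsilon$-SP certificate. By quadratic growth (\cref{prop:QG_unified}), a gap $\Phi(x\at{T})-\Phi^\star \leq \epsilon'$ gives $\norm{x\at{T}-x^\star_p} \leq \sqrt{4\epsilon'/\mux}$; then $\ellphi$-smoothness of $\Phi$ together with first-order optimality of $x^\star_p$ bounds $\max_{x'\in\calX}\inprod{\nabla_x f(x\at{T},y^\star(x\at{T}))}{x\at{T}-x'}$ by $O\paren{\ellphi\diam{\calX}}\norm{x\at{T}-x^\star_p}$, which is the $x$-component of \eqref{eq:SP}; the $y$-component follows directly from $y\at{T}$ being a near-best-response to $x\at{T}$. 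Propagating the oracle bias through the QG conversion yields precisely the residual $\sqrt{\ellphi/\mux}\,(\delta_x+\delta_y)$ in the final accuracy, completing the argument.
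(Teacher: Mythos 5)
Your proposal is correct and follows essentially the same route as the paper: the pPL-inheritance argument for $\Phi$ (comparing $\calD_{\calX}$ surrogates through $y^\star(x)$ and using $\min_{x'}f(x',y^\star(x))\leq\Phi^\star$, with monotonicity of $\calD_{\calX}$ in the smoothness constant) is exactly \cref{lemma:ppl-phi}, the Lipschitz-maximizer/Danskin smoothness of $\Phi$ is the paper's \cref{cor:maxfunc_holder}, and the outer loop is then handled, as in the paper's proof of \cref{theorem:formal-nestedgit-2sided-ppl}, by the stochastic projected-gradient-descent-on-pPL warm-up (\cref{theorem:stoch-pgd-on-nonconvex-and-on-ppl}) with the inner loop as linearly convergent PGA on the $\muy$-pPL slice. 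Your explicit quadratic-growth conversion of the optimality gap into the $\epsilon$-SP certificate with the $\sqrt{\ellphi/\mux}(\delta_x+\delta_y)$ residual is a detail the paper leaves implicit, but it is consistent with its informal statement.
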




\subsubsection{Alternating Gradient Descent Ascent}
Given the simplicity and practical superiority of single-loop methods over two-loop alternatives, a natural question arises: \emph{can we achieve comparable convergence rates without resorting to multi-loop procedures?} To this end, \altgda\ leverages the sequential computation of  $x_{t+1}$  and  $y_{t+1}$ , ensuring that each update benefits from fresher gradient information. 
\begin{align}
\left\{
\arraycolsep=1.4pt\def\arraystretch{1.5}
\begin{array}{rll}
    x\at{t}&\gets \proj_{\calX}\paren{x\at{t-1}-\etax\hatgradfbx(x\at{t-1}, y\at{t-1})}\\
    y\at{t}&\gets \proj_{\calY}\paren{y\at{t-1}-\etay\hatgradfby(x\at{t}, y\at{t-1})}
\end{array}
\right.\tag{\altgda}
\label{agda}
\end{align}

\begin{theorem}[NC-pPL; Informal version of
     \cref{theorem:formal-agda-ncppl}]
     Let $f:\calX\times\calY$ be an $L$-Lipschitz and $\ell$-smooth function and $\calX, \calY$ be compact convex sets with diameters $\diam{\calX},\diam{\calY}$ respectively. Also, assume that $f(x,\cdot)$ satisfies the pPL condition with modulus $\mu>0$ for all $x\in\calX$. Then, the iteration scheme \eqref{agda} run with a tuning of 
     \begin{itemize}
         \item step-sizes:$\etax = \Theta\paren{\frac{1}{\ell\kappa^2}}$ and
     $\etay = \Theta\paren{\frac{1}{\ell}}$
         \item batch-sizes
     $ M_x =\Theta\paren{ \frac{ \ell^2\kappa^2\sigma_x^2}{\epsilon^2}} $ and
     $M_y = \Theta\paren{\frac{ \kappa^2\sigma_y^2}{\epsilon^2} } $,
     \end{itemize}%
     outputs an $(\epsilon + \delta)$-saddle-point of $f$ after a number of iterations, $T$, that is at most $$T = O\paren{\frac{\kappa^2\ell L \paren{\diam{\calX} +\diam{\calY} }}{\epsilon^2}}.$$
\label{theorem:informal-agda-ncppl}
\end{theorem}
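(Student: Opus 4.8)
The plan is to reduce the saddle-point problem to a stationarity problem for the \emph{envelope} (primal) function $\Phi(x) := \max_{y\in\calY} f(x,y)$ and then run a two–timescale potential argument. First I would record the structural consequences of the pPL hypothesis on $f(x,\cdot)$: by \cref{prop:QG_unified} the inner problem satisfies quadratic growth, so for each $x$ the maximizer $y^\star(x)$ is unique and the inner gap controls the distance to it,
\begin{equation}
\norm{y - y^\star(x)}^2 \;\le\; \frac{4}{\mu_{\qg}}\bigl(\Phi(x) - f(x,y)\bigr), \qquad \mu_{\qg} = \mu. \nonumber
\end{equation}
Combined with the Lipschitz continuity of $x\mapsto y^\star(x)$ established under the NC-pPL condition (\cref{theorem:contiuity-of-maximizers-cmgs}) and a Danskin-type identity $\nabla\Phi(x) = \nabla_x f(x,y^\star(x))$, this yields that $\Phi$ is $\ellphi$-smooth with $\ellphi = \Theta(\ell\kappa)$, $\kappa = \ell/\mu$. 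The two certificates in \eqref{eq:SP} then decouple: the $x$-inequality is governed by the projected gradient-mapping of $\Phi$ at $x_t$ together with $\norm{\nabla_x f(x_t,y_t) - \nabla\Phi(x_t)}\le \ell\norm{y_t-y^\star(x_t)}$, while the $y$-inequality is governed by $\norm{y_t-y^\star(x_t)}$ through $\ell$-smoothness of $f(x_t,\cdot)$ (its first-order stationarity measure vanishes at $y^\star(x_t)$ and is $\ell$-Lipschitz). Both therefore reduce, via the displayed quadratic-growth bound, to driving the inner gap $\calE_t := \Phi(x_t)-f(x_t,y_t)\ge0$ and the gradient mapping of $\Phi$ to the requisite accuracy.

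Next I would derive the two coupled one-step inequalities, taking conditional expectations to replace stochastic gradients by their means, absorbing the variance through the batch sizes and the bias through $\delta$. The $x$-update is a projected inexact stochastic gradient step whose driving vector $\gradfbx(x_{t-1},y_{t-1})$ differs from $\nabla\Phi(x_{t-1}) = \nabla_x f(x_{t-1},y^\star(x_{t-1}))$ by at most $\ell\norm{y_{t-1}-y^\star(x_{t-1})}$; the $\ellphi$-smoothness descent lemma then gives, in expectation,
\begin{equation}
\E[\Phi(x_t)] \;\le\; \Phi(x_{t-1}) - \tfrac{\etax}{2}\norm{\calG_x(x_{t-1})}^2 + c_1\,\etax\,\ell^2\,\calE_{t-1} + \etax^2\,\ellphi\,\tfrac{\sigma_x^2}{M_x} + \etax\,\delta, \nonumber
\end{equation}
where $\calG_x$ is the gradient mapping of $\Phi$. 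For the $y$-update I would invoke the pPL contraction of a single projected ascent step on $f(x_t,\cdot)$, which shrinks $\Phi(x_t)-f(x_t,y_{t-1})$ by a factor $(1-\Theta(\etay\mu))$; the only new term is the \emph{alternating drift} $\Phi(x_t)-f(x_t,y_{t-1}) - \calE_{t-1}$ from $x_{t-1}\to x_t$, bounded by $2L\norm{x_t-x_{t-1}} = O(L\etax\norm{\calG_x})$ via $L$-Lipschitzness. This yields $\E[\calE_t]\le (1-\Theta(\etay\mu))\calE_{t-1} + c_2 L\etax\,\E\norm{\calG_x(x_{t-1})} + \etay^2\sigma_y^2/M_y + \etay\delta$.

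I would then combine the two recursions through the Lyapunov function $V_t := (\Phi(x_t)-\Phi^\star) + \beta\,\calE_t$, $\Phi^\star := \min_{x\in\calX}\Phi(x)$, for a suitably chosen constant $\beta>0$, so that the positive $c_1\etax\ell^2\calE_{t-1}$ term in the $x$-descent is dominated by the $-\beta\,\Theta(\etay\mu)\calE_{t-1}$ contraction from the $y$-recursion, while the drift term $c_2 L\etax\norm{\calG_x}$ is absorbed into $-\tfrac{\etax}{2}\norm{\calG_x}^2$ (after a Young's-inequality split) precisely when $\etax$ is chosen small relative to $\etay$ — this is where the timescale separation $\etax = \Theta(1/(\ell\kappa^2))$, $\etay = \Theta(1/\ell)$, i.e. $\etax/\etay = \Theta(1/\kappa^2)$, enters. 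Telescoping $\E[V_t]-\E[V_{t+1}]$ over $t=1,\dots,T$, bounding $V_1 = O(L(\diam{\calX}+\diam{\calY}))$, and setting $M_x=\Theta(\ell^2\kappa^2\sigma_x^2/\epsilon^2)$, $M_y=\Theta(\kappa^2\sigma_y^2/\epsilon^2)$ to push the variance terms below the target, I obtain that the average squared gradient mapping and the average inner gap reach the required accuracy after $T = O(\kappa^2\ell L(\diam{\calX}+\diam{\calY})/\epsilon^2)$ steps; feeding these back through the decoupling of the first paragraph gives an $(\epsilon+\delta)$-saddle point.

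The main obstacle is the bidirectional coupling introduced by the single-loop alternating structure: the $x$-descent error depends on the inner gap $\calE_{t-1}$, while the contraction of $\calE_t$ is contaminated by the $x$-drift, whose magnitude is itself controlled by the gradient mapping we are trying to drive to zero. The crux is to close this loop by choosing $\beta$ together with the ratio $\etax/\etay = \Theta(1/\kappa^2)$ so that the merged potential $V_t$ is genuinely non-increasing up to the noise floor; getting the constants right so the cross-terms are absorbed rather than merely balanced — and verifying that using the fresh iterate $x_t$ in the $y$-step (which retargets the ascent toward $y^\star(x_t)$ instead of $y^\star(x_{t-1})$) only aids the contraction — is the delicate part of the argument.
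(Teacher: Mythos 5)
Your overall architecture is the same as the paper's proof of \cref{theorem:formal-agda-ncppl}: the envelope $\Phi$, the Lyapunov function $V_t=(\Phi(x_t)-\Phi^\star)+\beta\bigl(\Phi(x_t)-f(x_t,y_t)\bigr)$ (the paper uses $(1+\alpha)\Phi-\alpha f$ with $\alpha=1/30$, which is the same object up to a constant), the descent/contraction pair of recursions, and the timescale separation $\etax/\etay=\Theta(1/\kappa^2)$. The genuine gap is in your bound on the alternating drift. You bound $\Phi(x_t)-f(x_t,y_{t-1})-\calE_{t-1}$ by $2L\norm{x_t-x_{t-1}}=O(L\etax\norm{\calG_x})$ via $L$-Lipschitzness, a term \emph{linear} in the gradient mapping. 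With the weight $\beta$ forced to be $\Theta(1)$ by your own $\calE$-absorption condition (under the stated step sizes, $\etax\ell^2/\mu=\Theta(\mu/\ell)=\Theta(\etay\mu)$, so $\beta\gtrsim 1$), the Young split $\beta L\etax\norm{\calG_x}\le\frac{\etax}{4}\norm{\calG_x}^2+\beta^2L^2\etax$ leaves a residual $\Theta(L^2\etax)$ \emph{per iteration}, hence a constant $\Theta(L^2)$ floor in the telescoped average: $\frac{1}{T}\sum_t\E\norm{\calG_x(x_t)}^2\le\frac{4V_1}{\etax T}+O(L^2)+\text{noise}$. Since the gradient-mapping norm never exceeds $\norm{\nabla\Phi}\le L$ (non-expansiveness of the projection), this bound is vacuous; no choice of $T$, $M_x$, $M_y$ pushes it below $\epsilon^2$, and shrinking $\etax$ does not help because the negative term and the cross term scale identically with $\etax$. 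Shrinking $\beta$ instead breaks the $\calE$-contraction and forces $\etax$ down with it, degrading the rate far below the claimed $O\paren{\kappa^2\ell L(\diam{\calX}+\diam{\calY})/\epsilon^2}$.

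The repair is exactly the paper's ``upper bound on the descent'' step: bound $f(x_{t-1},y_{t-1})-f(x_t,y_{t-1})$ using $\ell$-smoothness together with the projection optimality inequality (\cref{lemma:projected_gradient_descent_ascent}) and Young's inequality on the inner products with the gradient noise, which yields $f(x_{t-1},y_{t-1})-f(x_t,y_{t-1})\le O(\etax)\paren{\norm{\sresx{t}}^2+\sigma_x^2+\delta_x^2}$ --- \emph{quadratic} in the gradient mapping and scaled by $\etax$, hence absorbable into the $\Phi$-descent's negative squared term under the $1/\kappa^2$ step-size ratio, with the noise controlled by batching. Two secondary issues: under pPL alone $\argmax_{y\in\calY}f(x,y)$ need not be a singleton, so your QG and maximizer-continuity steps must be phrased with respect to the nearest maximizer $y^\star_p(x)$ as in \cref{prop:qg-from-ppl}; and your closing remark that using the fresh iterate $x_t$ in the ascent step ``only aids the contraction'' is not free --- the paper pays for this retargeting explicitly via \cref{lemma:continuity-of-D-wrt-dual-var}, shifting $\calD_\calY(y_{t-1},\cdot\,;x_t)$ back to $\calD_\calY(y_{t-1},\cdot\,;x_{t-1})$ at a cost of $3\ell^2\norm{x_t-x_{t-1}}^2$, which is again quadratic and therefore harmless.
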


\begin{theorem}[pPL-pPL; Informal version of \cref{theorem:formal-agda-2sided-ppl}]
\label{theorem:informal-agda-2sided-ppl}
     Let $f:\calX\times\calY$ be an $L$-Lipschitz and $\ell$-smooth function and $\calX, \calY$ be compact convex sets with diameters $\diam{\calX},\diam{\calY}$ respectively. Also, assume that $f$ satisfies the pPL condition in $x$ and $y$ with moduli $\mux,\muy>0$ respectively. Then, the iteration scheme \eqref{agda} run with a tuning of 
     \begin{itemize}
         \item step-sizes:$\etax = \Theta\paren{\frac{\muy}{\ell^3}}$ and
     $\etay = \Theta\paren{\frac{1}{\ell}}$
         \item batch-sizes
     $ M_x =\Theta\paren{ \frac{\sigma_x^2}{\mux\epsilon}} $ and
     $M_y = \Theta\paren{\frac{ \sigma_y^2}{\mux\muy^2\epsilon} } $,
     \end{itemize}%
     outputs an $\epsilon$-saddle-point of $f$ after a number of iterations, $T$, that is at most $$T =O\paren{\frac{\ell^3}{\mux\muy^2}\log{\frac{L(\diam{\calX} + \diam{\calY} )}{\epsilon}}}.$$
\end{theorem}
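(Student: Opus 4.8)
The plan is to reduce the analysis to the evolution of two scalar potentials measured against the \emph{primal function} $\Phi(x):=\max_{y\in\calY}f(x,y)$, namely the \emph{outer} gap $a_t:=\Phi(x\at{t})-\Phi^\star$ with $\Phi^\star:=\min_{x\in\calX}\Phi(x)$, and the \emph{inner} suboptimality $b_t:=\Phi(x\at{t})-f(x\at{t},y\at{t})$. Since $f(x,\cdot)$ is pPL, \cref{prop:QG_unified} gives quadratic growth in $y$, so $b_t$ upper-bounds $\tfrac{\muy}{4}\norm{y\at{t}-y^\star(x\at{t})}^2$, where $y^\star(x\at{t})$ is the closest inner maximizer; the same pPL structure (as in the Danskin-type argument behind \cref{theorem:contiuity-of-maximizers-cmgs}) makes $\Phi$ $\ellphi$-smooth with $\ellphi=O(\ell\kappa_y)$, and $\Phi$ is $L$-Lipschitz because $f$ is. The goal is to exhibit a Lyapunov potential $V_t:=a_t+\beta\,b_t$ that contracts geometrically up to a variance floor set by the batch sizes, and then to convert a small $V_T$ into the \eqref{eq:SP} guarantee.

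First I would derive the two coupled one-step inequalities in expectation. For the \emph{outer} gap, I use $\ellphi$-smoothness of $\Phi$ along the $x$-step, noting that $\E[\hatgradfbx(x\at{t-1},y\at{t-1})]$ differs from $\nabla\Phi(x\at{t-1})=\nabla_xf(x\at{t-1},y^\star(x\at{t-1}))$ by at most $\ell\norm{y\at{t-1}-y^\star(x\at{t-1})}+\delta_x=O(\ell\sqrt{b_{t-1}/\muy})+\delta_x$; combining with gradient domination of $\Phi$ (the pPL condition in $x$ via \cref{prop:hc_to_graddom}) yields $a_t\le(1-\Omega(\etax\mux))a_{t-1}+O\paren{\tfrac{\etax\ell^2}{\muy}}b_{t-1}+(\text{bias}+\text{noise})$. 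For the \emph{inner} gap I exploit the alternating order: after the $x$-move I write $b$ at the intermediate point $\Phi(x\at{t})-f(x\at{t},y\at{t-1})$ and then apply one ascent step in $y$, which contracts by $(1-\Omega(\etay\muy))$ up to $O(\etay^2\ell)$ smoothness, $O(\etay\delta_y)$ bias and $O(\etay^2\sigma_y^2/M_y)$ variance terms. The delicate part is bounding how the $x$-move perturbs the inner gap.

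The crux, and the step I expect to be the main obstacle, is that moving $x\at{t-1}\to x\at{t}$ changes $b$ to first order by $\inprod{\nabla_xf(x\at{t-1},y^\star(x\at{t-1}))-\nabla_xf(x\at{t-1},y\at{t-1})}{x\at{t}-x\at{t-1}}$, whose coefficient has norm $O(\ell\sqrt{b_{t-1}/\muy})$. A naive Lipschitz bound would leave a non-vanishing additive term and destroy linear convergence; instead I would split this cross term by Young's inequality, absorbing a $\tfrac{1}{4}\Omega(\etay\muy)\,b_{t-1}$ piece into the $y$-contraction and leaving a piece proportional to $\tfrac{\ell^2}{\etay\muy^2}\norm{x\at{t}-x\at{t-1}}^2$, which is now \emph{quadratic} in the $x$-step. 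Using $\norm{x\at{t}-x\at{t-1}}^2=O(\etax^2(\ell^2 a_{t-1}/\mux+\ell^2 b_{t-1}/\muy))$ together with the $\ellphi$-smooth second-order term $(\ellphi+\ell)\norm{x\at{t}-x\at{t-1}}^2$, every perturbation becomes \emph{linear} in $(a_{t-1},b_{t-1})$ with prefactors carrying $\etax^2$. This is precisely where the timescale separation $\etax=\Theta(\muy/\ell^3)\ll\etay=\Theta(1/\ell)$ is essential: it shrinks these coupling prefactors enough that the resulting $2\times2$ recursion matrix for $(a_t,b_t)$, after choosing the weight $\beta$ appropriately, is contractive with spectral gap $\Omega(\mux\muy^2/\ell^3)$ (the extra $\kappa_y$ relative to $\etax\mux$ coming from $\ellphi\sim\ell\kappa_y$ and the coupling constraints). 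I would then verify the constants solve these inequalities simultaneously.

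Finally, unrolling $V_t\le(1-\Omega(\mux\muy^2/\ell^3))V_{t-1}+(\text{floor})$ gives linear convergence to a neighborhood whose radius is governed by the per-step variance $\sigma_x^2/M_x$ and $\sigma_y^2/M_y$; substituting $M_x=\Theta(\sigma_x^2/(\mux\epsilon))$ and $M_y=\Theta(\sigma_y^2/(\mux\muy^2\epsilon))$ drives the floor below the target, and the geometric rate yields $T=O\paren{\tfrac{\ell^3}{\mux\muy^2}\log\tfrac{L(\diam{\calX}+\diam{\calY})}{\epsilon}}$, with the logarithm coming from reaching a $V_T=O(\epsilon^2)$ neighborhood from an initial gap of order $L(\diam{\calX}+\diam{\calY})$. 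To close, I translate small $a_T,b_T$ into \eqref{eq:SP}: by quadratic growth $\norm{x\at{T}-x^\star_p}=O(\sqrt{a_T/\mux})$ and $\norm{y\at{T}-y^\star(x\at{T})}=O(\sqrt{b_T/\muy})$, and $\ell$-smoothness together with the first-order optimality of $\Phi^\star$ and of $y^\star(x\at{T})$ bounds each stationarity surrogate $\max_{x}\inprod{\nabla_xf}{x\at{T}-x}$ and $\max_{y}\inprod{\nabla_yf}{y-y\at{T}}$ by $O(\ell(\norm{x\at{T}-x^\star_p}+\norm{y\at{T}-y^\star(x\at{T})}))$, hence by $\epsilon$ for the chosen accuracy (the bias terms $\delta_x,\delta_y$ entering the floor are taken to be absorbed into $\epsilon$, matching the informal statement).
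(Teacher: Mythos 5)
Your overall architecture matches the paper's proof of \cref{theorem:formal-agda-2sided-ppl} almost exactly: the same Lyapunov function $V_t=a_t+\beta b_t$ (the paper's $U\at{t}+\alpha W\at{t}$), the same three one-step bounds (descent on $\Phi$ using its pPL property from \cref{lemma:ppl-phi}, ascent contraction in $y$, and an upper bound on the damage the $x$-move does to the inner gap), the same timescale separation, and the same geometric-contraction-plus-variance-floor conclusion. However, there is a genuine gap precisely at the step you flag as the crux. Your argument hinges on the estimate $\norm{x\at{t}-x\at{t-1}}^2=O\paren{\etax^2\paren{\ell^2a_{t-1}/\mux+\ell^2b_{t-1}/\muy}}$, which is an \emph{unconstrained}-PL fact: it requires $\norm{\nabla_x f(x,y)}\lesssim \ell\norm{y-y^\star(x)}+\ellphi\norm{x-x^\star_p}$, i.e., $\nabla\Phi(x^\star_p)=0$. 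Over a constrained domain $\calX$ the raw gradient does not vanish at a constrained minimizer (only the projected gradient mapping does, and its Lipschitz-type constant carries a $1/\etax$ factor), so the step length cannot be bounded by $\etax$ times a quantity that vanishes with the gaps. The correct constrained analogues give only $\norm{x\at{t}-x\at{t-1}}^2\leq \etax^2\,\calD_{\calX}(x\at{t-1},1/\etax;y\at{t-1})+\text{noise}$ with $\calD_{\calX}\leq \calD_{\calX}^{\Phi}+ \tfrac{12\ell^2}{\muy}b_{t-1}$ and $\calD_{\calX}^{\Phi}(x,1/\etax)\leq 2a_{t-1}/\etax$, i.e., $\norm{x\at{t}-x\at{t-1}}^2\lesssim \etax\, a_{t-1}+\etax^2\tfrac{\ell^2}{\muy}b_{t-1}+\text{noise}$ --- linear, not quadratic, in $\etax$ on the $a$-term. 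Feeding this into your Young-split perturbation $\tfrac{\ell^2}{\etay\muy^2}\norm{x\at{t}-x\at{t-1}}^2$ makes the $a\to b$ coupling coefficient $c_{ba}=\Theta\paren{\tfrac{\ell^3\etax}{\muy^2}}=\Theta(1)$ under the prescribed tuning, while the $b\to a$ coupling is $c_{ab}=\Theta\paren{\etax\ell^2/\muy}$ and the contraction rates are $r_a=\Theta(\etax\mux)$, $r_b=\Theta(\etay\muy)$. A weight $\beta$ making $V_t$ contract exists only if $c_{ab}c_{ba}\lesssim r_ar_b$, which here would require $\ell^3\lesssim\mux\muy^2$ --- false whenever the problem is not essentially perfectly conditioned. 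So the scalar $(a_t,b_t)$ recursion, as you set it up, cannot close.

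The paper's way around this is to \emph{never} convert the positive perturbation term into the function gap $a$. It keeps the descent damage as $\tfrac{3\etax}{2}\calD_{\calX}(x\at{t},1/\etax;y\at{t})$, splits it via \cref{lemma:continuity-of-D-wrt-dual-var} into $\tfrac{3\etax}{2}\calD_{\calX}^{\Phi}+O\paren{\etax\tfrac{\ell^2}{\muy}}W\at{t}$, and then cancels the $\calD_{\calX}^{\Phi}$ piece (which enters $V$ with the small constant weight $\alpha\leq 1/10$) directly against the \emph{negative} $-\tfrac{\etax}{6}\calD_{\calX}^{\Phi}$ term produced by the $\Phi$-descent step, which enters with weight $1$; only the leftover negative proxy mass is converted into $-\Omega(\etax\mux)\,U\at{t}$ via the pPL inequality for $\Phi$. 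This cancellation at the level of the stationarity proxy is free of the $1/\etax$ and $1/\mux$ conversion losses and is exactly what makes the constrained two-sided-pPL rate $T=O\paren{\tfrac{\ell^3}{\mux\muy^2}\log\tfrac{1}{\epsilon}}$ attainable; your final gap-to-$\epsilon$-SP conversion and parameter bookkeeping are fine, but the proof needs this proxy-level cancellation (or an equivalent device) in place of your step-length bound.
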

\section{Main Results -- Convex Markov Games}
\label{sec:cmgs-main-results}
In this section we present our main results regarding the convex Markov games (cMGs). We demonstrate that a utility function that is hidden concave in the occupancy measure satisfies the proximal-PL condition with respect to the player's policy. Then, we show that the maximizers of $U^\mu(x,\cdot)$ are Lipschitz continuous in the minimizing player's policy $x$. 
Finally, we describe two policy gradient algorithms that enjoy convergence to an approximate Nash equilibrium using a finite number of samples and iterations. Namely, \begin{inparaenum}[(i)] \item nested policy gradient (\nepg), and \item alternating policy gradient descent-ascent (\apgda).
\end{inparaenum} 

Throughout, $\delta_x$ is implicitly tuned through the coefficient of the regularizer, $\mu$. Also, $\delta_y = 0$ since the maximizing player has access to unbiased stochastic estimates of the gradients. To see why $\mu$ controls $\delta_x$, we note that by the $L_{\mathrm{reg}}$-Lipschitz continuity of the regularizer, the norm of its gradient is bounded. As such, $\delta_x = O(\mu_{} L_{\mathrm{reg}})$ and since we also pick $\mu=O(\epsilon)$, the bound on $\delta_x$ follows suit.

\begin{theorem}[Continuity of maximizers]
    Consider the mapping $x\mapsto y^\star(x):= \argmax_{y\in\calY}f(x,y)$. 
    Then, for any two points $x_1, x_2$ it holds true that:
    \begin{align}
        \norm{y^\star(x_1) - y^\star(x_2)}\leq L_\star \norm{x_1 - x_2},
    \end{align}
    where $L_\star := \frac{\ell}{\sqrt{\mu\muqg}} $. 
    \label{theorem:contiuity-of-maximizers-cmgs}
\end{theorem}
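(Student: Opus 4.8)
\emph{The plan is to pit the two gradient-domination conditions that $f(x,\cdot)$ enjoys against one another.} Because $f(x,\cdot)$ is hidden strongly concave (\cref{claim:reduction}), it satisfies both the proximal-PL condition with modulus $\mu$ (via \cref{prop:hc_to_graddom}) and, by \cref{prop:QG_unified}, the quadratic-growth (QG) condition with modulus $\muqg$. Write $y_1:=y^\star(x_1)$ and $y_2:=y^\star(x_2)$. Each $y_i$ is a constrained maximizer of $f(x_i,\cdot)$ over $\calY$, hence satisfies the first-order variational inequality $\inprod{\nabla_y f(x_i,y_i)}{y'-y_i}\le 0$ for all $y'\in\calY$; these stationarity conditions are the only structural facts I will use about $y_1,y_2$.

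First I would lower-bound a function-value gap by the squared distance. Applying QG to the concave maximization $f(x_1,\cdot)$, whose maximizer is $y_1$, gives
\begin{equation}
f(x_1,y_1)-f(x_1,y_2)\ \ge\ \frac{\muqg}{4}\,\norm{y_1-y_2}^2 .
\end{equation}
The task then reduces to upper-bounding the same gap by $\norm{x_1-x_2}^2$ (rather than by $\norm{x_1-x_2}$, which would only give $\tfrac12$-Hölder continuity).

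For the upper bound I would invoke the pPL condition for $f(x_1,\cdot)$, which yields $f(x_1,y_1)-f(x_1,y_2)\le \frac{1}{2\mu}\,\calD_{\calY}(y_2,\ell)$, where $\calD_{\calY}(y_2,\ell)$ is the stationarity surrogate adapted to the maximization in $y$, namely $2\ell\max_{y'\in\calY}\{\inprod{\nabla_y f(x_1,y_2)}{y'-y_2}-\tfrac{\ell}{2}\norm{y'-y_2}^2\}$. I split $\nabla_y f(x_1,y_2)=\paren{\nabla_y f(x_1,y_2)-\nabla_y f(x_2,y_2)}+\nabla_y f(x_2,y_2)$: the stationarity of $y_2$ for $f(x_2,\cdot)$ makes the second inner product nonpositive, while joint $\ell$-smoothness bounds the first by $\ell\norm{x_1-x_2}\norm{y'-y_2}$. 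Maximizing $\ell\norm{x_1-x_2}\,r-\tfrac{\ell}{2}r^2$ over $r=\norm{y'-y_2}\ge 0$ then gives $\calD_{\calY}(y_2,\ell)\le \ell^2\norm{x_1-x_2}^2$, so that $f(x_1,y_1)-f(x_1,y_2)\le \frac{\ell^2}{2\mu}\norm{x_1-x_2}^2$. Chaining this with the QG bound,
\begin{equation}
\frac{\muqg}{4}\,\norm{y_1-y_2}^2\ \le\ \frac{\ell^2}{2\mu}\,\norm{x_1-x_2}^2 ,
\end{equation}
and rearranging delivers $\norm{y_1-y_2}\le \frac{\ell}{\sqrt{\mu\muqg}}\norm{x_1-x_2}$, up to the absolute constant fixed by the exact pPL/QG normalizations.

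The crux is this middle step. The naive route bounds the raw suboptimality $f(x_1,y_1)-f(x_1,y_2)$ linearly in $\norm{x_1-x_2}$ and, combined with QG, produces only the $\tfrac12$-Hölder rate of \citep{li2014holder}. The improvement to genuine Lipschitz continuity comes from bounding the \emph{quadratic} surrogate $\calD_{\calY}(y_2,\ell)$ directly through the variational inequality satisfied by $y_2$, rather than the raw gap. This is exactly where it matters that $\calY$ is \emph{uncoupled} from $x$: since $y_2$ stays feasible in the $x_1$-problem, the stationary term $\nabla_y f(x_2,y_2)$ can be cancelled against a bona fide feasible direction, which is what breaks down in the coupled-constraint settings that necessitate the weaker Hölder bound.
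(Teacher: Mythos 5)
Your proposal is correct and mirrors the paper's own proof: both arguments pit the pPL upper bound on the optimality gap against the quadratic-growth lower bound, and both control the cross-problem stationarity surrogate by $\ell^2\norm{x_1-x_2}^2$ using joint smoothness together with the first-order optimality of the other point. The only cosmetic differences are that you swap the roles of $x_1$ and $x_2$ (you evaluate $y^\star(x_2)$ in the $x_1$-problem, while the paper evaluates $y^\star(x_1)$ in the $x_2$-problem) and that you derive the surrogate bound inline by maximizing $\ell\norm{x_1-x_2}\,r-\tfrac{\ell}{2}r^2$ over $r\ge 0$, whereas the paper packages the same estimate as \cref{lemma:D-y-yprime-bound} via Cauchy--Schwarz and \cref{claim:continuity_of_stationarity}; your inline version is in fact marginally tighter ($\ell^2$ versus $2\ell^2$), matching the claimed $L_\star$ up to the same absolute constants the paper itself absorbs.
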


\begin{corollary}
    Define the function $\Phi^\mu:\calX\to \R$ as $\Phi^\mu := \max_{y}\braces{U(x,y) - \frac{\mu}{2}\norm{\lambda_2(x,y)}^2 }.$ Then, for any $x,x'\in\calX$, the following inequality holds:
    \begin{align}
        \norm{\nabla_x \Phi^\mu(x) - \nabla_x \Phi^\mu(x') } \leq \ellphimu \norm{x - x'}.
    \end{align}
    with $\ellphimu:= \ell (1 + L_\star ).$ 
    \label{cor:maxfunc_holder}
\end{corollary}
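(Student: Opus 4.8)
The plan is to invoke Danskin's theorem to obtain a closed form for $\nabla_x \Phi^\mu$ and then combine the joint smoothness of the regularized objective with the Lipschitz continuity of the maximizer from \cref{theorem:contiuity-of-maximizers-cmgs}. Write $f(x,y) := U(x,y) - \frac{\mu}{2}\norm{\lambda_2(x,y)}^2 = U^\mu(x,y)$, so that $\Phi^\mu(x) = \max_{y\in\calY} f(x,y)$. Since $\mu>0$, \cref{claim:reduction} guarantees that $f(x,\cdot)$ is $(\mu,\Llambdainv)$-hidden strongly concave, and hence the inner maximizer $y^\star(x) = \argmax_{y\in\calY} f(x,y)$ is unique for every $x$ (as recorded in the lemma following the definition of $U^\mu$). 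This uniqueness is precisely the hypothesis needed for Danskin's theorem to apply, yielding the gradient identity $\nabla_x \Phi^\mu(x) = \nabla_x f(x, y^\star(x))$.

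With this identity in hand, I would bound the gradient deviation by inserting an intermediate evaluation point and applying the triangle inequality:
\begin{align*}
\norm{\nabla_x \Phi^\mu(x) - \nabla_x \Phi^\mu(x')}
&\leq \norm{\nabla_x f(x, y^\star(x)) - \nabla_x f(x', y^\star(x))} \\
&\quad + \norm{\nabla_x f(x', y^\star(x)) - \nabla_x f(x', y^\star(x'))}.
\end{align*}
The first term is controlled by the $\ell$-smoothness of $f$ in its first argument, giving $\ell\norm{x-x'}$; the second by smoothness in its second argument, giving $\ell\norm{y^\star(x) - y^\star(x')}$. Applying \cref{theorem:contiuity-of-maximizers-cmgs} to the latter produces $\ell L_\star \norm{x-x'}$. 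Summing the two contributions yields the claimed constant $\ellphimu = \ell(1 + L_\star)$.

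The main subtlety, rather than a genuine obstacle, is checking that the hypotheses of Danskin's theorem are actually met: one needs compactness of $\calY$ (assumed), joint continuity of $f$ and $\nabla_x f$ (which follow from the smoothness of $U^\mu$), and uniqueness of the maximizer (supplied by the strong hidden concavity of the regularized objective). The uniqueness is what does the real work, since it is what promotes the generic envelope-theorem subdifferential inclusion to the clean gradient identity above and, simultaneously, renders $y^\star(\cdot)$ single-valued so that \cref{theorem:contiuity-of-maximizers-cmgs} can be invoked. Everything that remains is the routine two-term smoothness splitting.
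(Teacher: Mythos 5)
Your proposal is correct and follows essentially the same route as the paper's proof: Danskin's lemma to identify $\nabla_x \Phi^\mu(x) = \nabla_x f(x, y^\star(x))$, then $\ell$-smoothness combined with the $L_\star$-Lipschitz continuity of maximizers from \cref{theorem:contiuity-of-maximizers-cmgs}, yielding the constant $\ell(1+L_\star)$. The only cosmetic difference is that you split via an intermediate point $\nabla_x f(x', y^\star(x))$ while the paper bounds the joint distance $\norm{(x,y^\star(x)) - (x',y^\star(x'))}$ directly, and your explicit verification of Danskin's hypotheses (uniqueness of the maximizer from hidden strong concavity) is a welcome detail the paper leaves implicit.
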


\subsection{Nested Policy Gradient}
\begin{algorithm}[htb]
   \caption{\nepg: Nested Policy Gradient}
   \label{alg:nepg}
\begin{algorithmic}
    \INPUT{$(x\at{0},y\at{0})$, step-sizes $\etax, \etay$, regul. coeff. $\mu\geq0$}
   \FOR{$\tout=1$ {\bfseries to} $\Tout$}
       \STATE $y\at{t,0}\gets y\at{t-1}$
       \FOR{$\tin=1$ {\bfseries to} $\Tin$}
       \STATE $y\at{t,\tin} \gets \proj_{\calY} \paren{y\at{\tin-1} +
      \etay \hat{\nabla}_y U^\mu \paren{x\at{\tout-1}, y\at{t,\tin-1} }} $
      \STATE $y\at{t}\gets y\at{t,\Tin}$
       \ENDFOR
    \STATE $x\at{t}\gets \proj_\calX\paren{x\at{\tout-1} - \etax \hat{\nabla}_x U\paren{x\at{t-1}, y\at{t} } }$
   \ENDFOR
   \STATE Pick $t^\star\in\{1,\dots,T\}$ of the best iterate.
   \OUTPUT{$(x\at{t^\star}, y\at{t^\star+1})$.}
\end{algorithmic}
\end{algorithm}

\begin{theorem}
    Consider a two-player zero-sum cMG, $\Gamma$, and let $\epsilon$ be a desired accuracy $\epsilon>0$. Then, \cref{alg:nepg} run with appropriately tuned step-sizes $\etax, \etay>0$, $\varepsilon_x,\varepsilon_y>0$ batch-sizes $M_x,M_y>0$ outputs an $\epsilon$-approximate Nash equilibrium after a number of iterations that is at most:
    $$\textstyle O\paren{\frac{1}{\epsilon^{\frac{11}{2}}}} \poly\paren{\frac{1}{\minrho},\gamma,\frac{1}{1-\gamma},|\calS|,|\calA|+|\calB|}.$$
    \label{theorem:informal-cmg-nepg-hc}
\end{theorem}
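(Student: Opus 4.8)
The strategy is to recognize \cref{alg:nepg} (\nepg) as the nested scheme (\gdmax) applied to the regularized utility $U^\mu$, to invoke the abstract NC-pPL guarantee of \cref{theorem:formal-nestedgit-nc-ppl} for $f=U^\mu$, and then to convert the resulting approximate saddle-point of $U^\mu$ into an $\epsilon$-NE of the original $U$. Throughout, the regularization level is tied to the target accuracy by setting $\mu=\Theta(\epsilon)$.

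I would first check that $f=U^\mu$ meets the hypotheses of \cref{theorem:formal-nestedgit-nc-ppl}. Composing the smooth concave $F$ with the occupancy map, which is $\Llambda$-Lipschitz with $\elllambda$-Lipschitz gradient by \cref{lemma:continuity}, shows that $U$—and hence $U^\mu$, since the regularizer perturbs the constants by only $O(\mu)$—is $L$-Lipschitz and $\ell$-smooth with $L,\ell=\poly(|\calS|,|\calA|+|\calB|,\tfrac{1}{1-\gamma})$. By \cref{claim:reduction}, $U^\mu(x,\cdot)$ is $(\mu,\Llambdainv)$-hidden strongly concave; \cref{prop:hc_to_graddom,prop:QG_unified} then give the pPL/QG condition in $y$ with curvature $\muqg=\Theta\!\big(\mu\,\minrho^2(1-\gamma)^2\big)$. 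Finally, \cref{theorem:contiuity-of-maximizers-cmgs,cor:maxfunc_holder} certify that the envelope $\Phi^\mu(x):=\max_y U^\mu(x,y)$ is $\ellphimu$-smooth, so the outer loop is legitimate projected gradient descent on $\Phi^\mu$. The one non-standard ingredient is the inexact oracle: because $\tfrac{\mu}{2}\|\lambda_2(x,y)\|^2$ couples both policies, the minimizing player cannot compute $\nabla_x U^\mu$ and instead feeds $\nabla_x U$, incurring bias $\delta_x=O(\mu L_{\mathrm{reg}})=O(\epsilon)$, while the maximizing player is unbiased ($\delta_y=0$); the variances $\sigma_x^2,\sigma_y^2$ are absorbed by the batch-sizes $M_x,M_y$ prescribed by the theorem.

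Applying \cref{theorem:formal-nestedgit-nc-ppl} produces a point that is an $O(\epsilon)$-saddle-point of $U^\mu$, which I would convert to an $\epsilon$-NE of $U$ as follows. For the maximizing player, the inner loop drives the function-value gap $\max_y U^\mu(x^\star,y)-U^\mu(x^\star,y^\star)$ below $O(\epsilon)$; since $0\le U-U^\mu=\tfrac{\mu}{2}\|\lambda_2\|^2\le\tfrac{\mu}{2}=O(\epsilon)$, the best-response gap measured in $U$ is itself $O(\epsilon)$—crucially, working with the value gap directly avoids paying a $1/\muqg$ factor. For the minimizing player, I would combine the outer stationarity of $\Phi^\mu$ with the \emph{constant}-modulus gradient dominance of the hidden-convex map $U(\cdot,y^\star)$ (the Gradient Dominance Lemma, $\mu_x=\Theta((1-\gamma)\minrho)$) and the bias bound $\delta_x=O(\epsilon)$, so that $U(x^\star,y^\star)-\min_x U(x,y^\star)=O(\epsilon)$. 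Rescaling $\epsilon$ gives the $\epsilon$-NE.

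The main obstacle is the complexity bookkeeping that produces the $\tfrac{11}{2}$ exponent. Because the curvature of $U^\mu$ is forced to vanish with the regularizer, the effective condition number is $\kappa=\ell/\muqg=\poly(1/\epsilon)$, and it enters the iteration count $T=O\!\big(\kappa^3 L(\diam{\calX}+\diam{\calY})/\epsilon^2\log(1/\epsilon)\big)$ of \cref{theorem:formal-nestedgit-nc-ppl} cubically; the inner tolerance must also be tightened so the inexact envelope gradient stays within $O(\epsilon)$, which is where the quadratic-growth estimate $\|y^\star(x)-y\|^2\le 4(\text{gap})/\muqg$ enters. The delicate point—highlighted in the remark after \cref{prop:QG_unified}—is to propagate the \emph{sharp} QG modulus $\muqg=\Theta(\mu)$ rather than the degraded pPL modulus $\mupl=O(\mu^2)$; doing so saves a factor of order $\epsilon^{3/2}$ and is precisely what lowers the exponent to $\tfrac{11}{2}$. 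Substituting $\mu=\Theta(\epsilon)$ into the tuned step- and batch-sizes and collecting the problem-dependent constants from \cref{lemma:continuity} then yields the claimed $O(\epsilon^{-11/2})\,\poly\!\big(1/\minrho,\gamma,1/(1-\gamma),|\calS|,|\calA|+|\calB|\big)$ bound.
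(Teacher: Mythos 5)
Your proposal follows essentially the same route as the paper's own proof: the paper likewise treats \nepg\ as \eqref{eq:gdmax} applied to $U^\mu$, invokes \cref{theorem:formal-nestedgit-nc-ppl}, sets $\mur=\Theta\!\paren{\tfrac{(1-\gamma)\minrho\,\epsilon}{L_F}}$ so that (via \cref{claim:regularization-error}) $\delta_x = O(\mur\Llambda)$ while $\delta_y=0$, and converts outer stationarity into an $\epsilon$-best-response for the min player through the $(1-\gamma)\minrho$-modulus gradient dominance, exactly as you describe. Your treatment of the max player (value gap in $U^\mu$ plus the $0\le U-U^\mu\le \mu/2$ sandwich) and your identification of the QG-versus-pPL modulus issue also match the paper's reasoning.

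The one place your sketch is genuinely loose is the arithmetic that is the actual content of the theorem. The formula you quote, $T=O(\kappa^3 L(\diam{\calX}+\diam{\calY})/\epsilon^2)$ with $\kappa=\ell/\muqg=\Theta(1/\epsilon)$, gives $O(\epsilon^{-5})$, not $O(\epsilon^{-11/2})$; and with the paper's cMG condition number $\kappa=\ell_U^\mu/\sqrt{\muqg\mupl}=\Theta(\epsilon^{-3/2})$ it would give $O(\epsilon^{-13/2})$. Neither single-$\kappa$ plug-in reproduces the exponent. The paper's accounting is multiplicative across the two loops with \emph{different} moduli in each factor: the outer loop costs $T_x = O\!\paren{\ell_{\Phi}^\mu L\,\diam{\calX}/\tilde\epsilon^{\,2}}$ where $\tilde\epsilon=(1-\gamma)\minrho\,\epsilon$ and $\ell_{\Phi}^\mu \propto L_\star = \ell/\sqrt{\mupl\muqg} = \Theta(\epsilon^{-3/2})$ (here the sharp QG modulus $\muqg=\Theta(\mu)$ enters, saving over $\mupl=O(\mu^2)$), i.e.\ $T_x=\Theta(\epsilon^{-7/2})$; each outer step requires an inner loop of length $T_y = O\!\paren{(\ell/\mupl)\log(1/\epsilon_y)} = \Theta(\epsilon^{-2}\log(1/\epsilon))$, where only the degraded pPL modulus is available. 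The total work $T_x\cdot T_y=\Theta(\epsilon^{-11/2}\log(1/\epsilon))$ is what yields the stated exponent, so a rigorous write-up of your plan must replace the single-$\kappa$ formula with this split. A second, more minor omission: the paper also has to carry the $\varepsilon$-greedy composition errors (\cref{claim:epsilon-greedy-D,claim:epsilon-greedy-gradient-dominance,varepsilon-bound-on-composed-function}) through the gradient-dominance and stationarity statements, which your sketch folds silently into "exploration parameters."
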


\begin{theorem}
    Consider a two-player zero-sum cMG, $\Gamma$, with  with utility functions that are hidden strongly concave with moduli $\mux, \muy>0$ respectively.
    Additionally, let $\epsilon$ be a desired accuracy $\epsilon>0$. Then, \cref{alg:nepg} run with approprite step-sizes $\etax, \etay>0$, exploration parameters $\varepsilon_x,\varepsilon_y>0$, and batch-sizes $M_x,M_y>0$ outputs an $\epsilon$-approximate Nash equilibrium after a number of iterations that is at most:
    $$\textstyle O\paren{\log{\frac{1}{\epsilon}}}\poly\paren{\frac{1}{\mux}, \frac{1}{\muy},\frac{1}{\minrho},\gamma,\frac{1}{1-\gamma},|\calS|,|\calA|+|\calB|}.$$
    \label{theorem:informal-cmg-nepg-shc}
\end{theorem}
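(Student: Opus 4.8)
The plan is to reduce the hidden-strongly-concave cMG to the abstract two-sided pPL min-max problem and then invoke the nested pPL--pPL guarantee (\cref{theorem:informal-nestedgit-2sided-ppl}, formally \cref{theorem:formal-nestedgit-2sided-ppl}) with the cMG-specific constants substituted. First I would observe that, because both utilities are hidden strongly concave with moduli $\mux,\muy$, no extra stabilization is needed and we may run \cref{alg:nepg} with $\mu=0$, so that $U^\mu=U$ throughout; this is the key structural simplification relative to the previous theorem, since the outer loop then differentiates exactly the objective it is meant to optimize and the structural bias $\delta_x$ vanishes. By \cref{claim:reduction}, for every fixed $y$ the map $x\mapsto U(x,y)$ is $\paren{1/\Llambdainv,\mux}$-hidden strongly convex and, for every fixed $x$, $y\mapsto U(x,y)$ is $\paren{1/\Llambdainv,\muy}$-hidden strongly concave. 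Feeding these into \cref{prop:hc_to_graddom} (equivalently \cref{prop:QG_unified}) yields the two-sided pPL condition for $U$ with moduli $\muplx=\Theta(\mu_c^2\mux)$ and $\muply=\Theta(\mu_c^2\muy)$, where $\mu_c=1/\Llambdainv=\tfrac{(1-\gamma)\minrho}{2}$. Reading the modulus off HSC directly---rather than through the generic HC route---preserves the quadratic dependence $\mu_c^2\mu_H$ instead of degrading it, which is exactly what upgrades the polynomial rate of the previous theorem to a logarithmic one.

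Second I would verify that the estimator of \cref{def:state_action_grad_estimator}, run under $\varepsilon$-greedy parametrization, instantiates the Stochastic Inexact First-Order Oracle. The variance $\sigma_x^2,\sigma_y^2$ is controlled by truncating trajectories at horizon $H=O\!\paren{\tfrac{1}{1-\gamma}\log\tfrac1\epsilon}$ and using that $\varepsilon$-greedy exploration lower-bounds state visitation, producing variance proxies that are $\poly(|\calS|,|\calA|+|\calB|,\tfrac{1}{1-\gamma})$ and scale with $\tfrac1\varepsilon$. Since $\mu=0$ here, the only biases are the geometric truncation tail (absorbed into $\delta$ by the choice of $H$) and the $O(\varepsilon)$ perturbation of the occupancy measure induced by forced exploration; both are driven below $\epsilon$ by taking $\varepsilon=\Theta(\epsilon)$ and $H$ logarithmic, so that effectively $\delta_x=\delta_y=O(\epsilon)$.

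Third, with smoothness and Lipschitzness of $U$ obtained by composing $F$ with $\lambda$ via \cref{lemma:continuity}---giving $L=\Theta(L_F\Llambda)$ and $\ell=\Theta(\ell_F\Llambda^2+L_F\elllambda)=\poly(|\calS|,|\calA|+|\calB|,\tfrac{1}{1-\gamma})$---I would apply the nested pPL--pPL bound. Its guarantee of an $\epsilon$-saddle-point of $U$ translates, through the Gradient Dominance lemma, into an $\epsilon$-approximate Nash equilibrium of $\Gamma$ (the saddle-accuracy is rescaled by a $1/\mu$ factor that is absorbed into the poly dependence); here the inner loop converges linearly because $U(x,\cdot)$ is pPL, and the Lipschitz continuity of $y^\star(\cdot)$ from \cref{theorem:contiuity-of-maximizers-cmgs} together with the envelope smoothness in \cref{cor:maxfunc_holder} certifies that the outer iterate also contracts. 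Substituting $\ell,L,\muplx,\muply$ into $T=O\!\paren{\tfrac{\ell^2}{\muplx\muply}\log(\cdot)\log(\cdot)}$ collapses every $\epsilon$-dependence into the advertised $O(\log\tfrac1\epsilon)$ factor, while all remaining dependence is $\poly\paren{\tfrac{1}{\mux},\tfrac{1}{\muy},\tfrac{1}{\minrho},\gamma,\tfrac{1}{1-\gamma},|\calS|,|\calA|+|\calB|}$.

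The main obstacle I expect is the second step, certifying the oracle. The subtlety is that the variance and bias proxies degrade with exactly the same quantities ($\minrho$ and $1-\gamma$) that shrink the pPL moduli $\muplx,\muply$, so one must track these dependencies jointly to be sure the final product is genuinely polynomial in $\tfrac{1}{\minrho}$ and $\tfrac{1}{1-\gamma}$ rather than blowing up. A secondary delicate point is that the abstract theorem presumes a clean two-sided pPL objective, whereas \cref{alg:nepg} only approximately solves the inner maximization; reconciling the inner-loop accuracy $\varepsilon_y$ with the outer contraction---so that accumulated inner error does not overwhelm the linear decrease guaranteed by $\muplx$---requires the envelope smoothness of \cref{cor:maxfunc_holder} and a careful matching of the step-sizes $\etax,\etay$ and batch-sizes $M_x,M_y$.
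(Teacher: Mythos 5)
Your proposal follows essentially the same route as the paper's own proof: with hidden strong concavity no regularization is needed ($\mu=0$), the game reduces via \cref{claim:reduction} to a two-sided pPL problem, and the result follows by instantiating \cref{theorem:formal-nestedgit-2sided-ppl} with the cMG smoothness constants from \cref{lemma:continuity}, the policy-gradient oracle bounds (truncation bias, $\varepsilon$-greedy perturbation, batch variance), and an inner-loop accuracy matched against the outer linear contraction --- exactly the structure of the appendix proof. One bookkeeping correction, which does not affect the claimed $O\paren{\log\tfrac{1}{\epsilon}}\cdot\poly(\cdot)$ bound: \cref{prop:hc_to_graddom} gives the KL-type gradient-domination inequality with modulus $\mu_c^2\mu_H$, and the paper's KL$\to$pPL conversion degrades this \emph{quadratically}, so $\mupl = O(\mu_H^2)$ rather than your $\Theta(\mu_c^2\mu_H)$ (only the QG modulus stays linear in $\mu_H$, via \cref{prop:QG_from_HSC}); accordingly, the upgrade from polynomial to logarithmic iteration complexity comes from the moduli being independent of $\epsilon$ (no $\mu=O(\epsilon)$ regularization and hence no $\epsilon$-scaled bias), not from any preservation of the modulus under the HSC-to-pPL translation.
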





\subsection{Alternating Policy Gradient Descent Ascent}
\begin{algorithm}[htb]
   \caption{\apgda: Alternating Policy GDA}
   \label{alg:apgda}
\begin{algorithmic}
    \INPUT{$(x\at{0},y\at{0})$, step-sizes $\etax, \etay, T>0$, regul. coeff. $\mu\geq0$}
   \FOR{$t=1$ {\bfseries to} $T$}
        \STATE $x\at{t} \gets \proj_{\calX}\paren{x\at{t-1} - \etax \hat{\nabla}_x U^\mu(x\at{t-1},y\at{t-1})}$
        \STATE $y\at{t} \gets \proj_{\calY}\paren{y\at{t-1} + \etay \hat{\nabla}_y U(x\at{t},y\at{t-1}) }$
   \ENDFOR
   \STATE Pick $t^\star\in\{1,\dots,T\}$ of the best iterate.
   \OUTPUT{$(x\at{t^\star}, y\at{t^\star})$.}
\end{algorithmic}
\end{algorithm}

\begin{theorem}[Informal Version of \cref{theorem:formal-apgda-nc-ppl}]
        Consider a two-player zero-sum cMG, $\Gamma$, and let $\epsilon$ be a desired accuracy $\epsilon>0$. Then, \cref{alg:apgda} run with appropriately step-sizes $\etax, \etay>0$, exploration parameters $\varepsilon_x,\varepsilon_y>0$ batch-sizes $M_x,M_y>0$, and a regularization coefficient $\mu= O(\epsilon)$,  outputs an $\epsilon$-approximate Nash equilibrium after a number of iterations that is at most:
    $$O\textstyle\paren{\frac{1}{\epsilon^6}} \poly\paren{\frac{1}{\minrho},\gamma,\frac{1}{1-\gamma},|\calS|,|\calA|+|\calB|}.$$
\end{theorem}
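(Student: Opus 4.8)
The plan is to instantiate the general constrained NC-pPL guarantee of \cref{theorem:informal-agda-ncppl} with $f = U^\mu$ and then translate its $\epsilon$-saddle-point conclusion back into an $\epsilon$-Nash guarantee for the original utility $U$. First I would check that $U^\mu$ satisfies every hypothesis of that theorem. Its smoothness modulus $\ell$ and Lipschitz modulus $L$ follow by the chain rule from the Lipschitzness and Lipschitz-gradient bounds of the occupancy map in \cref{lemma:continuity} composed with the smoothness of $F$ and of the quadratic regularizer; this yields $\ell, L$ polynomial in $|\calS|$, $|\calA|+|\calB|$, $1/(1-\gamma)$. The pPL half of the NC-pPL structure is exactly \cref{claim:reduction}: $U^\mu(x,\cdot)$ is hidden strongly concave (regularization strength $\mu$, inverse-map constant $\Llambdainv$), and pushing this through \cref{prop:hc_to_graddom} and the HSC$\Rightarrow$pPL equivalence recorded before \cref{prop:QG_unified} certifies the pPL condition in $y$ with modulus $\mu_{\mathrm{pPL}} = \Theta(\mu^2)$ up to constants in $\min_s\varrho(s)$ and $1-\gamma$. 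The single-loop scheme additionally requires the Lipschitz continuity of $y^\star(x)$ and the smoothness of $\Phi^\mu$, which are supplied verbatim by \cref{theorem:contiuity-of-maximizers-cmgs,cor:maxfunc_holder}.

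Second, I would cast \cref{alg:apgda} inside the Stochastic Inexact First-Order Oracle model. The max player's ascent direction is an unbiased stochastic estimate obtained from the trajectory estimator of \cref{def:state_action_grad_estimator}, so $\delta_y = 0$; the $\varepsilon$-greedy parametrization keeps its second moment $\sigma_y^2$ finite and the minibatch size $M_y$ damps the variance to the level the general theorem needs. The min player cannot form the regularizer's $x$-gradient exactly, since $\tfrac{\mu}{2}\|\lambda_2(x,y)\|^2$ couples both players' occupancies and its $x$-derivative depends on the opponent's policy; this produces a systematic bias bounded, via the $L_{\mathrm{reg}}$-Lipschitzness of the regularizer, by $\delta_x = O(\mu L_{\mathrm{reg}})$, with variance again controlled by $M_x$. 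With these identifications \cref{theorem:informal-agda-ncppl} applies to $U^\mu$ and returns an $(\epsilon' + \delta_x)$-saddle point in $O\!\big(\kappa^2 \ell L(\diam{\calX}+\diam{\calY})/(\epsilon')^2\big)$ iterations, where $\kappa = \ell/\mu_{\mathrm{pPL}}$.

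Third, I would convert this saddle-point certificate into an $\epsilon$-NE of $U$. The Gradient Dominance Lemma applies to $U(\cdot,y)$ and to $U^\mu(x,\cdot)$ with the $\epsilon$-independent modulus $\mu_x = \mu_y = \tfrac{(1-\gamma)\min_s\varrho(s)}{2\sqrt 2}$, so the two variational inequalities defining an $\epsilon'$-SP upgrade, with no loss of $\epsilon$-powers, to the value-gap bounds $U^\mu(x^\star,y^\star) - \min_x U^\mu(x,y^\star) = O(\epsilon' + \delta_x)$ and $\max_y U^\mu(x^\star,y) - U^\mu(x^\star,y^\star) = O(\epsilon')$. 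Since $|U - U^\mu| \le \tfrac{\mu}{2}\diam{\Delta(\calS\times\calB)}^2 = O(\mu)$ uniformly, swapping $U^\mu$ for $U$ in both inequalities costs an additive $O(\mu)$; fixing $\mu = \Theta(\epsilon)$ (so that $\delta_x = O(\mu L_{\mathrm{reg}}) = O(\epsilon)$) and $\epsilon' = \Theta(\epsilon)$ then yields an $\epsilon$-NE. Finally, substituting $\mu_{\mathrm{pPL}} = \Theta(\mu^2) = \Theta(\epsilon^2)$ gives $\kappa = \Theta(1/\epsilon^2)$ and hence $\kappa^2 = \Theta(1/\epsilon^4)$, so the iteration bound becomes $T = O(1/\epsilon^6)\cdot\poly\big(\tfrac{1}{\minrho},\gamma,\tfrac{1}{1-\gamma},|\calS|,|\calA|+|\calB|\big)$, as claimed.

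The step I expect to be the main obstacle is the quadratic degradation of the pPL modulus: establishing that hidden strong concavity of strength $\mu$ certifies a proximal-PL modulus scaling as $\mu^2$ rather than $\mu$ in the constrained (proximal) geometry, since it is precisely this $\kappa^2 = \Theta(1/\epsilon^4)$ blow-up that promotes the generic $1/\epsilon^2$ rate of \cref{theorem:informal-agda-ncppl} to the stated $1/\epsilon^6$. A secondary delicate point is keeping the min player's two error sources separated and simultaneously small: the systematic regularizer bias $\delta_x$ and the $O(\mu)$ perturbation of the game value must both be absorbed by the single parameter $\mu = \Theta(\epsilon)$ while the max player's oracle remains exact, so that the final accuracy is genuinely $\epsilon$ and not a larger power of $\epsilon$.
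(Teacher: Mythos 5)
Your proposal is correct and mirrors the paper's own proof: the paper likewise instantiates the general stochastic alternating GDA guarantee for NC-pPL objectives (\cref{theorem:formal-agda-ncppl}) with $f = U^\mu$, treats the min player's inability to compute the coupled regularizer as a systematic bias $\delta_x = O(\mu \Llambda)$ (\cref{claim:regularization-error}) while the max player's oracle stays unbiased, converts stationarity into a Nash gap via gradient dominance, and sets $\mu = \Theta(\epsilon)$, arriving at the same $O(1/\epsilon^6)$ iteration count. The only divergence is bookkeeping: you take $\kappa = \ell/\mupl = \Theta(\epsilon^{-2})$ as in the informal min-max statement, whereas the paper's formal proof uses the sharper $\kappa = \ell/\sqrt{\muqg\mupl} = \Theta(\epsilon^{-3/2})$ by drawing quadratic growth directly from hidden strong concavity rather than through the pPL modulus (the refinement it highlights in ``Piecing the Framework Together''), so the two accountings distribute the $\epsilon$-powers differently while landing on the same headline bound.
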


\begin{theorem}[Informal Version of \cref{theorem:formal-apgda-cmg-ppl-ppl}]
    \label{informal:apgda-hsc}
    Consider a two-player zero-sum cMG, $\Gamma$, with utility functions that are hidden strongly concave with moduli $\mux, \muy>0$ respectively.
    Additionally, let $\epsilon$ be a desired accuracy $\epsilon>0$. Then, \cref{alg:apgda} run with step-sizes $\etax, \etay>0$, and batch-sizes $M_x,M_y>0$ outputs an $\epsilon$-approximate Nash equilibrium after a number of iterations that is at most:
    $$\textstyle O\paren{\log \frac{1}{\epsilon}} \poly\paren{\frac{1}{\mux},\frac{1}{\muy},\frac{1}{\minrho},\gamma,\frac{1}{1-\gamma},|\calS|,|\calA|+|\calB|}.$$
    
\end{theorem}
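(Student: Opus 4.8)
The plan is to instantiate the abstract two-sided pPL result \cref{theorem:informal-agda-2sided-ppl} on the cMG objective $U$ and then translate the resulting saddle-point guarantee into a Nash-equilibrium guarantee. Because the utilities are assumed hidden strongly concave on both sides, no regularization is needed: I set $\mu=0$, so that $U^\mu\equiv U$ and \cref{alg:apgda} (\apgda) literally coincides with the \altgda iteration analyzed in \cref{theorem:informal-agda-2sided-ppl}. A pleasant consequence of $\mu=0$ is that the minimizing player's update no longer involves the coupled regularizer $\tfrac{\mu}{2}\norm{\lambda_2(x,y)}^2$, so its stochastic gradient becomes unbiased; together with the maximizing player's always-unbiased estimator this gives $\delta_x=\delta_y=0$, and the inexactness terms of the oracle model in \cref{sec:min-max-opt} disappear.

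Next I would verify the hypotheses of \cref{theorem:informal-agda-2sided-ppl}. By \cref{claim:reduction}, for every fixed $y$ the map $U(\cdot,y)$ is hidden strongly convex with hidden modulus $\mux$ and inverse map controlled by $\Llambdainv$ from \cref{lemma:continuity}; Proposition~\ref{prop:hc_to_graddom}(ii) then upgrades this to a genuine pPL condition in $x$ with effective modulus $\mu_x^{\mathrm{pl}}=\Theta\paren{\Llambdainv^{-2}\mux}$, and symmetrically $U(x,\cdot)$ is pPL in $y$ with modulus $\mu_y^{\mathrm{pl}}=\Theta\paren{\Llambdainv^{-2}\muy}$; since $\Llambdainv=\tfrac{2}{\minrho(1-\gamma)}$, these moduli are polynomial in $\minrho$, $1-\gamma$, $\mux$, $\muy$. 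The Lipschitz constant $L$ and smoothness constant $\ell$ of $U=F\circ\lambda$ follow from the chain rule together with the bounds $\Llambda,\elllambda$ of \cref{lemma:continuity} and are polynomial in $|\calS|$, $|\calA|+|\calB|$, $\tfrac{1}{1-\gamma}$; the diameters $\diam{\calX},\diam{\calY}$ are bounded as each policy space is a product of simplices. Finally, using the estimator of \cref{def:state_action_grad_estimator} under $\varepsilon$-greedy exploration, the second moments $\sigma_x^2,\sigma_y^2$ are finite and polynomial in the problem parameters, so selecting the batch sizes $M_x,M_y$ exactly as prescribed by \cref{theorem:informal-agda-2sided-ppl} (with $\mux,\muy$ replaced by $\mu_x^{\mathrm{pl}},\mu_y^{\mathrm{pl}}$) drives the per-iteration variance below the target accuracy.

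Invoking \cref{theorem:informal-agda-2sided-ppl} then yields an $\epsilon'$-saddle-point $(x,y)$ of $U$ after $T=O\paren{\frac{\ell^3}{\mu_x^{\mathrm{pl}}(\mu_y^{\mathrm{pl}})^2}\log\frac{L(\diam{\calX}+\diam{\calY})}{\epsilon'}}$ iterations. To convert this into a Nash equilibrium I would apply the Gradient Dominance lemma at $(x,y)$: it gives $U(x,y)-\min_{x'}U(x',y)\leq\tfrac{1}{\mux}\max_{x'}\inprod{\nabla_x U(x,y)}{x-x'}\leq\tfrac{\epsilon'}{\mux}$ and, symmetrically, $\max_{y'}U(x,y')-U(x,y)\leq\tfrac{\epsilon'}{\muy}$, which together certify an $\epsilon$-approximate Nash equilibrium in the sense of \eqref{eq:NE} once $\epsilon'=\epsilon\cdot\min\{\mux,\muy\}$. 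Since this substitution only rescales the argument of the logarithm and multiplies by polynomial factors, plugging the polynomial bounds on $\ell,L,\diam{\calX},\diam{\calY},\mu_x^{\mathrm{pl}},\mu_y^{\mathrm{pl}}$ into $T$ collapses the iteration count to $O\paren{\log\tfrac{1}{\epsilon}}\poly\paren{\tfrac{1}{\mux},\tfrac{1}{\muy},\tfrac{1}{\minrho},\gamma,\tfrac{1}{1-\gamma},|\calS|,|\calA|+|\calB|}$, as claimed.

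The conceptual reduction is clean, so I expect the main obstacle to lie in the constant-tracking of the middle step: correctly propagating the hidden moduli $\mux,\muy$ through Proposition~\ref{prop:hc_to_graddom}(ii) and \cref{lemma:continuity} to the effective pPL moduli $\mu_x^{\mathrm{pl}},\mu_y^{\mathrm{pl}}$, and, most delicately, making the variance bounds $\sigma_x^2,\sigma_y^2$ quantitative under $\varepsilon$-greedy exploration, where the interplay between the exploration floor $\varepsilon$, the effective horizon $\tfrac{1}{1-\gamma}$, and the score-function magnitude $\nabla_x\log x(a\mid s)$ must be controlled carefully to guarantee the advertised polynomial dependence.
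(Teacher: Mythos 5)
Your overall architecture is the same as the paper's: because the utilities are already hidden strongly concave, no regularization is used ($\mu=0$), and the proof instantiates the two-sided pPL \altgda\ result (\cref{theorem:formal-agda-2sided-ppl}) on the cMG utility, with the pPL moduli obtained from the HSC structure, the constants $L,\ell,\diam{\calX},\diam{\calY}$ from \cref{lemma:continuity}, and the second moments from \cref{lemma:variance:policy-gradient-estimator}. Your last step—converting an $\epsilon'$-SP into an $\epsilon$-NE through the gradient-dominance lemma—differs cosmetically from the paper, which reads the equilibrium certificate directly off the Lyapunov quantities $\E U(x\at{T},y\at{T})-\Phi^\star$ and $\E\Phi(x\at{T})-\E U(x\at{T},y\at{T})$; both conversions cost only polynomial factors (note the formal theorem outputs value gaps, not stationarity, so your route implicitly pays an extra square-root through smoothness, which is harmless inside the logarithm).

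There are, however, two concrete inaccuracies. The more serious one: setting $\mu=0$ does \emph{not} give $\delta_x=\delta_y=0$. The estimator of \cref{def:state_action_grad_estimator} is built from an $H$-step truncated trajectory and from a pseudo-reward evaluated at an \emph{empirical} occupancy measure, so it is biased regardless of regularization (\cref{lemma:bias:policy-gradient-estimator}); moreover, the players actually act with the $\varepsilon$-greedy policies $\pi_x,\pi_y$, so every sampled gradient is a gradient of the composed function $U\circ w$, not of $U$. The paper therefore keeps $\delta_x,\delta_y>0$, drives the truncation bias below $O(\epsilon)$ by taking $H_x,H_y=\Theta\paren{\tfrac{1}{1-\gamma}\log\tfrac{1}{\epsilon}}$, and transfers the guarantee from $U\circ w$ back to $U$ via \cref{claim:epsilon-greedy-D,minmax-bound-on-composed-function}, which is precisely what forces $\varepsilon_x,\varepsilon_y=O(\epsilon\cdot\mathrm{poly})$; treating exploration purely as a variance issue, as you do, misses that it is also a bias-and-transfer issue, and without these steps the proof is incomplete. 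Second, your modulus bookkeeping is off: \cref{prop:hc_to_graddom}(ii) yields the KL-type subgradient inequality with constant $2\mu_c^2\mu_H$, and passing from KL to the pPL condition that \cref{theorem:formal-agda-2sided-ppl} actually requires degrades this quadratically, giving $\mupl = O\paren{\mu_c^4\mu_H^2/\ell}$ rather than $\Theta\paren{\Llambdainv^{-2}\mux}$—the paper flags this degradation explicitly. Neither issue changes the informal conclusion (the dependence remains polynomial and the rate remains $O\paren{\log\tfrac{1}{\epsilon}}$), but both must be repaired for the argument to go through.
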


We deem noteworthy the fact that \cref{informal:apgda-hsc} guarantees (expected) last-iterate convergence for a class of nonconvex games that take place over a constrained domain. 

\section{Numerical Results}\label{sec:experiments}

\begin{figure*}[htb]
    \centering
    \includegraphics[width=\textwidth]{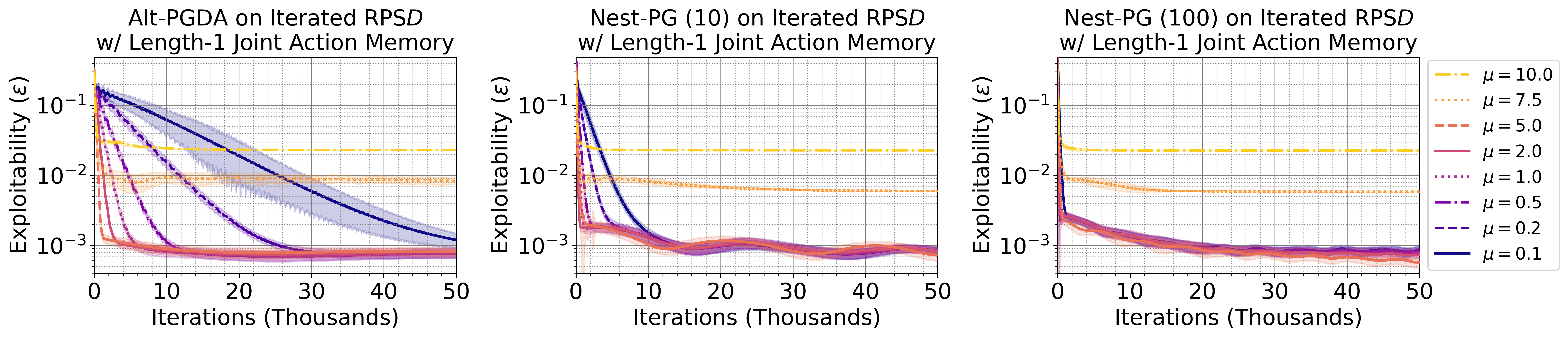}
    \caption{Exploitability decays towards a small, but positive value corresponding to the bias introduced by the regularization coefficient $\mu$. Results are averaged over $100$ trials, each running the algorithm with a different randomly initialized policy profile. The leftmost plot reports results for Algorithm~\ref{alg:apgda}; the right two report results for Algorithm~\ref{alg:nepg} with $T_{\mathrm{in}}=10$ and $100$ respectively.}
    \label{fig:apgda-converges}
\end{figure*}

We demonstrate Algorithms~\ref{alg:nepg} and~\ref{alg:apgda} on an iterated version of rock-paper-scissors-\emph{dummy} where each player remembers the actions selected in the previous round. Hence, the previous joint action constitutes the state in the Markov game. The \emph{dummy} action is dominated by all other actions such that the Nash equilibrium of the stage game is uniform across rock, paper, and scissors with zero mass on \emph{dummy}. We set the step-sizes $\tau_x=\tau_y=0.1$ and vary the regularization coefficient $\mu$ to demonstrate its effect in biasing convergence.

Our experiments suggest clearly that \begin{inparaenum}[(i)]
    \item there is a pronounced trade-off between speed of convergence and the exploitability of the solution of the perturbed problem \item more inner-loop iterations account for more stable exploitability decrease across consecutive iterates. 
\end{inparaenum}
\section{Conclusion and Future Work}\label{sec:future}

Convex Markov games (cMGs) unify the domains of convex MDPs and Markov games. Even the rudimentary form of a cMG, a two-player pure competition, fosters a rich mosaic of applications spanning language model alignment, self-driving cars, and creative chess playing, to name a few. 

In this work, we present the first algorithmic solution for computing Nash equilibria answering \cref{central-question}.
To achieve this,
we develop the first guarantees of convergence of alternating descent-ascent to a saddle-point for nonconvex functions that satisfy a one-sided or two-sided proximal Polyak-\L ojasiewicz condition over constrained domains. We utilized these results to design a number of \textit{independent policy gradient algorithms} for convex Markov games that provably \textit{converge} to an approximate Nash equilibrium. In a nutshell, we develop \textit{simple-to-use learning} dynamics that converge to the optimal game solutions even under realistic and challenging conditions where batch learning only allows noisy gradient estimation.

In terms of a message to practitioners, our work highlights that the regularized \eqref{agda} algorithm is exceptionally effective and easy to deploy to tackle min-max problems. Its efficacy is achieved by mere \begin{inparaenum}[(i)] \item\textit{regularization}, \item \textit{alternating updates}, and \item \textit{step-size magnitude separation} \end{inparaenum}. Moreover, the generality of \cref{theorem:informal-agda-ncppl,theorem:informal-agda-2sided-ppl}, coupled with the prevalence of the PL condition in modern machine learning objectives, makes a strong case in favor of \eqref{agda} as the default algorithm for min-max optimization. Furthermore, we believe that it can seamlessly accommodate more elaborate update schemes on top of (i-iii), like adaptive learning rates, preconditioning, and, if necessary, higher-order information. 

From a theoretical standpoint, we anticipate fascinating explorations of multi-player cMG interactions and a deep investigation of an array of game-theoretic solution concepts. We hope to see the experience of the rich multi-agent cMG applications inform theory and give rise to meaningful and challenging problems like equilibrium learning, equilibrium selection, and notions of equilibrium performance. We reiterate the fact that cMGs render the value-iteration   
subroutine useless. The latter lies at the heart of the corresponding algorithmic solutions for equilibrium learning or computation in conventional MGs; its inefficacy in cMGs, more than just posing an additional challenge, can stimulate the search for new algorithmic tools.
\section*{Acknowledgements}
FK gratefully acknowledges Panayotis Mertikopoulos for his mentoring and insightful discussions during the former's 2024 summer research fellowship at Archimedes, Athena Research Center. 



\bibliography{main}

\newpage
\onecolumn

\appendix
\doparttoc
\faketableofcontents
\part{Appendix} 
\parttoc
\section*{Roadmap of the Appendix} Helping the reader navigate, we list a short summary of the topics that are covered in each section.  

\begingroup
\setlist{leftmargin=30pt,rightmargin=30pt,labelindent=20pt,topsep=0pt}
\setlist[enumerate]{wide=10pt, leftmargin=20pt, labelwidth=10pt, align=left}
\begin{itemize}
    \item 
In \cref{sec:ext-related-work}, we attempt to present an overview of key research directions, theoretical landmarks, and applications. Interested readers can find further details in the cited literature therein although an exhaustive survey is infeasible. 
\item In \cref{sec:opt-prelims-appendix}, we establish a series of lemmata concerning gradient descent for smooth functions:
\begin{inparaenum}[(i)]
    \item \cref{lemma:difference-in-gradient,lemma:projected_gradient_descent_ascent,lemma:stochastic-vs-deterministic-grad-mapping} are standard results regarding the iterates of projected gradient descent (with an exact or inexact and stochastic first-order oracle).
    \item We introduce $\mathcal{D}_{X}$ as the primary proxy of stationarity and derive a descent inequality for inexact gradient estimates based on this proxy (\cref{lemma:D-vs-grad-mapping,D-descent-lemma,lemma:reddi-three-point}).
    \item The section concludes with an analysis of the Lipschitz continuity of this proxy in min-max optimization. We also examine how one player's deviation affects the proxy of the other player.
\end{inparaenum}

\item \cref{sec:regularity-conditions}, we study relationship of Hidden Strong Convexity (HSC) and the equivalent conditions of the proximal Polyak-Łojasiewicz (pPL) and the Kurdyka-Łojasiewicz (KL).

\item In \cref{sec:convergence-min-max}, we present key results, including:
\begin{inparaenum}[(i)]
    \item The Lipschitz continuity of maximizers,
    \item Convergence rates for nested and alternating schemes under the general framework of hidden-convex hidden concave min-max optimization.
\end{inparaenum}
    
\item Finally, in \cref{sec:cmgs-convergence-appendix}, we extend our analysis to convex multi-agent reinforcement learning, combining the results from \cref{sec:convergence-min-max} with policy gradient estimators to compute parameter updates.
\end{itemize}
\endgroup

\section{Further Related Work}
\label{sec:ext-related-work}
\subsection{Hidden Convex Optimization}
Hidden convexity has emerged as an important structural property in nonconvex optimization, enabling global convergence results in settings where traditional convex analysis fails---a survey can be found here~\citep{xia2020survey}. \citet{fatkhullin2023stochastic} have decisively proven the convergence of first-order methods even for the nonsmooth and stochastic settings. Moreover, hidden convexity has been extensively studied across diverse applications, including policy optimization in reinforcement learning (RL) and optimal control \citep{hazan2019provably, zhang2020variational, ying2023policy}, generative models \citep{kobyzev2020normalizing}, supply chain and revenue management \citep{feng2018supply, chen2024efficient}, and neural network training \citep{wang2020hidden}. Even earlier, instances of an implicit convex structure have appeared in a number of works  in the past \citep{ben1996hidden,li2005hidden,wu2007peeling}. In a certain sense, hidden convexity falls into the general category of metric regularity conditions~\citep{karimi2016linear,li2018calculus,drusvyatskiy2019efficiency,drusvyatskiy2018error,liao2024error,rebjock2024fast,luo1993error,oikonomidis2025forward} that have been used to prove convergence of iterative gradient-based methods.

Particularly, hidden convexity ensures convergence in nonconvex-nonconcave games. Namely, \citet{vlatakis2021solving} introduced the notion of hidden-convex hidden-concave games and proved global convergence to a NE. Further extending these ideas, \citep{sakos2024exploiting} investigates the impact of hidden structure and show that such properties can be leveraged to enhance the stability of first-order methods. The generalized notion of hidden \textit{monotonicity} \citep{mladenovic2021generalized} guarantees global convergence for multi-player nonconvex games.

\subsection{Min-Max Optimization}
The literature of min-max optimization is long-standing and intimately connected to game theory~\citep{v1928theorie}. In recent years, the exploration of nonconvex-nonconcave min-max problems gained prominence in machine learning, particularly due to developments like generative adversarial networks (GANs)~\citep{goodfellow2014generative} and adversarial learning~\citep{madry2017towards}. Research has focused on defining appropriate solution concepts and developing methods to mitigate oscillatory behaviors in optimization algorithms. A min-max optimization problem is often formulated as a \textit{variational inequality} problem (VIP)~\citep{facchinei2003finite} and the literature has managed to guarantee provable convergence to solutions of the corresponding VIP only under certain assumptions. Namely, convergence to an approximate solution of the VIP point is guaranteed under \begin{inparaenum}[(i)] \item monotonicity, (or, a convex-concave objective) \item  a gradient domination, and \item other regularity conditions.
\end{inparaenum}

In particular, \textit{under monotonicity}, the proximal point methods \citep{martinet1970regularisation, rockafellar1976monotone} for VIP guarantee convergence. When the objective function is Lipschitz and strictly convex-concave, simple forward-backward schemes are known to converge. Moreover, when coupled with Polyak–Ruppert averaging \citep{ruppert1988efficient, polyak1992acceleration, nemirovski2009robust}, these methods achieve an \({O}(1/\epsilon^2)\) complexity without requiring strict convex-concavity \citep{bauschke2011convex}. If additionally the objective function has Lipschitz continuous gradients, the extragradient algorithm \citep{korpelevich1976extragradient} ensures trajectory convergence without strict monotonicity assumptions, while the time-averaged iterates converge at a rate of \({O}(1/\epsilon)\) \citep{nemirovski2004prox}. Furthermore, in strongly convex-concave settings, forward-backward methods compute an \(\epsilon\)-saddle point in ${O}(1/\epsilon)$ steps. If the operator is also Lipschitz continuous, classical results in operator theory establish that simple forward-backward methods are sufficient to achieve linear convergence \citep{facchinei2007finite, bauschke2011convex}.  

Under a single-sided gradient domination condition, convergence to a stationary point of the min-max objective has been proven in \citep{lin2020gradient,nouiehed2019solving,yang2022faster}. Under a two-sided gradient domination condition, \citep{daskalakis2020independent,yang2020global,zheng2023universal} have proven the convergence to a solution of the corresponding VIP (and equivalently to a min-max point). Further, under the Minty condition, \citep{mertikopoulos2018optimistic,liu2021first,diakonikolas2021efficient} show convergence to solution of the VIP even under nonconvexity. Evenmore, \citep{azizian24rate} shows convergence of gradient descent under various conditions of regularity and local optimization geometry.

\subsection{Convex Markov Decision Processes}
The study of convex reinforcement learning emerged as a natural extension of standard RL to handle more expressive, non-linear utility functions. \citet{hazan2019provably} introduced the problem of reward-free exploration of an MDP through state-occupancy entropy maximization. To tackle policy optimization, they proposed a provably efficient algorithm based on the Frank-Wolfe method. Further advances were made by \citet{zhang2020variational,zhang2021convergence}, who studied convex RL under the framework of RL with general utilities. Their key contribution was a policy gradient estimator for general utilities and the identification of a hidden convexity property within the convex RL objective, enabling statistically efficient policy optimization in the infinite-trials setting. More recently,  in \citep{zahavy2021reward,geist2022concave} convex RL is reinterpreted through a game-theoretic perspective. The former work views convex RL as a min-max game between a policy player and a cost player, while the latter positioned convex RL as a subclass of mean-field games.

A related strand of research focuses on the expressivity of scalar (Markovian) rewards. \citet{abel2021expressivity} demonstrated that scalar rewards cannot naturally encode all tasks, such as policy ordering or trajectory ranking. While convex RL extends the expressivity of scalar RL in these aspects, it still has inherent limitations. Specifically, infinite-trial convex RL excels at defining policy orderings but lacks full trajectory ordering capabilities, as it only considers the stationary state distribution. In contrast, the finite-trials convex RL formulation presented in this paper naturally captures trajectory orderings at the cost of reduced expressivity in policy ordering.

Another relevant direction concerns RL with trajectory feedback, where learning occurs through entire sequences rather than scalar rewards. Most prior works in this area assume an underlying scalar reward model, which merely delays feedback until the episode’s end \citep{efroni2021trajectory}. A notable exception is the once-per-episode feedback model studied by \citet{chatterji2021once}. Lastly, related research in multi-objective RL has explored the use of vectorial rewards to encode convex objectives. The works of \citet{cheung2019exploration, cheung2019regret} demonstrated that stationary policies are often suboptimal in such settings, necessitating non-stationary strategies. They provided principled procedures to optimize policies with sub-linear regret, complementing our analysis in infinite-horizon convex RL, where distinctions between finite and infinite trials diminish.


\subsection{Markov Games and Multi-Agent Reinforcement Learning}
Markov, or \textit{stochastic}, games, were introduced by Lloyd S. Shapley~\citep{shapley1953stochastic}. Interestingly enough, their introduction coincides with that of single-agent of Markov decision processes \citep{schneider1957,BELLMAN1958228}. In an MG, agents interact with both the environment and each other. Each agent must balance immediate rewards against the potential future benefits of guiding the system to more advantageous states. Since their inception, MGs have served as the canonical model of MARL \citep{littman1994markov} which in turn encompasses numerous applications like autonomous vehicles \cite{shalev2016safe}, multi-agent robotics \citep{singh2022reinforcement,gronauer2022multi}, and general recreational game-playing \cite{mnih2013playing,silver2017mastering,berner2019dota}. 

In recent years, theoretical MG research has focused on developing computationally and statistically efficient algorithms. The literature has experienced a rapid development making an exhaustive review infeasible in the context of this small note. We outline results regarding \begin{inparaenum}[(i)] \item the computational complexity of equilibrium computation in MGs, \item ``no-regret'' learning approaches with convergence to the corresponding equilibrium notion, \item convergence to a NE when the game's structure permits it, and \item offline equilibrium learning from data.
\end{inparaenum} 

A number of works \citep{deng2023complexity,jin2022complexity,daskalakis2023complexity} concurrently proved that (coarse) correlated equilibria are intractable in infinite-horizon MG when the policies are required to be stationary and Markovian; computing them is $\mathsf{PPAD}$-complete. This is a pronounced contrast to normal-form games where they are computable in strongly polynomial time. In terms of ``no-regret'' approaches, the literature has been quite fruitful, as it considers finite-horizon games and policies that are stationary and both Markovian and non-Markovian. The solutions \citep{jin2021v,yang2022t,zhang2022policy,erez2023regret,cai2024near} are diverse and build upon the \textit{follow-the-regularized-leader} and \textit{online-mirror-descent} framework. Nash equilibrium convergence has mainly considered fully competitive two-player zero-sum MGs of infinite horizon, the Markovian counterpart of potential games, or their unification (adversarial team Markov games) and games with certain structural properties on the rewards and the dynamics. In two-player zero-sum games there have been results with a mostly stochastic-optimization approach 
\citep{daskalakis2020independent,wei2021last,sayin2021decentralized,cen2022faster,zeng2022regularized}. In addition, a notable work guarantees convergence to a NE in infinite-horizon games using only bandit feedback~\citep{cai2023uncoupled}. Then, for Markov potential games, global convergence of policy gradient methods has been presented in \citep{zhang2022global,leonardos2021global,ding2022independent}. Moreover, \citep{kalogiannis2022efficiently,kalogiannis2024learning} consider an MG of a ``team'' versus an ``adversary'' similar to \citep{von1997team} and guarantee provable convergence to NE. Lastly, \citep{kalogiannis23polymatrix,park2023multi} generalize zero-sum polymatrix game to their Markovian counterpart with extra assumptions on the dynamics, and prove the tractability of NEs. In terms of learning equilibria from samples, \citep{bai2020provable,cui2022offline} guaranteed sample-efficient learning of NE in zero-sum games and \citep{song2021can} went further to consider general-sum games. Finally, \citep{wang2023breaking,cui2023breaking} consider the sample complexity of learning an equilibrium beyond the tabular setting. 



\section{Optimization Preliminaries}
In this section we go over some rudimentary optimization lemmata as well as some novel exploration of the min-max optimization landscape for pPL functions. Namely, \cref{subsec:min-max-lemmata} compliments the study of \citep{nouiehed2019solving} for the constrained domain. To our knowledge, we offer the first such investigation of the constrained pPL landscape.
\label{sec:opt-prelims-appendix}
\subsection{Optimization Definitions \& Lemmata}

\begin{lemma}[Smoothness inequality]
    Assume that $f:\calX\to\R$ is an $\ell$-smooth function. Then, for $x,y \in \calX$ it holds that:
    \begin{equation}
        f(y) + \inprod{\nabla f(y)}{y-x} - \frac{\ell}{2}\norm{y-x}^2 \leq f(x) \leq f(y) + \inprod{\nabla f(y)}{y-x} + \frac{\ell}{2}\norm{y-x}^2.
    \end{equation}
\end{lemma}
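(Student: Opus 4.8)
The plan is to establish the two-sided bound by integrating the gradient along the segment joining $x$ and $y$, exploiting that $\ell$-smoothness means $\nabla f$ is $\ell$-Lipschitz, i.e. $\norm{\nabla f(u) - \nabla f(v)} \leq \ell\norm{u-v}$ for all $u,v\in\calX$. First I would invoke the convexity of $\calX$: for every $t\in[0,1]$ the point $y + t(x-y)$ lies in $\calX$, so $f$ and $\nabla f$ are well defined along the whole segment, and the scalar function $\phi(t):= f\paren{y + t(x-y)}$ is differentiable with $\phi'(t) = \inprod{\nabla f(y+t(x-y))}{x-y}$.

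Next, by the fundamental theorem of calculus applied to $\phi$, I would write
\begin{equation}
f(x) - f(y) = \int_0^1 \inprod{\nabla f(y + t(x-y))}{x-y}\,dt,
\end{equation}
and then subtract the first-order term $\inprod{\nabla f(y)}{x-y}$ from both sides to isolate the curvature remainder:
\begin{equation}
f(x) - f(y) - \inprod{\nabla f(y)}{x-y} = \int_0^1 \inprod{\nabla f(y+t(x-y)) - \nabla f(y)}{x-y}\,dt.
\end{equation}

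The key estimate then follows from Cauchy--Schwarz together with the $\ell$-Lipschitz property of the gradient: for each fixed $t$,
\begin{equation}
\abs{\inprod{\nabla f(y+t(x-y)) - \nabla f(y)}{x-y}} \leq \norm{\nabla f(y+t(x-y)) - \nabla f(y)}\norm{x-y} \leq \ell t \norm{x-y}^2.
\end{equation}
Integrating the bound $\ell t\norm{x-y}^2$ over $t\in[0,1]$ produces $\tfrac{\ell}{2}\norm{x-y}^2$, whence $\abs{f(x) - f(y) - \inprod{\nabla f(y)}{x-y}} \leq \tfrac{\ell}{2}\norm{x-y}^2$. Unfolding this absolute value into its two one-sided inequalities, and using that the quadratic term is orientation-free (so $\norm{x-y}=\norm{y-x}$), recovers the displayed chain.

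I expect no genuine obstacle here, as this is the classical descent lemma; the only points requiring attention are (i) checking that convexity of $\calX$ keeps the segment inside the domain so the integral representation is valid, and (ii) tracking the sign/orientation of the linear term carefully, since the inner product is written with $y-x$ and one must match the direction of the Taylor expansion consistently. An alternative route would replace the explicit integral with the integral-remainder form of Taylor's theorem, but the segment-integration argument above is the most self-contained and transfers verbatim to the stochastic and inexact-gradient refinements needed later in \cref{sec:opt-prelims-appendix}.
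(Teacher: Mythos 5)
Your segment-integration argument is the standard proof of the classical descent lemma and it is sound; note also that the paper states this lemma without any proof, so there is no in-paper argument to compare against. The one substantive issue is the orientation of the linear term, which you mention under your point (ii) but then wave away. Your integral representation gives $\abs{f(x) - f(y) - \inprod{\nabla f(y)}{x-y}} \leq \frac{\ell}{2}\norm{x-y}^2$, whose linear term is $\inprod{\nabla f(y)}{x-y}$; the lemma as printed has $\inprod{\nabla f(y)}{y-x}$ instead. These are not interchangeable --- only the quadratic term is orientation-free --- and the printed version is in fact false: for a nonzero linear $f$ (which is $\ell$-smooth for every $\ell \geq 0$) the left-hand inequality reduces to $2\inprod{\nabla f}{y-x} \leq \frac{\ell}{2}\norm{y-x}^2$, which fails for suitable $x,y$ once $\ell$ is small. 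So your closing claim that unfolding the absolute value ``recovers the displayed chain'' is not literally correct: what you recover is the corrected chain with $x-y$, which is moreover the form the paper actually uses downstream (e.g., in the proof of \cref{D-descent-lemma}, smoothness is invoked as $f(\bar{x}_{+}) \leq f(x) + \inprod{\nabla f(x)}{\bar{x}_{+} - x} + \frac{\ell}{2}\norm{\bar{x}_{+}-x}^2$, i.e., with the increment pointing from the anchor to the evaluated point). You should state explicitly that the printed display contains a sign typo in the inner product and that your argument establishes the intended (and actually used) statement; with that caveat, your proof is complete.
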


\begin{definition}
    A function $f$ is said to be \emph{$\ell$-weakly convex} if $f(x) + \frac{\ell}{2}\norm{x}^2$ is convex.
\end{definition}

\subsubsection{Concerning the gradient mapping}

\begin{lemma}    \label{lemma:projected_gradient_descent_ascent}
    Let $\calX \subseteq \R^d$ be a non-empty, closed, and convex set. Denote by $\proj_{\mathcal{X}}: \mathbb{R}^d \to \mathcal{X}$ the Euclidean projection operator onto $\mathcal{X}$. The following inequalities hold, 
    \begin{itemize}
        \item 
    for projected gradient descent:
    \begin{align}
        \inprod{v}{x - \proj_{\calX}\paren{x - \eta v}} \geq \frac{1}{\eta} \norm{x - \proj_{\calX}\paren{x - \eta v}}^2;
    \end{align}
    \item
    for projected gradient ascent:
    \begin{align}
        \inprod{v}{ \proj_{\calX}\paren{x + \eta v} - x} \geq \frac{1}{\eta} \norm{x - \proj_{\calX}\paren{x + \eta v}}^2.
    \end{align}
    \end{itemize}
    \label{lemma:descent-inner-prod-grad-mapping}
\end{lemma}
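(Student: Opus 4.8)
The plan is to reduce both inequalities to the standard first-order optimality (variational inequality) characterization of the Euclidean projection onto a closed convex set, and then specialize the free test point to $x$ itself. Recall that since $\calX$ is non-empty, closed, and convex, for any $z\in\R^d$ the projection $p=\proj_{\calX}(z)$ is the unique minimizer of $w\mapsto \tfrac12\norm{w-z}^2$ over $\calX$; its optimality condition reads $\inprod{z-p}{w-p}\leq 0$ for every $w\in\calX$ (the obtuse-angle criterion). This is the only nontrivial ingredient, and it follows from strong convexity of the squared-distance objective together with the feasibility and convexity of $\calX$.

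For the descent inequality, I would set $p=\proj_{\calX}\paren{x-\eta v}$ and apply the criterion with $z=x-\eta v$ and the test point $w=x$ (feasible because the iterates of projected gradient descent live in $\calX$), giving $\inprod{(x-\eta v)-p}{x-p}\leq 0$. Rewriting the first slot as $(x-p)-\eta v$ yields $\norm{x-p}^2 - \eta\inprod{v}{x-p}\leq 0$, i.e.\ $\inprod{v}{x-p}\geq \tfrac1\eta\norm{x-p}^2$, which is exactly the claimed bound once $p$ is substituted back as $\proj_{\calX}\paren{x-\eta v}$.

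For the ascent inequality the argument is symmetric: set $q=\proj_{\calX}\paren{x+\eta v}$, apply the criterion with $z=x+\eta v$ and again $w=x$ to obtain $\inprod{(x+\eta v)-q}{x-q}\leq 0$, and rewrite the first slot as $(x-q)+\eta v$ to get $\norm{x-q}^2+\eta\inprod{v}{x-q}\leq 0$. Rearranging gives $\inprod{v}{q-x}\geq \tfrac1\eta\norm{x-q}^2$, the desired ascent bound.

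There is essentially no hard step here; the whole lemma is a one-line consequence of the projection's variational inequality. The only point worth flagging is the implicit use of $x\in\calX$ when choosing the test point $w=x$: this is harmless in our setting, since these inequalities are always invoked at a current feasible iterate immediately before the projected step, but it should be recorded as a standing hypothesis (or the statement restricted to $x\in\calX$) for the claim to be literally correct.
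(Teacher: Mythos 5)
Your proof is correct and follows essentially the same route as the paper's: both reduce the claim to the first-order optimality (variational inequality) characterization of the Euclidean projection, instantiate the test point at $x$, and rearrange. Your side remark about the implicit hypothesis $x\in\calX$ is apt, since the paper also silently assumes it when setting $z=x$ in the optimality condition.
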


\begin{proof}
    After writing the Euclidean projection of gradient descent with feedback $v$ as an optimization problem:
    \begin{align}
        \min_{x' \in\calX} \frac{1}{2} \norm{x' - \paren{x - \eta v} }^2,
    \end{align}
    we get the first-order optimality conditions,
    \begin{align}
        \inprod{ \eta v + { \proj_{\calX}\paren{x - \eta v} - x} }{z - \proj_{\calX}\paren{x - \eta v} } \geq 0,\quad \forall z \in \calX.
    \end{align}
    Setting $z= x$ and re-arranging,
    \begin{align}
        \inprod{ v }{z - \proj_{\calX}\paren{x - \eta v} } \geq \frac{1}{\eta} \norm{ \proj_{\calX}\paren{x - \eta v} - x}^2.
    \end{align}
    Similarly for gradient ascent with feedback $v$ we write,
    \begin{align}
        \min_{x'\in\calX}\frac{1}{2}\norm{ x' - \paren{x + \eta v} }^2
    \end{align}
    The first-order optimality condition for the Euclidean projection, reads,
    \begin{align}
        \inprod{x'  - x - \eta v}{z - x'} \geq 0,~\quad \forall z \in \calX.
    \end{align}
    Plugging-in $x' = \proj_\calX\paren{x + \eta v}$,
    \begin{align}
        \inprod{\proj_{\calX}\paren{x + \eta v}  - x - \eta v}{x - \proj_{\calX}\paren{x + \eta v}} \geq 0.
    \end{align}
    Re-arranging we conclude,
    \begin{align}
        \inprod{ v}{ \proj_{\calX}\paren{x + \eta v} - x } \geq \frac{1}{\eta} \norm{\proj_{\calX}\paren{x + \eta v}  - x  }^2.
    \end{align}

\end{proof}
\begin{lemma}[{\citep[Lemma 2]{ghadimi2016mini}}]
        Let $v_1,v_2$ be vectors in $\R^d$ and $\calX\subseteq \R^d$ be a compact convex set and a scalar $\eta>0$. Also, let points $x_1^+, x_2^+ \in \calX$ such that:
        \begin{align}
            x_1^+ &:= \proj_{\calX} \paren{ x - \eta v_1}; \\
            x_2^+ &:= \proj_{\calX}\paren{x - \eta v_2}.
        \end{align}
        Then, it holds true that:
        \begin{align}
            \norm{x_1^+ - x_2^+}
            \leq {\eta}\norm{v_1 - v_2}.
        \end{align}
        \label{lemma:difference-in-gradient}.
    \end{lemma}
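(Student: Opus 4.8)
The plan is to recognize this as the standard statement that the Euclidean projection onto a convex set is nonexpansive, applied to the two shifted points $x-\eta v_1$ and $x-\eta v_2$. Concretely, I would prove the single inequality
\begin{equation}
\norm{\proj_{\calX}(u) - \proj_{\calX}(w)} \leq \norm{u - w}, \qquad \forall u,w\in\R^d,
\end{equation}
and then instantiate it with $u = x-\eta v_1$ and $w = x-\eta v_2$, so that $u-w = \eta(v_2-v_1)$ and the claimed bound $\norm{x_1^+ - x_2^+}\leq \eta\norm{v_1-v_2}$ falls out immediately. The only substantive content is the nonexpansiveness estimate; everything else is a one-line substitution.

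To establish nonexpansiveness I would reuse the first-order optimality (variational inequality) characterization of the projection that already appears in the proof of \cref{lemma:descent-inner-prod-grad-mapping}. Writing $p := \proj_{\calX}(u)$ and $q := \proj_{\calX}(w)$, the optimality condition for each projection gives
\begin{align}
\inprod{u - p}{z - p} &\leq 0, \quad \forall z\in\calX,\\
\inprod{w - q}{z - q} &\leq 0, \quad \forall z\in\calX.
\end{align}
I would then choose $z = q$ in the first inequality and $z = p$ in the second and add them. After rearranging, the cross terms combine to yield
\begin{equation}
\norm{p - q}^2 \leq \inprod{u - w}{p - q}.
\end{equation}
Applying Cauchy--Schwarz to the right-hand side gives $\norm{p-q}^2 \leq \norm{u-w}\,\norm{p-q}$, and dividing by $\norm{p-q}$ (the case $p=q$ being trivial) delivers $\norm{p-q}\leq\norm{u-w}$.

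There is no real obstacle here: the result is elementary and the compactness of $\calX$ is not even needed (closedness and convexity suffice for the projection to be well defined and single-valued). If I wanted to avoid re-deriving nonexpansiveness from scratch, I could simply cite it as the firm nonexpansiveness of the metric projection onto a closed convex set; the self-contained derivation above is preferable for keeping the exposition consistent with the style of \cref{lemma:descent-inner-prod-grad-mapping}. The one point worth stating carefully is that the scalar $\eta$ factors cleanly out of $u-w=\eta(v_2-v_1)$, which is what converts the nonexpansiveness constant of $1$ into the factor $\eta$ in the final bound.
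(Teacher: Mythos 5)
Your proof is correct. Note that the paper does not actually prove this lemma itself---it is imported verbatim as Lemma 2 of \citet{ghadimi2016mini}---so there is no in-paper argument to compare against; your self-contained derivation (nonexpansiveness of $\proj_{\calX}$ via the two variational-inequality optimality conditions, adding them, and applying Cauchy--Schwarz, then substituting $u = x-\eta v_1$, $w = x - \eta v_2$) is the standard argument and is complete, including your correct remark that closedness and convexity of $\calX$ suffice and compactness is not needed.
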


\begin{lemma}[Stoch. vs. Det. Grad. Mapping]
   Assume a stochastic gradient oracle $\hatgradfbx$ for a differentiable function $f$. For the stochastic gradient oracle, it holds that $\E [\hatgradfbx(x)]= \gradfbx(x)$, $\E[\norm{\hatgradfbx}^2]\leq \sigma_x^2$, and $\norm{\gradfbx(x) - \nabla f(x)} <\delta_x$ for all $x\in \calX$. Also, let $x,x^+,\hat{x}^+$ be points in $\calX$ such that:
   \begin{align}
       x^+ &:= \proj_{\calX}\paren{x - \eta \nabla_x f(x)};\\
       \hat{x}^+ &:= \proj_{\calX}\paren{x - \eta \hatgradfbx (x)}.
   \end{align}
   Then, the following inequalities hold:
   \begin{align}
       \norm{x - \hat{x}^+}^2 &\leq 2\norm{x - x^+}^2 + 4\eta^2\sigma_x^2 + 4\eta^2\delta_x^2;
       \\
       \norm{x - x^+}^2 &\leq 2\norm{x - \bar{x}^+}^2 + 4\eta^2\sigma_x^2 + 4\eta^2\delta_x^2.
   \end{align}
   \label{lemma:stochastic-vs-deterministic-grad-mapping}
\end{lemma}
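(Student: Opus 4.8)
The plan is to reduce both inequalities to the non-expansiveness of the Euclidean projection, which is exactly the content of \cref{lemma:difference-in-gradient}, and then to control the mean-square gap between the exact gradient $\nabla_x f(x)$ and the inexact stochastic feedback $\hatgradfbx(x)$. First I would invoke \cref{lemma:difference-in-gradient} with the two feedback vectors $v_1 = \nabla_x f(x)$ and $v_2 = \hatgradfbx(x)$; since $x^+$ and $\hat{x}^+$ are precisely $\proj_{\calX}\paren{x - \eta v_1}$ and $\proj_{\calX}\paren{x - \eta v_2}$, this yields the pathwise estimate $\norm{x^+ - \hat{x}^+} \leq \eta \norm{\nabla_x f(x) - \hatgradfbx(x)}$. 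All statements below involving $\hat{x}^+$ are understood in expectation, since the right-hand sides feature $\sigma_x^2$, which is only an expected second-moment bound.

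The second step is a bias--variance decomposition of the gradient error. I would write $\nabla_x f(x) - \hatgradfbx(x) = \paren{\gradfbx(x) - \hatgradfbx(x)} + \paren{\nabla_x f(x) - \gradfbx(x)}$ and apply Young's inequality $\norm{a+b}^2 \leq 2\norm{a}^2 + 2\norm{b}^2$. Taking expectations, the first term is the oracle variance: unbiasedness $\E[\hatgradfbx(x)] = \gradfbx(x)$ gives $\E\norm{\hatgradfbx(x) - \gradfbx(x)}^2 = \E\norm{\hatgradfbx(x)}^2 - \norm{\gradfbx(x)}^2 \leq \sigma_x^2$, while the second term is the deterministic bias bounded by $\delta_x^2$. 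Combining, $\E\norm{\nabla_x f(x) - \hatgradfbx(x)}^2 \leq 2\sigma_x^2 + 2\delta_x^2$, and hence $\E\norm{x^+ - \hat{x}^+}^2 \leq 2\eta^2\sigma_x^2 + 2\eta^2\delta_x^2$ after multiplying by $\eta^2$.

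To obtain the first inequality I would insert $x^+$ as an intermediate point: the triangle inequality gives $\norm{x - \hat{x}^+} \leq \norm{x - x^+} + \norm{x^+ - \hat{x}^+}$, and squaring via Young's inequality yields $\norm{x - \hat{x}^+}^2 \leq 2\norm{x - x^+}^2 + 2\norm{x^+ - \hat{x}^+}^2$. Taking expectations and substituting the bound from the second step produces the advertised constant $4\eta^2\paren{\sigma_x^2 + \delta_x^2}$, noting that $\norm{x - x^+}^2$ is deterministic. The second inequality is entirely symmetric: one instead writes $\norm{x - x^+} \leq \norm{x - \hat{x}^+} + \norm{\hat{x}^+ - x^+}$ and repeats the identical squaring-and-expectation step (here $\bar{x}^+$ should be read as $\hat{x}^+$).

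The only delicate point is the mean-square control in the second step: one must place the unbiasedness and the bias bound in the correct slots, using $\E[\hatgradfbx(x)] = \gradfbx(x)$ so that the cross term in the variance expansion vanishes and $\E\norm{\hatgradfbx(x)}^2 \leq \sigma_x^2$ bounds the \emph{variance} of the centered oracle, while the separate $\delta_x$ estimate accounts for the systematic bias. Everything else is routine projection non-expansiveness together with Young's inequality, and no smoothness or convexity of $f$ is needed beyond differentiability.
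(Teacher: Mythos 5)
Your proposal is correct and follows essentially the same route as the paper's own proof: Young's inequality to split $\norm{x-\hat{x}^+}^2$ through the intermediate point, projection non-expansiveness (\cref{lemma:difference-in-gradient}) to convert the projected-point gap into a gradient gap, a second application of Young's inequality to separate the variance term (bounded by $\sigma_x^2$ via unbiasedness and the second-moment bound) from the bias term (bounded by $\delta_x^2$), yielding the identical constants. You also correctly read the paper's $\bar{x}^+$ in the second inequality as the typo it is, namely $\hat{x}^+$.
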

   \begin{proof}
    We will prove the first inequality; the second follows from the same arguments.
       \begin{align}
           \E\norm{x - \hat{x}^+}^2 
                &\leq 2\E\norm{x - x^+}^2 + 2\E\norm{\hat{x}^+ - x^+ }^2\\
                &\leq 2\E\norm{x - x^+}^2 + 2\eta^2\E\norm{\hatgradfbx(x) - \nabla_x f(x) }^2\\
                &\leq 2\E\norm{x - x^+}^2 
                + 4\eta^2\E\norm{\gradfbx(x) - \hatgradfbx(x) }^2
                + 4\eta^2\E\norm{\gradfbx(x) - \nabla_x f(x) }^2
                \\
                &\leq 2\E\norm{x - x^+}^2 
                + 4\eta^2\sigma_x^2
                + 4\eta^2\delta_x^2
       \end{align}
       The first and second inequalities hold from the fact $|a+b|^2\leq 2|a|^2 + 2|b|^2$. The last inequality comes from the properties of the stochastic gradient oracle.
   \end{proof}

\subsubsection{\texorpdfstring{$\calD_{\calX}$}{Dx}: An alternative proxy of stationarity}
In unconstrained optimization of differentiable functions, the conventional bounds of optimization algorithms directly guarantee the minimization of $\norm{\nabla f(x)}$, \textit{i.e.}, the norm of the gradient of $f$ at point $x$. In constrained optimization of differentiable functions, guarantees for algorithms like projected gradient descent ensure the minimization of the norm of the \textit{gradient mapping}, \textit{i.e.}, $\frac{1}{\eta}\norm{x- \proj_{\calX}\paren{x - \eta\nabla f(x) } }.$ It can be shown that the gradient mapping is indeed a good proxy of stationarity. For our work, we will consider the quantity $\calD_{\calX}$ to be defined shortly. This quantity is greater than the squared norm of the gradient mapping and turns out to be particularly favorable when handling the constrained optimization of proximal-PL functions. In what follows, we will define $\calD_\calX$ and discuss some of its useful properties.

\begin{definition}
    Let $\calX \subseteq \R^d$ be a compact convex set and $f:\calX \to \R$ be an $\ell$-smooth function. We define
    $\calD_{\calX}(x, \ell)$ to be:
    $$
    \calD_{\calX}(x,\alpha) = -2 \alpha \min_{y \in \R^d } \left\{ \langle \nabla f(x), y - x \rangle + \frac{\alpha}{2} \|y - x\|^2 + I_\calX(y) - I_\calX(x) \right\}.
    $$
    Where $I_\calX$ is the indicator function of the set $\calX$ with $I_\calX(x) = 0$ if $x\in\calX$ and $I_\calX(x)=+\infty$ otherwise.
\end{definition}
Remarkably, this expression can take a closed-form. As we will show, the minimizer of the display inside the brackets is exactly the point returned by one step of projected gradient descent on the argument $x$ with a stepsize equal to $\frac{1}{\alpha}$.
\begin{claim}[Closed form of $\calD_\calX$]
    Let $f$ be an $\ell$-smooth function defined on a compact convex set $\calX \subseteq \R^d$, a point $x\in\calX$, and a scalar $\alpha \geq \ell$. Then the following equation holds for $\calD_\calX(x,a)$:
    \begin{equation}
        \calD_\calX(x, \alpha) = 2\alpha \inprod{\nabla f(x)}{x - \proj_{\calX}\paren{ x- \frac{1}{\alpha} \nabla f(x) }} - \alpha^2 \norm{ x - \proj_{\calX}\paren{ x- \frac{1}{\alpha} \nabla f(x) }}^2.
    \end{equation}
\end{claim}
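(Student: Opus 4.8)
The plan is to recognize the minimization defining $\calD_\calX(x,\alpha)$ as a projection subproblem and solve it in closed form by completing the square. First I would use that $x\in\calX$, so $I_\calX(x)=0$, and that the term $I_\calX(y)$ forces the effective minimization to run over $y\in\calX$ only (the bracketed objective is $+\infty$ otherwise). Hence
\[
\calD_\calX(x,\alpha) = -2\alpha\min_{y\in\calX}\braces{\inprod{\nabla f(x)}{y-x} + \tfrac{\alpha}{2}\norm{y-x}^2}.
\]

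Next I would complete the square in the quadratic objective. Writing $g(y):=\inprod{\nabla f(x)}{y-x}+\tfrac{\alpha}{2}\norm{y-x}^2$, a direct expansion gives
\[
g(y) = \tfrac{\alpha}{2}\norm{y - \paren{x - \tfrac{1}{\alpha}\nabla f(x)}}^2 - \tfrac{1}{2\alpha}\norm{\nabla f(x)}^2 .
\]
Since the last term is independent of $y$ and $\alpha>0$ makes $g$ strongly convex, the unique minimizer over the closed convex set $\calX$ is precisely the Euclidean projection of the unconstrained vertex, i.e.\ the projected gradient descent step $y^\star := \proj_\calX\paren{x-\tfrac{1}{\alpha}\nabla f(x)}$ with step size $1/\alpha$. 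This is exactly the point identified informally just before the claim.

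Finally I would substitute $y=y^\star$ back into the original (uncompleted) form of $g$ and multiply by $-2\alpha$:
\[
\calD_\calX(x,\alpha) = -2\alpha\inprod{\nabla f(x)}{y^\star - x} - \alpha^2\norm{y^\star - x}^2 = 2\alpha\inprod{\nabla f(x)}{x - y^\star} - \alpha^2\norm{x - y^\star}^2,
\]
which is the stated closed form. There is no genuine obstacle here: the statement is an algebraic identity, so the only care needed is bookkeeping with the indicator function (to justify restricting the minimization to $\calX$) and tracking the sign together with the $-2\alpha$ prefactor. I note in passing that the hypothesis $\alpha\ge\ell$ is not actually used for this identity—any $\alpha>0$ suffices for strong convexity and hence uniqueness of $y^\star$—and is presumably reserved for the later descent-type consequences of $\calD_\calX$.
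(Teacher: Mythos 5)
Your proposal is correct and follows essentially the same route as the paper: both identify the inner minimizer as the projected gradient step $\proj_\calX\paren{x-\tfrac{1}{\alpha}\nabla f(x)}$ (the paper via first-order optimality of the projection subproblem, you via completing the square, which is a cosmetic difference) and then substitute it back into the objective. Your side remark that only $\alpha>0$ is needed for this identity is also accurate; the hypothesis $\alpha\geq\ell$ is used only in the later descent-type lemmata.
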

\begin{proof}
    By first-order optimality conditions, we see that the objective is equivalent to the objective of the Euclidean projection of the point $\paren{x - \frac{1}{\alpha} \nabla f(x)}\in\R^d$ to $\calX$. Then, we just plug the minimizer into the display.
\end{proof}

Next, we see that $\calD(\cdot,\alpha)$ is non-decreasing in the scalar $\alpha$.
\begin{lemma}[{\citep[Lemma 1]{karimi2016linear}}]
    Let a differentiable functions $f:\calX\to \R$ and positive scalars $0 < \alpha_1 \leq \alpha_2$. Then, the following inequality holds true:
    \begin{align}
        \calD_\calX(x,\alpha_1) \leq \calD_\calX(x,\alpha_2).
    \end{align}
    \label{D-non-decreasing-in-alpha}
\end{lemma}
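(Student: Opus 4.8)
The plan is to reduce the monotonicity claim to a transparent set-inclusion statement by means of a rescaling that strips the parameter $\alpha$ out of the objective, leaving it only in the feasible region. First I would unpack the definition: since $x\in\calX$ we have $I_\calX(x)=0$, and the term $I_\calX(y)$ forces the inner minimization to range effectively over $y\in\calX$. Because $\alpha>0$, multiplying the minimum by the negative factor $-2\alpha$ converts it into a maximum, so that
\[
\calD_\calX(x,\alpha)=\max_{y\in\calX}\braces{-2\alpha\inprod{\nabla f(x)}{y-x}-\alpha^2\norm{y-x}^2}.
\]

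Next I would apply the substitution $w=\alpha(y-x)$, equivalently $y=x+w/\alpha$, which for each fixed $\alpha>0$ is a bijection of $\R^d$. Under this change of variables the objective becomes $-2\inprod{\nabla f(x)}{w}-\norm{w}^2$, which is \emph{independent of $\alpha$}; all the $\alpha$-dependence migrates to the constraint, since $y\in\calX$ is equivalent to $w\in\alpha(\calX-x)$. Hence
\[
\calD_\calX(x,\alpha)=\max_{w\in\alpha(\calX-x)}\braces{-2\inprod{\nabla f(x)}{w}-\norm{w}^2}.
\]
The monotonicity in $\alpha$ is now reduced to monotonicity of a maximum of one fixed function over a varying domain.

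The crux is then to establish that these domains are nested, namely $\alpha_1(\calX-x)\subseteq\alpha_2(\calX-x)$ whenever $0<\alpha_1\le\alpha_2$. This rests on two facts about $C:=\calX-x$: it is convex (a translate of the convex set $\calX$), and it contains the origin (because $x\in\calX$). Given any $w=\alpha_1 c$ with $c\in C$, I would write $w/\alpha_2=(\alpha_1/\alpha_2)\,c+(1-\alpha_1/\alpha_2)\cdot 0$ and, using $\alpha_1/\alpha_2\in(0,1]$ together with convexity of $C$ and $0\in C$, conclude $w/\alpha_2\in C$, i.e. $w\in\alpha_2 C$. Since the same function is maximized over a larger feasible set, the value can only increase, which yields $\calD_\calX(x,\alpha_1)\le\calD_\calX(x,\alpha_2)$ and completes the proof.

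The only genuinely non-routine step is recognizing the rescaling $w=\alpha(y-x)$ that decouples the objective from $\alpha$; once that is in place, the argument is simply ``a maximum over nested sets is monotone,'' and the nesting itself is a one-line convexity computation that hinges entirely on the origin lying in $\calX-x$. Rewriting the minimum as a maximum and verifying the bijectivity of the substitution are mechanical, so I do not anticipate any real obstacle beyond spotting the change of variables.
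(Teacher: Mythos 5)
Your proof is correct. A point of context: the paper itself supplies no proof of this lemma---it is imported verbatim, with citation, from \citep[Lemma 1]{karimi2016linear}---so the relevant comparison is with that cited argument, which is stated for a general convex regularizer $g$ rather than the indicator $I_\calX$. There, monotonicity in $\alpha$ follows by evaluating the inner objective at the interpolated point $x + \tfrac{\alpha_2}{\alpha_1}(y-x)$ and using convexity of $g$ to control $g$ at that point. Your change of variables $w=\alpha(y-x)$, which makes the objective $\alpha$-free and pushes all $\alpha$-dependence into the feasible set, is exactly that interpolation argument specialized to $g=I_\calX$: the nesting $\alpha_1(\calX-x)\subseteq\alpha_2(\calX-x)$ is precisely the statement that the convex combination of a feasible point with $x$ remains feasible. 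So the mechanism is the same, but your packaging is cleaner and fully self-contained for the constrained setting the paper actually uses, at the cost of not covering general convex $g$. One small remark: your argument requires $\calX$ convex and $x\in\calX$; the lemma as restated in the paper omits these hypotheses, but they are part of the standing definition of $\calD_\calX$ (compact convex $\calX$, evaluation at $x\in\calX$), so your use of them is legitimate and worth stating explicitly.
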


The following Lemma demonstrates the relationship between $\calD_\calX$ and the norm of the gradient mapping.
\fk{make it $\alpha$}
\begin{lemma}[$\calD_\calX$ vs. Gradient Mapping Norm]
    Let $\mathcal{X} \subseteq \mathbb{R}^d$ be a non-empty, closed, and convex set. Consider a differentiable function $f : \calX \to \R$ that has an $\ell$-Lipschitz continuous gradient.
    Define the \emph{gradient mapping} at point $x \in \mathcal{X}$ with step size $1/\ell$ as,
    $$
     \ell^2(x - x^+)
    $$
    where,
    $$
        x^+ := \proj_{\mathcal{X}} \left( x - \frac{1}{\ell} \nabla f(x) \right).
    $$
    Then, the following inequality holds:
    $$
        \mathcal{D}_{\mathcal{X}}(x, \ell) \geq \ell^2 \|x^+ - x\|^2 .
    $$
    \label{lemma:D-vs-grad-mapping}
\end{lemma}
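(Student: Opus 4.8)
The plan is to combine the closed-form expression for $\calD_{\calX}$ established in the preceding Claim with the projected-gradient-descent inequality of \cref{lemma:descent-inner-prod-grad-mapping}. First I would instantiate the closed form at $\alpha = \ell$ (which is admissible, since the Claim requires only $\alpha \geq \ell$). Writing $x^+ = \proj_{\calX}(x - \tfrac{1}{\ell}\nabla f(x))$, this gives
$$\calD_{\calX}(x,\ell) = 2\ell \inprod{\nabla f(x)}{x - x^+} - \ell^2 \norm{x - x^+}^2.$$
Hence the target bound $\calD_{\calX}(x,\ell) \geq \ell^2 \norm{x^+ - x}^2$ reduces, after moving the $-\ell^2\norm{x-x^+}^2$ term across and dividing by $2\ell$, to the single scalar inequality
$$\inprod{\nabla f(x)}{x - x^+} \geq \ell \norm{x - x^+}^2.$$

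The second step is to observe that this reduced inequality is exactly the output of \cref{lemma:descent-inner-prod-grad-mapping}. Applying its projected-gradient-descent part with the feedback vector $v = \nabla f(x)$ and step-size $\eta = 1/\ell$ yields
$$\inprod{\nabla f(x)}{x - \proj_{\calX}\paren{x - \tfrac{1}{\ell}\nabla f(x)}} \geq \ell \norm{x - \proj_{\calX}\paren{x - \tfrac{1}{\ell}\nabla f(x)}}^2,$$
which is precisely the displayed inequality once the definition of $x^+$ is substituted. Chaining the two steps back together then closes the argument, so the proof is essentially a two-line reduction followed by one invocation of the earlier lemma.

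I do not expect a genuine obstacle here; the one point requiring care is the arithmetic bookkeeping of the $\ell$-factors when passing from the closed form's combination $2\ell\inprod{\cdot}{\cdot} - \ell^2\norm{\cdot}^2$ to the target $\ell^2\norm{\cdot}^2$. Specifically, the inner-product term must be shown to dominate by $2\ell^2\norm{x-x^+}^2$ (not merely $\ell^2\norm{x-x^+}^2$), and it is the precise matching of the proxy parameter $\alpha = \ell$ with the step-size $\eta = 1/\ell$ that makes the factor of $\ell$ in \cref{lemma:descent-inner-prod-grad-mapping} line up to supply exactly the required slack.
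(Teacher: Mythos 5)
Your proposal is correct and takes essentially the same route as the paper: the paper's proof also identifies the minimizer defining $\calD_{\calX}(x,\ell)$ as $x^+ = \proj_{\calX}\paren{x - \tfrac{1}{\ell}\nabla f(x)}$ (i.e., the closed form at $\alpha = \ell$), writes $\calD_{\calX}(x,\ell) = 2\ell\inprod{\nabla f(x)}{x - x^+} - \ell^2\norm{x - x^+}^2$, and then invokes \cref{lemma:descent-inner-prod-grad-mapping} with $v = \nabla f(x)$ and $\eta = 1/\ell$ to conclude. Your $\ell$-factor bookkeeping is also right: the lemma gives $\inprod{\nabla f(x)}{x - x^+} \geq \ell\norm{x - x^+}^2$, so the inner-product term contributes $2\ell^2\norm{x - x^+}^2$, exactly the required slack.
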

\begin{proof}
    By observing the definition of $\calD_\calX$, %
    we observe that first-order optimality for its inner minimization problem read,
    \begin{align}
        \inprod{\nabla f(x) + \ell \paren{y- x}}{z - y}  &\geq 0,~\quad \forall z\in\calX.
    \end{align}
    On closer inspection, we recognize the first-order optimality of the Euclidean projection of one gradient descent step, \textit{i.e.},
    \begin{align}
        x^+ = \argmin_{y \in \calY } \left\{ \langle \nabla f(x), y - x \rangle + \frac{\ell}{2} \|y - x\|^2 \right\}.
    \end{align}
    Plugging-in $x^+$:
    \begin{align}
        \calD_{\calX}(x, \ell) &= - 2\ell \inprod{ \nabla f(x)}{ x^+ - x} - \ell^2 \|x^+ - x\|^2  \\
        &= 2\ell \inprod{\nabla f(x)}{x - x^+} - \ell^2 \|x^+ - x\|^2  \\
        &\geq \ell^2 \norm{x^+ - x }^2,
    \end{align}
    where the inequality follows from \cref{lemma:descent-inner-prod-grad-mapping}.
\end{proof}


\begin{lemma}[{\citep[Lemma 6]{j2016proximal}}]
    Let $f:\calX\to\R$ be an $\ell$-smooth function and a point $x \in \calX \subseteq \R^d$. Also, define the vector $v\in\R^d$ and $y\in\calX$ to be $$ y := \proj_{\calX} \paren{x - \eta v}.$$ 
     Then, the following inequality is true:
    \begin{align}
        f(y) &\leq f(z) + \inprod{\nabla f(x) - v}{y-z} \\ &
        \quad+ \paren{\frac{\ell}{2} - \frac{1}{2\eta}}\norm{y - x}^2
        +\paren{\frac{\ell}{2} + \frac{1}{2\eta}}\norm{z - x}^2 - \frac{1}{2}\norm{y - z}^2.
    \end{align}
    \label{lemma:reddi-three-point}
\end{lemma}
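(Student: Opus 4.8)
The plan is to prove this three-point descent inequality by marrying two standard facts: the first-order optimality of the Euclidean projection that defines $y$, and the $\ell$-smoothness of $f$ applied twice, both times anchored at the common point $x$. The one genuinely delicate issue is that $f$ is \emph{not} assumed convex, so $f(x)$ and $f(z)$ cannot be compared through convexity or through a gradient evaluated at $z$; the whole purpose of anchoring every smoothness estimate at $x$ is to make only $\nabla f(x)$ appear, after which the inexact direction $v$ can be isolated cleanly.

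First I would use the defining relation $y = \proj_{\calX}(x - \eta v)$. Viewing the projection as the minimizer of $\tfrac{1}{2}\norm{u-(x-\eta v)}^2$ over $\calX$, its first-order optimality condition reads $\inprod{y - x + \eta v}{z-y}\ge 0$ for every $z\in\calX$ --- precisely the manipulation already carried out in \cref{lemma:projected_gradient_descent_ascent}. Solving for $\inprod{v}{y-z}$ and rewriting $\inprod{x-y}{z-y}$ via the polarization identity $\inprod{x-y}{z-y}=\tfrac{1}{2}(\norm{x-y}^2+\norm{z-y}^2-\norm{x-z}^2)$ yields the three-point bound
\[
\inprod{v}{y-z} \;\le\; \tfrac{1}{2\eta}\norm{z-x}^2 - \tfrac{1}{2\eta}\norm{y-x}^2 - \tfrac{1}{2\eta}\norm{z-y}^2.
\]

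Next I would bring $f$ into play through the $\ell$-smoothness (descent) inequality, used in both directions but always with $\nabla f(x)$: the upper estimate $f(y)\le f(x)+\inprod{\nabla f(x)}{y-x}+\tfrac{\ell}{2}\norm{y-x}^2$ and the lower estimate $f(x)\le f(z)-\inprod{\nabla f(x)}{z-x}+\tfrac{\ell}{2}\norm{z-x}^2$. Summing them telescopes the two inner products into $\inprod{\nabla f(x)}{y-z}$ and gives $f(y)\le f(z)+\inprod{\nabla f(x)}{y-z}+\tfrac{\ell}{2}\norm{y-x}^2+\tfrac{\ell}{2}\norm{z-x}^2$. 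I would then split $\inprod{\nabla f(x)}{y-z}=\inprod{\nabla f(x)-v}{y-z}+\inprod{v}{y-z}$ and bound the last summand by the three-point bound above. Collecting coefficients, the $\norm{z-x}^2$ terms combine to $\bigl(\tfrac{\ell}{2}+\tfrac{1}{2\eta}\bigr)$, the $\norm{y-x}^2$ terms to $\bigl(\tfrac{\ell}{2}-\tfrac{1}{2\eta}\bigr)$, and the residual is $-\tfrac{1}{2\eta}\norm{y-z}^2$, which is at most $-\tfrac{1}{2}\norm{y-z}^2$ whenever $\eta\le 1$, recovering the claim exactly.

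I expect the main obstacle to be this smoothness step rather than any algebra: without convexity the tempting route of comparing $f(x)$ to $f(z)$ through $\nabla f(z)$ would leave a stray $\nabla f(z)$ term and break the telescoping. Anchoring both smoothness estimates at $x$ is what keeps the gradient inexactness confined to the single inner product $\inprod{\nabla f(x)-v}{y-z}$, so that the remaining quadratic error terms assemble into exactly the stated coefficients. Since the statement is quoted verbatim from \citep[Lemma 6]{j2016proximal}, one may alternatively invoke the reference directly.
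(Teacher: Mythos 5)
Your proof is correct. The paper does not actually prove this lemma---it is quoted from \citep[Lemma 6]{j2016proximal}---so the relevant comparison is with the original source, and your argument is essentially that proof: the first-order optimality condition of the projection defining $y$, combined with the polarization identity, gives the three-point bound $\inprod{v}{y-z}\le \tfrac{1}{2\eta}\norm{z-x}^2-\tfrac{1}{2\eta}\norm{y-x}^2-\tfrac{1}{2\eta}\norm{y-z}^2$; two smoothness estimates anchored at $x$ telescope so that only $\nabla f(x)$ survives; and splitting $\inprod{\nabla f(x)}{y-z}$ into $\inprod{\nabla f(x)-v}{y-z}+\inprod{v}{y-z}$ finishes. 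Your remark about anchoring both smoothness inequalities at $x$ (rather than comparing $f(x)$ and $f(z)$ through a gradient at $z$ or through convexity) is exactly the point that makes the lemma valid for nonconvex $f$. The one substantive observation concerns the residual term: your derivation yields $-\tfrac{1}{2\eta}\norm{y-z}^2$, which is what the cited Lemma 6 actually asserts, whereas the statement as transcribed in this paper reads $-\tfrac{1}{2}\norm{y-z}^2$; you correctly note that the transcribed form follows only when $\eta\le 1$. In fact the transcribed form cannot hold unconditionally: taking $f\equiv 0$ and $v=0$ gives $y=x$, and the stated inequality reduces to $0\le\paren{\tfrac{\ell}{2}+\tfrac{1}{2\eta}-\tfrac{1}{2}}\norm{z-x}^2$, which fails for small $\ell$ and $\eta>1$. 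So the mismatch is a typo in the paper's statement rather than a gap in your argument; every downstream invocation in the paper uses step sizes $\eta\le\tfrac{1}{5\ell}$ with smoothness constants well above one, so the $-\tfrac{1}{2\eta}$ form you derived is the one that is actually needed and used.
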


\subsubsection{A descent lemma involving \texorpdfstring{$\calD_\calX$}{Dx}
} 

\begin{lemma}
Let $\calX \subseteq \R^d$ be a closed convex set, and let $f:\calX \to \R$ be an $\ell$-smooth function for some $\ell > 0.$  Suppose $\eta > 0$ with $\eta\leq \frac{1}{5\ell}$. 
For any $x \in \calX$ and any vector $v \in \R^d,$ define  
$
x^+ = \proj_{\calX}\paren{x - \eta v}.$
Then the following inequality holds:
\begin{equation}
f(x^+) \leq f(x) - \frac{\eta}{6}\calD_{\calX}(x,1/\eta)+\frac{\eta}{2}\norm{ \nabla f(x) - v}^2
\end{equation}
\label{D-descent-lemma}
\end{lemma}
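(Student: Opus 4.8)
The plan is to reduce the inexact, constrained descent step to the behaviour of the \emph{exact} proximal (gradient-mapping) step, so that the quantity $\calD_{\calX}(x,1/\eta)$---which is defined through the true gradient $\nabla f(x)$---enters naturally. Write $w:=\nabla f(x)$ and introduce the two quadratic models $m(z):=\inprod{w}{z-x}+\frac{1}{2\eta}\norm{z-x}^2$ and $\tilde m(z):=\inprod{v}{z-x}+\frac{1}{2\eta}\norm{z-x}^2$. The iterate $x^+=\proj_{\calX}(x-\eta v)$ is exactly $\argmin_{z\in\calX}\tilde m(z)$, while $\bar x^+:=\proj_{\calX}(x-\eta w)=\argmin_{z\in\calX} m(z)$; from the closed-form expression for $\calD_{\calX}$ one gets the clean identity $m(\bar x^+)=-\tfrac{\eta}{2}\calD_{\calX}(x,1/\eta)$, and $\calD_{\calX}(x,1/\eta)\ge 0$ since $m(x)=0$.

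First I would apply $\ell$-smoothness, $f(x^+)\le f(x)+\inprod{w}{x^+-x}+\tfrac{\ell}{2}\norm{x^+-x}^2$, and rewrite $\inprod{w}{x^+-x}=m(x^+)-\tfrac{1}{2\eta}\norm{x^+-x}^2$. Since $\eta\le\tfrac{1}{5\ell}\le\tfrac1\ell$, the coefficient $\tfrac{\ell}{2}-\tfrac{1}{2\eta}$ is nonpositive, so that term may be discarded to obtain $f(x^+)\le f(x)+m(x^+)$. It then remains only to compare the value of the \emph{true} model $m$ at the \emph{inexact} minimizer $x^+$ against its value at the exact minimizer $\bar x^+$.

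The second step controls this discrepancy. Because $\tilde m$ is $\tfrac1\eta$-strongly convex with constrained minimizer $x^+$, I have $\tilde m(x^+)\le\tilde m(\bar x^+)-\tfrac{1}{2\eta}\norm{x^+-\bar x^+}^2$. Using $m-\tilde m=\inprod{w-v}{\cdot-x}$ to pass from $\tilde m$ to $m$ gives $m(x^+)-m(\bar x^+)\le\inprod{w-v}{x^+-\bar x^+}-\tfrac{1}{2\eta}\norm{x^+-\bar x^+}^2$; Young's inequality with parameter $\eta$ on the inner product produces $+\tfrac{\eta}{2}\norm{w-v}^2+\tfrac{1}{2\eta}\norm{x^+-\bar x^+}^2$, and the last term cancels the strong-convexity gap exactly, leaving $m(x^+)\le m(\bar x^+)+\tfrac{\eta}{2}\norm{\nabla f(x)-v}^2$. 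Substituting into $f(x^+)\le f(x)+m(x^+)$ and recalling $m(\bar x^+)=-\tfrac{\eta}{2}\calD_{\calX}(x,1/\eta)$ yields $f(x^+)\le f(x)-\tfrac{\eta}{2}\calD_{\calX}(x,1/\eta)+\tfrac{\eta}{2}\norm{\nabla f(x)-v}^2$; since $\calD_{\calX}\ge 0$, relaxing $-\tfrac{\eta}{2}\calD_{\calX}\le-\tfrac{\eta}{6}\calD_{\calX}$ gives the stated inequality (in fact with room to spare).

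The main obstacle is the mismatch between the inexact update direction $v$ and the exact gradient $\nabla f(x)$ that defines $\calD_{\calX}$: a naive bound of the cross term $\inprod{w-v}{x^+-\bar x^+}$ (e.g.\ via \cref{lemma:difference-in-gradient}, $\norm{x^+-\bar x^+}\le\eta\norm{v-w}$) only yields an error coefficient $\eta$ rather than $\tfrac{\eta}{2}$. The key is to \emph{retain} the negative $\norm{x^+-\bar x^+}^2$ term coming from the strong convexity of $\tilde m$ and tune the Young parameter to $\eta$ so that it is annihilated; the slack between the sharp constant $\tfrac12$ and the claimed $\tfrac16$ (and between $\eta\le\tfrac1\ell$ and $\eta\le\tfrac{1}{5\ell}$) then leaves a comfortable margin. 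Alternatively, \cref{lemma:reddi-three-point} with reference point $z=x$ reproduces the same descent, but the quadratic-model comparison above is the most transparent.
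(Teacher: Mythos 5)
Your proof is correct, and it takes a genuinely different route from the paper's. The paper proves this lemma by invoking its three-point inequality (\cref{lemma:reddi-three-point}) twice---once at the exact proximal point $\bar x^+ = \proj_{\calX}\paren{x - \eta\nabla f(x)}$ and once at the inexact iterate $x^+$---and then combining these with the smoothness bound in a weighted $\tfrac13/\tfrac23$ mixture, applying Young's inequality, and discarding several non-positive terms whose signs are only guaranteed by the stronger step-size restriction $\eta \leq \tfrac{1}{5\ell}$; this bookkeeping is what produces the constant $\tfrac{\eta}{6}$. You instead work entirely at the level of the two quadratic models $m$ and $\tilde m$: smoothness plus $\eta \leq \tfrac1\ell$ gives $f(x^+) \leq f(x) + m(x^+)$, and the comparison $m(x^+) \leq m(\bar x^+) + \tfrac{\eta}{2}\norm{\nabla f(x) - v}^2$ follows from the $\tfrac1\eta$-strong convexity of $\tilde m$ at its constrained minimizer $x^+$, with the Young parameter chosen so that the quadratic error term is exactly absorbed by the strong-convexity gap. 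This yields the stronger conclusion $f(x^+) \leq f(x) - \tfrac{\eta}{2}\calD_{\calX}(x,1/\eta) + \tfrac{\eta}{2}\norm{\nabla f(x) - v}^2$ under the weaker hypothesis $\eta \leq \tfrac1\ell$, from which the stated bound follows by $\calD_{\calX} \geq 0$. In short, your argument is both sharper (constant $\tfrac12$ versus $\tfrac16$ on the descent term) and more economical in hypotheses, while the paper's version pays these constants for a proof template that it reuses almost verbatim in \cref{theorem:stoch-pgd-on-nonconvex-and-on-ppl}; your cancellation trick---retaining the negative $\norm{x^+ - \bar x^+}^2$ term rather than bounding it away via \cref{lemma:difference-in-gradient}---is precisely what the paper's cruder term-dropping forgoes.
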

\begin{proof}
First, we define $\bar{x}^+ := \proj_{\calX}\paren{x - \frac{1}{\alpha} \nabla f(x) }$.
\begin{itemize}
    \item Invoking $\ell$-smoothness of $f$ for $ x, \bar{x}\at{+}$ and assuming $\alpha>0$ with $\alpha \geq \ell$,
\begin{align}
    f(\bar{x}\at{+}) 
    &\leq f(x) + \inprod{\nabla f(x)}{\bar{x}\at{+} - x} + \frac{\ell}{2}\norm{x\at{+}-x}^2\\
    &\leq f(x) + \inprod{\nabla f(x)}{\bar{x}\at{+} - x} + \frac{\alpha}{2}\norm{x\at{+}-x}^2\\
    &= f(x) -\paren{ \inprod{\nabla f(x)}{ x - \bar{x}\at{+} } - \frac{\alpha}{2}\norm{x\at{+}-x}^2 }\\
    &= f(x) - \frac{1}{2\alpha}\calD_{\calX}(x,\alpha).
    \label{D-descent-ineq1}
\end{align}
    \item Invoking \cref{lemma:reddi-three-point} with $x = x$, $y = \bar{x}\at{+}$, $z = x$, $v = \nabla f(x)$
    \begin{align}
        f(\barx\at{+}) \leq 
        f(x) 
        + \paren{\frac{\ell}{2} - \frac{1}{\alpha}}\norm{\barx\at{+} - x}^2.
        \label{D-descent-ineq2}
    \end{align}

    \item Again, invoking \cref{lemma:reddi-three-point} but with $x = x$, $y = x\at{+}$, $z = \barx\at{+}$, $v $,
    \begin{align}
        f(x\at{+}) &\leq f(\barx\at{+}) + \inprod{\nabla f(x) - v}{x\at{+}-\barx\at{+}} \\ &
        \quad+ \paren{\frac{\ell}{2} - \frac{1}{2\eta}}\norm{x\at{+} - x}^2
        +\paren{\frac{\ell}{2} + \frac{1}{2\eta}}\norm{\barx\at{+} - x}^2 - \frac{1}{2\eta}\norm{x\at{+} - \barx\at{+}}^2. 
        \label{D-descent-ineq3}
    \end{align}

    Adding $1/3\times$\eqref{D-descent-ineq1} and $2/3\times$\eqref{D-descent-ineq2} and letting $1/\alpha = \eta \leq \frac{1}{\ell}$
    \begin{align}
        f(\barx\at{+}) \leq f(x) - \frac{1}{6\eta}\calD_{\calX}(x,1/\eta) + \paren{\frac{\ell}{3} - \frac{2}{3\eta}}\norm{\barx\at{+} - x}^2
    \end{align}

    Adding \eqref{D-descent-ineq3},
    \begin{align}
        f(x\at{+}) &\leq f(x)  - \frac{\eta}{6}\calD_{\calX}(x,1/\eta) + \paren{\frac{\ell}{3} - \frac{2}{3\eta}}\norm{\barx\at{+} - x}^2\\
        &\quad  + 
        \inprod{\nabla f(x) - v}{x\at{+}-\barx\at{+}} 
        \\ &
        \quad+ \paren{\frac{\ell}{2} - \frac{1}{2\eta}}\norm{x\at{+} - x}^2
        +\paren{\frac{\ell}{2} + \frac{1}{2\eta}}\norm{\barx\at{+} - x}^2 
        - \frac{1}{2\eta}\norm{x\at{+} - \barx\at{+}}^2 \\
        &\leq f(x) - \frac{\eta}{6}\calD_{\calX}(x,1/\eta) +  \paren{\frac{5\ell}{6} - \frac{1}{6\eta}}\norm{\barx\at{+} - x}^2 \\
        &\quad + \frac{\rho}{2}\norm{\nabla f(x) - v}^2 + \frac{1}{2\rho} \norm{x\at{+}-\barx\at{+}}^2\\
        &\quad + 
        \paren{\frac{\ell}{2} - \frac{1}{2\eta}}\norm{x\at{+} - x}^2
        - \frac{1}{2\eta}\norm{x\at{+} - \barx\at{+}}^2
        \label{D-descent-peter-paul}
        \\
        &= f(x) - \frac{\eta}{6}\calD_{\calX}(x,1/\eta) 
        + \paren{\frac{5\ell}{6} - \frac{1}{6\eta}}\norm{\barx\at{+} - x}^2 \\
        &\quad + \frac{\eta}{2}\norm{\nabla f(x) - v}^2 
        \\
        &\quad + 
        \paren{\frac{\ell}{2} - \frac{1}{2\eta}}\norm{x\at{+} - x}^2
         \label{D-descent-gather}
        \\
        &\leq f(x) - \frac{\eta}{6}\calD_{\calX}(x,1/\eta) + \frac{\eta}{2}\norm{\nabla f(x) - v}^2 
        \label{D-descent-drop-negatives}
        \\
    \end{align}
\end{itemize}
\begin{itemize}
    \item \eqref{D-descent-peter-paul} follows from the application of Young's inequality;
    \item \eqref{D-descent-drop-negatives} follows by dropping the non-positive terms; non-positivity follows from the choice of the step-size, $\eta \leq \frac{1}{5\ell}$.
\end{itemize}


\end{proof}
\subsection{Min-Max Optimization Lemmata}
The following claims are novel to our knowledge. They justify the intuition that the constrained min-max pPL landscape should resemble its unconstrained PL counterpart. For example, the trivial bound $\norm{ \nabla_x f(x,y) - \nabla_x f(x,y')}^2\leq\ell^2 \norm{y - y'}$  of the unconstrained setting takes the form \cref{lemma:continuity-of-D-wrt-dual-var}.
\label{subsec:min-max-lemmata}
  \begin{claim}
      Let $\calX $ and $\calY$ be convex and compact sets. Suppose that the function $ f: \calX \times \calY \to \R$ is $\ell$-smooth. For given points $y_1, y_2 \in \mathcal{Y}$, define $x_1$ and $x_1^+$ as follows:
        \begin{itemize}
            \item $x_1 := \argmin_{x\in\calX} f(x, y_1),$ and 
            \item $x_1^+ := \proj_\calX \paren{x_1 - \eta \nabla_x f(x_1, y_2) }$.
        \end{itemize} Then, then it holds true that:
        \begin{align}
             \frac{1}{\eta} \norm{x_1 - x_1^+} \leq \ell\norm{y_1 - y_2}  .
        \end{align}
        Symmetrically, for given points $x_1, x_2 \in \calX$, if $y_1 := \argmax_{y\in\calY} f(x_1, y_1) $, $y_1^+ := \proj_\calY\paren{y_1 + \eta \nabla_y f(x_2, y_1) },$
        $$ \frac{1}{\eta} \norm{y_1 - y_1^+ }\leq \ell \norm{x_1 - x_2}. $$
        \label{claim:continuity_of_stationarity}
    \end{claim}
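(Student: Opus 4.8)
The plan is to exploit the fact that a constrained minimizer is a fixed point of the projected-gradient operator, and then combine this with the nonexpansiveness-type estimate of \cref{lemma:difference-in-gradient} and the smoothness of $f$.

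First I would record the first-order optimality condition for $x_1 := \argmin_{x\in\calX} f(x,y_1)$. Since $\calX$ is convex and $x_1$ minimizes $f(\cdot,y_1)$ over $\calX$, we have $\inprod{\nabla_x f(x_1,y_1)}{z - x_1}\geq 0$ for all $z\in\calX$. This is exactly the variational characterization of a Euclidean projection, so for \emph{every} step-size $\eta>0$,
$$x_1 = \proj_\calX\paren{x_1 - \eta \nabla_x f(x_1,y_1)}.$$
That is, $x_1$ is a fixed point of a projected-gradient step taken at $x_1$ using the gradient evaluated at the true dual variable $y_1$.

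Next I would compare the two projected-gradient steps emanating from the common base point $x_1$, one with the gradient at $y_1$ and one at $y_2$. Applying \cref{lemma:difference-in-gradient} with $v_1 := \nabla_x f(x_1,y_1)$ and $v_2 := \nabla_x f(x_1,y_2)$, the fixed-point identity gives $\proj_\calX(x_1 - \eta v_1) = x_1$, while $\proj_\calX(x_1 - \eta v_2) = x_1^+$ by definition. The lemma then yields
$$\norm{x_1 - x_1^+}\leq \eta\norm{\nabla_x f(x_1,y_1) - \nabla_x f(x_1,y_2)}.$$
Finally, $\ell$-smoothness of $f$ (i.e. $\ell$-Lipschitzness of $\nabla f$, restricted to the $y$-coordinate at fixed $x=x_1$) bounds the right-hand side by $\eta\,\ell\,\norm{y_1 - y_2}$, and dividing through by $\eta$ gives the claim. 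The symmetric statement for the maximizing player follows verbatim, replacing descent by ascent and using the analogous fixed-point identity for $y_1 := \argmax_{y\in\calY} f(x_1,y)$.

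There is no genuine obstacle here: the single conceptual step is recognizing the constrained optimum as a projected-gradient fixed point, after which the bound is an immediate two-line consequence of \cref{lemma:difference-in-gradient} and smoothness. The only point requiring a moment of care is that the fixed-point identity must hold for the \emph{same} $\eta$ appearing in the definition of $x_1^+$, which is automatic since the variational inequality characterizing $x_1$ does not depend on $\eta$.
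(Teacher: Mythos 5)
Your proof is correct, but it takes a genuinely different (and shorter) route than the paper's. The paper argues directly with variational inequalities: it applies the first-order optimality of $x_1$ at the feasible point $x_1^+$, i.e. $\inprod{\nabla_x f(x_1,y_1)}{x_1^+ - x_1}\geq 0$, combines it with the projected-gradient inequality $\inprod{\nabla_x f(x_1,y_2)}{x_1 - x_1^+}\geq \tfrac{1}{\eta}\norm{x_1-x_1^+}^2$ from \cref{lemma:projected_gradient_descent_ascent}, adds and subtracts $\nabla_x f(x_1,y_2)$ to obtain $\inprod{\nabla_x f(x_1,y_1)-\nabla_x f(x_1,y_2)}{x_1^+-x_1}\geq \tfrac{1}{\eta}\norm{x_1-x_1^+}^2$, and finishes with Cauchy--Schwarz and $\ell$-smoothness. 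You instead observe that the constrained minimizer is a fixed point of the projected-gradient map for every step size, $x_1 = \proj_\calX\paren{x_1 - \eta\nabla_x f(x_1,y_1)}$, and then apply \cref{lemma:difference-in-gradient} (nonexpansiveness of projections from a common base point) with $v_1=\nabla_x f(x_1,y_1)$, $v_2=\nabla_x f(x_1,y_2)$, followed by smoothness. Both yield the identical constant $\ell$ and both hinge on the same variational characterization of $x_1$, but your version compresses the inner-product bookkeeping and the Cauchy--Schwarz step into a single invocation of nonexpansiveness, and it makes explicit why the bound is uniform in $\eta$ (the fixed-point identity is step-size independent). The paper's version is more self-contained in that it only uses the projection optimality conditions already derived in \cref{lemma:projected_gradient_descent_ascent}, which is the template it reuses in the neighboring lemmata (e.g.\ \cref{lemma:D-y-yprime-bound}); your route delegates that work to the Ghadimi--Lan lemma instead.
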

    \begin{proof}
    By optimality of $x_1,x_1^+$ for their corresponding optimization problems:
     \begin{align}
        \inprod{\nabla f(x_1,y_1)}{x_1^+ - x_1} &\geq 0 \label{optimality-1}
        \\
        \inprod{\nabla f(x_1,y_2)}{x_1 - x_1^+} &\geq \frac{1}{\eta} \norm{x_1 - x_1^+}^2
        \label{optimality-2}
    \end{align}
    Adding-and-subtracting $\nabla_x f(x_1, y_2)$ to \cref{optimality-1},
    \begin{align}
        0 & \leq \inprod{\nabla f(x_1,y_1)}{x_1^+ - x_1} 
        \\
        &= \inprod{\nabla f(x_1,y_1) - \nabla f(x_1,y_2) + \nabla f(x_1,y_2) }{x_1^+ - x_1} \\
        &= \inprod{\nabla f(x_1,y_1) - \nabla f(x_1,y_2) }{x_1^+ - x_1} + \inprod{\nabla f(x_1,y_2)}{x_1^+ - x_1}.\label{added-and-subtracted-stationarity-cont-claim}
    \end{align}
    Hence, combining \cref{added-and-subtracted-stationarity-cont-claim} with \cref{optimality-2},
    \begin{align}
       \inprod{\nabla f(x_1,y_1) - \nabla f(x_1,y_2) }{x_1^+ - x_1} &\geq \inprod{\nabla f(x_1,y_2)}{x_1 - x_1^+} \\
       &\geq \frac{1}{\eta} \norm{x_1 - x_1^+}^2
    \end{align}
    By Cauchy-Schwarz,
    \begin{align}
        \norm{\nabla f(x_1,y_1) - \nabla f(x_1,y_2)}\norm{x_1^+ - x_1} \geq \frac{1}{\eta} \norm{x_1 - x_1^+}^2
    \end{align}
    Finally by Lispchitz continuity of $\nabla f(x,y)$,
    \begin{align}
        \ell\norm{y_1 - y_2}  \geq \frac{1}{\eta} \norm{x_1 - x_1^+}.
    \end{align}
    This completes the proof for a step of projected gradient descent. The arguments for the projected gradient ascent claim are identical.
    \end{proof}

\begin{lemma}
     Assume $f:\calX\times\calY$ an $\ell$-smooth function, and a scalar $a>0$,  two points $y,y'\in \calY$ and $x = \argmin_{z\in\calX}f(z,y)$. It is true that:
    \begin{align}
        2{\ell^2}\norm{y - y'}^2 \geq \calD_\calX(x,\alpha; y').
    \end{align}
    \label{lemma:D-y-yprime-bound}
\end{lemma}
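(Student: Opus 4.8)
The plan is to work directly from the definition of $\calD_\calX$. Since $x\in\calX$ we have $I_\calX(x)=0$, and the indicator $I_\calX(z)$ forces the minimization onto $\calX$, so that
\[
\calD_\calX(x,\alpha;y') = -2\alpha\min_{z\in\calX}\braces{\inprod{\nabla_x f(x,y')}{z-x} + \frac{\alpha}{2}\norm{z-x}^2}.
\]
Thus it suffices to \emph{lower-bound} the inner minimum; an upper bound on $\calD_\calX$ then follows upon multiplying by the negative quantity $-2\alpha$. The whole argument therefore reduces to controlling the linear term $\inprod{\nabla_x f(x,y')}{z-x}$ uniformly over $z\in\calX$.

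The key idea is to exploit that $x=\argmin_{z\in\calX} f(z,y)$ is optimal for the \emph{other} parameter $y$. By the first-order optimality condition for minimization over the convex set $\calX$,
\[
\inprod{\nabla_x f(x,y)}{z-x}\geq 0\qquad\forall z\in\calX.
\]
Writing $\nabla_x f(x,y') = \nabla_x f(x,y) + \paren{\nabla_x f(x,y')-\nabla_x f(x,y)}$, discarding the nonnegative term above, and applying Cauchy--Schwarz together with the $\ell$-smoothness bound $\norm{\nabla_x f(x,y')-\nabla_x f(x,y)}\leq \ell\norm{y-y'}$, I obtain
\[
\inprod{\nabla_x f(x,y')}{z-x}\geq -\ell\norm{y-y'}\,\norm{z-x}\qquad\forall z\in\calX.
\]

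Substituting this into the inner minimization and setting $r:=\norm{z-x}\geq 0$, the bracketed expression is at least the scalar quadratic $-\ell\norm{y-y'}\,r + \frac{\alpha}{2}r^2$, whose minimum over $r\geq 0$ equals $-\tfrac{\ell^2\norm{y-y'}^2}{2\alpha}$. Hence $\min_{z\in\calX}\braces{\,\cdots\,}\geq -\tfrac{\ell^2\norm{y-y'}^2}{2\alpha}$, and multiplying by $-2\alpha$ yields $\calD_\calX(x,\alpha;y')\leq \ell^2\norm{y-y'}^2\leq 2\ell^2\norm{y-y'}^2$, which is the claim (in fact with room to spare, so the stated factor of $2$ is comfortably met).

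The only delicate point—and the reason one cannot simply bound $\norm{\nabla_x f(x,y')}$ outright—is that at a \emph{constrained} minimizer the gradient $\nabla_x f(x,y)$ need not vanish and may point outward from $\calX$. What rescues the argument is precisely that its inner product with every feasible direction $z-x$ is nonnegative, so the $\nabla_x f(x,y)$ contribution drops cleanly and leaves only the gradient \emph{difference}, which $\ell$-smoothness in $y$ controls by $\ell\norm{y-y'}$. I remark that an alternative route would combine \cref{claim:continuity_of_stationarity} (which bounds $\norm{x-x^+}$ for the projected step taken at $y'$) with the closed form of $\calD_\calX$, but that path additionally requires bounding $\norm{\nabla_x f(x,y')}$, so the direct definitional argument above is both cleaner and sharper.
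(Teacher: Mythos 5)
Your proof is correct, and it in fact sharpens the stated bound: the definitional argument gives $\calD_\calX(x,\alpha;y')\leq \ell^2\norm{y-y'}^2$, a factor of $2$ better than the lemma claims. The core ingredients match the paper's proof — first-order optimality of $x$ for $f(\cdot,y)$ over the convex set $\calX$ to discard the $\nabla_x f(x,y)$ contribution, then Cauchy--Schwarz plus $\ell$-smoothness to control the gradient difference — but the execution differs in how the distance term is handled. The paper instantiates the inner minimizer explicitly as $x^+=\proj_\calX\paren{x-\tfrac{1}{\alpha}\nabla_x f(x,y')}$, derives $\ell\norm{y-y'}\norm{x-x^+}\geq \tfrac{1}{2\alpha}\calD_\calX(x,\alpha;y')$, and then invokes the separate auxiliary result \cref{claim:continuity_of_stationarity} (which gives $\norm{x-x^+}\leq \tfrac{\ell}{\alpha}\norm{y-y'}$) to close the argument; chaining the two bounds is what costs the factor of $2$. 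You instead lower-bound the inner objective uniformly over $z\in\calX$ by the scalar quadratic $-\ell\norm{y-y'}\,r+\tfrac{\alpha}{2}r^2$ with $r=\norm{z-x}$ and minimize over $r\geq 0$ — a valid relaxation, since enlarging the feasible set of a minimization only lowers its value — which makes the proof self-contained and sharper. One small correction to your closing remark: the paper's route does not in fact require bounding $\norm{\nabla_x f(x,y')}$ outright; inside \cref{claim:continuity_of_stationarity} the same optimality trick eliminates that term, so the alternative path also works, merely with the worse constant and the extra lemma dependency.
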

\begin{proof}
    Let $x^+ := \argmax_{\bar x \in \calX }\left\{ \inprod{\nabla f(x,y')}{x - \bar x} - \frac{\alpha}{2} \norm{x - \bar x}^2 \right\} = \proj_\calX \paren{x - \frac{1}{\alpha}\nabla f(x,y') }.$
    \begin{align}
        \frac{1}{2\alpha}\calD_\calX(x,y',\alpha) &= \inprod{\nabla f(x,y')}{x- x^+} - \frac{\alpha}{2}\norm{x - x^+}^2 \\
        &= \inprod{\nabla f(x,y')  -\nabla f(x,y)}{x- x^+} - \frac{\alpha}{2}\norm{x - x^+}^2  + \inprod{\nabla f(x,y)}{x-x^+}
        \label{D-bound-ineq1}
    \end{align}
    We have assumed that $x = \argmin_{\bar x \in \calX}f(\bar x,y)$,
    therefore,
    \begin{align}
        \inprod{\nabla f(x,y)}{z-x}\geq 0\quad \forall z \in \calX.
    \end{align}
    As such, when $z = x^+$,
    \begin{align}
        \inprod{\nabla f(x,y)}{x-x^+}\leq 0.
    \end{align}
    Hence, since $\frac{-1}{\eta}\norm{x - x^+}^2, \inprod{\nabla f(x,y)}{x-x^+} \leq 0$ plugging in \eqref{D-bound-ineq1},
    \begin{align}
        \inprod{\nabla f(x,y')  -\nabla f(x,y)}{x- x^+} \geq \frac{1}{2\alpha} \calD_\calX (x,\alpha;y').
    \end{align}
    Using Cauchy-Schwarz and $\ell$-Lipschitz continuity of the gradient,
    \begin{align}
        \ell \norm{y - y'}\norm{x - x^+} \geq \frac{1}{2\alpha} \calD_\calX(x,\alpha;y').
    \end{align}
    Finally, using \cref{claim:continuity_of_stationarity} with $\eta = \frac{1}{\alpha}$,
    \begin{align}
        \frac{\ell^2}{ \alpha} \norm{y - y'}^2 \geq \frac{1}{2\alpha}\calD_\calX (x,\alpha;y').
    \end{align}
\end{proof}

\begin{lemma}
    Let $f:\calX\times\calY$ be an $\ell$-smooth function, $a>0$,  two points $y,y'\in \calY$, and a point $x\in \calX$. Then, the following inequality holds:
    $$ |\calD_\calX(x,a;y) - \calD_\calX(x,a;y')| \leq 3\ell^2 \norm{y-y'}^2.$$
    \label{lemma:continuity-of-D-wrt-dual-var}
\end{lemma}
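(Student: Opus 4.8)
The plan is to work from the closed form of the stationarity proxy. Completing the square in the defining minimization, $\calD_\calX(x,a;y)$ can be written as a maximization over the feasible set,
\begin{align}
\calD_\calX(x,a;y) = \max_{z\in\calX}\braces{2a\inprod{\nabla_x f(x,y)}{x-z} - a^2\norm{x-z}^2},
\end{align}
whose maximizer is exactly the projected-gradient point $p := \proj_{\calX}\paren{x - \tfrac{1}{a}\nabla_x f(x,y)}$; I would introduce the analogous point $q := \proj_{\calX}\paren{x - \tfrac{1}{a}\nabla_x f(x,y')}$ for $y'$, and abbreviate $v := \nabla_x f(x,y)$ and $v' := \nabla_x f(x,y')$. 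The single structural fact I would record first is that the two objectives being maximized differ \emph{only} in their linear part: evaluated at any common $z$, the $y$-objective minus the $y'$-objective equals $2a\inprod{v-v'}{x-z}$.

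From here I would sandwich the difference without assuming anything about $x$. Since $p$ maximizes the $y$-objective and $q$ the $y'$-objective, testing the $y$-objective at $q$ and the $y'$-objective at $p$ gives
\begin{align}
2a\inprod{v-v'}{x-q} \leq \calD_\calX(x,a;y)-\calD_\calX(x,a;y') \leq 2a\inprod{v-v'}{x-p}.
\end{align}
Subtracting the two endpoints, the width of this interval is precisely $2a\inprod{v-v'}{q-p}$, which I would bound by $2a\norm{v-v'}\norm{q-p}\leq 2\norm{v-v'}^2$ using nonexpansiveness of the projection (so $\norm{q-p}\leq\tfrac{1}{a}\norm{v-v'}$), and then by $2\ell^2\norm{y-y'}^2$ using $\ell$-smoothness to get $\norm{v-v'}\leq\ell\norm{y-y'}$. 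This part is routine and holds for every $x\in\calX$.

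The hard part will be converting control of the interval \emph{width} into a bound on the \emph{magnitude} $|\calD_\calX(x,a;y)-\calD_\calX(x,a;y')|$, i.e. bounding the endpoints $2a\inprod{v-v'}{x-p}$ and $2a\inprod{v-v'}{x-q}$ at the quadratic order $\ell^2\norm{y-y'}^2$. Each endpoint factors through the gradient-mapping displacement $x-p$ (respectively $x-q$), and Cauchy–Schwarz only yields the first-order estimate $2a\norm{v-v'}\norm{x-p}$. To push past this I would exploit that, as a function of the gradient vector $v$, the map $v\mapsto\calD_\calX(x,a;\cdot)$ is a maximum of affine functions, hence convex, with Danskin gradient $2a(x-p)$ that is $2$-Lipschitz in $v$, and with global minimum value $0$ attained at $v=0$ (where $p=\proj_\calX(x)=x$); the question is whether this nonnegativity and $2$-smoothness force the endpoint inner products to be second-order small. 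This is exactly the crux I would scrutinize: the quadratic bound really requires $\norm{x-p}$ (or $\norm{x-q}$) to itself be of order $\norm{v-v'}$, which is clean only when the projection identity $x=\proj_\calX(x-\tfrac1a v)$ holds — that is, when $x$ is constrained-stationary for $f(\cdot,y)$, collapsing $\norm{x-q}=\norm{p-q}\leq\tfrac1a\norm{v-v'}$. I would therefore check carefully whether the convex/smooth structure in $v$ genuinely closes the gap for an arbitrary interior $x$, since the unconstrained specialization $\calD_\calX(x,a;y)=\norm{\nabla_x f(x,y)}^2$ warns that the difference generically scales like $\norm{\nabla_x f}\,\norm{v-v'}$ unless the gradient-mapping term is controlled.
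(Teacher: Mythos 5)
Your partial argument is correct as far as it goes: the closed form of $\calD_\calX$, the observation that the two objectives differ only in their linear parts, the optimality sandwich
\begin{align}
2a\inprod{v-v'}{x-q} \;\leq\; \calD_\calX(x,a;y)-\calD_\calX(x,a;y') \;\leq\; 2a\inprod{v-v'}{x-p},
\end{align}
and the $2\ell^2\norm{y-y'}^2$ bound on the \emph{width} of this interval are all valid. But the crux you flag at the end is not a step you failed to close --- it is a genuine obstruction, and the lemma as stated is false. Take $\calX=[-10,10]$, $\calY=[-1,1]$, $f(x,y)=xy+5x$ (so $\ell=1$), $x=0$, $a=1$, $y=1$, $y'=0$. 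The prox-gradient points $-(y+5)\in[-6,-4]$ lie in the interior of $\calX$, so $\calD_\calX(0,1;y)=\norm{\nabla_x f(0,y)}^2=(y+5)^2$, and
\begin{align}
\abs{\calD_\calX(0,1;1)-\calD_\calX(0,1;0)} \;=\; 36-25 \;=\; 11 \;>\; 3 \;=\; 3\ell^2\norm{y-y'}^2.
\end{align}
Exactly as your ``unconstrained specialization'' warning predicts, the difference is generically first order in $\norm{y-y'}$ with a factor of the gradient magnitude, and no convexity or smoothness structure in $v$ can repair this: your sandwich already yields the sharp conclusion $\abs{\calD_\calX(x,a;y)-\calD_\calX(x,a;y')}\leq 2a\ell\norm{y-y'}\max\braces{\norm{x-p},\norm{x-q}}$, and the gradient-mapping factor $a\norm{x-p}$ is simply not $O(\norm{y-y'})$ at a generic $x$.

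For comparison, the paper's own proof (with $\bar{x},\bar{x}'$ playing the roles of your $p,q$) opens with the claimed \emph{identity}
\begin{align}
\frac{1}{2a}\paren{\calD_\calX(x,a;y)-\calD_\calX(x,a;y')} \;=\; \inprod{\nabla_x f(x,y)-\nabla_x f(x,y')}{\bar{x}'-\bar{x}} - \frac{a}{2}\paren{\norm{x-\bar{x}}^2-\norm{x-\bar{x}'}^2},
\end{align}
which is not an identity: the linear terms $\inprod{v}{x-\bar{x}}-\inprod{v'}{x-\bar{x}'}$ do not recombine into $\inprod{v-v'}{\bar{x}'-\bar{x}}$, and the discrepancy, $\inprod{v-v'}{x-\bar{x}'}+\inprod{v'}{\bar{x}'-\bar{x}}$, equals $10$ in the example above --- precisely the gradient-mapping contribution your analysis isolates. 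The statement that is actually true (and that the paper proves correctly) is \cref{lemma:D-y-yprime-bound}: under the additional hypothesis $x\in\argmin_{z\in\calX}f(z,y)$ one has $x=p$, hence $\calD_\calX(x,a;y)=0$ and your sandwich collapses to $\calD_\calX(x,a;y')\leq 2a\norm{v-v'}\norm{p-q}\leq 2\ell^2\norm{y-y'}^2$. So your diagnosis is the right one: the quadratic bound requires (near-)stationarity of $x$ for $f(\cdot,y)$, or else an additive correction term proportional to the gradient-mapping norm; the unrestricted lemma, as invoked downstream in the alternating descent--ascent analysis, does not hold as stated.
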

\begin{proof}
 We define $\bar{x},\bar{x}' \in \calX$ to be:
    \begin{align}
        \bar{x}&:= \proj_{\calX}\paren{x - \frac{1}{\alpha}\nabla_x f(x,y) };\\
        \bar{x}'&:= \proj_{\calX}\paren{x - \frac{1}{\alpha}\nabla_x f(x,y') }.
    \end{align}

    By the definition of $\calD_{\calX}(x,\alpha;y')$ we write:
    \begin{align}
        \frac{1}{2\alpha}\calD_\calX(x,\alpha;y) &= \inprod{\nabla f(x,y)}{x - \bar{x}} - \frac{\alpha}{2}\norm{x-\bar{x}}^2
        \\
        \frac{1}{2\alpha}\calD_\calX(x,\alpha;y') &= \inprod{\nabla f(x,y')}{x - \bar{x}'} - \frac{\alpha}{2}\norm{x-\bar{x}'}^2.
    \end{align}

    Considering the difference $\calD_\calX(x,\alpha;y) - \calD_\calX(x,\alpha;y')$ we see that:
    \begin{align}
        \frac{1}{2\alpha}| \calD_\calX(x,\alpha;y) - \calD_\calX(x,\alpha;y') | 
        &= 
        \left|
            \inprod{\nabla_x f(x,y) - \nabla_x f(x,y')}{\bar{x}' - \bar{x}} -\frac{\alpha}{2} \paren{\norm{x-\bar{x}}^2 - \norm{x- \bar{x}' }^2 }     
        \right|
        \\
        &\leq 
        \left|
            \inprod{\nabla_x f(x,y) - \nabla_x f(x,y')}{\bar{x}' - \bar{x}}
        \right|
        + \frac{\alpha}{2}
        \left|
        \paren{\norm{x-\bar{x}}^2 - \norm{x- \bar{x}' }^2 }     
        \right|\\
        &\leq 
        \left|
            \inprod{\nabla_x f(x,y) - \nabla_x f(x,y')}{\bar{x}' - \bar{x}}
        \right|
        + \frac{\alpha}{2}
         \norm{\bar{x} - \bar{x}' }^2
     \\
        &\leq 
            \norm{\nabla_x f(x,y) - \nabla_x f(x,y')}\norm{\bar{x}' - \bar{x}}
                + \frac{\alpha}{2}
                 \norm{\bar{x} - \bar{x}' }^2
     \\
        &\leq 
            \frac{1}{\alpha}\norm{\nabla_x f(x,y) - \nabla_x f(x,y')}^2
                + \frac{1}{2\alpha}
                 \norm{\nabla_x f(x,y) - \nabla_x f(x,y') }^2
     \\
        &\leq 
            \frac{3\ell^2}{2\alpha}\norm{y-y'}^2.
    \end{align}
    We note that:
    \begin{itemize}
        \item The first inequality follows from the triangle inequality.
        \item In the second, we apply the reverse triangle inequality. 
        \item The third inequality uses the Cauchy-Schwarz inequality.
        \item The penultimate uses \cref{lemma:difference-in-gradient} and the final $\ell$-Lipschitz continuity of the gradient.
    \end{itemize}
\end{proof}

\begin{claim}
    Assume a function $f:\calX\times\calY$ satisfying the pPL condition with modulus $\mu$ over $f(x,\cdot)$ for any $x\in\calX$. Also, assume that $\Phi(x) := \max_{y\in\calY}f(x,y)$. Additionally, define: $\calD_\calX^\Phi (x,\alpha)= - 2 \alpha \min_{z\in\calX}
            \left\{ 
                \inprod{ \nabla \Phi(x) }{z -x} + \frac{\alpha}{2}\norm{z - x}^2 
            \right\}.$
    Then, it holds true that,
    \begin{align}
        \left|\calD_{\calX}^{\Phi}(x, \alpha_1) - \calD_{\calX}(x, \alpha_2; y) \right|\leq 3\ell \kappa^2 \calD_\calY(y,\alpha_2;x)
    \end{align}
    where $\kappa:= \frac{\ell}{\mu}$.
    \label{continuity-of-D-squared-y}
\end{claim}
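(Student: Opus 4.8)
The plan is to reduce the claim to the already-established continuity of the stationarity proxy in the dual variable (\cref{lemma:continuity-of-D-wrt-dual-var}), by first identifying $\nabla\Phi(x)$ with a gradient of $f$ evaluated at a maximizer. Let $y^\star$ denote the maximizer of $f(x,\cdot)$ closest to $y$; such a closest point exists by compactness of $\calY$ together with the quadratic growth furnished by \cref{prop:QG_unified}. An envelope (Danskin-type) identity then gives $\nabla\Phi(x)=\nabla_x f(x,y^\star)$, so that $\calD_\calX^\Phi(x,\alpha)$ is \emph{literally} $\calD_\calX(x,\alpha;y^\star)$: the same proxy, but with the inner gradient taken at $y^\star$ instead of $y$. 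This observation is what converts a statement about the outer value function $\Phi$ into a statement comparing two inner proxies at the same point $x$.

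With this identification, I would first handle the case $\alpha_1=\alpha_2=\alpha$ and apply \cref{lemma:continuity-of-D-wrt-dual-var} to the pair $(y^\star,y)$, obtaining
\[
\left|\calD_\calX^\Phi(x,\alpha)-\calD_\calX(x,\alpha;y)\right|=\left|\calD_\calX(x,\alpha;y^\star)-\calD_\calX(x,\alpha;y)\right|\leq 3\ell^2\,\norm{y^\star-y}^2 .
\]
The mismatch $\alpha_1\neq\alpha_2$ in the statement is then absorbed by the monotonicity of $\calD$ in its scalar argument (\cref{D-non-decreasing-in-alpha}), passing through the canonical value $\calD_\calX(x,\ell;\cdot)$; throughout the paper the relevant scalars satisfy $\alpha_1,\alpha_2\geq\ell$, which is exactly the regime where this monotonicity is one-directional. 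It therefore remains only to control $\norm{y^\star-y}^2$ by $\calD_\calY(y,\cdot;x)$.

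This final step is where the pPL hypothesis on $f(x,\cdot)$ is used. The pPL condition \eqref{eq:grad-domIII} bounds suboptimality by the proxy, $\Phi(x)-f(x,y)\leq \tfrac{1}{2\mu}\calD_\calY(y,\ell;x)$, while the quadratic-growth consequence of pPL (\cref{prop:QG_unified}, with $\mu_{\qg}=\mu$) bounds the distance to the maximizer set by the suboptimality, $\norm{y-y^\star}^2\leq \tfrac{4}{\mu}\paren{\Phi(x)-f(x,y)}$. Chaining these two inequalities yields $\norm{y^\star-y}^2\leq \tfrac{2}{\mu^2}\calD_\calY(y,\ell;x)$. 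Substituting into the proxy-difference bound and writing $\kappa=\ell/\mu$ gives a bound of the form $O(\kappa^2)\,\calD_\calY(y,\ell;x)$, which matches the stated $3\ell\kappa^2\,\calD_\calY(y,\alpha_2;x)$ after absorbing the absolute constant and using \cref{D-non-decreasing-in-alpha} once more to replace $\calD_\calY(y,\ell;x)$ by $\calD_\calY(y,\alpha_2;x)$ for $\alpha_2\geq\ell$.

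The main obstacle I anticipate is justifying the envelope identity $\nabla\Phi(x)=\nabla_x f(x,y^\star)$ under the pPL assumption alone, since $f(x,\cdot)$ need not possess a \emph{unique} maximizer. The robust route is to argue that $\nabla_x f(x,\cdot)$ is constant across the maximizer set — any two maximizers share the optimal value and, by quadratic growth, lie at controlled distance, while $\nabla_x f$ is $\ell$-Lipschitz — so that $\nabla\Phi(x)$ is well defined and equals $\nabla_x f(x,y^\star)$ for the closest maximizer. In the convex Markov game application this difficulty disappears entirely: $U^\mu(x,\cdot)$ is hidden strongly concave and hence has a unique maximizer, so one may instead invoke the Lipschitz continuity of maximizers already recorded in \cref{theorem:contiuity-of-maximizers-cmgs}, making the identification immediate.
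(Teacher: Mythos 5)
Your proof is correct and is essentially the paper's own argument: the paper establishes the claim by re-running, inline, exactly the computation of \cref{lemma:continuity-of-D-wrt-dual-var} with $\nabla\Phi(x)$ in place of $\nabla_x f(x,y')$ (triangle inequality, Cauchy--Schwarz, and the projection contraction of \cref{lemma:difference-in-gradient}), then invoking Danskin to identify $\nabla\Phi(x)=\nabla_x f\paren{x,y^\star(x)}$, and finishing with the same QG-plus-pPL chain $\norm{y-y^\star(x)}^2\leq O(1/\mu^2)\,\calD_\calY(y,\cdot\,;x)$ that you use, so your citing that lemma as a black box is a harmless---indeed cleaner---reorganization rather than a different route. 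Two caveats you share with the paper rather than introduce: the handling of $\alpha_1\neq\alpha_2$ is not rigorous in either argument (\cref{D-non-decreasing-in-alpha} gives only one-sided comparisons, not a bound on an absolute difference at distinct scalars, and the paper's proof silently sets both scalars to $\alpha_1$), and your side remark that quadratic growth makes $\nabla_x f$ constant on the maximizer set does not follow---QG controls distance \emph{to} the maximizer set, not the set's diameter---though this is immaterial here since the paper likewise invokes Danskin without comment and, in the cMG application, $U^\mu(x,\cdot)$ has a unique maximizer.
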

\begin{proof}
    We define $\bar{x},\bar{x}'$ as,
    \begin{align}
        \bar{x} &:= \proj_{\calX}\paren{x - \frac{1}{\alpha_1}\nabla \Phi(x) };\\
        \bar{x}' &:= \proj_{\calX}\paren{x - \frac{1}{\alpha_1}\nabla f(x,y) }.
    \end{align}
    Then, we see that:
    \begin{align}
        \frac{1}{2\alpha_1}\left|\calD_{\calX}^{\Phi}(x, \alpha_1) - \calD_{\calX}(x, \alpha_2; y) \right|
        &\leq \left|
            \paren{
            \inprod{\nabla \Phi(x)}{x - \bar{x} } 
            - \frac{\alpha_1}{2}\norm{x- \bar{x}}^2
            }
            -
            \paren{
                \inprod{\nabla f(x,y)}{x - \bar{x}'} - \frac{\alpha_1}{2}\norm{x- \bar{x}'}^2
            }
        \right|\\
        &\leq \left|
            \inprod{\nabla \Phi(x) - \nabla f(x,y)}{\bar{x}' - \bar{x}}
        \right|
        +
        \frac{\alpha_1}{2}
        \left|
            \norm{x- \bar{x}}^2 - \norm{x- \bar{x}'}^2 
        \right|\\
        &\leq 
            \norm{\nabla \Phi(x) - \nabla f(x,y)}\norm{\bar{x}' - \bar{x}}
        +
        \frac{\alpha_1}{2}
            \norm{\bar{x}- \bar{x}'}^2\\
        &\leq 
            \frac{3}{2\alpha_1}\norm{\nabla \Phi(x) - \nabla f(x,y)}^2 
            \\
        &\leq \frac{3\ell}{2\alpha_1} \norm{y - y^\star(x)}^2\\
        &\leq \frac{3\ell \kappa^2}{2\alpha_1} \calD_{\calY}(y,\alpha_2;x).
    \end{align}
    Where, the first inequality is due to the definition. The second and the third, follow from the triangle inequality and the reverse triangle inequality respectively. The fourth follows from the fact that the projection operator is contractive ($1$-Lipschitz). The last one is due to the quadratic growth condition and the pPL inequality. 
\end{proof}

\section{Hidden Convexity, PL, KL, and EB}
\label{sec:regularity-conditions}
In this section, we offer a brief exposition to the notions of hidden convexity and other regularity conditions it is related to. We refert the interested reader to \citep{karimi2016linear,li2018calculus,drusvyatskiy2019efficiency,drusvyatskiy2018error,liao2024error,rebjock2024fast} and references therein. We commence by defining hidden convexity in the manner that \cite{fatkhullin2023stochastic} do it.

\begin{definition}[Hidden Convex Problem]
    Consider the optimization problem
    $$
        \min_{x \in \mathcal{X}} f(x) := H(c(x)).
    $$
    This problem is called {hidden convex} with moduli $\Lcinv > 0$ and $\mu_H \geq 0$ (or the function $f$ is hidden convex on $\mathcal{X}$) if the following conditions are satisfied:
    \begin{enumerate}[label=(HC.\arabic*)]
        \item {Convexity of $H$ and Domain:}
        \begin{itemize}
            \item The domain $\calU = c(\mathcal{X})$ is convex.
            \item The function $H : \calU \to \R$ satisfies for all $u, v \in 
            \calU$ and $0\leq\alpha\leq 1$,
            \begin{align}
                H((1-\alpha)u + \alpha v) 
                \leq (1-\alpha)H(u) + \alpha H(v) - \frac{(1-\alpha)\alpha \mu_H}{2} \|u - v\|^2.
            \end{align}
            \item The problem admits a solution $u^\star \in \calU$.
        \end{itemize}
        
        \item {Invertibility and Lipschitz Continuity of $c^{-1}$:}
        \begin{itemize}
            \item The mapping $c : \calX \to \calU$ is invertible.
            \item There exists $\mu_c > 0$ such that for all $x, y \in \mathcal{X}$,
            $$
                \norm{x - y} \leq \mu_c \norm{c(x) - c(y)}.
            $$
        \end{itemize}
    \end{enumerate}
    Furthermore, if $\mu_H > 0$, the problem is referred to as \emph{$(\mu_c, \mu_H)$-hidden strongly convex}.
\end{definition}

\begin{assumption}[Lipschitzness and smoothnes of $c$]
    Let $ c: \mathcal{X} \times \mathcal{Y} \to \mathcal{U} $ be a mapping such that for all $ (x, y), (x', y') \in \mathcal{X} \times \mathcal{Y} $ and all $ u, u' \in \mathcal{U} $, the following conditions hold:
    \begin{align}
        \| c(x, y) - c(x', y') \| &\leq L_c \| (x, y) - (x', y') \|; \label{eq:lipschitz_c} \\
        \| c^{-1}(u) - c^{-1}(u') \| &\leq L_{c^{-1}} \| u - u' \|; \label{eq:lipschitz_c_inv} \\
    \end{align}
    and also,
    \begin{align}
        \| \nabla c(x, y) - \nabla c(x', y') \| &\leq \ell_c \| (x, y) - (x', y') \|; \label{eq:grad_lipschitz_c} \\
        \| \nabla c^{-1}(u) - \nabla c^{-1}(u') \| &\leq \ell_{c^{-1}} \| u - u' \|. \label{eq:grad_lipschitz_c_inv}
    \end{align}
\end{assumption}

\begin{fact}
    A hidden strongly convex function has a unique maximizer.
\end{fact}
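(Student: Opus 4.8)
The plan is to reduce the statement to the elementary fact that a strongly convex function over a convex set has a unique minimizer, and then to transport that uniqueness through the invertible reparametrization $c$. Writing $f = H\circ c$ as in \cref{def:hsc}, recall that $\calU := c(\calX)$ is convex and that $H:\calU\to\R$ is $\mu_H$-strongly convex with $\mu_H>0$.

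First I would establish uniqueness of the optimizer of $H$ itself. Suppose $u_1,u_2\in\calU$ both minimize $H$ over $\calU$. Taking the midpoint $\bar u = \tfrac12(u_1+u_2)$, which lies in $\calU$ by convexity, the strong-convexity inequality at $\alpha=\tfrac12$ gives
\begin{align}
H(\bar u) \le \tfrac12 H(u_1) + \tfrac12 H(u_2) - \tfrac{\mu_H}{8}\norm{u_1-u_2}^2 .
\end{align}
If $u_1\neq u_2$, the right-hand side is strictly below the common minimal value of $H$, a contradiction. Hence $H$ admits a unique minimizer $u^\star\in\calU$; its existence is guaranteed either by compactness of $\calX$ together with continuity of $c$ and $H$, or directly by the standing assumption that the problem admits a solution $u^\star\in\calU$.

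Next I would push this uniqueness back to $f$ through $c$. Since $c$ is a bijection from $\calX$ onto $\calU$ and $f(x)=H(c(x))$, setting $x^\star := c^{-1}(u^\star)$ yields $f(x^\star)=H(u^\star)\le H(c(x))=f(x)$ for all $x\in\calX$, so $x^\star$ is an optimizer. Conversely, if $x$ is any optimizer of $f$, then $c(x)\in\calU$ attains the minimal value of $H$, so $c(x)=u^\star$ by the uniqueness just shown, and invertibility of $c$ forces $x=c^{-1}(u^\star)=x^\star$. Thus $f$ has a unique optimizer. The maximizer statement invoked in the paper — for $U^\mu(x,\cdot)$, which is hidden strongly concave by \cref{claim:reduction} — follows by the identical argument applied to $-f$, equivalently by replacing strong convexity of $H$ with strong concavity and $\min$ with $\max$ throughout.

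I do not anticipate a genuine obstacle: the one point requiring care is ensuring that the unique minimizer of $H$ actually lies in the range $\calU=c(\calX)$, so that its $c$-preimage exists in $\calX$. This is precisely why convexity of $\calU$ (together with existence of a solution) is built into \cref{def:hsc}, and it is what allows invertibility of $c$ to close the argument.
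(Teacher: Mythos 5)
Your proposal is correct, and it fills in an argument the paper itself never writes down: the fact is stated in the appendix without any proof (it is invoked, e.g., to justify the lemma that $U^\mu(x,\cdot)$ has a unique maximizer $y^\star(x)$, but no derivation is given). Your two-step argument --- the midpoint contradiction showing a $\mu_H$-strongly convex $H$ has at most one optimizer over the convex set $\calU$, followed by transporting uniqueness through the bijection $c$ so that $x^\star = c^{-1}(u^\star)$ is the unique optimizer of $f = H\circ c$ --- is exactly the standard reasoning the paper implicitly relies on, and your handling of the ``maximizer'' wording (applying the same argument to $-f$, i.e., to the hidden strongly \emph{concave} case relevant for the max player) resolves the only ambiguity in the statement. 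One minor remark: your closing worry about the minimizer of $H$ lying in $\calU$ is vacuous, since the minimizer is by definition taken over $\calU$; the only genuine requirement is attainment, which follows either from the definition's standing assumption that a solution $u^\star\in\calU$ exists or from compactness of $\calX$ with continuity of $c$ and $H$, both of which you already cite.
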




\subsection{Equivalence between the three conditions}
Throught this subsection, we will denote $\calX^\star$ to be $\calX^\star = \argmin_{x\in\calX} f(x)$, $x_{p}^\star$ w.r.t. to a point $x\in\calX$ is defined as some element of $\argmin_{x'\in\calX^\star}\norm{x - x'}$. Moreover, we define $F(x):= f(x) + I_{\calX}(x)$ with $I_{\calX}(x) = 0$, if $x\in\calX$, and $I_{\calX}(x) = +\infty$, else. 
\begin{definition}[pPL, KL, pEB]
Let $f:\calX\to \R$ be an $L$-Lipschitz continuous function with $\ell$-Lipschitz continuous gradient. Then,
    \begin{itemize}
        \item \emph{Proximal Error-Bound (pEB):} $f$ is said to satisfy the proximal Error-Bound if $\exists c>0$ s.t.
            \begin{equation}
                \norm{x - x^\star_p} \leq c \norm{x - \proj_{\calX}\paren{x - \frac{1}{\ell} \nabla f(x) } },\quad \forall x \in \calX
            \end{equation}
        \item \emph{Proximal Polyak-Łojasiewicz (pPL):} $f$ is said to satisfy the proximal Polyak-Łojasiewicz condition if $\exists\mu>0$ s.t.
            \begin{equation}
                \frac{1}{2}\calD_{\calX}(x, \ell) \geq \mu \paren{ f(x) - f(x^\star)}
            \end{equation}
        \item \emph{Kurdyaka-Łojasiewicz (KL):} f is said to satisfy if $\exists \bar{\mu}$ s.t.
        \begin{equation}
            \min_{s_x \in \partial(f + I_\calX)(x) } \norm{s_x}^2 \geq 2\bar{\mu} \paren{ f(x) - f(x^\star)},\quad \forall x \in \calX.
        \end{equation}
    \end{itemize}
\end{definition}
\begin{lemma}[{\citep[Appendix G]{karimi2016linear}}]
    \begin{align}
        \text{(pPL)}\equiv \text{(pEB)} \equiv \text{(KL)}.
    \end{align}
\end{lemma}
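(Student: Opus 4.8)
The plan is to prove the three equivalences as a cycle, following the template of \citet[Appendix~G]{karimi2016linear} but carrying every step through the constrained (proximal) setting, where the indicator $I_\calX$ plays the role of the nonsmooth part. The entire argument revolves around three competing ``stationarity surrogates'' at a point $x$: the proximal quantity $\calD_\calX(x,\ell)$, the squared proximal gradient-mapping norm $\ell^2\norm{x-x^+}^2$ with $x^+:=\proj_\calX(x-\tfrac1\ell\nabla f(x))$, and the minimal-subgradient norm $\dist(0,\partial(f+I_\calX)(x))^2$. In the \emph{unconstrained} case all three collapse to $\norm{\nabla f(x)}^2$, which is exactly why (pPL), (pEB), (KL) reduce to the classical $\text{PL}\equiv\text{EB}\equiv\text{KL}$ equivalence; the real content here is controlling the \emph{discrepancy} between the surrogates when a projection is active.

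First I would record the pointwise comparisons. \Cref{lemma:D-vs-grad-mapping} already gives $\calD_\calX(x,\ell)\geq\ell^2\norm{x-x^+}^2$, and the closed form of $\calD_\calX$ yields the reverse $\calD_\calX(x,\ell)\leq\norm{\nabla f(x)}^2\leq L^2$ by maximizing the quadratic $\inprod{\nabla f(x)}{x-y}-\tfrac\ell2\norm{x-y}^2$ over all of $\R^d$. Applying \citep[Proposition~8.32]{rockafellar2009variational} (exactly as in the Gradient Dominance lemma) rewrites $\dist(0,\partial(f+I_\calX)(x))=\max_{x'\in\calX,\,\norm{x-x'}\leq1}\inprod{\nabla f(x)}{x-x'}$, a constrained linear maximization I would then compare to both $\calD_\calX$ and the gradient mapping. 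The delicate point is that these surrogates are equivalent only up to multiplicative constants depending on $\ell$ and $L$: when the unconstrained step $x-\tfrac1\ell\nabla f(x)$ is projected onto an active face, the gradient mapping \emph{saturates} while $\calD_\calX$ and $\dist(0,\partial(f+I_\calX))$ keep growing with $\norm{\nabla f}$. The hypothesis that $f$ is $L$-Lipschitz, hence $\norm{\nabla f}\leq L$, is precisely what bounds these ratios.

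With the comparisons in hand, (pPL)$\Leftrightarrow$(KL) is essentially immediate: both conditions assert that one of these comparable surrogates dominates $f(x)-f^\star$, so a modulus for one converts into a modulus for the other at the cost of the $(\ell,L)$-dependent factors. For the error bound I would route through quadratic growth. \Cref{prop:QG_unified} gives (pPL)$\Rightarrow$(QG), i.e.\ $f(x)-f^\star\geq\tfrac{\mu}{4}\norm{x-x^\star_p}^2$, hence $\norm{x-x^\star_p}^2\lesssim f(x)-f^\star$; combined with the key estimate $f(x)-f^\star\lesssim\norm{x-x^+}^2$ this yields (pEB). The reverse implication (pEB)$\Rightarrow$(pPL) reduces to the \emph{same} estimate, since (pEB) plus $\calD_\calX(x,\ell)\geq\ell^2\norm{x-x^+}^2$ then forces $\calD_\calX(x,\ell)\gtrsim f(x)-f^\star$. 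Thus both directions of the error-bound equivalence hinge on controlling the optimality gap by the gradient mapping.

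The main obstacle is exactly this estimate $f(x)-f^\star\lesssim\norm{x-x^+}^2$ in the \emph{nonconvex and constrained} regime. In the unconstrained case it is trivial because $\nabla f(x^\star_p)=0$ gives $f(x)-f^\star\leq\tfrac{L}{2}\norm{x-x^\star_p}^2$ by smoothness; at a boundary minimizer, however, $\nabla f(x^\star_p)\neq0$ (only $-\nabla f(x^\star_p)\in N_\calX(x^\star_p)$ holds), so this clean quadratic bound breaks down and $f$ may grow linearly away from the minimizer. The fix is to work with the proximal gradient mapping, which \emph{does} vanish at minimizers, and to run an iterated prox-gradient / proximal-descent argument: telescoping the decrease guaranteed by \cref{D-descent-lemma} (applied with $v=\nabla f(x)$, so the error term vanishes) along the trajectory $x^{(k+1)}=\proj_\calX(x^{(k)}-\tfrac1\ell\nabla f(x^{(k)}))$ bounds $f(x)-f^\star$ by a geometric series in the squared gradient-mapping norms, while the same trajectory's length bounds $\norm{x-x^\star_p}$. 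Making this rigorous in the constrained geometry, and tracking that all comparison constants remain polynomial in $\ell,L,\mu$ so the resulting moduli stay strictly positive, is the technical crux; the remainder is a bookkeeping assembly of the lemmas already established.
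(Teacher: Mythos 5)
Your claim that $\text{(pPL)}\equiv\text{(KL)}$ is ``essentially immediate'' from pointwise comparability of the stationarity surrogates fails in the direction $\text{(KL)}\Rightarrow\text{(pPL)}$, and this cannot be repaired by allowing constants that depend on $\ell$, $L$, or even on the specific instance. Take $\calX=[0,\delta]$ and $f(x)=Lx$ with $\ell\le L$. At interior points $x\downarrow 0$ the normal cone is trivial, so the KL surrogate $\dist\paren{0,\partial (f+I_\calX)(x)}^2=L^2$ stays bounded away from zero, whereas the pPL surrogate satisfies $\calD_{\calX}(x,\ell)=\max_{y\in[0,\delta]}\braces{2\ell L(x-y)-\ell^2(x-y)^2}\le 2\ell L x\to 0$. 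Hence no inequality of the form $\calD_{\calX}(x,\ell)\ge c\,\dist\paren{0,\partial(f+I_\calX)(x)}^2$ holds pointwise, for any $c>0$, so a KL modulus cannot be converted into a pPL modulus by surrogate-versus-surrogate domination. (The converse direction is the unproblematic one: from $\ell^2\norm{x-x^+}^2\le\calD_\calX(x,\ell)$ and $\calD_\calX(x,\ell)\le 2\ell\inprod{\nabla f(x)}{x-x^+}\le 2\ell\norm{x-x^+}\dist\paren{0,\partial(f+I_\calX)(x)}$ one gets $\calD_\calX(x,\ell)\le 4\dist\paren{0,\partial(f+I_\calX)(x)}^2$, so $\text{(pPL)}\Rightarrow\text{(KL)}$ does follow pointwise.) This non-comparability is precisely why the paper does not argue pointwise: it proves $\text{(KL)}\Rightarrow\text{(pEB)}$ through the global, continuous-time subgradient-curve argument of Bolte et al., which bounds $\dist(x,\calX^\star)$ by $\sqrt{(F(x)-F^\star)/(\bar\mu/2)}$ along a flow, and then crucially bounds the subgradient norm \emph{at the prox point} $x^+$, via $\dist\paren{0,\partial F(x^+)}\le 2\ell\norm{x^+-x}$, rather than at $x$.

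The second and more serious gap is your pivot estimate $f(x)-f^\star\lesssim\norm{x-x^+}^2$, on which, by your own account, \emph{both} directions of $\text{(pPL)}\leftrightarrow\text{(pEB)}$ rest. It is false in the constrained setting. With the same $f(x)=Lx$ on $\calX=[0,1]$ and $0<x<L/\ell$, one has $x^+=\proj_{[0,1]}\paren{x-\tfrac{L}{\ell}}=0$, so $\norm{x-x^+}=x$ while $f(x)-f^\star=Lx$, and $Lx/x^2\to\infty$ as $x\downarrow 0$. This $f$ satisfies pPL (with $\mu=\ell/2$, witnessed by $y=0$ in the definition of $\calD_\calX$), pEB (with $c=1$ on this range), and KL, so the estimate is not a consequence of any of the three conditions; your route collapses even though its two endpoints are true. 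The telescoping repair cannot save it: along the prox-gradient trajectory the per-step decrease is at most $\inprod{\nabla f(x^{(k)})}{x^{(k)}-x^{(k+1)}}\le L\norm{x^{(k)}-x^{(k+1)}}$, which is \emph{linear}, not quadratic, in the gradient-mapping norm, so even granting geometric decay you obtain only $f(x)-f^\star\lesssim \norm{x-x^+}$; fed into your QG argument this yields a H\"older-$\tfrac12$ bound on $\norm{x-x^\star_p}$, not the Lipschitz-type pEB. The paper's forward-backward-envelope argument is designed exactly to avoid this trap: it splits $F(x)-F^\star=\paren{F(x)-F_{1/\ell}(x)}+\paren{F_{1/\ell}(x)-F^\star}$, bounds only the second term quadratically by $2\ell\norm{x-x^\star_p}^2$ (linearizing $f$ at $x$ and evaluating at $x^\star_p$), and never compares the first term---which equals $\tfrac{1}{2\ell}\calD_\calX(x,\ell)$ and is exactly the piece growing linearly in the counterexample---to $\norm{x-x^+}^2$ at all. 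In short, only your $\text{(pPL)}\Rightarrow\text{(KL)}$ step survives; the remaining legs of the cycle require the subgradient-flow and envelope machinery the paper imports from Karimi et al.
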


\begin{remark}
    As we will see, (KL) with a modulus $\bar{\mu}$ implies (pEB) with a modulus $1 + \frac{2\ell}{\bar{\mu}}$. In turn, (pEB) with a modulus $c$, implies (pPL) with a modulus $\frac{\ell}{1+4c^2}$.
\end{remark}

Following, we repeat the calculations of \citep{karimi2016linear} by explicitly tracking the dependence between the constants. We will denote $\calX^\star:= \argmin_{x\in\calX}f(x)$ and $N_{\calX}(x)$ to be the normal cone of $\calX$ at $x$.

\paragraph{KL$\to$pEB}
First, let $F$ be $F(x) := f(x) + I_\calX(x)$. By the $\ell$-smoothness of $f$ and the convexity of $I_\calX$, we observe that $F$ is weakly-convex. By \citep[Theorem 13]{bolte2010characterizations}, for any $x \in \text{dom} F$ there exists a subgradient curve $\chi_x : [0, \infty) \to \text{dom} F$ that satisfies the following:
\begin{align}
\dot{\chi}_x(t) &\in -\partial F(\chi_x(t)),
\\
\chi_x(0) &= x,
\\
\frac{d}{dt} F(\chi_x(t)) &= -\|\dot{\chi}_x(t)\|^2, \label{kl-peb-1}
\end{align}
where $t \mapsto F(\chi_x(t))$ is a non-increasing and Lipschitz continuous function for $t \in [\eta, \infty]$ for any $\eta > 0$. Further, we define the function $r(t) = \sqrt{F(\chi_x(t)) - F^\star}$. By differentiating $r(t)$, we observe that:
\begin{align}
\frac{dr(t)}{dt} &= \frac{\dot{F}(\chi_x(t))}{2\sqrt{F(\chi_x(t)) - F^\star}} \\
&= -\frac{\|\dot{\chi}_x(t)\|^2}{2\sqrt{F(\chi_x(t)) - F^\star}} \\
&\leq -\sqrt{\bar{\mu}/2}\|\dot{\chi}_x(t)\|, \label{kl-peb-2}
\end{align}
\begin{itemize}
    \item In the second line, we plug in the definition of the subgradient curve.
    \item In the third line, we use the KL inequality and the fact that $\dot{\chi}_x(t) \in -\partial F(\chi_x(t))$.
\end{itemize}
Taking the integral, we write:
\begin{align}
r(T) - r(0) &= \int_0^T \frac{dr(t)}{dt} dt \\
&\leq -\sqrt{\bar{\mu}/2} \int_0^T \|\dot{\chi}_x(t)\| dt \label{kl-peb-3-intermed}\\
&\leq -\sqrt{\bar{\mu}/2} \dist(\chi_x(T), \chi_x(0)), \label{kl-peb-3}
\end{align}
where:
\begin{itemize}
    \item \eqref{kl-peb-3-intermed} uses the bound on $r(t)$
    \item \eqref{kl-peb-3} uses the fact that any curve connecting two points has a length greater than their Euclidean distance.
\end{itemize}
Now, we will show that $\lim_{T\to \infty} r(T) = 0$. By \eqref{kl-peb-3}, we get
\begin{align}
r(T) - r(0) &= \int_0^T \frac{dr(t)}{dt} dt \\
&= -\int_0^T \frac{\|\dot{\chi}_x(t)\|^2}{2\sqrt{F(\chi_x(t)) - F^\star}} dt \\
&\leq -\frac{\bar{\mu}}{2} \int_0^T \sqrt{F(\chi_x(t)) - F^\star} dt \\
&\leq -\frac{\bar{\mu}}{2} T r(T), \label{kl-peb-4}
\end{align}
where the first inequality follows from the KL condition, and second inequality uses the fact that $F(\chi_x(t))$ is non-increasing in $t$. Hence, we get a bound on $r(T)$,
\begin{align}    
0 \leq r(T) \leq \frac{2r(0)}{2 + \bar{\mu} T},
\end{align}
By taking the limit of $T \to \infty$, we get $\lim_{T\to \infty} r(T) \to 0$. Now we can focus on the term $r(0) = \sqrt{F(x) - F^\star}$. Proceeding, by \eqref{kl-peb-3} we write,
\begin{equation}
    \sqrt{F(x) - F^\star} \geq \sqrt{\bar{\mu}/2} \dist(x, \calX^\star ), \label{kl-peb-5}
\end{equation}
From \eqref{kl-peb-5} and the KL condition, we see that:
\begin{align}
    \dist(0, \partial F(x)) \geq \bar{\mu} \dist(x, \calX^\star). \label{kl-peb-6}
\end{align}
Now, let define $x^+ = \text{prox}_{\frac{1}{\ell} g} \left(x - \frac{1}{\ell} \nabla f(x)\right)$. By the optimality of $x^+$, we have $-\nabla f(x^+) - \ell(x^+ - x) \in N_{\calX}(x^+)$. As such,
$$
\nabla f(x^+) - \nabla f(x) - \ell(x^+ - x) \in \partial I_{\calX}(x^+).
$$
Letting the particular subgradient of $I_{\calX}(x^+)$ that achieves this be $\xi$, we write,
\begin{align}
\dist(0, \partial F(x^+)) &= \|0 - \xi\| \\
&= \|\nabla f(x^+) - \nabla f(x) - \ell(x^+ - x)\| \\
&\leq \ell\|x^+ - x\| + \|\nabla f(x^+) - \nabla f(x)\| \\
&\leq 2\ell\|x^+ - x\|, \label{kl-peb-7}
\end{align}
where we used the triangle inequality and the $\ell$-Lispchitz continuity of $\nabla f$. Finally, we can derive the proximal-EB condition by
\begin{align}
\dist(x, \calX^\star) &\leq \|x - x^+\| + \text{dist}(x^+, \calX^\star) \\
&\leq \|x - x^+\| + \frac{1}{\bar{\mu}} \dist(0, \partial F(x^+)) \\
&\leq \|x - x^+\| + \frac{2\ell}{\bar{\mu}}\|x^+ - x\| \\
&= \paren{1 + \frac{2\ell}{\bar{\mu}} }\|x - x^+\|. \label{kl-peb-8}
\end{align}
We note that:
\begin{itemize}
    \item the first inequality follows by the triangle inequality;
    \item the second inequality uses the \eqref{kl-peb-6};
    \item the third inequality uses \eqref{kl-peb-7}.
\end{itemize}
Re-iterating:
\begin{align}
    \min_{s_x \in \partial(f + I_\calX)(x) } \norm{s_x}^2 \geq 2\bar{\mu} \paren{ f(x) - f(x^\star)},\quad \forall x \in \calX \quad \implies \quad \dist(x,\calX^\star) \leq \paren{1 + \frac{2\ell}{\bar{\mu}}}\norm{ x - x^+}.
\end{align}

\paragraph{pEB$\to$pPL}
Before proceeding, we define the forward-backward envelope, $F_{\frac{1}{\ell}}$, of $F$~\citep[Definition 2.1]{stella2017forward}, as:
\begin{align}
    F_{\frac{1}{\ell}}:= \min_{y\in \R^d}\left\{f(x) + \inprod{\nabla f(x)}{y-x} + \frac{\ell}{2}\norm{y-x}^2 + I_\calX(y) - I_\calX(x)\right\}.
\end{align}
Additionally, we observe that: $\calD_{\calX}(x,\ell) = 2\ell\paren{F(x)  - F_{\frac{1}{\ell}}(x) }.$
Moving forward we note that by assumption, it holds that:
\begin{align}
    \dist(x,\calX^\star) \leq c  \norm{x - \proj_\calX\paren{x - \frac{1}{\ell}\nabla f(x) }}.
\end{align}
From the latter we write,
\begin{align}
    F(x) - F^\star &= F(x) - F_{\frac{1}{\ell}}(x) + F_{\frac{1}{\ell}}(x)  - F^\star \\
    &\leq F(x) - F_{\frac{1}{\ell}}(x) + 2\ell\norm{x^\star - x}^2 \\
    &\leq F(x) - F_{\frac{1}{\ell}}(x) + 2 \ell c^2  \norm{x - \proj_\calX\paren{x - \frac{1}{\ell}\nabla f(x) }}^2 \\
    &\leq \paren{1 + 4c^2 }\paren{ F(x) - F_{\frac{1}{\ell}}(x)}\\
    &= \paren{1 + 4c^2 }\frac{1}{2\ell}\calD_{\calX}(x,\ell).
\end{align}

Where the second inequality follows from the pEB condition and the two last steps follow from \citep[Proposition 2.2(i)]{stella2017forward}:
$ \frac{\ell}{2}\norm{x - \proj_\calX\paren{x - \frac{1}{\ell}\nabla f(x) }}^2 \leq F(x) - F_{\frac{1}{\ell}}(x)$ and $F(x) - F_{\frac{1}{\ell}}(x) = \frac{1}{2\ell}\calD_{\calX}(x,\ell)$. In short,
\begin{align}
    \frac{\ell}{1 + 4c^2}\paren{F(x) - F^\star} \leq \frac{1}{2}\calD_\calX(x,\ell).
\end{align}

Concluding, we see that KL with modulus $\bar{\mu}$ translates to pEB with constant $c=\paren{1 + \frac{2\ell}{\bar{\mu}}}$. In turn, pEB with constant $c$ translates to pPL with modulus $\frac{\ell}{1+4c^2}$. This means that KL with $\bar{\mu}$ is equivalent to pPL with modulus $\mu$, 
\begin{equation}
    \mu := \frac{\ell}{1 + 4 \paren{1 + \frac{2\ell}{\bar{\mu}} }^2} \geq \frac{\ell \bar{\mu}^2 }{ 9\bar{\mu}^2 +  32\ell^2 }.
\end{equation}

\subsection{pPL from HSC}
\begin{proposition}
    Let $f:\calX\to\R$ be an $\ell$-smooth function that is $(\mu_c,\mu_H)$-hidden strongly convex with $\mu_H>0$. Then, $f$ satisfies the proximal Polyak-\L ojasiewciz condition with a modulus $\mu$,
    $$\mu:= \frac{\ell}{1 + 4\paren{1 + \frac{2\ell}{2\mu_c^2\mu_H } }^2 }$$
\end{proposition}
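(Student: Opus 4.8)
The plan is to obtain the result as a short composition of two facts already established in the excerpt, with essentially no new analysis required. First I would invoke part (ii) of \cref{prop:hc_to_graddom}: since $f$ is $\ell$-smooth and $(\mu_c,\mu_H)$-hidden strongly convex, the function $F := f + I_\calX$ obeys the gradient-domination bound
\begin{equation*}
\inf_{s_x\in\partial F(x)}\norm{s_x}^2 \;\geq\; 2\mu_c^2\mu_H\paren{F(x)-F^\star},\qquad \forall x\in\calX .
\end{equation*}
Reading this inequality against the definition of the Kurdyka-\L ojasiewicz (KL) condition from \cref{sec:regularity-conditions}, and using $F(x)-F^\star = f(x)-f^\star$ for $x\in\calX$, I would conclude that $f$ satisfies the KL condition with modulus $\bar\mu = 2\mu_c^2\mu_H$. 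This is the only step where the hidden-convex structure is used; all the remaining work is generic metric-regularity bookkeeping.

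Second, I would push this KL modulus through the equivalence chain $\text{(KL)}\Rightarrow\text{(pEB)}\Rightarrow\text{(pPL)}$ proved in \cref{sec:regularity-conditions}. That chain shows that KL with modulus $\bar\mu$ implies the proximal error bound with constant $c = 1 + \tfrac{2\ell}{\bar\mu}$, and that the proximal error bound with constant $c$ in turn implies pPL with modulus $\tfrac{\ell}{1+4c^2}$. Composing the two transformations yields pPL with modulus
\begin{equation*}
\mu \;=\; \frac{\ell}{\,1 + 4\paren{1 + \frac{2\ell}{\bar\mu}}^2\,},
\end{equation*}
and substituting $\bar\mu = 2\mu_c^2\mu_H$ reproduces exactly the claimed expression. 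Before applying the chain I would verify its standing hypotheses: $\ell$-smoothness of $f$ is assumed and gives $L$-Lipschitzness on the compact set $\calX$, while $F = f + I_\calX$ is $\ell$-weakly convex (a sum of the $\ell$-weakly convex $f$ and the convex indicator), which is precisely what the subgradient-curve argument underlying $\text{(KL)}\Rightarrow\text{(pEB)}$ requires.

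Since the substantive content is entirely carried by the two cited ingredients, there is no genuinely hard analytic obstacle; the delicate point is purely the constant bookkeeping. Concretely, one must be careful to match the factor of $2$ that appears in \emph{both} the KL definition ($2\bar\mu$) and the hidden-strong-convexity bound ($2\mu_c^2\mu_H$) when reading off $\bar\mu$, and then to propagate $\bar\mu$ faithfully through each of the two steps of the equivalence so that the final denominator emerges as $1 + 4\paren{1 + \tfrac{2\ell}{2\mu_c^2\mu_H}}^2$ rather than an off-by-a-constant variant. I expect this factor tracking, rather than any inequality, to be the only place where an error could creep in.
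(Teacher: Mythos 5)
Your proof follows the paper's own argument exactly: the paper likewise invokes the hidden-strong-convexity-to-gradient-domination bound (\cref{prop:hc_to_graddom}(ii), i.e., Proposition 2(ii) of \citet{fatkhullin2023stochastic}) and then pushes the resulting modulus through the KL~$\to$~pEB~$\to$~pPL constant-tracking of \cref{sec:regularity-conditions} to read off $\mu = \frac{\ell}{1 + 4\paren{1 + \frac{2\ell}{2\mu_c^2\mu_H}}^2}$. The factor-of-two bookkeeping you flag is handled the same way in both proofs: like the paper, you substitute $\bar{\mu} = 2\mu_c^2\mu_H$ into $c = 1 + \frac{2\ell}{\bar{\mu}}$ (even though a strict matching against the paper's KL definition, which carries the prefactor $2\bar{\mu}$, would give $\bar{\mu} = \mu_c^2\mu_H$), so your derivation reproduces the stated constant and inherits, rather than introduces, that convention.
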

\begin{proof}
By \citep[Proposition 2(ii)]{fatkhullin2023stochastic}, $(\mu_c,\mu_H)$-hidden strong convexity implies,
\begin{align}
    \min_{s_x\in\partial(f + I_\calX)(x) }\norm{s_x}^2 \geq  {2\mu_c^2\mu_H} \paren{f(x) - f(x^\star)}.
\end{align}
While, \citep[{Appendix G}]{karimi2016linear} and our precise quantification of the constants,
\begin{align}
    \mu := \frac{\ell}{1 + 4\paren{1 + \frac{2\ell}{2\mu_c^2\mu_H } }^2 }.
\end{align}
\end{proof}

\subsection{QG directly from HSC, KL, pPL}
\begin{proposition}[QG from HSC]
    \label{prop:QG_from_HSC}
    Let $ f : \calX \to \R$ be an $\ell$-smooth function that is $(\mu_c, \mu_H)$-hidden strongly convex. Then, for all $x \in \calX$,
    \[
        f(x) - f(x^\star) \geq \frac{\muqg}{2} \|x - x^\star\|^2,
    \]
    with $\mu_{\qg}:= \mu_c^2\mu_H $ where $x^\star$ is the unique minimizer of $f$ over $\mathcal{X}$.
\end{proposition}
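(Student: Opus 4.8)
The plan is to work entirely through the hidden representation $f = H\circ c$, prove a quadratic-growth estimate in the latent $u$-space where $H$ is honestly strongly convex over a convex set, and then transport it back to the control variables via the Lipschitz bijection $c$. First I would record the minimizer correspondence: since $c:\calX\to\calU$ is a bijection onto the convex set $\calU=c(\calX)$ (\cref{def:hsc}), minimizing $f$ over $\calX$ is equivalent to minimizing $H$ over $\calU$, so the (unique, by $\mu_H$-strong convexity of $H$) minimizer is $x^\star = c^{-1}(u^\star)$ with $u^\star := c(x^\star)$.

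Next I would establish quadratic growth for $H$ directly from the midpoint-type strong-convexity inequality, so as not to assume $H$ is differentiable. Plugging $v=u^\star$ and an arbitrary $u\in\calU$ into
\begin{equation}
H((1-\alpha)u + \alpha u^\star) \le (1-\alpha)H(u) + \alpha H(u^\star) - \tfrac{(1-\alpha)\alpha\mu_H}{2}\norm{u-u^\star}^2,
\end{equation}
and using that the convex combination $(1-\alpha)u+\alpha u^\star$ lies in $\calU$ so its $H$-value is at least $H(u^\star)$, the term $\alpha H(u^\star)$ cancels; dividing by $(1-\alpha)$ and letting $\alpha\uparrow 1$ yields
\begin{equation}
H(u) - H(u^\star) \ge \tfrac{\mu_H}{2}\norm{u-u^\star}^2, \qquad \forall u\in\calU.
\end{equation}

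Then I would transfer this to the control variables using that $c^{-1}$ is $1/\mu_c$-Lipschitz (\cref{def:hsc}): for every $x\in\calX$ with $u=c(x)$ we have $\norm{x-x^\star} = \norm{c^{-1}(u)-c^{-1}(u^\star)} \le \tfrac{1}{\mu_c}\norm{u-u^\star}$, i.e. $\norm{u-u^\star}\ge \mu_c\norm{x-x^\star}$. Substituting into the $u$-space bound and using $f(x)-f(x^\star)=H(u)-H(u^\star)$ gives
\begin{equation}
f(x) - f(x^\star) \ge \tfrac{\mu_H}{2}\norm{u-u^\star}^2 \ge \tfrac{\mu_c^2\mu_H}{2}\norm{x-x^\star}^2,
\end{equation}
which is exactly the claim with $\mu_{\qg}=\mu_c^2\mu_H$.

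The computation is routine once this picture is fixed; the only delicate point I expect is deriving the $u$-space quadratic growth purely from the convex-combination inequality (no differentiability of $H$), and making sure it is $u^\star$'s status as a \emph{global} minimizer over the convex domain $\calU$ — not merely stationarity — that powers the cancellation. An alternative, slightly lossier route would invoke \cref{prop:hc_to_graddom}(ii) to obtain the gradient-domination/KL bound with modulus $\mu_c^2\mu_H$ and then apply the KL$\Rightarrow$QG reduction from the appendix; but the direct argument above is cleaner and recovers the sharp constant $\tfrac{\mu_{\qg}}{2}$ rather than a degraded one.
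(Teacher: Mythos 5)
Your proof is correct and follows essentially the same route as the paper's: establish quadratic growth for $H$ in the latent space, then transport it to $\calX$ via the $1/\mu_c$-Lipschitz continuity of $c^{-1}$, using the identification $f(x) - f(x^\star) = H(u) - H(u^\star)$. The only difference is minor but slightly in your favor: the paper derives the latent bound from the gradient form of strong convexity together with first-order optimality of $u^\star$ (implicitly assuming $H$ differentiable), whereas your convex-combination argument with $\alpha \uparrow 1$ requires no differentiability of $H$ and so matches the definition exactly as stated.
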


    \begin{proof}
    We start by leveraging the strong convexity property of the function $H$. By the definition of strong convexity with modulus $\mu_H$, for any $u, v \in\calU$:
    \begin{align}
        H(u) - H(u^\star) - \inprod{\nabla_u H(u^\star)}{u - u^\star} &\geq \frac{\mu_H}{2} \norm{u - u^\star}^2. \label{eq:strong_convexity}
    \end{align}
    
    When $u^\star$ is a minimizer of $H$, the gradient at $u^\star$ satisfies:
    $$
        \inprod{\nabla_u H(u)}{u - u^\star} \geq 0
    $$
    Leveraging the latter, \eqref{eq:strong_convexity} simplifies into:
    \begin{align}
        H(u) - H(u^\star) &\geq \frac{\mu_H}{2} \|u - u^\star\|^2. \label{eq:strong_convexity_simplified}
    \end{align}
    Because $c^{-1}$ is Lipschitz continuous with modulus $ 1/\mu_c$, we obtain:
    $$
        \norm{x - x^\star} = \norm{c^{-1}(u) - c^{-1}(u^\star)} \leq \frac{1}{\mu_c} \norm{u - u^\star}.
    $$
    Substituting this bound into equation \eqref{eq:strong_convexity_simplified}, we obtain:
    \begin{align}
        H(u) - H(u^\star) &\geq \frac{\mu_H}{2} \norm{u - u^\star}^2 \\
        &= \frac{\mu_c^2 \mu_H }{2} \|x - x^\star\|^2. \label{eq:quadratic_growth_intermediate}
    \end{align}
    Recognizing that \( H(u) = f(x) \) and \( H(u^\star) = f(x^\star) \), we can rewrite the above inequality as:
    \begin{align}
        f(x) - f(x^\star) &\geq \frac{\mu_c^2\mu_H }{2} \|x - x^\star\|^2. \label{eq:quadratic_growth_final}
    \end{align}
    
    This concludes the proof.
\end{proof}

\begin{proposition}[QG from KL]
    Let an $\ell$-smooth function $f:\calX\to\R$ satisfy the Kurdyka-\L ojasiewicz (KL) condition with modulus $\mu>0$, \textit{i.e.}:
    $$\mu \paren{f(x) - f^\star}\leq \frac{1}{2} \inf_{s_x\in \partial F(x) } \norm{s_x}^2, \quad \forall x \in \calX.$$
    Then, it satisfies the quadratic growth (QG) condition with a modulus $\frac{\mu}{2}$, \textit{i.e.,} 
    $$\frac{\mu}{4}\norm{x - x^\star_p}^2 \leq f(x) - f^\star,\quad \forall x \in \calX, $$
    where $x^\star_p \in \argmin_{x^\star \in \calX^\star}\norm{x - x^\star}^2$ and in turn $\calX^\star:= \argmin_{x\in\calX} f(x)$.
    \label{prop:qg-from-kl}
\end{proposition}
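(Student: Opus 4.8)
The plan is to obtain the quadratic-growth inequality as a direct consequence of the subgradient-curve argument already carried out in the KL$\to$pEB derivation of \cref{sec:regularity-conditions}, rather than re-running that machinery from scratch. The crucial observation is that the intermediate estimate \eqref{kl-peb-5}, namely $\sqrt{F(x)-F^\star}\geq\sqrt{\mu/2}\,\dist(x,\calX^\star)$ with $F:=f+I_\calX$, already encodes quadratic growth once it is squared; all that remains is to translate it back to $f$ and to match the stated modulus.

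Concretely, I would first note that, since $x\in\calX$, we have $F(x)=f(x)$ and $F^\star=f^\star$, and moreover $\dist(x,\calX^\star)=\|x-x^\star_p\|$ by the definition of $x^\star_p$ as a closest point of $\calX^\star$ to $x$. Squaring \eqref{kl-peb-5} then yields
\[
f(x)-f^\star \;\geq\; \frac{\mu}{2}\,\|x-x^\star_p\|^2 \;\geq\; \frac{\mu}{4}\,\|x-x^\star_p\|^2 ,
\]
which is precisely the claimed QG condition with modulus $\mu/2$; the final slack is harmless and only simplifies constants (the argument in fact delivers the stronger coefficient $\mu/2$).

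For completeness I would recall why \eqref{kl-peb-5} holds, so that the proposition is self-contained modulo the earlier lemmas. Because $f$ is $\ell$-smooth and $I_\calX$ is convex, $F$ is weakly convex, so by \citep{bolte2010characterizations} every $x\in\dom F$ admits a subgradient curve $\chi_x$ with $\dot\chi_x\in-\partial F(\chi_x)$ and $\tfrac{d}{dt}F(\chi_x(t))=-\|\dot\chi_x(t)\|^2$. Setting $r(t):=\sqrt{F(\chi_x(t))-F^\star}$, the KL inequality gives $\tfrac{dr}{dt}\leq-\sqrt{\mu/2}\,\|\dot\chi_x(t)\|$ as in \eqref{kl-peb-2}, and integrating bounds the curve length by $r(0)/\sqrt{\mu/2}$; the decay estimate \eqref{kl-peb-4} shows $r(T)\to 0$, so the curve converges to a point of $\calX^\star$ and its length dominates $\dist(x,\calX^\star)$, giving \eqref{kl-peb-5}.

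The only genuinely delicate point — and the step I expect to be the main obstacle — is justifying that the limit $\lim_{T\to\infty}\chi_x(T)$ exists and lies in $\calX^\star$, so that the total length $\int_0^\infty\|\dot\chi_x\|\,dt$ legitimately lower-bounds $\dist(x,\calX^\star)$ rather than merely $\dist(\chi_x(T),x)$ for finite $T$. This is secured by the bound $r(T)\leq 2r(0)/(2+\mu T)$ from \eqref{kl-peb-4}, which forces $F(\chi_x(T))\to F^\star$ and hence accumulation on the solution set, while the finiteness of the curve length (a byproduct of the same integration) ensures the curve actually converges. Everything else is elementary: squaring a nonnegative inequality and invoking the definitions of $F$, $F^\star$, and $x^\star_p$.
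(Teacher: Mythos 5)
Your proof is correct, but it takes a genuinely different route from the paper's. The paper proves \cref{prop:qg-from-kl} purely by citation: it observes that $F := f + I_\calX$ is $\ell$-weakly convex and then invokes \citep[Theorem 3.1]{liao2024error} twice, first to pass from KL to the error bound $\norm{x - x^\star_p} \leq \frac{1}{\mu}\inf_{s_x\in\partial F(x)}\norm{s_x}$, and then from the error bound to quadratic growth with coefficient $\mu/4$. You instead bypass the error-bound detour entirely: you recycle the subgradient-curve argument that the paper itself deploys in its KL$\to$pEB derivation, stop at the intermediate length estimate \eqref{kl-peb-5}, namely $\sqrt{F(x)-F^\star}\geq \sqrt{\mu/2}\,\dist(x,\calX^\star)$, and square it, using that $F(x)=f(x)$ and $F^\star=f^\star$ on $\calX$ and that $\dist(x,\calX^\star)=\norm{x-x^\star_p}$. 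This buys two things: the proof becomes self-contained within the paper (the only external input is the existence of subgradient curves from \citep{bolte2010characterizations}, already cited there), and it yields the sharper coefficient $\mu/2$, which trivially implies the stated $\mu/4$ bound. Your treatment of the one delicate point --- that the curve has finite length by the integrated differential inequality, hence converges, and that its limit lies in $\calX^\star$ because the decay bound \eqref{kl-peb-4} forces $F(\chi_x(T))\to F^\star$ while $F$ is lower semicontinuous on the compact set $\calX$ --- is in fact more careful than the paper's own passage from \eqref{kl-peb-3} to \eqref{kl-peb-5}, which asserts this implicitly. The trade-off is that the paper's citation route is shorter on the page and preserves the modular KL$\to$EB$\to$QG structure it leans on elsewhere, whereas yours is longer but elementary and constant-sharp.
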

\begin{proof}
Before proceeding, we note that the PL condition in \citep{liao2024error} is what we call the KL condition in this manuscript. Let us define $F(\cdot) := f(\cdot) + I_\calX(\cdot)$ where $I_\calX$ is the indicator function of the compact convex set $\calX$. We note that, $I_\calX$ is convex and $f$ is $\ell$-smooth and as such $\ell$-weakly convex. Hence, $F(\cdot) = f(\cdot) + I_\calX(\cdot) $ is also $\ell$-weakly convex. By \citep[Proof of Theorem 3.1 (PL$\to$EB)]{liao2024error}
    \begin{align}
        \norm{x - x^\star_{p}}\leq \frac{1}{\mu} \inf_{s\in \partial F(x)}\norm{s_x},
    \end{align}
    where $x^\star_p \in \argmin_{x^\star \in \calX^\star}\norm{x - x^\star}^2$ and in turn $\calX^\star:= \argmin_{x\in\calX} f(x)$.
Then, by \citep[Proof of Theorem 3.1 (EB$\to$QG)]{liao2024error}, we know that,
\begin{align}
    \frac{1 }{{4}/{\mu}}\norm{x - x^\star_{p}}^2 \leq f(x) - f^*.
\end{align}
    
\end{proof}

\begin{proposition}[QG from pPL;~{\citep[Theorem 2]{mulvaney2022dynamic}}]
    Let an $\ell$-smooth function $f:\calX\to\R$ satisfy the proximal Polyak-\L ojasiewicz (PL) condition with modulus $\mu>0$. Then, it satisfies the quadratic growth (QG) condition with the same modulus ${\mu}$,
    $$\frac{\mu}{4}\norm{x - x^\star_p}^2 \leq f(x) - f^\star,\quad \forall x \in \calX, $$
    where $x^\star_p \in \argmin_{x^\star \in \calX^\star}\norm{x - x^\star}^2$ and in turn $\calX^\star:= \argmin_{x\in\calX} f(x)$.
        \label{prop:qg-from-ppl}
\end{proposition}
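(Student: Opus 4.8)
The plan is to avoid re-deriving everything from scratch and instead reduce the claim to the implication ``KL $\Rightarrow$ QG'' already established in \cref{prop:qg-from-kl}. Since that proposition delivers exactly the bound $\frac{\mu}{4}\norm{x - x^\star_p}^2 \leq f(x) - f^\star$ from a KL condition of modulus $\mu$, it suffices to show that the proximal Polyak-\L ojasiewicz condition with modulus $\mu$ implies the Kurdyka-\L ojasiewicz condition with the \emph{same} modulus. Setting $F := f + I_\calX$, this amounts to a single pointwise comparison between the stationarity proxy $\calD_\calX(x,\ell)$ and the squared minimal-norm subgradient $\min_{s_x \in \partial F(x)}\norm{s_x}^2$.

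Concretely, I would first establish the inequality $\calD_\calX(x,\ell) \leq \min_{s_x \in \partial F(x)}\norm{s_x}^2$ for every $x \in \calX$. To do so I would start from the closed form $\calD_\calX(x,\alpha) = 2\alpha\inprod{\nabla f(x)}{x - y_\alpha} - \alpha^2\norm{x - y_\alpha}^2$, with $y_\alpha := \proj_\calX(x - \tfrac{1}{\alpha}\nabla f(x))$, and invoke the monotonicity of \cref{D-non-decreasing-in-alpha} to reduce the bound to controlling the limit $\alpha \to \infty$. The scaled gradient mapping $\alpha(x - y_\alpha)$ should converge to $s^\star := \proj_{\partial F(x)}(0)$, the minimal-norm element of $\partial F(x) = \nabla f(x) + N_\calX(x)$; writing $s^\star = \nabla f(x) + \nu^\star$ with $\nu^\star \in N_\calX(x)$ and using orthogonality of the projection of the origin onto the shifted cone ($\inprod{s^\star}{\nu^\star} = 0$, hence $\inprod{\nabla f(x)}{s^\star} = \norm{s^\star}^2$), the closed form should pass to the limit as $2\norm{s^\star}^2 - \norm{s^\star}^2 = \norm{s^\star}^2$, which gives the claimed comparison after combining with monotonicity in $\alpha$.

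With the comparison in hand, the pPL inequality immediately yields $\mu(f(x) - f^\star) \leq \tfrac{1}{2}\calD_\calX(x,\ell) \leq \tfrac{1}{2}\min_{s_x\in\partial F(x)}\norm{s_x}^2$, which is precisely the KL condition of modulus $\mu$; feeding this into \cref{prop:qg-from-kl} closes the argument with the stated modulus. The main obstacle is the limit step: rigorously justifying $\lim_{\alpha\to\infty}\calD_\calX(x,\alpha) = \min_{s_x\in\partial F(x)}\norm{s_x}^2$ requires continuity of the gradient mapping in $\alpha$ together with the cone orthogonality identity, and one must confirm that monotonicity (via \cref{D-non-decreasing-in-alpha}) genuinely transfers the limiting inequality back to the finite value $\alpha = \ell$. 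As a fallback in case the orthogonality identity is delicate in the constrained setting, I would keep in reserve a more self-contained route that mirrors the KL$\to$pEB subgradient-curve argument already in the appendix: run the steepest-descent flow of $F$ and integrate the KL bound along it, bypassing the limit entirely.
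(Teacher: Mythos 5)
Your proof is correct, but it takes a genuinely different route from the paper: the paper offers no argument at all for this proposition, importing it wholesale from \citep[Theorem 2]{mulvaney2022dynamic}, whereas you derive it from results already established in the appendix. Your reduction --- pPL with modulus $\mu$ implies KL with the \emph{same} modulus, then feed KL into \cref{prop:qg-from-kl} --- is sound: writing $F:=f+I_{\calX}$, the scaled gradient mapping does converge to the minimal-norm element of $\partial F(x)=\nabla f(x)+N_{\calX}(x)$ (this is the classical directional differentiability of the projection onto a convex set), your Moreau-decomposition identity $\inprod{\nabla f(x)}{s^\star}=\norm{s^\star}^2$ is exactly right, and \cref{D-non-decreasing-in-alpha} does transfer the limiting value back to $\alpha=\ell$, since a non-decreasing quantity is dominated by its limit. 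What your route buys is self-containedness plus a fact the paper never records: pPL implies KL \emph{without constant degradation} (the appendix only proves the lossy converse chain KL $\Rightarrow$ pEB $\Rightarrow$ pPL); what the citation buys is brevity. One substantive simplification: the entire limiting apparatus is unnecessary. For any $s=\nabla f(x)+\nu\in\partial F(x)$ and any $y\in\calX$, the normal-cone inequality $\inprod{\nu}{y-x}\leq 0$ gives
\begin{align}
\inprod{\nabla f(x)}{y-x}+\frac{\alpha}{2}\norm{y-x}^2 \;\geq\; \inprod{s}{y-x}+\frac{\alpha}{2}\norm{y-x}^2 \;\geq\; -\frac{1}{2\alpha}\norm{s}^2,
\end{align}
so $\calD_{\calX}(x,\alpha)\leq\norm{s}^2$ for every $\alpha>0$ and every subgradient; taking the infimum over $s\in\partial F(x)$ yields your comparison pointwise, with no tangent cones, no projection derivatives, and no appeal to \cref{D-non-decreasing-in-alpha}. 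Finally, note that your fallback as stated is circular: integrating ``the KL bound'' along the subgradient flow presupposes KL, which is precisely what the orthogonality step is needed to establish, so if that step were genuinely delicate the fallback would inherit the same difficulty.
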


\subsection{Warm-up: Stochastic Projected Gradient Descent on a proximal-PL Function}


Now consider stochastic projected gradient on a nonconvex pPL $\ell$-smooth function $f$. The analysis closely follows \citep{j2016proximal}.


Assume $\E\hatgradfbx(x)= \gradfbx(x)$, $\norm{\hatgradfbx(x)- \gradfbx(x)}\leq \delta$ and $\E \norm{\hatgradfbx(x)}^2 \leq \sigma^2$. Further, define points $x\at{t+1},\bar{x}\at{t+1}$ to be:
    \begin{align}
        x\at{t+1} &= \proj_\calX\left( x\at{t} - \eta \hatgradfbx(x\at{t}) \right);\\
        \bar{x}\at{t+1} &= \proj_{\calX}\paren{x\at{t} - \eta \nabla_x f(x\at{t})}.
    \end{align}
\begin{theorem}
\label{grad-descent-ppl}
Assume an $\ell$-smooth function $f:\calX\to\R$ with an inexact stochastic gradient oracle $\hatg_x$ such that $\E\hatgradfbx(x)= \gradfbx(x)$, $\norm{\hatgradfbx(x)- \gradfbx}\leq \delta$ and $\E \norm{\hatgradfbx}^2 \leq \sigma^2$.
Then, 
\begin{itemize}
    \item 
 running stochastic gradient descent with a $\delta_x$-inexact gradient for $T>0$ iterations, yields the following inequality:
\begin{align}
    \frac{1}{T}\sum_{t=0}^{T-1}\frac{1}{2} \calD_{\calX}(x\at{t+1},\ell) \leq \frac{3\ell \paren{\E f(x\at{0}) -  f(x^\star)}}{T} +  3 \delta_x^2 + 3 \sigma_x^2.
\end{align}
\item Further, if $f$ is pPL with modulus $\mu$, the followinng inequality holds true:
\begin{align}
    \E f(x\at{T+1}) - f(x^\star)\leq \exp\paren{ - \frac{\mu}{3\ell}T}  \paren{ f(x\at{0}) - f^\star} + \frac{3\ell\delta_x^2}{\mu} + \frac{3 \ell\sigma_x^2 }{\mu}.
\end{align}
\label{theorem:stoch-pgd-on-nonconvex-and-on-ppl}
\end{itemize}
\end{theorem}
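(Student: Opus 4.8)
The plan is to run the single per-step \emph{descent inequality} of \cref{D-descent-lemma} along the trajectory, take conditional expectations to dispose of the stochastic noise, and then telescope. Fix the filtration $\calF_t$ generated by the randomness up to $x\at{t}$, and apply \cref{D-descent-lemma} with starting point $x = x\at{t}$, feedback $v = \hatgradfbx(x\at{t})$ and $x^+ = x\at{t+1}$ (admissible since the step-size is taken $\eta = \Theta(1/\ell)$, so the hypothesis $\eta \le \tfrac{1}{5\ell}$ holds). This yields
\begin{equation}
f(x\at{t+1}) \le f(x\at{t}) - \frac{\eta}{6}\calD_\calX\paren{x\at{t}, 1/\eta} + \frac{\eta}{2}\norm{\nabla f(x\at{t}) - \hatgradfbx(x\at{t})}^2 .
\end{equation}
Because $1/\eta \ge \ell$, the monotonicity of the surrogate in its second argument (\cref{D-non-decreasing-in-alpha}) gives $\calD_\calX(x\at{t},1/\eta) \ge \calD_\calX(x\at{t},\ell)$, so the stationarity proxy at modulus $\ell$ enters the right-hand side with the correct sign.

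For the noise term I would take $\E[\cdot\mid\calF_t]$ and split the error into its \emph{bias} and \emph{variance} parts via $\norm{a+b}^2 \le 2\norm{a}^2 + 2\norm{b}^2$ with $a = \gradfbx(x\at{t}) - \hatgradfbx(x\at{t})$ and $b = \gradfbx(x\at{t}) - \nabla f(x\at{t})$. The oracle model supplies $\E\norm{a}^2 \le \sigma_x^2$ (bounded second moment, after subtracting the conditional mean) and $\norm{b} \le \delta_x$ (inexactness), hence $\E[\norm{\nabla f(x\at{t}) - \hatgradfbx(x\at{t})}^2\mid\calF_t] \le 2\sigma_x^2 + 2\delta_x^2$. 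Substituting, taking total expectations, moving the $\calD_\calX$ term to the left, summing over $t=0,\dots,T-1$ so the $f$-differences telescope, and lower-bounding $f(x\at{T}) \ge f^\star$, produces
\begin{equation}
\frac{1}{T}\sum_{t=0}^{T-1}\frac{1}{2}\,\E\calD_\calX(x\at{t},\ell) \le \frac{3\paren{\E f(x\at 0) - f^\star}}{\eta T} + 3\delta_x^2 + 3\sigma_x^2 ,
\end{equation}
which is the first claim with $\eta = \Theta(1/\ell)$ (giving the $3\ell/T$ rate, up to the absolute constant fixed by the admissible step-size) and up to the harmless off-by-one reindexing of the summed iterate.

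For the pPL claim I would not telescope but convert the descent inequality into a contraction. Applying the pPL condition \eqref{eq:grad-domIII}, i.e.\ $\tfrac12\calD_\calX(x\at{t},\ell) \ge \mu\paren{f(x\at{t}) - f^\star}$, to the (already lower-bounded) $\calD_\calX(x\at t,\ell)$ term, then subtracting $f^\star$ and taking expectations, gives
\begin{equation}
\E f(x\at{t+1}) - f^\star \le \paren{1 - \frac{\eta\mu}{3}}\paren{\E f(x\at{t}) - f^\star} + \eta\paren{\delta_x^2 + \sigma_x^2}.
\end{equation}
This is a scalar linear recursion $a_{t+1} \le \rho\, a_t + b$ with $\rho = 1 - \eta\mu/3$ and $b = \eta(\delta_x^2+\sigma_x^2)$; unrolling it and summing the geometric series yields $a_T \le \rho^T a_0 + b/(1-\rho)$. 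Bounding $\rho^T \le \exp(-\eta\mu T/3)$ and $b/(1-\rho) = 3(\delta_x^2+\sigma_x^2)/\mu$, with $\eta\asymp 1/\ell$, gives the stated exponential convergence to a $\Theta((\delta_x^2+\sigma_x^2)/\mu)$ noise floor.

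The main obstacle is the careful separation of the two error sources in the oracle: the inexactness $\delta_x$ is a \emph{systematic bias} that survives the conditional expectation and cannot be averaged away, whereas the variance $\sigma_x^2$ is zero-mean and only enters additively; keeping the martingale/filtration bookkeeping correct so that the cross term vanishes in $\E[\cdot\mid\calF_t]$ is the delicate step. A secondary, purely bookkeeping, point is reconciling the modulus $1/\eta$ produced by \cref{D-descent-lemma} with the modulus $\ell$ demanded by the pPL inequality, which is resolved by monotonicity (\cref{D-non-decreasing-in-alpha}) together with the choice $\eta = \Theta(1/\ell)$.
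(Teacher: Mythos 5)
Your proposal is correct and follows essentially the same route as the paper's proof: a per-step descent inequality in terms of $\calD_\calX$, a bias/variance split of the oracle error via $\norm{a+b}^2 \le 2\norm{a}^2+2\norm{b}^2$, telescoping for the nonconvex claim, and a pPL-induced linear recursion unrolled with a geometric series for the second claim. The only difference is presentational: you invoke \cref{D-descent-lemma} directly (and justify the modulus change via \cref{D-non-decreasing-in-alpha}), whereas the paper re-derives that same inequality inline from smoothness and \cref{lemma:reddi-three-point}; if anything, your bookkeeping in the pPL step (contraction $1-\eta\mu/3$ with additive term $\eta(\delta_x^2+\sigma_x^2)$, hence floor $3(\delta_x^2+\sigma_x^2)/\mu$) is more self-consistent than the constants recorded in the paper.
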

\begin{proof}
    
\end{proof}
\begin{itemize}
    \item We let $\bar{x}\at{t+1}:= \proj_{\calX}\paren{x\at{t} - \eta \nabla f(x\at{t}) }$. Invoking  $\ell$-smoothness of $f$ for $ x\at{t}, \bar{x}\at{t+1}$ and requiring that $\eta\leq\frac{1}{\ell}$,
\begin{align}
    f(\bar{x}\at{t+1}) &\leq f(x\at{t}) + \inprod{\nabla f(x\at{t})}{\bar{x}\at{t+1} - x\at{t}} + \frac{1}{2\eta}\norm{\bar{x}\at{t+1}-x\at{t}}^2\\
    &= f(x\at{t}) -\paren{ \inprod{\nabla f(x\at{t})}{ x\at{t} - \bar{x}\at{t+1} } - \frac{1}{2\eta}\norm{\bar{x}\at{t+1}-x\at{t}}^2 }\\
    &= f(x\at{t}) - \frac{\eta}{2}\calD_{\calX}(x\at{t},1/\eta)
    \label{ppl-descent-ineq1}
\end{align}
    \item Invoking \cref{lemma:reddi-three-point} with $x = x\at{t}$, $y = \bar{x}\at{t+1}$, $z = x\at{t}$, $v = \nabla f(x\at{t})$
    \begin{align}
        f(\barx\at{t+1}) \leq 
        f(x\at{t}) 
        + \paren{\frac{\ell}{2} - \frac{1}{\eta}}\norm{\barx\at{t+1} - x\at{t}}^2.
        \label{ppl-descent-ineq2}
    \end{align}

    \item Again, invoking \cref{lemma:reddi-three-point} but with $x = x\at{t}$, $y = x\at{t+1}$, $z = \barx\at{t+1}$, $v = \hatg(x\at{t})$,
    \begin{align}
        f(x\at{t+1}) &\leq f(\barx\at{t+1}) + \inprod{\nabla f(x\at{t}) - \hatgradfbx(x\at{t})}{x\at{t+1}-\barx\at{t+1}} \\ &
        \quad+ \paren{\frac{\ell}{2} - \frac{1}{2\eta}}\norm{x\at{t+1} - x\at{t}}^2
        +\paren{\frac{\ell}{2} + \frac{1}{2\eta}}\norm{\barx\at{t+1} - x\at{t}}^2 - \frac{1}{2\eta}\norm{x\at{t+1} - \barx\at{t+1}}^2. 
        \label{ppl-descent-ineq3}
    \end{align}

    Adding $1/3\times$\eqref{ppl-descent-ineq1} and $2/3\times$\eqref{ppl-descent-ineq2}
    \begin{align}
        f(\barx\at{t+1}) \leq f(x\at{t}) - \frac{\eta}{6}\calD_{\calX}(x\at{t},1/\eta) + \paren{\frac{\ell}{3} - \frac{2}{3\eta}}\norm{\barx\at{t+1} - x\at{t}}^2
    \end{align}

    Adding \eqref{ppl-descent-ineq3},
    \begin{align}
        f(x\at{t+1}) &\leq f(x\at{t})  - \frac{\eta}{6}\calD_{\calX}(x\at{t},1/\eta) + \paren{\frac{\ell}{3} - \frac{2}{3\eta}}\norm{\barx\at{t+1} - x\at{t}}^2\\
        &\quad  + 
        \inprod{\nabla f(x\at{t}) - \hatgradfbx(x\at{t})}{x\at{t+1}-\barx\at{t+1}} 
        \\ &
        \quad+ \paren{\frac{\ell}{2} - \frac{1}{2\eta}}\norm{x\at{t+1} - x\at{t}}^2
        +\paren{\frac{\ell}{2} + \frac{1}{2\eta}}\norm{\barx\at{t+1} - x\at{t}}^2 
        - \frac{1}{2\eta}\norm{x\at{t+1} - \barx\at{t+1}}^2 \\
        &\leq f(x\at{t}) - \frac{\eta}{6}\calD_{\calX}(x\at{t},1/\eta) 
        + \paren{\frac{5\ell}{6} - \frac{1}{6\eta}}\norm{\barx\at{t+1} - x\at{t}}^2 \\
        &\quad + \frac{\eta}{2}\norm{\nabla f(x\at{t}) - \hatgradfbx(x\at{t})}^2 
        \label{ppl-descent-peter-paul}
        \\
        &\quad + 
        \paren{\frac{\ell}{2} - \frac{1}{2\eta}}\norm{x\at{t+1} - x\at{t}}^2
         \label{ppl-descent-gather}
        \\
        &\leq f(x\at{t}) - \frac{\eta}{6}\calD_{\calX}(x\at{t},1/\eta) + \frac{\eta}{2}\norm{\nabla f(x\at{t}) - \hatgradfbx(x\at{t})}^2 
        \label{ppl-descent-drop-negatives}
        \\
        &\leq f(x\at{t}) - \frac{\eta}{6}\calD_{\calX}(x\at{t},1/\eta)
        + {\eta}\norm{\nabla f(x\at{t})     - \gradfbx(x\at{t})}^2
        + {\eta}\norm{\gradfbx(x\at{t})     - \hatgradfbx(x\at{t})}^2 
        \label{ppl-descent-triangle}
        \\
    \end{align}
\end{itemize}
We note that we have picked $\eta \leq \frac{1}{5\ell}$.
\begin{itemize}
    \item \eqref{ppl-descent-peter-paul} follows from the application of Young's inequality $\inprod{a}{b} \leq \frac{\rho}{2}\norm{a}^2 + \frac{1}{2\rho}\norm{b}^2$ with $\rho = \eta$;
    \item \eqref{ppl-descent-drop-negatives} follows by dropping the non-positive terms; non-positivity follows from the inequalities:
\begin{enumerate}
    \item $0 \geq 5\ell - \frac{1}{\eta} $
    \item $ \ell \leq 5\ell \leq \frac{1}{\eta}$ then $\ell - \frac{1}{\eta} \leq 0$
\end{enumerate}
\item \eqref{ppl-descent-triangle} follows from $\|a + b\|^2\leq 2\|a\|^2 + 2\|b\|^2.$ 
\end{itemize}

Taking expectations:
\begin{align}
    \E f(x\at{t+1}) &\leq \E f(x\at{t}) - \frac{\eta}{6}\calD_{\calX}(x\at{t},1/\eta)+ \eta \delta_x^2 + \eta \sigma_x^2 \label{ppl-descent-alt}
\end{align}
where the last inequality follows from the choice of the step-size. Rearranging \eqref{ppl-descent-alt} and summing over $T$, we get:
\begin{align}
    \frac{1}{T}\sum_{t=0}^{T-1}\frac{1}{2} \calD_{\calX}(x\at{t},1/\eta)  &\leq
    \frac{3}{\eta T} \sum_{t=0}^{T-1}
     \paren{\E f(x\at{t}) -  \E f(x\at{t+1}) + 3 \delta_x^2 + 3 \sigma_x^2
     }
     \\
    &\leq \frac{3 \paren{\E f(x\at{0}) -  f(x^\star)}}{\eta T} +  3\delta_x^2 + 3\sigma_x^2.
\end{align}

When the pPL condition holds and $\eta = \frac{1}{5\ell}$, we get from \eqref{ppl-descent-alt},
\begin{align}
    \E f(x\at{t+1}) - f^\star \leq \paren{1 - \frac{\mu}{3\ell}}  \paren{ \E f(x\at{t}) - f^\star} +  \delta_x^2 +  \sigma_x^2,
\end{align}
and as such,
\begin{align}
    \E f(x\at{T}) - f^\star &\leq \paren{1 - \frac{\mu}{3\ell}}^{T}  \paren{  f(x\at{0}) - f^\star} + \paren{ \delta_x^2 +  \sigma_x^2 } \sum_{t=1}^T \paren{1 - \frac{\mu}{3\ell}}^{t}\\
    &= \paren{1 - \frac{\mu}{3\ell}}^{T}  \paren{ f(x\at{0}) - f^\star} + \paren{ \delta_x^2 +  \sigma_x^2 } \frac{1 -\paren{1 - \frac{\mu}{3\ell}}^{T+1} }{1 - \paren{1 - \frac{\mu}{3\ell}} }\\
    &\leq \paren{1 - \frac{\mu}{3\ell}}^{T}  \paren{  f(x\at{0}) - f^\star} + \frac{3\ell\delta_x^2}{\mu} + \frac{3 \ell\sigma_x^2 }{\mu}\\
    &\leq \exp\paren{ - \frac{\mu}{3\ell}T }  \paren{  f(x\at{0}) - f^\star} + \frac{3\ell\delta_x^2}{\mu} + \frac{3 \ell\sigma_x^2 }{\mu}.
\end{align}
    
    
    


\section{Constrained Min-Max Optimization Under the pPL Condition}
\label{sec:convergence-min-max}
\subsection{Key Lemmata}
\begin{theorem}[NC-pPL and cont. of maximizers]
    Let function $f:\calX\times\calY\to \R$ with $f(x,\cdot)$ satisfying the proximal-PL condition with parameter $\mu$ for all $x\in\calX$. Then, consider points $x_1,x_2 \in\calX$ and $y^\star(x_1),y^\star(x_2) \in \calY$ with $y^\star(x_1):= \argmax_{y\in\calY} f(x_1, y)$ and $y^\star(x_2):= \argmax_{y\in\calY} f(x_2, y)$, it holds true that:
    \begin{align}
        \norm{y^\star(x_1) - y^\star(x_2)}\leq L_{\star}\norm{x_1 - x_2},
    \end{align}
    where $L_\star := \frac{\ell}{\mu } $.
\end{theorem}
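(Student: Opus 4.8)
The plan is to chain three implications: first, that the exact optimality of $y^\star(x_1)$ for the slice $f(x_1,\cdot)$ forces $y^\star(x_1)$ to be an \emph{approximately stationary} point of the neighbouring slice $f(x_2,\cdot)$, with stationarity measured through the proxy $\calD_\calY$; second, that the pPL condition upgrades approximate stationarity to a small suboptimality gap; and third, that the quadratic-growth consequence of pPL converts that gap into a distance bound on $\norm{y^\star(x_1)-y^\star(x_2)}$. Throughout I fix $x_1,x_2\in\calX$, abbreviate $y_1:=y^\star(x_1)$ and $y_2:=y^\star(x_2)$, and pass to $-f$ so that the maximization over $y$ becomes a minimization and every lemma proved for the minimizing player applies verbatim (both $\ell$-smoothness and the pPL modulus $\mu$ are preserved under negation).

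For the first link I would invoke the analogue of \cref{lemma:D-y-yprime-bound} for the maximizing player: since $y_1$ exactly maximizes $f(x_1,\cdot)$ and $\nabla_y f$ is $\ell$-Lipschitz in the $x$-argument, the lemma yields
\[
\calD_\calY(y_1,\ell;x_2)\ \le\ 2\ell^2\norm{x_1-x_2}^2 .
\]
This is the crux of the argument: in the constrained regime one \emph{cannot} write $\nabla_y f(x_1,y_1)=0$, so ``stationarity'' must be read off the projected-gradient proxy $\calD_\calY$, and transferring it from the slice $f(x_1,\cdot)$ to the slice $f(x_2,\cdot)$ is precisely what \cref{lemma:D-y-yprime-bound} quantifies (internally relying on \cref{claim:continuity_of_stationarity}). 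For the second link I apply the pPL inequality for $f(x_2,\cdot)$ at the point $y_1$, obtaining
\[
f(x_2,y_2)-f(x_2,y_1)\ \le\ \tfrac{1}{2\mu}\,\calD_\calY(y_1,\ell;x_2)\ \le\ \tfrac{\ell^2}{\mu}\norm{x_1-x_2}^2 .
\]

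For the third link I use that pPL implies quadratic growth (\cref{prop:qg-from-ppl}) for $f(x_2,\cdot)$; together with single-valuedness of the maximizer (so that the closest maximizer to $y_1$ is $y_2$ itself, as guaranteed under the hidden-strong-concavity of the regularized objective) this gives $\tfrac{\muqg}{4}\norm{y_1-y_2}^2\le f(x_2,y_2)-f(x_2,y_1)$. Chaining with the previous display and taking square roots produces $\norm{y_1-y_2}\le L_\star\norm{x_1-x_2}$, the constant $L_\star$ emerging by collecting the three contributing factors (the $2\ell^2$ of \cref{lemma:D-y-yprime-bound}, the $\tfrac{1}{2\mu}$ of pPL, and the growth modulus of \cref{prop:qg-from-ppl}). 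I expect the genuine obstacle to be the first link: rigorously certifying that a \emph{constrained} maximizer of one slice is a controllably stationary point of a perturbed slice, which obliges one to reason entirely through $\calD_\calY$ and the gradient-mapping machinery instead of a vanishing gradient; a secondary point demanding care is the single-valuedness of $y^\star(\cdot)$, without which the quadratic-growth step would only bound the distance to the maximizer \emph{set} rather than to $y_2$.
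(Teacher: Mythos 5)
Your proposal follows essentially the same route as the paper's own proof: transfer the stationarity of $y^\star(x_1)$ to the neighbouring slice $f(x_2,\cdot)$ via \cref{lemma:D-y-yprime-bound} (after negating $f$ so the minimization-phrased lemmata apply), upgrade it to a suboptimality gap via the pPL inequality, and convert that gap to a distance via the quadratic-growth consequence of pPL (\cref{prop:qg-from-ppl}). The only divergences are constant-factor bookkeeping (your chain yields $2\ell/\mu$ rather than the stated $\ell/\mu$, a factor-of-two slack the paper itself also carries between its lemma constants and $L_\star$) and your explicit caveat that without single-valuedness the bound is to the nearest maximizer $y^\star_p(x_2)$ — which is precisely how the paper's proof phrases it.
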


\begin{remark}
    One might compare the last statement to the Robust Berge Maximum Theorem \citep[Th. 3.20]{papadimitriou2023computational} which concerns (non)convex--strongly-concave functions with coupled feasibility sets. Essentially, the former statement illustrates that hidden-strong-concavity is in some aspect a stronger assumption than strong-concavity; in hidden-strong-concavity the feasibility sets are only ``hiddenly'' coupled. This allows us to decouple the constraint sets and view the problem as a constrained nonconvex-pPL problem. Then, it is quite intuitive that Lispchitz continuity of the maximizers holds in light of the analogous result \citep[Lemma A.3]{nouiehed2019solving} which concerns unconstrained min-max optimization over nonconvex-PL functions. Ultimately, this decoupling principle is also the reason why \citet{kalogiannis2024learning} can compute the gradient of the maximum function by only invoking Danskin's Theorem and not the more elaborate Envelope Theorem~\citep{afriat1971theory}.
\end{remark}

\begin{proof}
    Since we have defined the pPL and QG growth conditions for a minimization problem, let us assume $g(x,y) = - f(x,y)$. Consequentially, $ \argmin_{y\in\calY}g(x,y) = \argmax_{y\in\calY}f(x,y)\ni y^\star(x)$. Finally, we define:
    \begin{equation}
        \calD_{\calY}(y,\alpha;x):= -2\alpha \min_{z\in\calY} \left\{ \inprod{\nabla_y g(x,y)}{z-y} + \frac{\alpha}{2}\norm{z - y}^2 \right\}.
    \end{equation}
    By the proximal-PL condition, it holds that,
    \begin{align}
        \frac{1}{2}\calD_{\calY}\paren{y^\star(x_1), 1/\eta ; x_2 } \geq \mu\paren{   g\paren{x_2, y^\star(x_1) }  - g\paren{x_2, y^\star(x_2)}  }
    \end{align}

    By the QG condition:
    \begin{align}
          g\paren{x_2, y^\star(x_1) } - g\paren{x_2, y^\star(x_2)} \geq \frac{\muqg }{2}\norm{y^\star(x_1) - y_p^\star(x_2)}^2
    \end{align}
    Where, $y^\star_{p}(x_2):= \argmin_{y'\in\calY^\star{x}}\|y^\star(x_1) - y'\|$ and $\calY^\star(x):= \argmin_{y\in\calY}g(x,y)$. Finally, by \cref{lemma:D-y-yprime-bound},
    \begin{align}
        \calD_{\calY}\paren{y^\star(x_1), 1/\eta ; x_2 } \leq \ell^2\norm{x_1 - x_2}^2.
    \end{align}

    Putting these pieces together,
    \begin{align}
         \frac{\muqg }{2}\norm{y^\star(x_1) - y_p^\star(x_2)}^2 \leq \frac{\ell^2}{2\mu} \norm{x_1 - x_2}^2.
    \end{align}
    
    Further, we know that pPL with modulus $\mu$ implies QG with the same modulus (\cref{prop:qg-from-ppl}). This concludes the proof.
    
\end{proof}

\begin{corollary}
Let function $f:\calX\times\calY\to \R$ with $f(x,\cdot)$ satisfying the proximal-PL condition with modulus $\mu$ for all $x\in\calX$. Then, let $\Phi(x):= \max_{y\in\calY}f(x,y)$. For any two points $x_1,x_2\in\calX$ it holds true that:
    \begin{align}
        \norm{\nabla_x \Phi(x_1) - \nabla_x \Phi(x_2) }\leq \ellphi \norm{x_1 - x_2},
    \end{align}
    where $\ellphi := \ell (1 + L_\star) = {\ell} + \frac{\ell^2}{\mu} $.    
\end{corollary}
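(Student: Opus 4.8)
The plan is to reduce the claim to the $\ell$-smoothness of $f$ combined with the Lipschitz continuity of the maximizer map established in the preceding theorem. The starting point is \emph{Danskin's theorem}: since the inner problem $\max_{y\in\calY}f(x,y)$ admits the single-valued maximizer $y^\star(x)$ (well-defined in the hidden-strongly-concave/pPL setting), the gradient of the value function is obtained by freezing the maximizer, i.e. $\nabla_x\Phi(x)=\nabla_x f\paren{x,y^\star(x)}$. As the remark following the theorem emphasizes, it is precisely the \emph{decoupling} of the feasibility sets that licenses the use of Danskin's theorem here, rather than the more delicate Envelope theorem.

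With this identity in hand, I would bound the difference of gradients at two points $x_1,x_2$ by inserting the mixed term $\nabla_x f\paren{x_2,y^\star(x_1)}$ and applying the triangle inequality:
\begin{align}
\norm{\nabla_x\Phi(x_1)-\nabla_x\Phi(x_2)} &\leq \norm{\nabla_x f\paren{x_1,y^\star(x_1)}-\nabla_x f\paren{x_2,y^\star(x_1)}}\\
&\quad+ \norm{\nabla_x f\paren{x_2,y^\star(x_1)}-\nabla_x f\paren{x_2,y^\star(x_2)}}.
\end{align}
The first term is controlled by smoothness in the $x$-argument, yielding $\ell\norm{x_1-x_2}$; the second is controlled by smoothness in the $y$-argument, yielding $\ell\norm{y^\star(x_1)-y^\star(x_2)}$.

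It then remains to invoke the preceding theorem, $\norm{y^\star(x_1)-y^\star(x_2)}\leq L_\star\norm{x_1-x_2}$ with $L_\star=\ell/\mu$, to turn the second term into $\ell L_\star\norm{x_1-x_2}$. Summing the two contributions gives $\ellphi=\ell(1+L_\star)=\ell+\ell^2/\mu$, exactly as claimed.

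The only genuine obstacle is the first step: justifying the Danskin formula $\nabla_x\Phi(x)=\nabla_x f(x,y^\star(x))$, which requires the inner maximizer to be single-valued and the value function differentiable—both secured here by the hidden-strong-concavity structure and the uncoupledness of the constraint sets. Everything after that is the routine add-and-subtract estimate followed by an appeal to the already-proven Lipschitzness of $y^\star$.
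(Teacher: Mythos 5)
Your proposal is correct and is essentially the paper's own proof: both rest on Danskin's lemma to write $\nabla_x\Phi(x)=\nabla_x f\paren{x,y^\star(x)}$, then combine $\ell$-smoothness of $f$ with the $L_\star$-Lipschitz continuity of $y^\star$ to obtain $\ellphi=\ell(1+L_\star)$. The only cosmetic difference is that you insert the mixed term $\nabla_x f\paren{x_2,y^\star(x_1)}$ and apply smoothness in each argument separately, whereas the paper applies smoothness once to the joint argument $(x,y^\star(x))$ and then splits the norm — the two estimates are identical.
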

\begin{proof}
    We write,
    \begin{align}
            \norm{\nabla_x \Phi(x_1) - \nabla_x \Phi(x_2) } &= \norm{\nabla_x f(x_1,y^\star(x_1) ) - \nabla_x f(x_2,y^\star
            (x_2)) } \\
            &\leq \ell \norm{(x_1,y^\star(x_1)) -  (x_2,y^\star
            (x_2)) } \\
            &\leq \ell \norm{x_1 - x_2} + \ell L_\star\norm{x_1 - x_2}.
    \end{align}
    The first equation holds due to Danskin's lemma, and the first inequality is due to $\ell$-Lipschitz continuity of the gradient. The second inequality is due to the triangle inequality and the $L_\star$-Lipschitz continuity of the maximizers.
\end{proof}

\begin{lemma}
    Let $f:\calX\times\calY\to \R$ be an $L$-Lipschitz continuous and $\ell$-smooth function that satisfies the two-sided pPL condition for both $f(\cdot,y)$ and $f(x,\cdot)$, then:
    \begin{align}
        \min_{x\in\calX}\max_{y\in\calY} f(x,y) =\max_{y\in\calY} \min_{x\in\calX} f(x,y) =: \Phi^\star.
    \end{align}
\end{lemma}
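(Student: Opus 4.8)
The plan is to reduce the minimax equality to the existence of a genuine saddle point, which I will build from a minimizer of the max-function $\Phi(x):=\max_{y\in\calY}f(x,y)$. The easy half is \emph{weak duality}, which holds with no structural assumptions: since $f$ is continuous and $\calX,\calY$ are compact, every inner optimum is attained, and for any $\bar x\in\calX$, $\bar y\in\calY$ one has $\min_{x\in\calX}f(x,\bar y)\le f(\bar x,\bar y)\le \max_{y\in\calY}f(\bar x,y)$. Taking the supremum over $\bar y$ on the left and the infimum over $\bar x$ on the right yields $\max_{y\in\calY}\min_{x\in\calX}f(x,y)\le \min_{x\in\calX}\max_{y\in\calY}f(x,y)$, so it only remains to prove the reverse inequality.

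Next I would construct the candidate saddle. The map $\Phi$ is continuous (a maximum of continuous functions over the compact set $\calY$), hence attains its minimum at some $x^\star\in\calX$; I then fix $y^\star\in\argmax_{y\in\calY}f(x^\star,y)$, so that $\Phi(x^\star)=f(x^\star,y^\star)$ and, by definition of $y^\star$, $f(x^\star,y)\le f(x^\star,y^\star)$ for all $y\in\calY$. The crux is to show that $x^\star$ is \emph{simultaneously} a global minimizer of $f(\cdot,y^\star)$. By the preceding corollary, $\Phi$ is differentiable with $\nabla\Phi(x^\star)=\nabla_x f(x^\star,y^\star)$ (Danskin's identity, which rests on the Lipschitz continuity of maximizers and thus on pPL in $y$). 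Constrained first-order optimality of $x^\star$ for $\Phi$ over $\calX$ then reads $\inprod{\nabla_x f(x^\star,y^\star)}{x-x^\star}\ge 0$ for all $x\in\calX$. Consequently the quantity $\inprod{\nabla_x f(x^\star,y^\star)}{z-x^\star}+\tfrac{\ell}{2}\norm{z-x^\star}^2$ defining the inner minimization in $\calD_{\calX}(x^\star,\ell;y^\star)$ is nonnegative over $\calX$ and vanishes at $z=x^\star$, so its minimum is $0$ and therefore $\calD_{\calX}(x^\star,\ell;y^\star)=0$.

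Finally I would close the gap using pPL in $x$: since $f(\cdot,y^\star)$ satisfies the pPL condition with modulus $\mux$, we get $0=\tfrac12\calD_{\calX}(x^\star,\ell;y^\star)\ge \mux\paren{f(x^\star,y^\star)-\min_{x\in\calX}f(x,y^\star)}$, which forces $f(x^\star,y^\star)=\min_{x\in\calX}f(x,y^\star)$, i.e.\ $f(x^\star,y^\star)\le f(x,y^\star)$ for all $x$. Thus $(x^\star,y^\star)$ is a saddle point, and chaining the optima gives $\min_{x}\max_{y}f=\Phi(x^\star)=f(x^\star,y^\star)=\min_{x}f(x,y^\star)\le \max_{y}\min_{x}f$, which combined with weak duality yields the claimed equality. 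The main obstacle is precisely this crux step: converting the constrained stationarity of $\Phi$ at $x^\star$ into global optimality of $f(\cdot,y^\star)$. It works only because (i) Danskin supplies a clean gradient for $\Phi$ and (ii) under pPL every constrained stationary point is globally optimal—so the argument genuinely consumes both sides of the two-sided assumption (pPL in $y$ for the Danskin step via continuity of maximizers, pPL in $x$ for the final optimality conclusion).
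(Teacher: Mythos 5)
Your proposal is correct, but it is worth noting that the paper does not actually write out a proof here: it simply invokes Lemma~2.1 of \citet{yang2020global} (a strong-duality result for unconstrained two-sided PL functions) ``with minor modifications.'' What you have done is carry out those modifications explicitly, and your argument is essentially the constrained analogue of that cited lemma: take $x^\star\in\argmin_x\Phi(x)$, pick $y^\star\in\argmax_y f(x^\star,y)$, and show $(x^\star,y^\star)$ is a saddle point. The substitutions you make are exactly the right ones for the constrained setting --- variational first-order optimality $\inprod{\nabla_x f(x^\star,y^\star)}{x-x^\star}\ge 0$ over $\calX$ in place of $\nabla\Phi(x^\star)=0$, and $\calD_{\calX}(x^\star,\ell;y^\star)=0$ in place of $\norm{\nabla_x f(x^\star,y^\star)}=0$, after which pPL in $x$ converts constrained stationarity into global optimality of $f(\cdot,y^\star)$. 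This buys a self-contained proof where the paper has only a pointer, and it makes visible precisely which pieces of the paper's machinery (the $\calD_{\calX}$ proxy and the pPL inequality) replace the unconstrained PL toolkit. One caveat you should be aware of: the Danskin identity $\nabla\Phi(x^\star)=\nabla_x f(x^\star,y^\star)$ requires the maximizer set $\argmax_y f(x^\star,y)$ to induce a single well-defined gradient, and pPL in $y$ does not by itself force the maximizer to be unique (it only gives quadratic growth away from the maximizer \emph{set} and Lipschitz continuity in the Hausdorff sense). You hedge this by citing the paper's corollary on the smoothness of $\Phi$, which itself asserts differentiability via Danskin, so your proof is no less rigorous than the framework it sits in; a fully airtight version would either use the directional-derivative (Correa--Seeger) form of Danskin's theorem or note that $\nabla_x f(x^\star,\cdot)$ must be constant on the maximizer set.
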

\begin{proof}
    We can invoke \citep[Lemma 2.1]{yang2020global} which holds for two-sided pPL functions with minor modifications. 
    
\end{proof}

\begin{lemma}
    Let the function $f:\calX\times\calY \to \R$ satisfy the pPL condition with moduli $\mu_1, \mu_2 >0$ respectively for $f(\cdot, y),\forall y \in\calY$ and $f(x,\cdot),\forall x \in \calX$. Then, the function $\Phi:\calX \to \R$ with $\Phi(x) = \max_{y'\in\calY} f(x,y')$ satisfies the pPL condition with modulus $\mu_1$.
    \label{lemma:ppl-phi}
\end{lemma}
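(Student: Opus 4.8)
The plan is to reduce the pPL inequality for $\Phi$ to the pPL inequality that $f$ already satisfies in its first argument, evaluated along the optimal selection $y^\star(x) := \argmax_{y\in\calY} f(x,y)$. The whole argument hinges on the fact that the stationarity surrogate $\calD_{\calX}$ depends on the current point only through the gradient vector it is handed, together with the min--max sandwich $\min_{x'} f(x', y^\star(x)) \le \Phi^\star$.

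First I would collect the two facts about $\Phi$ that I need. By \cref{theorem:contiuity-of-maximizers-cmgs} the maximizing selection $x \mapsto y^\star(x)$ is single-valued and Lipschitz, so Danskin's theorem applies and $\Phi$ is differentiable with $\nabla_x \Phi(x) = \nabla_x f(x, y^\star(x))$; moreover $\Phi$ is $\ellphi$-smooth by \cref{cor:maxfunc_holder}, and $\ellphi = \ell(1+L_\star) \ge \ell$. The key observation is then that the proxies of stationarity coincide,
\[
\calD_{\calX}^{\Phi}(x,\alpha) = \calD_{\calX}(x,\alpha; y^\star(x)), \qquad \forall \alpha>0,
\]
because both inner minimizations are driven by the same vector $\nabla_x \Phi(x) = \nabla_x f(x, y^\star(x))$.

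Next I would invoke the pPL condition of the slice $f(\cdot, y^\star(x))$ (modulus $\mu_1$, smoothness scale $\ell$) to get $\tfrac12 \calD_{\calX}(x,\ell; y^\star(x)) \ge \mu_1\big(f(x,y^\star(x)) - \min_{x'\in\calX} f(x', y^\star(x))\big)$. Here $f(x, y^\star(x)) = \Phi(x)$ by definition of $y^\star$. The crucial step is to bound the optimal slice value from the correct side: since $f(x', y^\star(x)) \le \max_{y} f(x', y) = \Phi(x')$ for every $x'$, minimizing over $x'$ gives $\min_{x'} f(x', y^\star(x)) \le \min_{x'}\Phi(x') = \Phi^\star$, so the right-hand side is at least $\mu_1(\Phi(x) - \Phi^\star)$. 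Finally, using $\calD_{\calX}^{\Phi}(x,\ellphi) \ge \calD_{\calX}^{\Phi}(x,\ell)$ from the monotonicity of $\calD_{\calX}$ in its scale (\cref{D-non-decreasing-in-alpha}) together with the proxy identity above, I chain everything into $\tfrac12 \calD_{\calX}^{\Phi}(x,\ellphi) \ge \mu_1(\Phi(x) - \Phi^\star)$, which is precisely pPL for $\Phi$ with modulus $\mu_1$.

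I expect the main obstacle to be this min--max sandwich: one must check that replacing $\min_{x'} f(x', y^\star(x))$ by $\Phi^\star$ loosens the bound in the \emph{favorable} direction, which works exactly because $y^\star(x)$ is merely a feasible---not an optimal---dual choice for each competing $x'$. The only other bookkeeping is reconciling the two smoothness scales ($\ell$ for the slice, $\ellphi$ for $\Phi$), which the monotonicity of $\calD_{\calX}$ in $\alpha$ dispatches immediately; note that $\mu_2$ enters only indirectly, through the differentiability and Lipschitz-selection guarantees of the earlier results, and plays no role in the final modulus.
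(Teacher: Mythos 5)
Your proposal is correct and follows essentially the same route as the paper's proof: the Danskin-based identity $\calD_{\calX}^{\Phi}(x,\alpha) = \calD_{\calX}(x,\alpha;y^\star(x))$, the pPL inequality for the slice $f(\cdot,y^\star(x))$, and the min--max sandwich $\min_{x'} f(x',y^\star(x)) \le \Phi^\star$. The only difference is cosmetic: you make explicit the reconciliation of the two smoothness scales via the monotonicity of $\calD_{\calX}$ in $\alpha$, a step the paper leaves implicit by working directly at scale $\ellphi$.
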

\begin{proof}
    For the purposes of this proof, we will enhance the notation of $\calD_\calX$ as follows:
    \begin{align}
        \calD_\calX^f(x,\alpha; y) &:= -2\alpha \min_{z \in \calX} \left\{  \inprod{\nabla f(x,y)}{z - x} + \frac{\alpha}{2} \norm{z - x}^2 \right\};
        \\
        \calD_\calX^\Phi(x,\alpha) &:= -2\alpha \min_{z \in \calX} \left\{  \inprod{\nabla \Phi(x)}{z - x} + \frac{\alpha}{2} \norm{z - x}^2 \right\}.
    \end{align}

    \begin{align}
        \calD_\calX^\Phi(x,\ell_\Phi) & = \calD_\calX^f\paren{x,\ell_\Phi; y^\star(x)}  \\
        &\geq 2\mu_1 \paren{ f(x,y^\star(x)) - \min_{x'\in\calX} f(x',y^\star(x))}.
    \end{align}
    Further, 
    \begin{align}
        f(x',y^\star(x)) \leq \max_{y\in\calY} f(x', y),
    \end{align}
    and minimizing on both sides yields,
    \begin{align}
         \min_{x'\in\calX} f(x',y^\star(x)) \leq \min_{x'\in\calX} \max_{y\in\calY} f(x', y) = \Phi^\star.
    \end{align}
    Hence,
    \begin{align}
        \calD_\calX^\Phi(x,\ell_\Phi) \geq 2\mu_1 \paren{ \Phi(x) - \Phi^\star}.
    \end{align}
\end{proof}

    \subsection{Stationarity Proxies and the Gradient Oracle}
    \begin{definition}
    We define $\calD_\calY$ and $\calD_\calX^\Phi$ to be the following quantities:
        \begin{align}
            \calD_\calY (y,\alpha;x)= - 2 \alpha \min_{z\in\calY}
            \left\{ 
                \inprod{ - \nabla f(x,y)}{z - y} + \frac{\alpha}{2}\norm{z - y}^2 
            \right\} ,
        \end{align}
    and correspondigly,
        \begin{align}
            \calD_\calX^\Phi (x,\alpha)= - 2 \alpha \min_{z\in\calX}
            \left\{ 
                \inprod{ \nabla \Phi(x) }{z -x} + \frac{\alpha}{2}\norm{z - x}^2 
            \right\} .
        \end{align}
    \end{definition}
    
    \begin{definition}
    We define the deterministic and stochastic gradient mapping at point $x\at{t}$ and $y\at{t}$ to be:    
    \begin{itemize}
        \item $\resx{}(x) :=\frac{1}{\etax}\paren{ x - \proj_\calX\paren{x - \etax {v} }}$ and $\resx{t} = \resx{}(x\at{t})$;
        \item $\sresx{t}(x\at{t}) :=\frac{1}{\etax}\paren{ x - \proj_\calX\paren{x - \etax \hatgradfbx(x\at{t},y\at{t}) }}$ and $\sresx{t} = \sresx{}(x\at{t})$,
    \end{itemize}
    and respectively:
    \begin{itemize}
        \item $\resy{}(y) :=\frac{1}{\etay}\paren{ \proj_\calY\paren{y + \etay {v} }  -  y}$ and $\resy{t} = \resy{t}(y)$;
        \item $\sresy{}(y) :=\frac{1}{\etay}\paren{ \proj_\calY\paren{y + \etay \hat{v} }  -  y}$ and $\sresy{t} = \sresy{t}(y)$.
    \end{itemize}
    \end{definition}
\allowdisplaybreaks
    \begin{assumption}[Unbiased Inexact Gradient Estimators and Bounded Second Moments]
For all iterations \( t \), the gradient estimators \( \hat{g}_x(x\at{t}, y\at{t}) \) and \( \hat{g}_y(x\at{t}, y\at{t}) \) satisfy
\[
\mathbb{E}\left[ \hat{g}_x(x\at{t}, y\at{t}) \right] = g(x\at{t}, y\at{t}),
\]
\[
\mathbb{E}\left[ \hat{g}_y(x\at{t}, y\at{t}) \right] =  g(x\at{t}, y\at{t}),
\]
and
\[
\mathbb{E}\left[ \left\| \hat{g}_x(x\at{t}, y\at{t}) \right\|^2 \right] \leq \sigma_x^2,
\]
\[
\mathbb{E}\left[ \left\| \hat{g}_y(x\at{t}, y\at{t}) \right\|^2 \right] \leq \sigma_y^2.
\]

In turn, $\norm{ g_x(x\at{t}, y\at{t}) - \nabla_x f(x\at{t}, y\at{t})}\leq \delta_x$, $\norm{ g_y(x\at{t}, y\at{t}) - \nabla_y f(x\at{t}, y\at{t})} \leq \delta_y $.
\label{assumption:stoch-grad-oracle}
\end{assumption}

\begin{remark}[Bound on Second Moment instead of Variance] At first, it might appear a slightly stronger assumption to place a bound on the second moment of the gradient estimator. Yet, in most relevant works the variance of the relevant estimators is bounded only after bounding the second moment~\citep{daskalakis2020independent,zhang2020variational,zhang2021convergence}. As such, this assumption is reasonable and rather standard to satisfy for the aforementioned applications.
\end{remark}

\subsection{Convergence of Nested Gradient Iterations}
We can formulate the nested gradient iterations algorithm using the following template,
    \begin{align}
        \begin{cases}
        &y\at{t+1} \gets \mathtt{ARGMAX}(f(x\at{t}, \cdot), \epsilon_y);\\
        &x\at{t+1} \gets \proj_{\calX}\paren{x\at{t} - \eta \nabla f(x\at{t}, y\at{t+1})}
        \end{cases} \label{gdmax}
    \end{align}
    where, $\mathtt{ARGMAX}(h,\epsilon)$ returns an $\epsilon$-approximate maximize of function $h$. As a function, it can be implemented efficiently by projected gradient ascent. 

    Finally, the outer loop of the process implements projected gradient descent with a stochastic and inexact gradient feedback on $\Phi(\cdot)$. %

\begin{theorem}[NC-pPL]
Let $f:\calX\times \calY\to \R$ be an $\ell$-smooth function  satisfying the pPL condition with a modulus $\mu$ for $f(x,\cdot)$. Then, after $T$ iterations of \eqref{gdmax} it holds true that:
\begin{align}
    \frac{1}{T}\sum_{t=0}^{T-1}\norm{\sresx{\Phi,t}}^2 &\leq \frac{6 { L\diam{\calX} }}{\etax T} +  6\delta_x^2 + 6\sigma_x^2.
\end{align}
\label{theorem:formal-nestedgit-nc-ppl}
\end{theorem}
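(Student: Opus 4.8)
The plan is to recognize the outer loop of \eqref{gdmax} as a single run of stochastic, inexact projected gradient descent on the max-function $\Phi(x) := \max_{y\in\calY} f(x,y)$, and then to replay the warm-up analysis of \cref{theorem:stoch-pgd-on-nonconvex-and-on-ppl} with $\Phi$ in place of $f$. Three facts make $\Phi$ a legitimate target. First, by Danskin's theorem $\nabla\Phi(x) = \nabla_x f(x, y^\star(x))$, so the feedback $\hatgradfbx(x\at{t}, y\at{t+1})$ driving the outer update is an unbiased estimate of $\gradfbx(x\at{t}, y\at{t+1})$ which—once $y\at{t+1}$ is the inner maximizer—is a $\delta_x$-inexact oracle for $\nabla\Phi(x\at{t})$. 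Second, since $f(x,\cdot)$ is pPL, the continuity of maximizers (\cref{theorem:contiuity-of-maximizers-cmgs}) yields $L_\star$-Lipschitz maximizers and hence, via Danskin together with $\ell$-smoothness of $f$, $\Phi$ is $\ellphi$-smooth with $\ellphi = \ell(1+L_\star)$; this legitimizes the descent step and fixes the admissible range $\etax = O(1/\ellphi)$. Third, as $f$ is $L$-Lipschitz, so is $\Phi$, whence $\Phi(x\at{0}) - \Phi^\star \leq L\,\diam{\calX}$, exactly the numerator in the claimed bound.

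With this identification I would apply the one-step descent inequality \cref{D-descent-lemma} to $\Phi$ at each outer iterate with feedback $v_t = \hatgradfbx(x\at{t}, y\at{t+1})$, giving
\begin{align}
\Phi(x\at{t+1}) \leq \Phi(x\at{t}) - \frac{\etax}{6}\calD_\calX^\Phi(x\at{t}, 1/\etax) + \frac{\etax}{2}\norm{\nabla\Phi(x\at{t}) - v_t}^2.
\end{align}
I would then split the feedback error via $\norm{a+b}^2 \leq 2\norm{a}^2 + 2\norm{b}^2$: the bias part $\nabla\Phi(x\at{t}) - \gradfbx$ is controlled by $\delta_x$ (using the oracle model and exactness of the inner solve, so that $\nabla\Phi(x\at{t}) = \nabla_x f(x\at{t}, y\at{t+1})$), while $\gradfbx - \hatgradfbx$ has second moment at most $\sigma_x^2$. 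Taking expectations yields the per-step drift $\E\Phi(x\at{t+1}) \leq \E\Phi(x\at{t}) - \tfrac{\etax}{6}\E\calD_\calX^\Phi(x\at{t},1/\etax) + \etax\delta_x^2 + \etax\sigma_x^2$.

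The endgame is bookkeeping: rearrange, multiply by $6/\etax$, sum over $t=0,\dots,T-1$, telescope the $\Phi$-differences, and bound the boundary term by $\Phi(x\at{0}) - \Phi^\star \leq L\,\diam{\calX}$; this controls $\tfrac1T\sum_t \E\calD_\calX^\Phi(x\at{t},1/\etax)$ by precisely $\tfrac{6L\diam{\calX}}{\etax T} + 6\delta_x^2 + 6\sigma_x^2$. To pass from $\calD_\calX^\Phi$ to the gradient-mapping norm on the left-hand side I would invoke \cref{lemma:D-vs-grad-mapping}, which gives $\calD_\calX^\Phi(x\at{t},1/\etax) \geq \norm{\sresx{\Phi,t}}^2$ when $\sresx{\Phi,t}$ is read as the $\nabla\Phi$-gradient mapping at the (randomly generated) iterate $x\at{t}$; should the hatted quantity instead be the stochastic-feedback mapping, \cref{lemma:stochastic-vs-deterministic-grad-mapping} relates the two, and the surplus is absorbed into the $\delta_x^2,\sigma_x^2$ terms.

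The main obstacle is the error attribution in the middle step: I must ensure that $\norm{\nabla\Phi(x\at{t}) - v_t}^2$ really collapses to a clean $2\delta_x^2 + 2\sigma_x^2$. This hinges on the inner \texttt{ARGMAX} returning a maximizer accurate enough that $\nabla_x f(x\at{t}, y\at{t+1})$ coincides with—or is absorbed into the $\delta_x$-neighborhood of—$\nabla\Phi(x\at{t}) = \nabla_x f(x\at{t}, y^\star(x\at{t}))$; quantitatively this is where the pPL$\to$QG chain (\cref{prop:qg-from-ppl}) together with $\ell$-smoothness of $f$ converts inner function-value accuracy $\epsilon_y$ into a gradient-error bound of order $\ell\sqrt{\epsilon_y/\mu}$, which is then folded into $\delta_x$. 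A secondary point is keeping the step-size constraint consistent: the descent lemma was stated for an $\ell$-smooth function and must be re-read for the $\ellphi$-smooth $\Phi$, imposing $\etax = O(1/\ellphi)$ rather than $O(1/\ell)$.
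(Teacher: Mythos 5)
Your proposal is correct and follows essentially the same route as the paper: the paper's proof likewise reads the outer loop as stochastic inexact projected gradient descent on $\Phi$, invokes the warm-up result (\cref{theorem:stoch-pgd-on-nonconvex-and-on-ppl}, whose proof is exactly the \cref{D-descent-lemma}-based descent-and-telescope argument you replay inline), attributes the bias $\delta_x$ to the inner-loop inexactness via Danskin, $\ell$-smoothness, and the pPL$\to$QG conversion of $\epsilon_y$ into a distance bound on $\|y\at{t}-y^\star(x\at{t})\|$, and finishes with \cref{lemma:D-vs-grad-mapping}, the $L\diam{\calX}$ bound, and the step-size constraint $\etax\leq \tfrac{1}{5\ellphi}$. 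The only difference is presentational: the paper cites the warm-up theorem as a black box, whereas you re-derive it.
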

\begin{proof}
By \cref{theorem:stoch-pgd-on-nonconvex-and-on-ppl} and a tuning of step-size $\etax \leq \frac{1}{5\ellphi}$, we get that,
\begin{align}
    \frac{1}{T}\sum_{t=0}^{T-1}\frac{1}{2} \calD_{\calX}^\Phi(x\at{t+1},1/\etax) &\leq \frac{3\ell \paren{\E \Phi (x\at{0}) -  \Phi (x^\star)}}{\etax T} +  3\delta_x^2 + 3\sigma_x^2.
\end{align}
In this context, the inexactness of the gradient oracle $\delta_x$ is: 
\begin{equation}
    \delta_x = \max_{0 \leq t\leq T-1}  \E \norm{\nabla\Phi(x\at{t})- \nabla f(x\at{t},y\at{t}) } = \max_{0 \leq t\leq T-1} \ell \E \norm{y\at{t} - y^\star(x\at{t}) },
\end{equation} 
while by the quadratic growth condition of $f(x,\cdot)$ we know that,
\begin{align}
    \frac{\muy}{2} \norm{y\at{t} - y^\star(x\at{t}) } &\leq \E \Phi(x\at{t}) - \E f(x\at{t},y\at{t})  \\
    &\leq \epsilon_y
\end{align}
where $\epsilon_y$ is the accuracy of the inner loop which we will defer tuning. We, invoke \cref{lemma:D-vs-grad-mapping} to see that the sum of $\calD_{\calX}^\Phi(x\at{t},1/\etax)$ upper bounds the sum of $\norm{\sresx{\Phi,t}}^2$:
\begin{align}
    \frac{1}{T}\sum_{t=0}^{T-1}\norm{\sresx{\Phi,t}}^2 &\leq \frac{6 \paren{\E \Phi (x\at{0}) -  \Phi (x^\star)}}{\etax T} +  6\delta_x^2 + 6\sigma_x^2.
\end{align}
Now, we see that $\Phi(x\at{0}) - \Phi(x^\star)$ is bounded by $L\diam{\calX}$ due to the $L$-Lipschitz continuity of $\Phi$ and the bounded diameter of $\calX$. Finally, we can tune $\etax$ and $\epsilon_y$.

\begin{itemize}
    \item $\etax =\frac{1}{5\ellphi}$
    \item $T = \frac{18 L \diam{\calX}}{\epsilon^2} $
    \item $M_x = \frac{\sigma_x^2}{18\epsilon^2}$
    \item $\epsilon_y = \frac{\epsilon}{\sqrt{18}}$.
\end{itemize}

\end{proof}

\begin{theorem}[pPL-pPL]
Let $f:\calX\times\calY\to \R$ be an $L$-Lipschitz, $\ell$-smooth function and $\calX,\calY$ be two compact convex sets with Euclidean diameters $\diam{\calX},\diam{\calY}$ respectively. Further, assume that $f(\cdot,y)$ satisfies the pPL condition with modulus $\mux$ for all $y\in\calY$ while $f(x,\cdot)$ satisfies the pPL condition with a modulus $\muy$ for any $x\in\calX$. Additionally, let $\hatgradfbx$ be an inexact stochastic gradient oracle such that $\E\hatgradfbx(x) = \gradfbx(x)$, $\E\norm{\hatgradfbx(x) - \gradfbx(x)}\leq \sigma_x^2$, and $\norm{\gradfbx(x) } \leq \delta_x$. Then, after $T$ iterations of \eqref{gdmax} with a tuning of step-sizes $\etax =\frac{1}{5\ellphi}$:
\begin{align}
    \E \Phi(x\at{T+1}) - \Phi(x^\star)\leq \exp\paren{ - \frac{\mu}{3\ellphi}T} L \diam{\calX} + \frac{3\ellphi\delta_x^2}{\mux} + \frac{3 \ellphi\sigma_x^2 }{\mux}.
\end{align}
\label{theorem:formal-nestedgit-2sided-ppl}
\end{theorem}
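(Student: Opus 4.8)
The plan is to recognize the outer loop of \eqref{gdmax} as stochastic projected gradient descent run on the single-variable envelope $\Phi(x) := \max_{y\in\calY} f(x,y)$, and then to invoke the linearly-convergent half of \cref{theorem:stoch-pgd-on-nonconvex-and-on-ppl} with the substitutions $\ell \mapsto \ellphi$ and $\mu \mapsto \mux$. The whole argument thus reduces to certifying that $\Phi$ carries the two structural ingredients that power the pPL descent estimate: an $\ellphi$-Lipschitz gradient and the pPL inequality itself.

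First I would verify these two facts about $\Phi$. Because $f(x,\cdot)$ is pPL with modulus $\muy$ for every $x$, the continuity-of-maximizers result \cref{theorem:contiuity-of-maximizers-cmgs} shows $y^\star(\cdot)$ is $L_\star$-Lipschitz; combined with Danskin's identity $\nabla\Phi(x) = \nabla_x f(x, y^\star(x))$ and the $\ell$-smoothness of $f$, this delivers the $\ellphi$-Lipschitz continuity of $\nabla\Phi$ with $\ellphi = \ell(1 + L_\star)$. Independently, \cref{lemma:ppl-phi} turns the two-sided pPL hypothesis into the pPL condition for $\Phi$ with modulus $\mux$. Feeding these into \cref{theorem:stoch-pgd-on-nonconvex-and-on-ppl} with the step-size $\etax = \tfrac{1}{5\ellphi}$ gives
\begin{align}
    \E\Phi(x\at{T+1}) - \Phi(x^\star) \leq \exp\paren{-\tfrac{\mux}{3\ellphi}T}\paren{\Phi(x\at{0}) - \Phi^\star} + \frac{3\ellphi\delta_x^2}{\mux} + \frac{3\ellphi\sigma_x^2}{\mux},
\end{align}
and bounding the initial gap by $\Phi(x\at{0}) - \Phi^\star \leq L\diam{\calX}$ (via the $L$-Lipschitzness of $\Phi$ over the diameter-$\diam{\calX}$ set $\calX$) produces exactly the claimed inequality.

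The one genuinely nested part of the analysis --- and the step I expect to be the main obstacle --- is correctly feeding the inner-loop inaccuracy into the effective bias $\delta_x$ of the outer descent. The algorithm uses $\nabla_x f(x\at{t}, y\at{t+1})$ in place of the true $\nabla\Phi(x\at{t}) = \nabla_x f(x\at{t}, y^\star(x\at{t}))$, so, exactly as in the proof of \cref{theorem:formal-nestedgit-nc-ppl}, I would bound the systematic error by
\begin{align}
    \norm{\nabla\Phi(x\at{t}) - \nabla_x f(x\at{t}, y\at{t+1})} \leq \ell\,\norm{y\at{t+1} - y^\star(x\at{t})}.
\end{align}
The distance $\norm{y\at{t+1} - y^\star(x\at{t})}$ is then controlled through the quadratic-growth condition that pPL of $f(x\at{t},\cdot)$ supplies (\cref{prop:qg-from-ppl}): an inner accuracy $f(x\at{t}, y\at{t+1}) \geq \Phi(x\at{t}) - \epsilon_y$ yields $\norm{y\at{t+1} - y^\star(x\at{t})} \leq 2\sqrt{\epsilon_y/\muy}$, so the inner contribution to $\delta_x$ scales with $\sqrt{\epsilon_y}$ and can be absorbed into the oracle-bias term by choosing $\epsilon_y$ small.

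Finally, since $f(x\at{t},\cdot)$ is pPL with modulus $\muy$, the inner projected-ascent loop reaches any target accuracy $\epsilon_y$ in $O(\log(1/\epsilon_y))$ iterations by a second application of \cref{theorem:stoch-pgd-on-nonconvex-and-on-ppl} (applied to $-f(x\at{t},\cdot)$); this keeps the inner cost logarithmic and leaves the outer linear rate intact. The residual bookkeeping --- picking the batch size $M_x$ to shrink $\sigma_x$ and the outer count $T$ to kill the exponential factor so that the total error is $O(\epsilon)$ --- is routine and parallels the NC-pPL case verbatim.
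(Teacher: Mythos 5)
Your proposal is correct and follows essentially the same route as the paper's own proof: both apply \cref{theorem:stoch-pgd-on-nonconvex-and-on-ppl} to the envelope $\Phi$ (using its $\ellphi$-smoothness and its pPL property with modulus $\mux$ from \cref{lemma:ppl-phi}), bound the initial gap by $L\diam{\calX}$, and fold the inner-loop inaccuracy into the bias $\delta_x$ via quadratic growth before tuning $M_x$, $\epsilon_y$, and $T$. The only difference is that you spell out the ingredients (Danskin plus Lipschitz maximizers for $\ellphi$, the QG step for $\delta_x$, and the logarithmic inner loop) that the paper's terse three-sentence proof leaves implicit, citing the NC-pPL proof for the bias argument exactly as the paper intends.
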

\begin{proof}
By \cref{theorem:stoch-pgd-on-nonconvex-and-on-ppl} with $\etax = \frac{1}{5\ellphi}$,
\begin{align}
        \E \Phi(x\at{T+1}) - \Phi(x^\star)\leq \exp\paren{ - \frac{\mu}{3\ellphi}T}  \paren{ \Phi(x\at{0}) - \Phi^\star} + \frac{3\ellphi\delta_x^2}{\mux} + \frac{3 \ellphi\sigma_x^2 }{\mux}.
\end{align}
First, we repeat the fact that $L\diam{\calX} \geq\Phi(x\at{0}) - \Phi^\star$ due to Lipschitz continuity. Then, we note that $\delta_x$ has to be tuned as $\sqrt{\frac{\epsilon \mux }{9\ellphi}}$ and the batch-size needs to be $M_x = \left\lceil\frac{9 \ellphi \sigma_x^2}{\epsilon\mux}\right\rceil$ where $\epsilon>0$ is the desired accuracy. Finally, $T =\left\lceil\frac{3\ellphi}{\mu} \log\paren{\frac{3 L\diam{\calX}}{\epsilon}}\right\rceil$.

\end{proof}

   \subsection{Convergence of Stochastic Alternating Gradient Descent-Ascent}
    In what follows, we will analyze the convergence of projected alternating gradient descent-ascent for nonconvex-pPL and two-sided pPL functions. The convergence proofs closely follow those of \citep{yang2022faster} and \citep{yang2020global} respectively after carefully modifying the arguments to make them work for the constrained setting. Convergence is proven by showing that an appropriate Lyapunov function diminshes along the trajectories of the algorithm's iterates~\citep{bof2018lyapunov}.

    In both scenarios we face, proving that the corresponding Lyapunov function diminishes is proven by first:
    \begin{itemize}
        \item lower bounding the descent on the maximum function $\Phi(\cdot)$ for every update on $x$;
        \item lower bounding the ascent on $f(x,\cdot)$ for every update on $y$;
        \item \emph{upper} bounding the descent on $f(\cdot,y)$ for every update on $x$.
    \end{itemize}
    
As a reminder, the iteration scheme of alternating gradient descent-ascent is the following,
    \begin{align}
        x\at{t+1} &= \proj_\calX \paren{x\at{t}-\etax \hatgradfbx(x\at{t},y\at{t}) };\\
        y\at{t+1} &= \proj_\calY \paren{y\at{t} + \etay \hatgradfby(x\at{t+1},y\at{t}) }.
    \end{align}
    
\subsubsection{NC-pPL}
\begin{theorem}
Let $f:\calX\times\calY\to \R$ be an $L$-Lipschitz, $\ell$-smooth function and $\calX,\calY$ be two compact convex sets with Euclidean diameters $\diam{\calX},\diam{\calY}$ respectively. Further, assume that $f(x,\cdot)$ satisfies the pPL condition with modulus $\mu$. Additionally, let $(\hatgradfbx, \hatgradfby)$ be an inexact stochastic gradient oracle satisfying \cref{assumption:stoch-grad-oracle}. Then, after $T$ iterations of \eqref{agda} with a tuning of step-sizes $\etax = \frac{1}{500\ell\kappa^2}$ and $\etay = \frac{1}{5\ell}$, it holds true that
\begin{align}
    \frac{1}{T}\sum_{t=1}^{T-1} 
    \paren{\E \norm{\resx{t}}^2 +  \kappa^2\E \calD_{\calY}(y\at{t},\ell;x\at{t})} \leq \frac{ \kappa^2\ell L (\diam{\calX} + \diam{\calY}) }{ T} +  \frac{c_2\sigma_x^2}{M_x}+ c_2\delta_x^2 + \frac{c_3\kappa^2\sigma_y^2}{M_y} + c_3 \kappa^2 \delta_y^2,
\end{align}
where, $c_1, c_2, c_3 \in O(1)$.
\label{theorem:formal-agda-ncppl}
\end{theorem}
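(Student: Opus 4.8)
The plan is to run a Lyapunov (potential) argument that couples the outer suboptimality of $\Phi(x) := \max_{y\in\calY} f(x,y)$ with the inner best-response gap, exploiting the time-scale separation $\etax = \Theta(1/(\ell\kappa^2))$, $\etay = \Theta(1/\ell)$ so that the inner ascent contracts the gap faster than the outer descent can inflate it. Concretely, set $a_t := \Phi(x\at{t}) - \Phi^\star$ and the gap $b_t := \Phi(x\at{t}) - f(x\at{t}, y\at{t}) \geq 0$, and consider $V_t := a_t + \beta b_t$ for a constant $\beta = \Theta(1)$ to be fixed. Recall from the corollary following \cref{theorem:contiuity-of-maximizers-cmgs} that $\Phi$ is $\ellphi$-smooth with $\ellphi = \ell(1+L_\star) = \Theta(\ell\kappa)$ (here $L_\star = \ell/\mu = \kappa$ since $\muqg=\mu$ under pPL), so the chosen $\etax$ satisfies $\etax \leq 1/(5\ellphi)$, as the descent lemmata require.

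First I would establish an \emph{outer descent} inequality. Applying \cref{D-descent-lemma} to the $\ellphi$-smooth $\Phi$ with inexact feedback $\hatgradfbx(x\at{t}, y\at{t})$ and taking conditional expectations, the gradient-mismatch term splits, via $\norm{u+v}^2 \leq 2\norm{u}^2 + 2\norm{v}^2$ and \cref{assumption:stoch-grad-oracle}, into (i) the approximation error $\norm{\nabla\Phi(x\at{t}) - \nabla_x f(x\at{t}, y\at{t})}^2 \leq \ell^2\norm{y\at{t} - y^\star(x\at{t})}^2$, which \cref{prop:qg-from-ppl} bounds by $\tfrac{4\ell^2}{\mu} b_t = \Theta(\ell\kappa)\, b_t$, and (ii) the stochastic bias/variance $\delta_x^2 + \sigma_x^2/M_x$. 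This yields
$$\E a_{t+1} \leq a_t - \tfrac{\etax}{6}\calD_\calX^\Phi(x\at{t}, 1/\etax) + c'\,\etax\ell\kappa\, b_t + \etax(\delta_x^2 + \sigma_x^2/M_x).$$
Second, for the \emph{inner ascent} I would apply the ascent analogue of \cref{theorem:stoch-pgd-on-nonconvex-and-on-ppl} to $f(x\at{t+1}, \cdot)$, which is pPL; this gives linear contraction $b_{t+1} \leq (1 - \tfrac{\mu\etay}{3})\,\tilde b_t + \etay(\delta_y^2 + \sigma_y^2/M_y)$, where $\tilde b_t := \Phi(x\at{t+1}) - f(x\at{t+1}, y\at{t})$ is the gap right after the $x$-step. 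Third, I would bound $\tilde b_t \leq b_t + c''\,\etax \norm{\resx{t}}^2 + (\text{noise})$ using the $\ell$-smoothness of $f(\cdot, y\at{t})$ (the $x$-descent drives $f$ down and hence inflates the gap by at most $\Theta(\etax\norm{\resx{t}}^2)$) together with the outer descent on $\Phi$.

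Combining the three inequalities into $V_t = a_t + \beta b_t$, the coefficient of $b_t$ becomes $c'\etax\ell\kappa - \beta\tfrac{\mu\etay}{3}$. Since $\etax/\etay = \Theta(1/\kappa^2)$ and $\mu = \ell/\kappa$, both $\etax\ell\kappa$ and $\mu\etay$ are $\Theta(1/\kappa)$ with the former a small constant fraction of the latter, so a constant $\beta = \Theta(1)$ makes this coefficient negative; this balancing is exactly what the time-scale separation buys and is the crux of the argument. Telescoping over $t = 0, \dots, T-1$ and using $V_0 \leq (1+\beta)L(\diam{\calX} + \diam{\calY})$ (Lipschitzness of $\Phi$ and $f$ over bounded domains) gives
$$\frac{1}{T}\sum_{t=0}^{T-1}\Big(\tfrac{\etax}{6}\calD_\calX^\Phi(x\at{t},1/\etax) + c\, b_t\Big) \leq \frac{(1+\beta)L(\diam{\calX}+\diam{\calY})}{T} + (\text{noise}),$$
and dividing through by $\etax = \Theta(1/(\ell\kappa^2))$ surfaces the advertised factor $\kappa^2 \ell L(\diam{\calX} + \diam{\calY})/T$.

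Finally I would translate the controlled quantities into the two proxies on the left-hand side. For the primal term, \cref{lemma:D-vs-grad-mapping} gives $\norm{\resx{\Phi,t}}^2 \leq \calD_\calX^\Phi(x\at{t},\ell)$, while \cref{lemma:difference-in-gradient} with the quadratic-growth bound yields $\norm{\resx{t}}^2 \leq 2\norm{\resx{\Phi,t}}^2 + \tfrac{8\ell^2}{\mu} b_t$; both summands are already bounded above. For the dual term, the pPL/QG pair bounds $\calD_\calY(y\at{t},\ell;x\at{t})$ by $\Theta(\ell\kappa)\, b_t$ after transferring from $x\at{t+1}$ to $x\at{t}$ via \cref{lemma:continuity-of-D-wrt-dual-var} and \cref{continuity-of-D-squared-y} and the Lipschitzness of $y^\star$, so that $\kappa^2\cdot\tfrac1T\sum_t \calD_\calY$ inherits the same $O(\kappa^2\ell L(\cdots)/T)$ bound. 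Collecting the noise into $c_2\sigma_x^2/M_x + c_2\delta_x^2 + c_3\kappa^2\sigma_y^2/M_y + c_3\kappa^2\delta_y^2$ completes the proof. The main obstacle is the third step: controlling the gap inflation caused by the alternating $x$-move when only $\nabla_x f(\cdot, y\at{t})$ (not $\nabla\Phi$) is used, and reabsorbing the resulting positive $\norm{\resx{t}}^2$ contribution to the potential through the outer descent — precisely what forces the aggressive $\etax = \Theta(1/(\ell\kappa^2))$ scaling and the careful choice of $\beta$.
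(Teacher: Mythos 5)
Your architecture is essentially the paper's: the same three-part bound (descent on $\Phi$, ascent on $f(x\at{t+1},\cdot)$, upper bound on the damage the $x$-step does to $f(\cdot,y\at{t})$), the same Lyapunov function $\Phi+\beta(\Phi-f)$ (yours differs only by the additive constant $\Phi^\star$), and the same time-scale-separation balancing. However, there is a genuine quantitative gap in how you recover the dual stationarity term, and as written your argument proves a strictly weaker bound than the theorem states. By converting the ascent gain into a contraction of the gap $b_t=\Phi(x\at{t})-f(x\at{t},y\at{t})$ via pPL, the coefficient of $b_t$ you can extract from the telescoped Lyapunov inequality is only $\Theta(\beta\mu\etay)=\Theta(1/\kappa)$, so $\frac{1}{T}\sum_t b_t \lesssim \kappa\paren{V_0/T+\mathrm{noise}}$. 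At the end you must convert back from $b_t$ to $\calD_\calY(y\at{t},\ell;x\at{t})$; the correct direction of that conversion is $\calD_\calY \leq 2\ell\, b_t$, which follows from smoothness alone --- note that the ``pPL/QG pair'' you invoke gives the \emph{reverse} inequality $\calD_\calY \geq 2\mu\, b_t$, so your stated bound $\calD_\calY\lesssim \ell\kappa\, b_t$ is backwards in provenance and lossier still. Chaining the two steps gives $\kappa^2\cdot\frac{1}{T}\sum_t \calD_\calY \lesssim \ell\kappa^2\cdot\kappa\paren{V_0/T+\mathrm{noise}} = \Theta\paren{\kappa^3\ell L(\diam{\calX}+\diam{\calY})/T}$ plus noise coefficients inflated by the same factor: going ``down'' through pPL and back ``up'' through smoothness necessarily costs the ratio $\ell/\mu=\kappa$.

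The paper avoids this loss by never collapsing the ascent gain into a gap contraction: it keeps the term $\frac{\etay}{6}\calD_\calY(y\at{t},1/\etay;x\at{t+1})$ explicitly in the Lyapunov decrease, transfers it from $x\at{t+1}$ to $x\at{t}$ using the continuity estimate $\abs{\calD_\calY(y,\alpha;x)-\calD_\calY(y,\alpha;x')}\leq 3\ell^2\norm{x-x'}^2$ (the symmetric form of \cref{lemma:continuity-of-D-wrt-dual-var}), and verifies after absorbing cross terms that the surviving coefficient of $\calD_\calY(y\at{t},\ell;x\at{t})$ is $\Theta(\etay)=\Theta(\kappa^2\etax)$; dividing the telescoped inequality by $\etax$ then produces exactly the $\kappa^2$ weight and the $\kappa^2\ell L(\diam{\calX}+\diam{\calY})/T$ rate in the statement. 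To repair your proof, split the ascent gain: spend a constant fraction of $\frac{\etay}{6}\calD_\calY$ on the pPL contraction of the gap (which you still need to neutralize the $b_t$-inflation caused by the $x$-step), and carry the remaining constant fraction of $\calD_\calY$ explicitly through the telescoping. Your handling of the primal term $\norm{\resx{t}}^2$ is fine and matches the paper.
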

\begin{proof}

\item 
\paragraph{Descent bound on $\Phi$.}
$$\resx{\Phi,t} := \frac{1}{\etax}\paren{x\at{t} - \proj_{x} \paren{x\at{t} - \etax \nabla_x\Phi(x\at{t}) }}$$

Due to $\ell$-smoothness and the fact that $\frac{1}{\etax} \geq 5\ellphi$ we can use \cref{ppl-descent-alt} to get:
\begin{align}
    \E\Phi(x\at{t+1})
    &\leq \E \Phi(x\at{t}) 
    - \frac{\etax}{6} \E \calD_{\calX}^{\Phi}(x\at{t},1/\etax) 
    + \etax \E \norm{\nabla_x \Phi(x\at{t}) - \nabla_x f(x\at{t},y\at{t}) }^2
    + \etax \E\norm{\nabla_x f(x\at{t},y\at{t}) - \hatgradfbx(x\at{t},y\at{t}) }^2
    \\
    &
    \leq \E \Phi(x\at{t})
    - \frac{\etax}{6} \E \norm{\resx{\Phi,t}}^2 
    + \etax \E \norm{\nabla_x \Phi(x\at{t}) - \nabla_x f(x\at{t},y\at{t}) }^2 
    + \etax\E  \norm{\nabla_x f(x\at{t},y\at{t}) - \hatgradfbx(x\at{t},y\at{t}) }^2
 \\
    &
    \leq \E \Phi(x\at{t})
    - \frac{\etax}{6} \E \norm{\resx{\Phi,t}}^2 
    + \etax \E \norm{\nabla_x \Phi(x\at{t}) - \nabla_x f(x\at{t},y\at{t}) }^2 
    + 2\etax \sigma_x^2 + 2 \etax \delta_x^2
\end{align}

\item
\paragraph{Ascent bound on $f(x\at{t},\cdot)$}
For a choice of $\etay \leq \frac{1}{5\ell}\leq \frac{1}{\ell}$,
\begin{align}
    \E f(x\at{t+1},y\at{t+1})\geq \E f(x\at{t+1},y\at{t}) + \frac{\etay}{6}\E\calD_\calY(y\at{t},1/\etay;x\at{t+1}) - \etay\delta_y^2 - \etay\sigma_y^2.
\end{align}

\item
    \paragraph{Upper Bound on the descent: $ f(x\at{t}, y\at{t}) - f(x\at{t+1}, y\at{t})$.}
    \begin{align}
        f(x\at{t+1}, y\at{t}) 
        &\geq f(x\at{t}, y\at{t})
        + \inprod{\nabla_x f(x\at{t},y\at{t})}{x\at{t+1} - x\at{t}} 
        - \frac{\ell}{2}\norm{x\at{t+1} - x\at{t}}^2 \\
        &= f(x\at{t}, y\at{t})
        - \etax \inprod{\gradfbx (x\at{t},y\at{t})}{\sresx{t}} 
        - \etax \inprod{\nabla_x f(x\at{t},y\at{t}) - \gradfbx (x\at{t},y\at{t})}{\sresx{t}} \\
        &\quad
        - \frac{\ell \etax^2}{2}\norm{\sresx{t}}^2 \\
        &= f(x\at{t}, y\at{t})
        - \etax \inprod{\hatgradfbx (x\at{t},y\at{t})}{\sresx{t}} 
        - \etax \inprod{\gradfbx (x\at{t},y\at{t}) - \hatgradfbx (x\at{t},y\at{t})}{\sresx{t}} 
        \\&\quad
        - \etax \inprod{\nabla_x f(x\at{t},y\at{t}) - \gradfbx (x\at{t},y\at{t})}{\sresx{t}} \\
        &\quad
        - \frac{\ell \etax^2}{2}\norm{\sresx{t}}^2 \\
        &= f(x\at{t}, y\at{t})
        - \etax \inprod{\hatgradfbx (x\at{t},y\at{t})}{\sresx{t}} 
        \\
        &\quad
        - \etax \inprod{\gradfbx (x\at{t},y\at{t}) - \hatgradfbx (x\at{t},y\at{t})}{\resx{t}}
        - \etax \inprod{\gradfbx (x\at{t},y\at{t}) - \hatgradfbx (x\at{t},y\at{t})}{\sresx{t} - \resx{t}} 
        \\
        &\quad
        - \etax \inprod{\nabla_x f(x\at{t},y\at{t}) - \gradfbx (x\at{t},y\at{t})}{\sresx{t}} 
        \\
        &\quad
        - \frac{\ell \etax^2}{2}\norm{\sresx{t}}^2 \\
        &\geq f(x\at{t}, y\at{t})
        - \frac{\etax}{2} \norm{\hatgradfbx (x\at{t},y\at{t})}^2 - \frac{\etax}{2} \norm{\sresx{t}}^2 
        \label{young-s-inequality-descent}
        \\
        &\quad
        - \etax \inprod{\gradfbx (x\at{t},y\at{t}) - \hatgradfbx (x\at{t},y\at{t})}{\resx{t}}
        - \etax \inprod{\gradfbx (x\at{t},y\at{t}) - \hatgradfbx (x\at{t},y\at{t})}{\sresx{t} - \resx{t}} 
        \\
        &\quad
        - \frac{\etax}{2} \norm{\nabla_x f(x\at{t},y\at{t}) - \gradfbx (x\at{t},y\at{t})}^2 - \frac{\etax}{2}\norm{\sresx{t}} \\
        &\quad
        - \frac{\ell \etax^2}{2}\norm{\sresx{t}}^2 \\
        &\geq f(x\at{t}, y\at{t})
        - \paren{ \etax + \frac{\ell\etax^2}{2} } \norm{\sresx{t}}^2 - \frac{\etax}{2} \norm{\hatgradfbx (x\at{t},y\at{t})}^2  
        \\
        &\quad
        - \etax \inprod{\gradfbx (x\at{t},y\at{t}) - \hatgradfbx (x\at{t},y\at{t})}{\resx{t}}
        - \etax \inprod{\gradfbx (x\at{t},y\at{t}) - \hatgradfbx (x\at{t},y\at{t})}{\sresx{t} - \resx{t}} 
        \\
        &\quad
        - \frac{\etax \delta_x^2}{2} 
        \label{gather-and-inexact-descent}
        \\
        &\geq f(x\at{t}, y\at{t})
        - \paren{ \etax + \frac{\ell\etax^2}{2} } \norm{\sresx{t}}^2 - \frac{\etax}{2} \norm{\hatgradfbx (x\at{t},y\at{t})}^2  
        \\
        &\quad
        - \etax \inprod{\gradfbx (x\at{t},y\at{t}) - \hatgradfbx (x\at{t},y\at{t})}{\resx{t}}
        - \etax \norm{\gradfbx (x\at{t},y\at{t}) - \hatgradfbx (x\at{t},y\at{t})}^2
        - \frac{\etax \delta_x^2}{2} .
        \label{cs-and-nonexpansiveness-of-proj-descent}
    \end{align}
    \begin{itemize}
        \item the initial equations are mere additions-subtractions of terms and plugging-in of definitions;
        \item \eqref{young-s-inequality-descent} follows from Young's inequality;
        \item \eqref{gather-and-inexact-descent} follows from gathering terms and using the bound on the gradient inexactness error;
        \item \eqref{cs-and-nonexpansiveness-of-proj-descent} follows from the Cauchy-Schwarz inequality and the non-expansiveness of the projection.
    \end{itemize}
    Taking expectations again:
    \begin{align}
        \E f(x\at{t+1}, y\at{t}) 
        &\geq \E f(x\at{t}, y\at{t})
        - \paren{ \etax + \frac{\ell\etax^2}{2} } \E \norm{\sresx{t}}^2 - \frac{\etax \sigma_x^2 }{2}  
        \\
        &\quad
        - \etax \E \inprod{\gradfbx (x\at{t},y\at{t}) - \hatgradfbx (x\at{t},y\at{t})}{\resx{t}}
        - \etax \sigma_x^2
        - \frac{\etax \delta_x^2}{2} 
        \\
        &= \E f(x\at{t}, y\at{t})
        - \paren{ \etax + \frac{\ell\etax^2}{2} } \E \norm{\sresx{t}}^2 - \frac{3\etax \sigma_x^2 }{2}
        - \frac{\etax \delta_x^2}{2}
        \label{gather-and-unbiasedness}
        \\
        &\geq \E f(x\at{t}, y\at{t})
        - \frac{3\etax}{4} \E \norm{\sresx{t}}^2 - \frac{3\etax \sigma_x^2 }{2}
        - \frac{\etax \delta_x^2}{2} \\
        &\geq \E f(x\at{t}, y\at{t})
        - \frac{3\etax}{2} \E \norm{\resx{t}}^2 
         - \frac{9\etax\sigma_x^2}{2} \sigma_x^2
         -\frac{7\etax\delta_x^2}{2}
        \label{eq:upper-bound-descent}
    \end{align}
    Where, we used that $\etax \leq \frac{1}{2\ell}$, which means that 
    $
    \frac{1}{2}
    \left( {\etax} + {\ell \etax^2}  \right) 
     \leq \frac{3\etax}{4}.
    $

\item
    \paragraph{Bounding AGDA iterate difference $\E f(x\at{t+1},y\at{t+1})- \E f(x\at{t},y\at{t})$.}

    \begin{align}
        \E f(x\at{t+1},y\at{t+1})- \E f(x\at{t},y\at{t}) &\geq -\frac{3\etax}{2}\E\norm{\resx{t}} - \frac{9\etax\sigma_x^2}{2} - \frac{7\etax}{2}\delta_x^2\\
        &\quad + \frac{\etay}{6}\E\calD_\calY(y\at{t},1/\etay;x\at{t+1}) - \etay\delta_y^2 - \etay\sigma_y^2
    \end{align}

\item
\paragraph{The Lyapunov function.}
We will define the Lyapunov function,
\begin{align}
    V(x\at{t},y\at{t}) := \Phi(x\at{t}) + \alpha \Bigparen{\Phi(x\at{t}) - f(x\at{t},y\at{t})} = \paren{1+\alpha} \Phi(x\at{t}) - \alpha f(x\at{t},y\at{t}),  
\end{align}
with $\alpha >0$ to be tuned at the end. Also, we will denote $V\at{t} := V(x\at{t},y\at{t})$.

    \begin{align}
        \E V\at{t} - \E V\at{t+1} = \paren{1 + \alpha}\paren{\E\Phi(x\at{t}) - \E\Phi(x\at{t+1}) } + \alpha \paren{\E f(x\at{t+1},y\at{t+1}) - f(x\at{t},y\at{t}) }
    \end{align}

    \begin{align}
        \E V\at{t} - \E V\at{t+1} 
        &\geq (1+\alpha) \paren{
                \frac{\etax}{6}\E \norm{\resx{\Phi,t}}^2
                - \etax \E \norm{\nabla_x \Phi(x\at{t}) 
                - \nabla_x f(x\at{t},y\at{t}) }^2 
                - 2 \etax \sigma_x^2 - 2 \etax \delta_x^2}\\
        &\quad 
        + \alpha\paren{
        -\frac{3\etax}{2}\E\norm{\resx{t}}^2 
        - \frac{9\etax\sigma_x^2}{2} 
        - \frac{7\etax}{2}\delta_x^2
        }
        \\
        &\quad 
        + \alpha \paren{
        \frac{\etay}{6}\E\calD_\calY(y\at{t},1/\etay;x\at{t+1}) 
        - \etay\delta_y^2 
        - \etay\sigma_y^2} \\
        &\geq \frac{(1+\alpha)\etax}{6}\E \norm{\resx{\Phi,t}}^2
        -
        (1+\alpha)\etax \E \norm{\nabla_x \Phi(x\at{t}) 
                - \nabla_x f(x\at{t},y\at{t}) }^2 \\
        &\quad 
        -\frac{3\etax \alpha}{2}\E\norm{\resx{t}}^2 
        + \frac{\alpha\etay }{6}\E\calD_\calY(y\at{t},1/\etay;x\at{t+1})\\
        &\quad
        +\paren{ -2\etax(1+\alpha) - \frac{9\etax}{2}\alpha }\sigma_x^2
        + \paren{-\alpha \etay} \sigma_y^2\\
        &\quad
        +\paren{ -2\etax(1+\alpha) - \frac{7\etax}{2} \alpha }\delta_x^2
        + \paren{-\alpha \etay} \delta_y^2\\
        &\geq \frac{(1+\alpha)\etax}{6}\E \norm{\resx{\Phi,t}}^2
        -
        (1+\alpha)\etax \E \norm{\nabla_x \Phi(x\at{t}) 
                - \nabla_x f(x\at{t},y\at{t}) }^2 \\
        &\quad 
        -\frac{3\etax \alpha}{2}\E\norm{\resx{t}}^2 
        + \frac{\alpha\etay }{6}\E\calD_\calY(y\at{t},1/\etay;x\at{t}) - \frac{3 \alpha \etay \ell^2}{6} \E \norm{x\at{t}- x\at{t+1}}^2 \\
        &\quad
        +\paren{ -2\etax(1+\alpha) - \frac{9\etax}{2}\alpha }\sigma_x^2
        + \paren{-\alpha \etay} \sigma_y^2\\
        &\quad
        +\paren{ -2\etax(1+\alpha) - \frac{7\etax}{2} \alpha }\delta_x^2
        + \paren{-\alpha \etay} \delta_y^2
        \label{potential-use-lemma-continuity-of-D} \\
        &\geq \frac{(1+\alpha)\etax}{6}\E \norm{\resx{\Phi,t}}^2
        -
        (1+\alpha)\etax \E \norm{\nabla_x \Phi(x\at{t}) 
                - \nabla_x f(x\at{t},y\at{t}) }^2 \\
        &\quad 
        -\frac{3\etax \alpha}{2}\E\norm{\resx{t}}^2 
        + \frac{\alpha\etay }{6}\E\calD_\calY(y\at{t},\ell;x\at{t}) 
        \\
        &\quad
        - \frac{ \alpha \etax^2\etay \ell^2}{2} \E \norm{\resx{t}}^2 
        - {2 \alpha\etax^2 \etay \ell^2} \sigma_x^2 - 2\alpha\etax^2 \etay\ell^2\delta_x^2 \\
        &\quad
        +\paren{ -2\etax(1+\alpha) - \frac{9\etax}{2}\alpha }\sigma_x^2
        + \paren{-\alpha \etay} \sigma_y^2\\
        &\quad
        +\paren{ -2\etax(1+\alpha) - \frac{7\etax}{2} \alpha  }\delta_x^2
        + \paren{-\alpha \etay} \delta_y^2
        \label{potential-stochastic-versus-deterministic-grad-mapping} 
        \\
        &= \frac{(1+\alpha)\etax}{6}\E \norm{\resx{\Phi,t}}^2
        -
        (1+\alpha)\etax \E \norm{\nabla_x \Phi(x\at{t}) 
                - \nabla_x f(x\at{t},y\at{t}) }^2 \\
        &\quad 
        +
        \paren{
            -\frac{3\etax \alpha}{2} 
            - \alpha\etax^2\etay\ell^2
        }\E\norm{\resx{t}}^2 
        + \frac{\alpha\etay }{6}\E\calD_\calY(y\at{t},\ell;x\at{t}) 
        \\
        &\quad
        +\paren{ -2\etax(1+\alpha) - \frac{9\etax}{2}\alpha - {2 \alpha\etax^2\etay\ell^2} }\sigma_x^2
        + \paren{-\alpha \etay} \sigma_y^2\\
        &\quad
        +\paren{
            -2\etax(1+\alpha)
            - \frac{7\etax}{2} \alpha 
            - {2 \alpha\etax^2\etay \ell^2} }\delta_x^2
        + \paren{-\alpha \etay} \delta_y^2
        \\
        &\geq \paren{\frac{(1+\alpha)\etax }{6} - 3\etax\alpha  -2 \alpha\etax^2\etay \ell^2}\E \norm{\resx{\Phi,t}}^2
        -
        (1+\alpha)\etax \E \norm{\nabla_x \Phi(x\at{t}) 
                - \nabla_x f(x\at{t},y\at{t}) }^2 \\
        &\quad
        +\paren{ \frac{\alpha\etay }{6}- 3\etax \alpha\kappa^2 - 2 \alpha\etax^2\etay\ell^2\kappa^2 }\E\calD_\calY(y\at{t},\ell;x\at{t}) 
        \\
        &\quad
        +\paren{ -2\etax(1+\alpha) - \frac{9\etax}{2}\alpha - {2 \alpha\etax^2\etay\ell^2} }\sigma_x^2
        + \paren{-\alpha \etay} \sigma_y^2\\
        &\quad
        +\paren{
            -2\etax(1+\alpha)
            - \frac{7\etax}{2} \alpha 
            - {2 \alpha\etax^2\etay \ell^2} }\delta_x^2
        + \paren{-\alpha \etay} \delta_y^2
        \label{plugin-youngs-phi-gradient-mapping-and-ppl}
        \\
        &\geq \paren{\frac{(1+\alpha)\etax }{6} - 3\etax\alpha  -2 \alpha\etax^2\etay \ell^2}\E \norm{\resx{\Phi,t}}^2
        -
        (1+\alpha)\etax \kappa^2 \calD_{\calY}(y\at{t},\ell;x\at{t}) \\
        &\quad 
        +\paren{ \frac{\alpha\etay }{6}- 3\etax \alpha\kappa^2 - 2 \alpha\etax^2\etay\ell^2\kappa^2 }\E\calD_\calY(y\at{t},\ell;x\at{t})  
        \\
        &\quad
        +\paren{ -2\etax(1+\alpha) - \frac{9\etax}{2}\alpha - {2 \alpha\etax^2\etay\ell^2} }\sigma_x^2
        + \paren{-\alpha \etay} \sigma_y^2\\
        &\quad
        +\paren{
            -2\etax(1+\alpha)
            - \frac{7\etax}{2} \alpha 
            - {2 \alpha\etax^2\etay \ell^2} }\delta_x^2
        + \paren{-\alpha \etay} \delta_y^2
        \label{ppl-again}
        \\
        &=\paren{\frac{(1+\alpha)\etax }{6} - 3\etax\alpha  -2 \alpha\etax^2\etay \ell^2}\E \norm{\resx{\Phi,t}}^2
        \\
        &\quad 
        +\paren{\frac{\alpha\etay }{6} - 3\etax \alpha\kappa^2 - 2 \alpha\etax^2\etay\ell^2\kappa^2-(1+\alpha)\etax\kappa^2 }\E\calD_\calY(y\at{t},\ell;x\at{t})  
        \\
        &\quad
        +\paren{ -2\etax(1+\alpha) - \frac{9\etax}{2}\alpha - {2 \alpha\etax^2\etay\ell^2} }\sigma_x^2
        + \paren{-\alpha \etay} \sigma_y^2\\
        &\quad
        +\paren{
            -2\etax(1+\alpha)
            - \frac{7\etax}{2} \alpha 
            - {2 \alpha\etax^2\etay \ell^2} }\delta_x^2
        + \paren{-\alpha \etay} \delta_y^2
    \end{align}

    \begin{itemize}
        \item \eqref{potential-use-lemma-continuity-of-D} uses \cref{lemma:continuity-of-D-wrt-dual-var} and the fact that $|a-b|<c$ implies $a > b -c$, \textit{i.e.,}
        $$\calD_\calY(y\at{t+1},\alpha;x\at{t}) \geq \calD_\calY(y\at{t},\alpha;x\at{t}) - 3\ell^2\norm{x\at{t}- x\at{t+1}}^2,$$
        \item \eqref{potential-stochastic-versus-deterministic-grad-mapping} uses the definition of $\sresx{t}, \resx{t}$ and \cref{lemma:stochastic-vs-deterministic-grad-mapping} to replace the term: $\norm{x\at{t}- x\at{t+1}}^2$, \textit{i.e,}:
        \begin{align}
            \norm{x\at{t}-x\at{t+1}}^2 
            &\leq 2\norm{x\at{t}-\bar{x}\at{t+1}}^2 + 2\norm{x\at{t+1}-\bar{x}\at{t+1}}^2 \\
            &\leq 2\etax^2 \norm{\resx{t}}^2 + 2\etax^2\norm{ \hatgradfbx(x\at{t},y\at{t}) - \nabla_x f(x\at{t},y\at{t}) }^2\\
            &\leq 2\etax^2 \norm{\resx{t}}^2 + 4\etax^2\sigma_x^2 + 4\etax^2\delta_x^2.
        \end{align}
        \item By the fact that $\|c - d \|^2 = \|c\|^2 + \|d\|^2 - 2\inprod{c}{d}$ and Young's inequality, we can write $\|a\|^2=\|(a-b) + b\|^2 \leq (1 + 1/\rho)\|a\|^2 + (1 + \rho)\|b\|^2$. Then we plug in $a\gets \resx{t}$, $b\gets \resx{\Phi,t}$, 
        \begin{align}
            \norm{\resx{t}}^2 &\leq \paren{1 + \rho }\norm{\resx{\Phi,t}}^2 + \paren{1 + \frac{1}{\rho}} \norm{\frac{1}{\etax}\paren{\proj_{\calX}\paren{x-\etax \nabla f(x\at{t},y\at{t}) } - \proj_{\calX}\paren{x - \etax \nabla\Phi(x\at{t}) }  } }^2\\
            &\leq \paren{1  +\rho}\norm{\resx{\Phi,t}}^2 
            + \paren{1 + \frac{1}{\rho}} \norm{ \nabla f(x\at{t},y\at{t}) -  \nabla\Phi(x\at{t}) }^2\\
            &\leq \paren{1 + \rho }\norm{\resx{\Phi,t}}^2 
            + \paren{1 + \frac{1}{\rho}}\ell^2 \norm{ y^{\star}(x\at{t}) - y\at{t} }^2\\
            &\leq \paren{1  +\rho }\norm{\resx{\Phi,t}}^2 
            + \paren{1  +\frac{1}{\rho} }\frac{\ell^2}{\muqg} \paren{\Phi(x\at{t}) - f(x\at{t},y\at{t}) }\\
            &\leq \paren{1  +\rho }\norm{\resx{\Phi,t}}^2 
            + \paren{1  +\frac{1}{\rho} }\kappa^2 \calD_{\calY}(y\at{t},\ell;x\at{t})
        \end{align}
        Where the third to last inequality follows from Danskin's lemma, the penultimate is due to the quadratic growth property, and the last is due to the proximal-PL property. We have set $\kappa:= \frac{\ell}{\sqrt{\muqg\mu}}$.
        \item \eqref{ppl-again} follows from the pPL property.
        \begin{align}
        \norm{\resx{t}}^2 
        &\leq 2 \norm{\resx{\Phi,t}}^2 + 2\norm{\resx{t} - \resx{\Phi,t} } ^2\\
        &\leq 2 \norm{\resx{\Phi,t}}^2 + 2\norm{\nabla f(x\at{t},y\at{t}) - \nabla \Phi(x\at{t}) } ^2\\
        &\leq 2 \norm{\resx{\Phi,t}}^2 + 2\ell \norm{y\at{t} - y^\star(x\at{t}) } ^2\\
        &\leq 2 \norm{\resx{\Phi,t}}^2 + 2\ell \kappa^2 \calD_\calY(y\at{t},\ell;x\at{t})
    \end{align}
    \end{itemize}

    Re-iterating,
    \begin{align}
        \E V\at{t}- \E V\at{t+1}
        &\geq 
        \paren{\frac{(1+\alpha)\etax }{6} - 3\etax\alpha  -2 \alpha\etax^2\etay \ell^2}\E \norm{\resx{\Phi,t}}^2
        \\
        &\quad 
        +\paren{\frac{\alpha\etay }{6} - 3\etax \alpha\kappa^2 - 2 \alpha\etax^2\etay\ell^2\kappa^2-(1+\alpha)\etax\kappa^2 }\E\calD_\calY(y\at{t},\ell;x\at{t})
        \\
        &\quad
        +\paren{ -2\etax(1+\alpha) - \frac{9\etax}{2}\alpha - {2 \alpha\etay\ell^2} }\sigma_x^2
        + \paren{-\alpha \etax^2 \etay} \sigma_y^2\\
        &\quad
        +\paren{
            -2\etax(1+\alpha)
            - \frac{7\etax}{2} \alpha 
            - {2 \alpha\etax^2\etay \ell^2} }\delta_x^2
        + \paren{-\alpha \etay} \delta_y^2
    \end{align}
    What is left to do is to ensure that the coefficients of $\norm{\resx{\Phi,t}}^2,\calD_\calY(y\at{t},\ell;x\at{t})$
    in the last display are positive. We will show that this is indeed the case for a correct tuning of $\etax,\etay$ and $\alpha$.
    \begin{itemize}
        \item \textbf{Coefficient of {$\norm{\resx{\Phi,t}}^2$}.}
    We assume \textit{a priori} that, $\etax,\etay \leq \frac{1}{\ell}$. Hence we can write:
    \begin{align}
        {\frac{(1+\alpha)\etax }{6} - 3\etax\alpha  -2 \alpha\etax^2\etay \ell^2}
       \geq
        \frac{1+\alpha - 18\alpha - 12\alpha }{6}\etax 
        \geq \frac{1 - 29\alpha}{6}\etax\geq \frac{\etax}{180}
    \end{align}
    Where we require that $\alpha \leq 1/30 < 1/29.$
    \item \textbf{Coefficient of $\calD_\calY$.} We require that
    $\etax\leq \frac{\etay}{c\kappa^2}$ for some $c>1$.
    \begin{align}
        {\frac{\alpha\etay }{6} - 3\etax \alpha\kappa^2 - 2 \alpha\etax^2\etay\ell^2\kappa^2-(1+\alpha)\etax\kappa^2 } 
        &\geq \frac{\alpha  - 18 \alpha/c - 12 \alpha/c^2 - 6(1+\alpha)/c }{6} \etay\\
        &=\frac{(c^2 - 24c - 12)\alpha - 6c }{6c^2}\etay\\
        &\geq \frac{\etay}{10}
    \end{align}
    Where in the inequality we assume $c=100, \alpha=1/30$.
    \item  \textbf{Coefficients of $\sigma_x^2, \sigma_y^2, \delta_x, \delta_y$.}
    \begin{itemize}
        \item For $\sigma_x^2, \delta_x^2$
    
        \begin{align}
        2\etax(1+\alpha) + \frac{7\etax}{2}\alpha + {2 \alpha\etax^2\etay\ell^2} 
        &\leq
            { 2\etax(1+\alpha) + \frac{9\etax}{2}\alpha + {2 \alpha\etax^2\etay\ell^2} }
        \\&= { \etax\frac{60 + 2 + 135}{30} + \frac{200\etax^3\kappa^2\ell^2}{30} } & \quad \paren{\etax = \frac{\etay}{100\kappa^2}}
        \\&= { \etax\frac{60 + 2 + 135}{30} + \frac{200\etax^2\ell}{30\cdot 500} } & \quad \paren{\etax = \frac{1}{500\kappa^2\ell}}
        \\&\leq { \etax\frac{60 + 2 + 135}{30} + \frac{200\etax}{30\cdot 500} } \leq 8\etax & \quad \paren{\etax \leq 1/\ell \implies \ell\leq 1/\etax}.
        \end{align}
    
        \item For $\sigma_y^2, \delta_y^2$:

        \begin{align}
            \alpha\etay \leq \frac{100\kappa^2}{30}\etax\leq 4\kappa^2\etax
        \end{align}
        \end{itemize}
    \end{itemize}

\begin{align}
    \norm{\resx{t}}^2  \leq 2\norm{\resx{\Phi,t}}^2 
            + 2\kappa^2 \calD_{\calY}(y\at{t},\ell;x\at{t})
\end{align}
Summarizing:
\begin{align}
    \frac{\etax}{360} \norm{\resx{t}}^2 + 9\kappa^2\etax\E \calD_{\calY}(y\at{t},\ell;x\at{t}) &\leq 
    \frac{\etax}{360} \norm{\resx{t}}^2 +\paren{ \frac{\etay}{10} - \frac{2\kappa^2\etax}{360}}\E \calD_{\calY}(y\at{t},\ell;x\at{t})\\
    &\leq \E V\at{t} - \E V\at{t+1} + 8\etax\sigma_x^2 + 8\etax\delta_x^2 + 4\kappa^2\etax \sigma_y^2 + 4\kappa^2\etax  \delta_y^2. 
\end{align}
Summing over $t$ and dividing by $T$:
\begin{align}
    \frac{1}{T}\sum_{t=1}^{T-1} 
    \paren{\norm{\resx{t}}^2 +  \kappa^2\E \calD_{\calY}(y\at{t},\ell;x\at{t})} \leq \frac{720 L (\diam{\calX} + \diam{\calY}) }{\etax T} + 2880 {\paren{\sigma_x^2+ \delta_x^2 }} + 1440{\kappa^2\paren{\sigma_y^2 + \delta_y^2}}.
\end{align}










\end{proof}

\subsubsection{Two-Sided pPL Min-Max Optimization}
\begin{theorem}\label{theorem:duality-gap:Alt-GDA:PL-PL}
Let $f:\calX\times\calY\to \R$ be an $L$-Lipschitz, $\ell$-smooth function and $\calX,\calY$ be two compact convex sets with Euclidean diameters $\diam{\calX},\diam{\calY}$ respectively. Further, assume that $f(\cdot,y)$ satisfies the pPL condition with modulus $\mux$ for all $y\in\calY$ while $f(x,\cdot)$ satisfies the pPL condition with a modulus $\muy$ for any $x\in\calX$. Additionally, let $(\hatgradfbx, \hatgradfby)$ be an inexact stochastic gradient oracle satisfying \cref{assumption:stoch-grad-oracle}. Then, after $T$ iterations of \eqref{agda} with a tuning of step-sizes $\etax = \frac{\muy^2}{160\ell^3}$ and $\etay = \frac{1}{5\ell}$, it holds true that:
\begin{align}
    \E \Phi(x\at{T}) - \Phi^\star + 1/10\paren{\E\Phi(x\at{T}) - \E f(x\at{T},y\at{T}) }\leq \exp\paren{- \frac{\mux\muy^2}{160\ell^3} T} L(\diam{\calX}+\diam{\calY}) + \frac{c_1\sigma_x^2}{\mux} + \frac{c_1 \delta_x^2}{\mux} 
   + \frac{c_2\ell^2 \sigma_y^2}{\mux\muy^2} + \frac{c_2\ell^2\delta_y^2}{\mux\muy^2},
\end{align}
where, $c_1,c_2 \in O(1)$.
\label{theorem:formal-agda-2sided-ppl}
\end{theorem}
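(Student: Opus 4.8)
The plan is to reuse wholesale the machinery of \cref{theorem:formal-agda-ncppl} and upgrade its sublinear guarantee to a geometric one by invoking the pPL property of $\Phi$. I would keep the identical Lyapunov function $V\at{t} := (1+\alpha)\Phi(x\at{t}) - \alpha f(x\at{t},y\at{t})$, now fixing $\alpha = 1/10$ so that $V\at{t} - \Phi^\star = (\Phi(x\at{t}) - \Phi^\star) + \tfrac{1}{10}(\Phi(x\at{t}) - f(x\at{t},y\at{t}))$ is exactly the nonnegative quantity appearing in the statement, with both summands nonnegative. The three one-step inequalities established in the NC-pPL proof --- the descent on $\Phi$ for the $x$-update, the ascent on $f(x\at{t+1},\cdot)$ for the $y$-update, and the upper bound on the $x$-descent of $f(\cdot,y\at{t})$ --- carry over verbatim, since none of them used anything about pPL in $x$.

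The first main step is to reassemble the one-step decrease of $V$. Following exactly the bookkeeping of \cref{theorem:formal-agda-ncppl} --- using \cref{lemma:continuity-of-D-wrt-dual-var} to transfer $\calD_\calY(y\at{t},\cdot\,;x\at{t+1})$ back to $\calD_\calY(y\at{t},\cdot\,;x\at{t})$ at the cost of $3\ell^2\norm{x\at{t}-x\at{t+1}}^2$, and \cref{lemma:stochastic-vs-deterministic-grad-mapping} to trade $\norm{x\at{t}-x\at{t+1}}^2$ for the squared gradient mapping plus the variance and bias floors --- I would arrive, after the re-tuning $\etax = \muy^2/(160\ell^3)$ and $\etay = 1/(5\ell)$, at an inequality of the shape
\[
\E V\at{t} - \E V\at{t+1} \;\geq\; b_1\,\etax\,\E\calD_{\calX}^{\Phi}(x\at{t},1/\etax) \;+\; b_2\,\etay\,\E\calD_\calY(y\at{t},\ell;x\at{t}) \;-\; \mathrm{Err},
\]
with strictly positive absolute constants $b_1,b_2$ and $\mathrm{Err}$ collecting terms proportional to $\etax(\sigma_x^2+\delta_x^2)$ and $\etay(\sigma_y^2+\delta_y^2)$. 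The verification that $b_1,b_2>0$ is the same coefficient-positivity check as before, the only difference being that $\etax$ is now chosen a factor smaller (scaling as $\muy^2/\ell^3$) so that the $x$-step's perturbation of the $y$-gap stays dominated by the $y$-ascent.

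The genuinely new ingredient is the conversion to a contraction. By \cref{lemma:ppl-phi}, $\Phi$ satisfies the pPL condition with modulus $\mux$; together with monotonicity of $\calD$ in its smoothing parameter (\cref{D-non-decreasing-in-alpha}) and $1/\etax \geq \ellphi$, this gives $\calD_{\calX}^{\Phi}(x\at{t},1/\etax) \geq 2\mux(\Phi(x\at{t}) - \Phi^\star)$. Likewise the pPL condition on $f(x\at{t},\cdot)$ yields $\calD_\calY(y\at{t},\ell;x\at{t}) \geq 2\muy(\Phi(x\at{t}) - f(x\at{t},y\at{t}))$. Substituting both lower bounds turns the one-step inequality into
\[
\E V\at{t} - \E V\at{t+1} \;\geq\; 2b_1\mux\etax\,(\E\Phi(x\at{t}) - \Phi^\star) + 2b_2\muy\etay\,(\E\Phi(x\at{t}) - \E f(x\at{t},y\at{t})) - \mathrm{Err}.
\]
Matching these two coefficients against the pair $(\rho,\rho\alpha)$ and choosing $\alpha = 1/10$ (the constraint $\alpha \leq \Theta(\muy\etay/(\mux\etax)) = \Theta(\ell^2/(\mux\muy))$ being comfortably satisfied) yields a rate $\rho = \Theta(\mux\etax) = \Theta(\mux\muy^2/\ell^3)$, so that $\E V\at{t+1} - \Phi^\star \leq (1-\rho)(\E V\at{t} - \Phi^\star) + \mathrm{Err}$. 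Unrolling this geometric recursion over $T$ steps, bounding the initial gap $V\at{0} - \Phi^\star \leq L(\diam{\calX}+\diam{\calY})$ by $L$-Lipschitzness, and summing $\sum_t (1-\rho)^t \leq 1/\rho$ for the noise floor $\mathrm{Err}/\rho$ produces exactly the advertised bound: dividing the $x$-errors by $\rho = \Theta(\mux\etax)$ gives the $\sigma_x^2/\mux,\delta_x^2/\mux$ floors, while the $y$-errors, carrying $\etay/\etax = \Theta(\ell^2/\muy^2)$, give the $\ell^2\sigma_y^2/(\mux\muy^2),\ell^2\delta_y^2/(\mux\muy^2)$ floors.

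The hardest part will be the one-step Lyapunov bound: as in the alternating NC-pPL analysis, the coupling introduced by feeding the fresh iterate $x\at{t+1}$ into the $y$-gradient forces a delicate balancing of the cross-terms, and one must re-verify that the much smaller $\etax = \Theta(\muy^2/\ell^3)$ keeps every coefficient of the ``good'' terms $\calD_{\calX}^{\Phi}$ and $\calD_\calY$ positive while driving the disturbance terms non-positive. Once that inequality is in hand, the passage to linear convergence via the two pPL substitutions is routine, and I expect no obstruction from the stochastic/inexact oracle beyond the bookkeeping already carried out in \cref{theorem:formal-agda-ncppl,theorem:stoch-pgd-on-nonconvex-and-on-ppl}.
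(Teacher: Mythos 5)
Your proposal is correct and takes essentially the same route as the paper's proof: the identical Lyapunov function $(\Phi(x\at{t})-\Phi^\star)+\tfrac{1}{10}\paren{\Phi(x\at{t})-f(x\at{t},y\at{t})}$, the same three one-step bounds carried over from the alternating NC-pPL analysis, the same step-size coupling $\etax=\muy^2\etay/(32\ell^2)$, and the same geometric unrolling yielding the contraction factor $1-\Theta(\mux\muy^2/\ell^3)$ with noise floors $(\sigma_x^2+\delta_x^2)/\mux$ and $\ell^2(\sigma_y^2+\delta_y^2)/(\mux\muy^2)$. The only difference is organizational: you defer the pPL/QG substitutions until after assembling the one-step inequality in terms of $\calD_{\calX}^{\Phi}$ and $\calD_{\calY}$, whereas the paper applies them inline (via \cref{lemma:ppl-phi} in the $\Phi$-descent step and the quadratic-growth bound on $\norm{\nabla_x\Phi-\nabla_x f}^2$ in the $W$-recursion), an equivalent bookkeeping choice.
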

\begin{proof}
Our goal is to ultimately demonstrate that there exists a Lyapunov function, $V\at{t}$ whose value decreases along any trajectory of the algorithm's iterates. Not only so, but its value contracts, \textit{i.e.}, there exists $0<\varpi<1$ such that $V\at{t+1} \leq \varpi V\at{t},~\forall t$. To demonstrate this, we first need to lower bound the descent on $\Phi(\cdot)$, lower bound the ascent on $f(x\at{t},\cdot)$, and finally \emph{upper bound} the descent on $f(\cdot,y\at{t})$.
\\
\item 
\paragraph{Descent Bound on \texorpdfstring{$\Phi(\cdot)$}{Φ(·)}.}
By \cref{D-descent-lemma} and \cref{lemma:ppl-phi} we write,
\begin{align}
    \E\Phi(x\at{t+1})
    &\leq \E \Phi(x\at{t}) 
    - \frac{\etax}{6} \E \calD_{\calX}^{\Phi}(x\at{t},1/\etax) 
    + \etax \E \norm{\nabla_x \Phi(x\at{t}) - \nabla_x f(x\at{t},y\at{t}) }^2 \\&\quad
    + 2\etax \sigma_x^2 + 2 \etax \delta_x^2\\
    &\leq \E \Phi(x\at{t}) 
    - \frac{\etax \mux}{3}\paren{\Phi(x\at{t}) - \Phi^\star}
    + \etax \E \norm{\nabla_x \Phi(x\at{t}) - \nabla_x f(x\at{t},y\at{t}) }^2 \\&\quad
    + 2\etax \sigma_x^2 + 2 \etax \delta_x^2
\end{align}
Hence,
\begin{align}
    \E\Phi(x\at{t+1}) - \Phi^\star &\leq \paren{1
    - \frac{\etax \mux}{3} }\paren{\Phi(x\at{t}) - \Phi^\star}
    + \etax \ell^2 \norm{y^\star - y\at{t} }^2 \\&\quad
    + 2\etax \sigma_x^2 + 2 \etax \delta_x^2\\
    &\leq \paren{1
    - \frac{\etax \mux}{3} }\paren{\Phi(x\at{t}) - \Phi^\star}
    + \etax \ell^2 \muqg/2 \paren{\Phi(x\at{t}) - f(x\at{t},y\at{t})} \\&\quad
    + 2\etax \sigma_x^2 + 2 \etax \delta_x^2,
    \label{2sided-pl-ineq1}
\end{align}
or, we can also write,
\begin{align}
\E\Phi(x\at{t+1}) - \E\Phi(x\at{t}) &\leq - \frac{\etax \mux}{3}\paren{\Phi(x\at{t}) - \Phi^\star}
    + \etax \ell^2 \muqg/2 \paren{\Phi(x\at{t}) - f(x\at{t},y\at{t})}
    \\&\quad
    + 2\etax \sigma_x^2 + 2 \etax \delta_x^2  
    \label{phi-plus-phi}
\end{align}

\item \paragraph{Ascent Bound on \texorpdfstring{$f(x\at{t},\cdot)$}{f(x\at{t},·)}.}  A simple application of \cref{D-descent-lemma} yields,
\begin{align}
    \E f(x\at{t+1},y\at{t+1})\geq \E f(x\at{t+1},y\at{t}) + \frac{\etay}{6}\E\calD_\calY(y\at{t},1/\etay;x\at{t+1}) - \etay\delta_y^2 - \etay\sigma_y^2.
\end{align}

From which, we can also write,
\begin{align}
    &\E \Phi(x\at{t+1}) - \E f(x\at{t+1},y\at{t+1})\\
    \quad&\leq \E\Phi(x\at{t+1}) -  \E f(x\at{t+1},y\at{t}) - \frac{\etay}{6}\E\calD_\calY(y\at{t},1/\etay;x\at{t+1}) + \etay\delta_y^2 + \etay\sigma_y^2
    \\
    &\leq \paren{1 - \frac{\muy\etay}{6}}\paren{\E\Phi(x\at{t+1}) -  \E f(x\at{t+1},y\at{t})} + \etay\delta_y^2 + \etay\sigma_y^2
    \\
    &= \paren{1 - \frac{\muy\etay}{6}}\paren{
            \E\Phi(x\at{t})
            - \E f(x\at{t},y\at{t}) 
            + \E f(x\at{t},y\at{t})
            -  \E f(x\at{t+1},y\at{t}) 
            +  \E\Phi(x\at{t+1}) 
            - \E\Phi(x\at{t})}
        \\ &\quad + \etay\delta_y^2 + \etay\sigma_y^2
    \label{2ppl-phi-minus-f-intermediate}
\end{align}

\item \paragraph{Upper bound on the descent on \texorpdfstring{$f(\cdot,y\at{t})$}{f(·,y\at{t})}}
From \eqref{eq:upper-bound-descent} we write,
\begin{align}
    \E f(x\at{t+1}, y\at{t}) &\geq \E f(x\at{t}, y\at{t})
        - \frac{3\etax}{2} \E \norm{\resx{t}}^2 
         - \frac{9\etax}{2} \sigma_x^2
         -\frac{7\etax}{2}\delta_x^2\\
    &\geq \E f(x\at{t}, y\at{t})
        - \frac{3\etax}{2} \calD_{\calX}(x\at{t},1/\etax;y\at{t})
         - \frac{9\etax}{2} \sigma_x^2
         -\frac{7\etax}{2}\delta_x^2
     \label{2ppl-f-descent-upper-bound}
\end{align}
The second inequality follows \cref{lemma:D-vs-grad-mapping}. Now, \eqref{2ppl-phi-minus-f-intermediate} by plugging-in \eqref{phi-plus-phi} and \eqref{2ppl-f-descent-upper-bound} reads,
\begin{align}
    &\E \Phi(x\at{t+1}) - \E f(x\at{t+1},y\at{t+1})
    \leq \paren{1 - \frac{\muy\etay}{6}}\paren{\E\Phi(x\at{t})- \E f(x\at{t},y\at{t})} 
    \\
    &\quad+ \paren{1 - \frac{\muy\etay}{6}}\paren{ \E f(x\at{t},y\at{t}) -  \E f(x\at{t+1},y\at{t})} \\
    &\quad+ \paren{1 - \frac{\muy\etay}{6}}\paren{  \E\Phi(x\at{t+1}) - \E\Phi(x\at{t})}\\ &\quad + \etay\delta_y^2 + \etay\sigma_y^2\\
    \quad
    &\leq \paren{1 - \frac{\muy\etay}{6}}\paren{\E\Phi(x\at{t})- \E f(x\at{t},y\at{t})} \\
    &\quad + \paren{1 - \frac{\muy\etay}{6}}\paren{ \frac{3\etax}{2} \calD_{\calX}(x\at{t},1/\etax;y\at{t})
        + \frac{9\etax}{2} \sigma_x^2
        + \frac{7\etax}{2}\delta_x^2 } \\
    &\quad+ \paren{1 - \frac{\muy\etay}{6}}\paren{ 
        - \frac{\etax}{6} \E \calD_{\calX}^{\Phi}(x\at{t},1/\etax) 
        + \etax \E \norm{\nabla_x \Phi(x\at{t}) - \nabla_x f(x\at{t},y\at{t}) }^2 
        + 2\etax \sigma_x^2 + 2 \etax \delta_x^2
    }
    \\ &\quad
    + \etay\sigma_y^2 + \etay\delta_y^2 
    \\
    &\leq \paren{1 - \frac{\muy\etay}{6}}\paren{\E\Phi(x\at{t})- \E f(x\at{t},y\at{t})} \\
    &\quad + \paren{1 - \frac{\muy\etay}{6}} { \frac{3\etax}{2} \calD_{\calX}(x\at{t},1/\etax;y\at{t})
         } \\
    &\quad+ \paren{1 - \frac{\muy\etay}{6}}\paren{ 
        - \frac{\etax}{6} \E \calD_{\calX}^{\Phi}(x\at{t},1/\etax) 
        + \etax \E \norm{\nabla_x \Phi(x\at{t}) - \nabla_x f(x\at{t},y\at{t}) }^2 
    }
    \\ &\quad
    + 7 \paren{1 - \frac{\muy\etay}{6}} \etax \sigma_x^2 + 7 \paren{1 - \frac{\muy\etay}{6}} \etax \delta_x^2 +\etay\sigma_y^2 + \etay\delta_y^2 
 \label{2sided-pl-ineq2}
\end{align}
\item \paragraph{The Lyapunov Function.}
We will consider the following Lyapunov function where $\alpha>0$ is to be defined later along the proof:
\begin{align}
    V(x\at{t}, y\at{t}):= (\Phi(x\at{t}) - \Phi^\star) + \alpha \paren{\Phi(x\at{t}) - f(x\at{t},y\at{t})}.
\end{align}
For convenience, with $U\at{t}, W\at{t}$ we will denote the quantities:
\begin{align}
    U\at{t} &:=  \Phi(x\at{t}) - \Phi^\star\\
    W\at{t} &:=  \Phi(x\at{t}) -  f(x\at{t},y\at{t}),
\end{align}
and we can then write $V\at{t} = U\at{t} + \alpha W\at{t}$. Piecing together \eqref{2sided-pl-ineq1}, and \eqref{2sided-pl-ineq2}
\begin{align}
    U\at{t+1} + \alpha W\at{t+1}
    &\leq U\at{t}  
    - \frac{\etax}{6} \E \calD_{\calX}^{\Phi}(x\at{t},1/\etax) 
    + \etax \E \norm{\nabla_x \Phi(x\at{t}) - \nabla_x f(x\at{t},y\at{t}) }^2 \\
    &\quad + 2\etax \sigma_x^2 + 2 \etax \delta_x^2\\
    &\quad
    +\alpha{
        \paren{1 - \frac{\muy\etay}{6}}W\at{t}
         }
    \\
    &\quad+ \alpha { \paren{1 - \frac{\muy\etay}{6}} { \frac{3\etax}{2} \calD_{\calX}(x\at{t},1/\etax;y\at{t})
             } 
    }\\
    &\quad +\alpha 
    {
     \paren{1 - \frac{\muy\etay}{6}}\paren{ 
        - \frac{\etax}{6} \E \calD_{\calX}^{\Phi}(x\at{t},1/\etax) 
        + \etax \E \norm{\nabla_x \Phi(x\at{t}) - \nabla_x f(x\at{t},y\at{t}) }^2 
    }
    }\\
    &\quad +\alpha \paren{7 \paren{1 - \frac{\muy\etay}{6}} \etax \sigma_x^2 + 7 \paren{1 - \frac{\muy\etay}{6}} \etax \delta_x^2 +\etay\sigma_y^2 + \etay\delta_y^2 }
    \\
    &\leq U\at{t} +  
    \paren{ - \frac{\etax}{6} -\frac{\alpha\etax}{6} \paren{1 - \frac{\muy\etay}{6}} + \alpha \paren{1 - \frac{\muy\etay}{6}}\frac{3\etax}{2}
    } \E \calD_{\calX}^{\Phi}(x\at{t},1/\etax)  \\
    &\quad 
    + \paren{\etax + \alpha \etax \paren{1 - \frac{\muy\etay}{6}} } \E \norm{\nabla_x \Phi(x\at{t}) - \nabla_x f(x\at{t},y\at{t}) }^2 \\
    &\quad 
    +\alpha
        \paren{1 - \frac{\muy\etay}{6}}W\at{t}
    \\
    &\quad + \alpha { \paren{1 - \frac{\muy\etay}{6}} { \frac{3\etax}{2}
        {        \left|\calD_\calX(x\at{t},1/\etax;y\at{t}) - \calD_\calX^\Phi(x\at{t},1/\etax)\right|
        }
             } 
    }\\
    &\quad +\paren{2 + 7\alpha  \paren{1 - \frac{\muy\etay}{6}}} \etax \sigma_x^2 + \paren{2 + 7\alpha  \paren{1 - \frac{\muy\etay}{6}}} \etax \delta_x^2 +\alpha\etay\sigma_y^2 + \alpha\etay\delta_y^2 
    \label{2ppl-potential-difference-Phi-f-and-triangle}
    \\
    &\leq U\at{t} +  
    \paren{ - \frac{\etax}{6} -\frac{\alpha\etax}{6} \paren{1 - \frac{\muy\etay}{6}} + \alpha \paren{1 - \frac{\muy\etay}{6}}\frac{3\etax}{2}
    } \E \calD_{\calX}^{\Phi}(x\at{t},1/\etax)  \\
    &\quad 
    + \paren{\etax + \alpha \etax \paren{1 - \frac{\muy\etay}{6}} } \frac{2\ell^2}{\muqg}  \paren{\E \Phi(x\at{t}) - \E f(x\at{t},y\at{t}) }\\
    &\quad
    +\alpha
        \paren{1 - \frac{\muy\etay}{6}}W\at{t}
    \\
    &\quad+ \alpha { \paren{1 - \frac{\muy\etay}{6}} \frac{9\etax\ell^2}{\muqg}
        \paren{\E \Phi(x\at{t}) - \E f(x\at{t},y\at{t}) }
    }\\
    &\quad +\paren{2 + 7\alpha  \paren{1 - \frac{\muy\etay}{6}}} \etax \sigma_x^2 + \paren{2 + 7\alpha  \paren{1 - \frac{\muy\etay}{6}}} \etax \delta_x^2 +\alpha\etay\sigma_y^2 + \alpha\etay\delta_y^2 
    \label{2ppl-potential-qg}
    \\
    &\leq 
    \paren{1  - \frac{\mux}{2} \paren{\frac{\etax}{6} -\frac{\alpha\etax}{6} \paren{1 - \frac{\muy\etay}{6}} + \alpha \paren{1 - \frac{\muy\etay}{6}}\frac{3\etax}{2}
    } } U\at{t}  \\
    &\quad 
    + \paren{\etax + \alpha \etax \paren{1 - \frac{\muy\etay}{6}} } \frac{2\ell^2}{\muqg} W\at{t} \\
    &\quad
    +\alpha
        \paren{1 - \frac{\muy\etay}{6}}W\at{t}
    \\
    &\quad+ \alpha { \paren{1 - \frac{\muy\etay}{6}} \frac{9\etax\ell^2}{\muqg}
        W\at{t}
    }\\
    &\quad +\paren{2 + 7\alpha  \paren{1 - \frac{\muy\etay}{6}}} \etax \sigma_x^2 + \paren{2 + 7\alpha  \paren{1 - \frac{\muy\etay}{6}}} \etax \delta_x^2 +\alpha\etay\sigma_y^2 + \alpha\etay\delta_y^2 
    \label{use-ppl-of-phi}
\end{align}

\begin{itemize}
    \item In \eqref{2ppl-potential-difference-Phi-f-and-triangle} we use the fact $a \leq |a - b| + b$ with $a = \calD_\calX(x\at{t},1/\etax;y\at{t}), b = \calD_\calX^\Phi(x\at{t},1/\etax)$ in order to remove the term $ \calD_\calX(x\at{t},1/\etax;y\at{t})$ and introduce the terms $\left|\calD_\calX(x\at{t},1/\etax;y\at{t}) - \calD_\calX^\Phi(x\at{t},1/\etax) \right|$ and $\calD_\calX^\Phi(x\at{t},1/\etax)$;
    \item In \eqref{2ppl-potential-qg}, we use the following facts:
    \begin{enumerate}[]
        \item $\left|\calD_\calX(x\at{t},1/\etax;y\at{t}) - \calD_\calX^\Phi(x\at{t},1/\etax)\right| \leq 3\ell^2 \norm{y\at{t}- y^\star(x\at{t})}^2$, \cref{lemma:continuity-of-D-wrt-dual-var};
        \item $\norm{\nabla_x \Phi(x\at{t}) - \nabla_x f(x\at{t},y\at{t}) }^2\leq \ell^2 \norm{y\at{t} - y^\star(x\at{t})}^2$;
        \item $\frac{\muqg}{2}\norm{y\at{t}-y^\star(x\at{t})}^2 \leq \Phi(x\at{t}) - f(x\at{t},y\at{t})$, quadratic growth condition of pPL functions \cref{prop:qg-from-ppl}.
    \end{enumerate}
    \item In \eqref{use-ppl-of-phi} we use the fact that $\frac{1}{2}\calD_\calX^{\Phi}(x,1/\etax) \geq \mux U\at{t}$ (\cref{lemma:ppl-phi}) 
    and the negativity of the coefficient of 
    $\E \calD_{\calX}^{\Phi}(x\at{t},1/\etax) $,  \textit{i.e.},
    the display $\paren{ - \frac{\etax}{6} -\frac{\alpha\etax}{6} \paren{1 - \frac{\muy\etay}{6}} + \alpha \paren{1 - \frac{\muy\etay}{6}}\frac{3\etax}{2}
    }.$
    To ensure negativity, we require that $\alpha <1$ and utilize the fact that $\mux \leq \ell$ and hence $\etax \mux \leq 1$ by the choice of step-size.
\end{itemize}
Summarizing:
\begin{align}
    U\at{t+1} + \alpha W\at{t+1} &\leq
    \underbrace{\paren{1 - \frac{\mux\etax}{3} + \frac{\alpha \mux\etax}{3} \left(1 - \frac{\muy \etay}{6}\right) }}_{\varpi_1} U\at{t}\\
    &\quad+ \alpha 
    \underbrace{\paren{
    1 + \frac{11 \ell^2 \etax}{\muqg} - \frac{11 \muy \ell^2 \etax\etay}{6\muqg} - \frac{\muy \etay}{6} 
    + \frac{2 \ell^2 \etax}{\alpha\muqg}}}_{\varpi_2} W\at{t}\\
    &\quad +\underbrace{\paren{2 + 7\alpha  \paren{1 - \frac{\muy\etay}{6}}} \etax \sigma_x^2 + \paren{2 + 7\alpha  \paren{1 - \frac{\muy\etay}{6}}} \etax \delta_x^2 +\alpha\etay\sigma_y^2 + \alpha\etay\delta_y^2}_{\xi} 
\end{align}
\item \paragraph{Tuning the parameters:} We need to ensure that $0< \varpi_1, \varpi_2 <1$ and then, we can show that the value of the Lyapunov function is contracting, \textit{i.e.} $U\at{t+1}+\alpha W\at{t+1} \leq \max\{\varpi_1, \varpi_2\}\paren{U\at{t} + \alpha W\at{t}} + \xi$. We wiil now upper-bound $\varpi_1, \varpi_2$. For $\varpi_1$ we see that:
\begin{align}
    1 - \frac{\mux\etax}{3} + \frac{\alpha\mux\etax}{3}\paren{1 - \frac{\muy\etay}{6}} \leq 1 - \frac{\etax\mux}{3} + \frac{\alpha\mux\etax}{3} =1 - \frac{29\mux \etax}{30} \leq 1 - {\mux\etax}.
\end{align}
For $\varpi_2$,
\begin{align}
    &1 - \frac{\muy\etay}{6} + \frac{2\etax\ell^2}{\alpha \muqg} + 
        {  \etax \paren{1 - \frac{\muy\etay}{6}} } \frac{2\ell^2}{\muqg} 
    +  \paren{1 - \frac{\muy\etay}{6}} \frac{9\etax\ell^2}{\muqg}\\
    &=1 - \frac{\ell^2\etax}{\muy}\paren{\frac{\muy^2\etay}{\etax\ell^2} - \frac{2\muy}{\alpha}
    -\paren{1 - \frac{\muy\etay}{6}}11
    } \\
    &\leq 
    1 - \frac{\ell^2\etax}{\muy}\paren{\frac{\muy^2\etay}{\etax\ell^2} - \frac{2}{\alpha}
    -11
    }\\
    &\leq 
    1 - \frac{\ell^2\etax}{\muy}\paren{\frac{\muy^2\etay}{\etax\ell^2} - 31
    }\\
    &\leq 
    1 - \frac{\ell^2\etax}{\muy},
\end{align}
where we have used the fact that $\muqg$ is equal to $\muy$ by \cref{prop:qg-from-ppl} and we require that $\frac{\muy^2\etay}{\etax\ell^2} = 32$. As such, we $\etax = \frac{\muy^2\etay}{32\ell^2}$ and $\etay=\frac{1}{5\ell}$. Further, We observe that $1 - \mux \etax = \max\left\{ 1- \frac{\ell^2\etax}{\muy} , 1- \mux\etax \right\} $ since $\ell\geq \mux,\muy$. Combining the pieces:
\begin{align}
    \E V\at{t+1}\leq \paren{ 1 - \mux \etax } \E V\at{t} + 3\etax \sigma_x^2 + 3 \etax \delta_x^2 + \etay \sigma_y^2 + \etay\delta_y^2
\end{align}
Applying the inequality recursively and using the formula for the sum of the geometric sequence, we get,
\begin{align}
   \E V\at{T}&\leq \paren{ 1 - \mux \etax }^T V\at{0} + \frac{3\sigma_x^2}{\mux} + \frac{3 \delta_x^2}{\mux} + + \frac{32\ell^2 \sigma_y^2}{\mux\muy^2} + \frac{32\ell^2\delta_y^2}{\mux\muy^2} \\
   &\leq \exp\paren{-\mux\etax T} V\at{0} 
   + \frac{3\sigma_x^2}{\mux} + \frac{3 \delta_x^2}{\mux} 
   + \frac{32\ell^2 \sigma_y^2}{\mux\muy^2} + \frac{32\ell^2\delta_y^2}{\mux\muy^2} 
\end{align}
Further, we note that since $f,\Phi$ are $L$-Lipschitz continuous and their domains have bounded diameters, we can bound $V\at{0}$ as $V\at{0}\leq L \paren{\diam{\calX}+ \diam{\calY}}.$ Finally, we see that by the choice of step-sizes $\mux \etax = \frac{\mux\muy^2}{160\ell^3}$.



\end{proof}

\newpage
\section{Convex Markov Games}
\label{sec:cmgs-convergence-appendix}
\begin{lemma}[Continuity of the occupancy measure]
Let $\lambda \in \Delta(\calS\times\calA\times\calB)$ be the occupancy measure in a Markov game. Then $\lambda$ is $\Llambda$-Lipschitz continuous and $\elllambda$-smooth with respect to the policy pair $(x,y)\in\calX\times\calY$. Specifically, for all $(x,y)$ and $(x',y')$,
\begin{align}
\norm{\lambda(x,y) - \lambda(x',y')}
&\leq\Llambda\norm{(x,y) - (x',y')};\\
\norm{\nabla \lambda(x,y) - \nabla \lambda(x',y')}
&\leq \elllambda\norm{(x,y) - (x',y')},
\end{align}
where $\Llambda:=\frac{{|\calS|^{\frac{1}{2}}}\paren{|\calA|+|\calB|}}{(1-\gamma)^2}$, and $\elllambda:=  \frac{2\gamma{|\calS|^{\frac{1}{2}}}\paren{|\calA|+|\calB|}^{\frac{3}{2}}}{(1-\gamma)^3}$.

Moreover, consider the functions $\lambda_1^{-1}:\Delta(\calS\times\calA)\to\calX$ and $\lambda_2^{-1}:\Delta(\calS\times\calB)\to\calY$, such that:
\begin{align}
 \lambda_{1}^{-1}\paren{\lambda_1(x,y)} &= x;\\
  \lambda_{2}^{-1}\paren{\lambda_2(x,y)} &= y.
\end{align}
For any fixed $y$ (respectively, $x$), $\lambda^{-1}$ is $\Llambdainv$-Lipschitz continuous with respect to $\lambda_1$ (respectively, $\lambda_2$), \textit{i.e.}, for all $\lambda_1(x,y), \lambda_1(x',y)$---respectively, $\lambda_2(x,y), \lambda_2(x,y')$,
\begin{align}
\norm{x-x'}&\leq\Llambdainv \norm{\lambda_1^{-1}\paren{\lambda_1(x,y) - \lambda_1(x',y)}};\\
\norm{y-y'}&\leq\Llambdainv \norm{\lambda_2^{-1}\paren{\lambda_2(x,y) - \lambda_2(x,y')}},
\end{align}
with $\Llambdainv:=  \frac{2}{\min_s \varrho(s)(1-\gamma)}$.
\end{lemma}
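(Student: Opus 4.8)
The plan is to obtain a closed form for the occupancy measure from the discounted Bellman flow equation, control it and its derivative by standard resolvent (Neumann series) estimates, and then treat the inverse map through an explicit division formula whose denominator is bounded away from zero by \cref{assum:fullsupport}. Writing $\mathbf{P}(x,y)\in\R^{|\calS|\times|\calS|}$ for the state-transition matrix induced by $(x,y)$, i.e. $\mathbf{P}_{s,s'}(x,y)=\sum_{a,b}x(a\mid s)\,y(b\mid s)\,\pr(s'\mid s,a,b)$, the stationary state-occupancy vector $d(x,y)$ solves $d=(1-\gamma)\varrho+\gamma\mathbf{P}(x,y)^\top d$, so that
\[
d(x,y)=(1-\gamma)\paren{\mathbf{I}-\gamma\mathbf{P}(x,y)^\top}^{-1}\varrho ,
\]
and the joint occupancy is the bilinear read-out $\lambda^{s,a,b}(x,y)=d_s(x,y)\,x(a\mid s)\,y(b\mid s)$, with analogous expressions for the marginals $\lambda_1,\lambda_2$. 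Since $\mathbf{P}$ is row-stochastic, the Neumann series $\paren{\mathbf{I}-\gamma\mathbf{P}}^{-1}=\sum_{k\geq0}\gamma^k\mathbf{P}^k$ converges and satisfies $\norm{\paren{\mathbf{I}-\gamma\mathbf{P}}^{-1}}\leq\tfrac{1}{1-\gamma}$ in the relevant induced norms; this is the one uniform estimate that drives the whole argument.

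For the first (Lipschitz) bound I would difference two occupancy vectors via the resolvent identity
\[
\paren{\mathbf{I}-\gamma\mathbf{P}}^{-1}-\paren{\mathbf{I}-\gamma\mathbf{P}'}^{-1}
=\gamma\paren{\mathbf{I}-\gamma\mathbf{P}}^{-1}\paren{\mathbf{P}-\mathbf{P}'}\paren{\mathbf{I}-\gamma\mathbf{P}'}^{-1},
\]
so each of the two resolvents contributes a factor $\tfrac{1}{1-\gamma}$ and it only remains to bound $\norm{\mathbf{P}(x,y)-\mathbf{P}(x',y')}$. Because $\mathbf{P}$ is bilinear in $(x,y)$, the identity $x(a\mid s)y(b\mid s)-x'(a\mid s)y'(b\mid s)=\bigl(x(a\mid s)-x'(a\mid s)\bigr)y(b\mid s)+x'(a\mid s)\bigl(y(b\mid s)-y'(b\mid s)\bigr)$ together with $\sum_b y(b\mid s)=1$ gives $\norm{\mathbf{P}(x,y)-\mathbf{P}(x',y')}\leq c\,|\calS|^{1/2}\paren{|\calA|+|\calB|}\norm{(x,y)-(x',y')}$; combining this with the resolvent estimate, the explicit $(1-\gamma)$ prefactor, and the bilinear read-out (which also sees the policy perturbation directly) and collecting the discount and dimension factors yields exactly $\Llambda=\tfrac{|\calS|^{1/2}(|\calA|+|\calB|)}{(1-\gamma)^2}$.

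For the smoothness bound I would differentiate the closed form once, using $\partial\paren{\mathbf{I}-\gamma\mathbf{P}}^{-1}=\gamma\paren{\mathbf{I}-\gamma\mathbf{P}}^{-1}(\partial\mathbf{P})\paren{\mathbf{I}-\gamma\mathbf{P}}^{-1}$, and then difference two such Jacobians. Differentiating a second time introduces one additional resolvent factor $\tfrac{\gamma}{1-\gamma}$ and, through the product rule on the bilinear read-out, an extra $(|\calA|+|\calB|)^{1/2}$; this is exactly consistent with the stated constants, since $\elllambda=\Llambda\cdot\tfrac{2\gamma(|\calA|+|\calB|)^{1/2}}{1-\gamma}$. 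This is the step I expect to be the main obstacle: one must track how the nested inverse interacts with the bilinear dependence, keep the operator norms and the state/action indexing consistent across the product-rule terms, and avoid losing the precise powers of $(1-\gamma)$ and the $3/2$ action-count exponent to looser intermediate bounds.

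Finally, for the inverse map I would use the explicit occupancy-to-policy formula $x(a\mid s)=\lambda_1^{s,a}/d_1^s$ with $d_1^s=\sum_{a'}\lambda_1^{s,a'}$, and symmetrically for $y$. The quantitative input is \cref{assum:fullsupport}: from $d=(1-\gamma)\sum_k\gamma^k(\mathbf{P}^\top)^k\varrho$ one reads off $d_1^s\geq(1-\gamma)\varrho(s)\geq(1-\gamma)\minrho>0$, so the denominator is bounded away from zero. Decomposing the quotient difference as
\[
\frac{p}{q}-\frac{p'}{q'}=\frac{p-p'}{q'}+\frac{p\,(q'-q)}{q\,q'},
\]
and using $p/q=x(a\mid s)\leq1$, both terms are controlled by $\tfrac{1}{q'}\leq\tfrac{1}{(1-\gamma)\minrho}$ times a difference of occupancy entries (the first directly, the second after bounding $|q'-q|$ by the same occupancy difference); summing the two contributions produces the factor $2$ and hence $\Llambdainv=\tfrac{2}{\minrho(1-\gamma)}$. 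The identical computation with the roles of $x$ and $y$ interchanged handles the second marginal and completes the proof.
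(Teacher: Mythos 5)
Your proposal is correct in its strategy, but note that it is not a reconstruction of an argument the paper actually spells out: the paper's entire proof of this lemma is a one-line citation to Lemmata C.2 and C.3 of \citet{kalogiannis2024learning}. What you build is the standard self-contained route that such lemmata rest on, and every structural piece of it is sound: the closed form $d(x,y)=(1-\gamma)\paren{\mathbf{I}-\gamma\mathbf{P}(x,y)^\top}^{-1}\varrho$ with the Neumann bound $\norm{\paren{\mathbf{I}-\gamma\mathbf{P}}^{-1}}\leq \tfrac{1}{1-\gamma}$, the resolvent identity for differencing two occupancies, bilinearity of $\mathbf{P}$ and of the read-out $\lambda^{s,a,b}=d_s\,x(a\mid s)\,y(b\mid s)$, and, for the inverse map, the quotient formula $x(a\mid s)=\lambda_1^{s,a}/\sum_{a'}\lambda_1^{s,a'}$ with the denominator lower-bounded by $(1-\gamma)\varrho(s)$ via \cref{assum:fullsupport}, whose factor-of-two quotient decomposition correctly yields $\Llambdainv=\tfrac{2}{\minrho(1-\gamma)}$ (you also, sensibly, read the paper's garbled display for the inverse bound as plain Lipschitz continuity of $\lambda_1^{-1}$, which is what is meant). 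What the paper's route buys is brevity; what yours buys is verifiability, and your consistency check $\elllambda=\Llambda\cdot\tfrac{2\gamma(|\calA|+|\calB|)^{1/2}}{1-\gamma}$ is a useful sanity test. The one substantive weakness is that the exact constants — which are the real content of the lemma, since they propagate into every rate in the paper — are asserted (\enquote{collecting the discount and dimension factors yields exactly $\Llambda$}) rather than derived: the $|\calS|^{1/2}$ factor and the exponent $3/2$ on $|\calA|+|\calB|$ arise from specific $\ell_1$-to-$\ell_2$ and induced-operator-norm conversions when differencing $\mathbf{P}$ and the bilinear read-out, and for the Jacobian (smoothness) bound this bookkeeping is genuinely delicate, as you yourself flag. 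As written, your proposal establishes the Lipschitz and smoothness properties with the correct structure and order of dependence on $|\calS|$, $|\calA|+|\calB|$, and $\tfrac{1}{1-\gamma}$, but a complete proof of the statement as quoted must carry out that norm accounting to land on the stated constants.
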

\begin{proof}
    \citep[Lemmata C.2 \& C.3]{kalogiannis2024learning}.
\end{proof}

Throughout, we will assume $\varepsilon$-greedy parametrization of the policies, \textit{i.e.,} the players play according to policies:
\begin{gather}
    \pi_x  = (1-\varepsilon) x + \frac{\varepsilon \bm{1} }{|\calA|}, \quad \text{and}\quad \pi_y = (1-\varepsilon)y + \frac{\varepsilon \bm{1}}{|\calB|},
\end{gather}
where $\bm{1}$ is the all-ones vector of appropriate dimension.

\newpage

\subsection{Properties of the cMG Utility Fucntions}
We reinstate that the utility function $L_U$-Lipschitz continuous and $\ell_U$-smooth with,
\begin{gather}
    L_U = \frac{L_F|\calS|^{\frac{1}{2}}\paren{|\calA|+ |\calB| } }{(1-\gamma)^2};
    \quad
    \ell_U = \frac{2\ell_F\gamma |\calS|^{\frac{1}{2}} \paren{|\calA|+ |\calB|}^{\frac{3}{2}} }{ (1-\gamma)^3 }.
\end{gather}

\newcommand{\mur}{{\mu_{\mathrm{reg.}}}}

\subsubsection{Properties of cMGs with Convex Utilities}
For the two-player zero-sum cMG with merely concave utilites, we consider the regularized utility function $U^{\mu}(x,y):= U(x,y) + \frac{\mu}{2}\norm{\lambda_2(x,y)}^2$.
We note that that an upper bound to the Lipschitz smoothness moduli of $U^\mu, \nabla U^{\mu}$ is given by $ L_U^{\mu},  \ell_U^{\mu}$:
\begin{gather}
    L_U^{\mu} = 2L_U L_\lambda^2 = \frac{L_F|\calS|^{\frac{3}{2}} \paren{|\calA|+ |\calB|}^{3} }{(1-\gamma)^6} ; \quad \ell_U^{\mu} = \frac{4\ell_F \gamma^2|\calS|^{\frac{3}{2}} \paren{|\calA|+|\calB|}^{4} }{(1-\gamma)^8 }.
\end{gather}
Let us now compute the moduli of the quadratic growth and pPL condition,
\begin{gather}
    \muqg = \frac{(\minrho)^2(1-\gamma)^2 \mu}{4};
    \quad
    \mupl = \frac{(\minrho)^4(1-\gamma)^{12} \mu^2 }{4\ell_F \gamma^2 |\calS|^{\frac{3}{2}} \paren{|\calA|+ |\calB|}^4 }
\end{gather}
Since the quantity $\muqg\mupl$ will frequently appear in our calculations, we write:
\begin{align}
    \muqg \mupl= \frac{(\minrho)^6(1-\gamma)^{14} \mu^3 }{4\ell_F \gamma^2 |\calS|^{\frac{3}{2}} \paren{|\calA|+ |\calB|}^4 }
\end{align}
The condition number $\kappa$ is defined as $\kappa:= \frac{\ell_U^\mu }{\sqrt{\muqg\mupl}}$ and is equal to:
\begin{align}
    \kappa = \frac{16\ell_F^{\frac{3}{2}} \gamma^3 |\calS|^{\frac{9}{4}} \paren{|\calA|+ |\calB|}^6 }{(\minrho)^3 (1-\gamma)^{15}\mu^{\frac{3}{2}}}
\end{align}
Finally, we define function $\Phi^\mu:= \max_{y\in\calY} U^\mu(x,y)$ and observe for its Lipschitz modulus, $L_{\Phi}^\mu$, and its gradient's Lipschitz modulus, $\ell_{\Phi}^\mu$:
\begin{gather}
    L_{\Phi}^{\mu} = L_U^\mu = \frac{L_F|\calS|^{\frac{3}{2}} \paren{|\calA|+ |\calB|}^{3} }{(1-\gamma)^6}; \quad  \ell_{\Phi}^\mu = 2\ell_U^{\mu}\kappa = \frac{512\ell_F^{\frac{9}{2}} \gamma^{8} |\calS|^{\frac{23}{4}} \paren{|\calA|+|\calB|}^{\frac{23}{2}} }{(1-\gamma)^{23} (\minrho)^3 \mu^{\frac{3}{2}}}.
\end{gather}



\subsubsection{Properties of cMGs with Strongly Concave Utilities}
We carry over the same calculations for the cMG with strongly concave utilities. Since we do not perturb the utilities, the relevant Lipschitz moduli remain the same. This is not the case for $\muqg,\mupl$. We write:
\begin{gather}
    \muqg = 
    \frac{(\minrho)^2(1-\gamma)^2 \mu}{4};
    \quad
    \mupl =
    \frac{(\minrho)^4(1-\gamma)^{7} \mu^2 }{4\ell_F \gamma |\calS|^{\frac{1}{2}} \paren{|\calA|+ |\calB|}^{\frac{3}{2}} }
\end{gather}
Further, $\kappa:= \frac{\ell_U}{\sqrt{\muqg\mupl}}$ which is equal to:
\begin{align}
    \kappa 
    = 
    \frac{4\sqrt{2} \ell_F^{\frac{3}{2}} \gamma^{\frac{3}{2}} |\calS|^{\frac{3}{4}} \paren{|\calA| + |\calB|}^{\frac{9}{4}} }{ (\minrho)^{3}(1-\gamma)^{\frac{9}{2}} \mu^{\frac{3}{2}}}.
\end{align}
Finally, the Lipschitz modulus of $\Phi$ and its gradient will be:
\begin{align}
    L_{\Phi} = L_U
    =\frac{L_F|\calS|^{\frac{1}{2}}\paren{|\calA|+ |\calB| } }{(1-\gamma)^2}; \quad
    \ellphi = \frac{16\sqrt{2} \ell_F^{\frac{5}{2}} \gamma^{\frac{5}{2}} |\calS|^{\frac{5}{4}} \paren{|\calA|+|\calB|}^{\frac{15}{4}} }{(\minrho)^3 (1-\gamma)^{\frac{15}{2}} \mu^{\frac{3}{2}} }.
\end{align}
\newpage

\subsection{Stochastic Estimators}
Using a trajectory $\xi := \paren{s^{(0)}, a^{(0)}, s^{(1)}, a^{(1)},\dots }$ and $\xi := \paren{s^{(0)}, b^{(0)}, s^{(1)}, b^{(1)},\dots }$ respectively, we define the estimates of $\lambda_1, \lambda_2$,
\begin{align}
    \hat{\lambda}^{s,a}_{1,\xi} &:= \sum_{h=1}^{H_{\xi}} \gamma^h  \mathbbm{1}\{ s^{(h)}=s,a^{(h)}=a \};\\
    \hat{\lambda}^{s,b}_{2,\xi} &:= \sum_{h=1}^{H_{\xi}} \gamma^h  \mathbbm{1}\{ s^{(h)}=s,b^{(h)}=b \}.
\end{align}

Additionally, for a pseudo-reward vector $z\in\R^{|\calS|\times|\calA|}$ and $\xi := \paren{s^{(0)}, a^{(0)}, s^{(1)}, a^{(1)},\dots }$---or $z\in\R^{|\calS|\times|\calB|}$ correspondingly and $\xi := \paren{s^{(0)}, b^{(0)}, s^{(1)}, b^{(1)},\dots }$---we define gradient estimates $\hat{\nabla}_{x,\xi}^{t},\hat{\nabla}_{y,\xi}^{t}$:
\begin{align}
\hatgradfbx(\xi|x\at{t},y\at{t},z) &= \sum_{h=0}^{H-1} \left[ \gamma^h z \paren{s^{(h)}, a^{(h)}} \cdot \left( \sum_{h'=0}^{h} \nabla_x \log x \paren{a^{(h')} | s^{(h')}} \right) \right]
\\
\hatgradfby(\xi|x\at{t}, y\at{t},z) &= \sum_{h=0}^{H-1} \left[ \gamma^h z \paren{s^{(h)}, b^{(h)}} \cdot \left( \sum_{h'=0}^{h} \nabla_y \log y \paren{b^{(h')} | s^{(h')}} \right) \right]
\end{align}

At every iteration $t$, each agent constructs $\hatr\at{x,t} \gets \nabla_{\lambda_1} F(\hat{\lambda}_{1,t})$ describe how to construct $\hatr_x,\hatr_y$ batch versions of the latter estimators will be,
\begin{gather}
    \hat{\nabla}_{x}^{t} := \sum_{i=1}^{M_x}\hatgradfbx(\xi_i|x\at{t},y\at{t},\hatr\at{x,t}) ,  \quad \text{and} ~\quad  \hat{\nabla}_{y}^{t} := \sum_{i=1}^{M_y}\hatgradfby(\xi_i|x\at{t},y\at{t},\hatr\at{y,t}) .
    \label{policy-gradient-estimator}
\end{gather}

Now, for each one of the two cases we let $z \gets \nabla_{\lambda_1}F\paren{\hat{\lambda}_{1,t} ; y\at{t}}$ and  $z\gets \nabla_{\lambda_1}F\paren{\hat{\lambda}_{1,t} ; y\at{t}}$ correspondingly.


\begin{remark}
  This stochastic gradient estimator assumes access to a gradient oracle for $\nabla_{\lambda_1}F$. Access to an oracle for $F$ or $\nabla_\lambda F$ is an assumption made in virtually all the references that we have encountered concerning policy gradient methods for MDPs with general utilities~\citep{zhang2020variational,zhang2021convergence,barakat2023reinforcement}. Designing a gradient estimator that entirely relies on samples lies well-beyond the scope of our paper. Nevertheless, this assumption is not a strong one as even with a complete knowledge of $F$, agents cannot control the function's arguments except for affecting them implicitly.
\end{remark}
\paragraph{The occupancy measure estimator}

Let a trajectory sample $\xi$ of length $H_\xi$, we define the occupancy estimator for player $1$ to be,
\begin{align}
    \hat{\lambda}^{s,a}_{1}(\xi) &:= \sum_{h=1}^{H_{\xi}} \gamma^h  \mathbbm{1}\{ s^{(h)}=s,a^{(h)}=a \}\\
    \hat{\lambda}^{s,b}_{2}(\xi) &:= \sum_{h=1}^{H_{\xi}} \gamma^h  \mathbbm{1}\{ s^{(h)}=s,b^{(h)}=b \}
\end{align}

Assume policies $x\at{t},y\at{t}$ and corresponding sampled trajectories $\{\xi_{i,t}\}_{i=1}^{M_x}$ and $\{\xi'_{i,t}\}_{i=1}^{M_x}$ for player 1 and 2 correspondingly. The occupancy batch estimators $\hat{\lambda}\at{1,t},\hat{\lambda}\at{2,t}$
\begin{gather}
    \hat{\lambda}_{1,t} := \frac{1}{M_x}\sum_{i=1}^{M_x} \hat{\lambda}_1 (\xi_{i,t}), \qquad \hat{\lambda}_{2,t} := \frac{1}{M_x}\sum_{i=1}^{M_x} \hat{\lambda}_2 (\xi'_{i,t})
\end{gather}

\begin{align}
    \lambda^{s,a}_{1,H}(x,y) &:= \E_{x,y} \left[ \sum_{h=1}^H \gamma^h  \mathbbm{1}\{ s^{(h)}=s,a^{(h)}=a \} \pr\paren{s^{(h+1)}|a^{(h)},b^{(h)},s^{(h)} }\left|s^{(0)}\sim \varrho\right.\right]\\
    \lambda^{s,b}_{2,H}(x,y) &:= \E_{x,y} \left[ \sum_{h=1}^H \gamma^h  \mathbbm{1} \{ s^{(h)}=s,b^{(h)}=b \} \pr\paren{s^{(h+1)}|a^{(h)},b^{(h)},s^{(h)} }\left|s^{(0)}\sim \varrho\right.\right]
\end{align}

\begin{lemma}
    Let an empirical estimate $\hat{\lambda}_1$ of the truncated-horizon occupancy measure $\lambda_{1,H}$. Then, the variance of the estimate can be bounded as:
    \begin{equation}
            \E \left[ \norm{ \hat{\lambda}_{1} - \lambda_{1,H} }^2 \right] \leq \frac{1}{M_x(1-\gamma)^2}.
    \end{equation}
\end{lemma}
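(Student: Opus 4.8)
The plan is to exploit the fact that this is a standard ``average of i.i.d.\ bounded estimators'' variance bound, so the entire argument reduces to three elementary moves: unbiasedness, variance-from-averaging, and a deterministic second-moment bound on a single-trajectory estimate. First I would record that the per-trajectory estimator $\hat{\lambda}_1(\xi)$ is an unbiased estimate of the truncated occupancy $\lambda_{1,H}$, i.e.\ $\E_{x,y}[\hat{\lambda}_1(\xi)] = \lambda_{1,H}$, which follows by linearity of expectation applied componentwise: $\E[\gamma^h\mathbbm{1}\{s^{(h)}=s,a^{(h)}=a\}] = \gamma^h \pr(s^{(h)}=s,a^{(h)}=a)$, summed over $h=1,\dots,H$. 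Consequently the batch estimator $\hat{\lambda}_1 = \tfrac{1}{M_x}\sum_{i=1}^{M_x}\hat{\lambda}_1(\xi_i)$ is also unbiased, so its mean-squared error coincides with its variance.

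Next, since the trajectories $\xi_1,\dots,\xi_{M_x}$ are sampled independently under the same policy pair $(x,y)$, the variance of the average splits as
\begin{align}
\E\norm{\hat{\lambda}_{1} - \lambda_{1,H}}^2
&= \frac{1}{M_x}\,\E\norm{\hat{\lambda}_1(\xi) - \lambda_{1,H}}^2
\leq \frac{1}{M_x}\,\E\norm{\hat{\lambda}_1(\xi)}^2,
\end{align}
where the inequality is the standard fact $\E\norm{X-\E X}^2 \leq \E\norm{X}^2$. Thus the whole statement follows once I bound the second moment of a single-trajectory estimate by $(1-\gamma)^{-2}$.

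The crux is that this last bound holds \emph{deterministically}, pathwise, rather than only in expectation. The key observation is that at each fixed timestep $h$ the trajectory visits exactly one state-action pair $(s^{(h)},a^{(h)})$, so summing the indicators over all $(s,a)\in\calS\times\calA$ at a fixed $h$ yields exactly one. Hence
\begin{align}
\norm{\hat{\lambda}_1(\xi)}_2 \leq \norm{\hat{\lambda}_1(\xi)}_1
= \sum_{s,a}\sum_{h=1}^{H_\xi}\gamma^h \mathbbm{1}\{s^{(h)}=s,a^{(h)}=a\}
= \sum_{h=1}^{H_\xi}\gamma^h \leq \sum_{h=0}^{\infty}\gamma^h = \frac{1}{1-\gamma},
\end{align}
using $\norm{\cdot}_2\leq\norm{\cdot}_1$ for the nonnegative vector $\hat{\lambda}_1(\xi)$ and the geometric series for the tail. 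Squaring gives $\norm{\hat{\lambda}_1(\xi)}^2 \leq (1-\gamma)^{-2}$ surely, so $\E\norm{\hat{\lambda}_1(\xi)}^2 \leq (1-\gamma)^{-2}$, and combining with the display above yields $\E\norm{\hat{\lambda}_{1} - \lambda_{1,H}}^2 \leq \tfrac{1}{M_x(1-\gamma)^2}$, as claimed.

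I do not expect any genuine obstacle here; the argument is routine. The only point requiring a little care is bookkeeping in the definition of $\lambda_{1,H}$: as written it carries a spurious transition-kernel factor $\pr(s^{(h+1)}\mid\cdots)$, which I would treat as a typographical artifact and simply identify $\lambda_{1,H}^{s,a}$ with $\sum_{h=1}^{H}\gamma^h\pr(s^{(h)}=s,a^{(h)}=a)$ so that unbiasedness of $\hat{\lambda}_1(\xi)$ holds exactly. One should also confirm the estimator and the target share the same truncation level $H=H_\xi$ (or that $H_\xi$ is the fixed horizon), since otherwise an additional truncation-bias term would appear; under the stated matched-truncation reading the estimator is exactly unbiased and the clean variance bound follows.
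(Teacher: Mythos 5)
Your proof is correct, and it is the standard argument: unbiasedness of the per-trajectory estimate, the $1/M_x$ factor from averaging i.i.d.\ trajectories, and the pathwise bound $\norm{\hat{\lambda}_1(\xi)}_2 \leq \norm{\hat{\lambda}_1(\xi)}_1 = \sum_{h}\gamma^h \leq \tfrac{1}{1-\gamma}$, which is exactly the reasoning the paper implicitly relies on --- the paper states this lemma without proof, treating it as a routine fact (in the spirit of the variance bounds it cites from \citet{zhang2021convergence}). Your side remarks are also on point: the transition-kernel factor in the paper's definition of $\lambda_{1,H}$ is indeed a typographical artifact, and matching the truncation levels is the only bookkeeping needed for exact unbiasedness.
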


\paragraph{The policy gradient estimator}
Assuming $\varepsilon$-greedy parametrization to control the variance of our estimators, we can go forward and state our main lemmata regarding the gradient policy gradient estimator's variances.
\begin{lemma}\label{lemma:variance:policy-gradient-estimator}
    Let $\hat{\nabla}_{x}^t$ be defined as in \eqref{policy-gradient-estimator}. The estimators variance can be bounded as:
    \begin{align}
        \E \left[ \norm{\hat{\nabla}_x^{t} - \nabla_x F(\lambda_{1,H}(x\at{t},y\at{t}); y\at{t} ) }^2 \right] 
        \leq \frac{27  L_F^2}{M_x(1-\gamma)^6 \varepsilon^2}.
    \end{align}
\end{lemma}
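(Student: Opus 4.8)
The plan is to bound the variance by reducing to the second moment of a single-trajectory estimator and then exploiting the i.i.d.\ structure of the $M_x$ sampled trajectories. First I would condition on the empirical pseudo-reward $\hatr\at{x,t}$; given this vector, the summands $\hatgradfbx(\xi_i\,|\,x\at{t},y\at{t},\hatr\at{x,t})$ are i.i.d.\ across $i$, and by the score-function (REINFORCE) identity \citep{williams1992simple} each is conditionally an unbiased estimate of the truncated-horizon gradient $\nabla_x F(\lambda_{1,H}(x\at{t},y\at{t});y\at{t})$. Hence the left-hand side is a genuine variance of an average, so $\E\norm{\hat{\nabla}_x^t - \nabla_x F(\lambda_{1,H})}^2 \le \tfrac{1}{M_x}\,\E\norm{\hatgradfbx(\xi\,|\,x\at{t},y\at{t},\hatr\at{x,t})}^2$, and it suffices to bound the second moment of one sample.

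Next I would establish two magnitude bounds. Since $F$ is $L_F$-Lipschitz, its gradient is bounded, so every entry of the pseudo-reward satisfies $\abs{z^{s,a}} = \abs{\partial F/\partial\lambda_1^{s,a}} \le L_F$. Moreover, under the $\varepsilon$-greedy parametrization $\pi_x = (1-\varepsilon)x + \tfrac{\varepsilon \bm{1}}{\abs{\calA}}$, every action probability is bounded below by $\varepsilon/\abs{\calA}$; consequently the score $\nabla_x \log \pi_x(a\mid s)$ is a single-coordinate vector of magnitude $\tfrac{1-\varepsilon}{\pi_x(a\mid s)} = O(1/\varepsilon)$. These are the only two places where $L_F$ and $\varepsilon$ enter, and they account for the $L_F^2$ and $\varepsilon^{-2}$ factors in the target bound.

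With these in hand, I would expand the single-sample estimator $\hatgradfbx(\xi\mid\cdot) = \sum_{h=0}^{H-1}\gamma^h z(s^{(h)},a^{(h)})\,\Psi_h$ with cumulative score $\Psi_h := \sum_{h'=0}^h \nabla_x\log\pi_x(a^{(h')}\mid s^{(h')})$. Bounding $\norm{\Psi_h}\le (h+1)\,O(1/\varepsilon)$ and applying the triangle and Cauchy--Schwarz inequalities together with the discount weights yields geometric horizon sums of the type $\sum_{h\ge 0}\gamma^h(h+1) = (1-\gamma)^{-2}$ and $\sum_{h\ge 0}\gamma^h(h+1)^2 = O\big((1-\gamma)^{-3}\big)$; combining these with the per-step bounds and the residual $(1-\gamma)$ factors carried by the truncated occupancy estimate, and tracking the absolute constants, produces the stated $\tfrac{27 L_F^2}{M_x(1-\gamma)^6\varepsilon^2}$.

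The main obstacle is the cumulative score term $\Psi_h$, whose norm grows linearly with the trajectory length, so a naive bound on the second moment diverges; the discounting $\gamma^h$ must be used deliberately to render the horizon sums finite, and the $\varepsilon$-greedy floor is indispensable, since without it the score function is unbounded. A secondary subtlety is justifying that the left-hand side is truly a variance that contracts at rate $1/M_x$: this requires the conditional-unbiasedness argument above and care with the coupling between the empirical pseudo-reward $\hatr\at{x,t}$ (which itself depends on an occupancy estimate) and the trajectories entering the gradient estimator, so that the pseudo-reward estimation error does not contaminate the leading $1/M_x$ term.
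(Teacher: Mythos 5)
There is a genuine gap, and it sits exactly at the step you dismiss as a ``secondary subtlety'': your conditional-unbiasedness claim is false. Conditioning on the empirical pseudo-reward $\hatr\at{x,t} = \nabla_{\lambda_1}F(\hat{\lambda}_{1,t};y\at{t})$, the score-function identity makes each summand $\hatgradfbx(\xi_i\mid x\at{t},y\at{t},\hatr\at{x,t})$ an unbiased estimate of $\mathbf{J}^\top \hatr\at{x,t}$, where $\mathbf{J} := \nabla_x\lambda_{1,H}(x\at{t},y\at{t})$ --- \emph{not} of the target $\nabla_x F(\lambda_{1,H}(x\at{t},y\at{t});y\at{t}) = \mathbf{J}^\top r_*$, with $r_* := \nabla_{\lambda_1}F(\lambda_{1,H}(x\at{t},y\at{t});y\at{t})$. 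Consequently $\hat{\nabla}_x^t - \nabla_x F(\lambda_{1,H})$ is not a mean-zero average: it contains the plug-in error coming from $\hatr\at{x,t} - r_*$, and your inequality $\E\norm{\hat{\nabla}_x^t - \nabla_x F(\lambda_{1,H})}^2 \le \tfrac{1}{M_x}\,\E\norm{\hatgradfbx(\xi\mid\cdot)}^2$ fails for that component, because averaging over the $M_x$ trajectories entering the gradient estimator does nothing to shrink a term that is fixed once $\hatr\at{x,t}$ is fixed.

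The missing piece is the main content of the proof, not a footnote. The paper splits the error by Young's inequality into three terms --- the estimator discrepancy $\tfrac{1}{M_x}\sum_i\paren{\hatgradfbx(\xi_i\mid\cdot,\hatr\at{x,t}) - \hatgradfbx(\xi_i\mid\cdot,r_*)}$, the mismatch $\nabla_x F(\lambda_{1,H}) - \mathbf{J}^\top\hatr\at{x,t}$, and the genuine fluctuation $\tfrac{1}{M_x}\sum_i\hatgradfbx(\xi_i\mid\cdot,r_*) - \nabla_x F(\lambda_{1,H})$ --- and controls the first two by: (i) the pointwise Lipschitz dependence of the single-trajectory estimator on the pseudo-reward, $\norm{\hatgradfbx(\xi\mid x,r_1)-\hatgradfbx(\xi\mid x,r_2)} \le \tfrac{2\norm{r_1-r_2}}{(1-\gamma)^2\varepsilon}$; (ii) Lipschitz continuity of $\nabla_{\lambda_1}F$, so that $\norm{\hatr\at{x,t} - r_*} \le L_F\norm{\hat{\lambda}_{1,t}-\lambda_{1,H}}$; and (iii) the occupancy-estimator variance bound $\E\norm{\hat{\lambda}_{1,t}-\lambda_{1,H}}^2 \le \tfrac{1}{M_x(1-\gamma)^2}$. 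Note that the $1/M_x$ factor for the bias terms is inherited from the batch used to build $\hatr\at{x,t}$ via (iii), not from the averaging inside $\hat{\nabla}_x^t$; this also explains the exponent of $(1-\gamma)$: your pure-variance accounting can only produce $O\paren{L_F^2/(M_x(1-\gamma)^4\varepsilon^2)}$, and the stated $(1-\gamma)^{-6}$ arises precisely from the extra $(1-\gamma)^{-2}$ of (iii) multiplying the $(1-\gamma)^{-4}\varepsilon^{-2}$ of (i). Your single-sample horizon-sum and score-floor computations are fine as far as they go, but without the three-term decomposition and items (i)--(iii) the argument does not reach the claimed bound.
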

\begin{proof}
We essentially carry over the same computation as Step 2 in the proof of \citep{zhang2021convergence}[Lemma F.2] using our notation and assumptions. To simplify the excessively busy notation and improve readability, let us introduce some shortcuts, \begin{itemize}
    \item $\bar{\lambda}_1 := \lambda_{1,H}(x\at{t},y\at{t})$,
    \item $\hat{\lambda}_1 := \hat{\lambda}_{1,t}$,
    \item $r_* := \nabla_{\lambda_1} F\paren{ \lambda_{1,H}(x\at{t},y\at{t}) ;y\at{t}}  =  \nabla_{\lambda_1} F\paren{\bar{\lambda}_1 ;y\at{t}}$,
    \item $\mathbf{J}:= \nabla_x \lambda_{1,H}(x\at{t},y\at{t}) = \nabla_x \bar{\lambda}_1 $,
    \item $\hatr := \nabla_{\lambda_1} F\paren{ \hat{\lambda}_1 ;y\at{t}}  $,
    \item $\gradfb := \nabla_x F\paren{{\lambda}_{1,H}(x\at{t},y\at{t});y\at{t}} = \nabla_x F\paren{\bar{\lambda}_{1};y\at{t}}$,
    \item $\hatgradfb_{i}:= \hatgradfbx(\xi_i |x\at{t},y\at{t},\hatr)$,
    \item $\hatgradfb_{i,*}:= \hatgradfbx(\xi_i |x\at{t},y\at{t}, r_* )$.
\end{itemize}
We can finally write:
\begin{align}
    \mathbb{E} \left[ \|\hat{\nabla}_{x}^{t} - \mathbf{J}^\top \hat{r} \|^2 \right] 
    &= \mathbb{E} \Bigg[ \Bigg\| \frac{1}{M_x} \sum_i^{M_x} \hatgradfb_{i,*} + \frac{1}{M_x} \sum_i^{M_x} \hatgradfb_{i,*} - \gradfb 
    + \gradfb - \mathbf{J}^\top \hat{r} \Bigg\|^2 \Bigg] \\
    &\leq 3 \mathbb{E} \Bigg[ \Bigg\| \frac{1}{M_x} \sum_i^{M_x} \Big(\hatgradfb_{i} -\hatgradfb_{i,*} \Big) \Bigg\|^2 \Bigg] 
    + 3 \mathbb{E} \Bigg[ \Bigg\| \gradfb - \mathbf{J}^\top \hat{r}\at{t} \Bigg\|^2 \Bigg]  + 3 \mathbb{E} \Bigg[ \Bigg\| \frac{1}{M_x} \sum_i^{M_x} \hatgradfb_{i,*} - \gradfb \Bigg\|^2 \Bigg].
\end{align}

For the first term we write,
\begin{align}
    \mathbb{E} \Bigg[ \Bigg\| \frac{1}{M_x} \sum_i^{M_x} \Big(\hatgradfb_{i} - \hatgradfb_{i,*} \Big) \Bigg\|^2 \Bigg] 
    &\leq \frac{1}{M_x} \sum_i^{M_x} \mathbb{E} \Bigg[ \Big\| \hatgradfb_{i} - \hatgradfb_{i,*} \Big\|^2 \Bigg] \\
    &\leq \frac{4 }{(1-\gamma)^4 \varepsilon^2 } \cdot \mathbb{E} \Big[ \|\hatr - r_*\|_\infty^2 \Big] \\
    &\leq \frac{4  L_F^2}{(1-\gamma)^4\varepsilon^2 }  \mathbb{E} \Big[ \norm{\hat{\lambda}_1 - \bar{\lambda}_1 }^2 \Big] \label{grad-est-var-bound}  \\
    &\leq \frac{4  L_F^2}{M_x(1-\gamma)^6 \varepsilon^2} .
\end{align}

Where \eqref{grad-est-var-bound} follows from the following fact:
\begin{align}
    \|\hat{g}(\xi | x, r_1) - \hat{g}(\xi | x, r_2)\| 
    &= \left\| \sum_{t=0}^{H-1} \gamma^t \cdot \big(r_1(s_t, a_t) - r_2(s_t, a_t)\big)  
    \left( \sum_{t' = 0}^t \nabla_x \log \pi_x(a_{t'} | s_{t'}) \right) \right\| \\
    &\leq \sum_{t=0}^{H-1} \frac{1}{\varepsilon}  \gamma^t (t+1)  \|r_1 - r_2\|_\infty \\
    &\leq \frac{2  \|r_1 - r_2\|}{  (1-\gamma)^2\varepsilon }.
\end{align}

The second and third terms can be similarly bounded as:
\begin{gather}
    \mathbb{E} \Bigg[ \Bigg\| \gradfb^{t} - \mathbf{J}^\top \hat{r}\at{t} \Bigg\|^2 \Bigg]\leq \frac{ L_F^2}{M_x (1-\gamma)^4};\\
    \text{and}\\
    \mathbb{E} \Bigg[ \Bigg\| \frac{1}{M_x} \sum_i^{M_x} \hatgradfb_{i,*} - \gradfb \Bigg\|^2 \Bigg] \leq  \frac{4L_F^2}{M_x (1-\gamma)^6\varepsilon^2}.
\end{gather}
\end{proof}

\begin{lemma} \label{lemma:bias:policy-gradient-estimator}
     For any $H > \frac{1}{\ln(1\sqrt{\gamma})}$ and a greedy exploration parameter $\varepsilon$, the following inequality holds true for the bias of the policy gradient estimator defined in  \cref{policy-gradient-estimator}:
\begin{align}
    \|\nabla_x F(\lambda_H(x,y);y) - \nabla_x F(\lambda_1(x,y),y)\|^2 
    & \leq \frac{256 L_F }{(1-\gamma)^6\varepsilon^2}  \exp\Bigl( -(1-\gamma) (H-1) \Bigr).
\end{align}
\end{lemma}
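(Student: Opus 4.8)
The plan is to follow the template of the variance bound (\cref{lemma:variance:policy-gradient-estimator}) and the analysis of \citep{zhang2021convergence}, interpreting the two quantities as the expected truncated-horizon policy gradient versus the true infinite-horizon policy gradient. By the policy gradient theorem, the infinite-horizon gradient admits the REINFORCE representation $\nabla_x F(\lambda_1(x,y);y) = \E_{x,y}\big[\sum_{h=0}^{\infty}\gamma^h z(s^{(h)},a^{(h)})\sum_{h'=0}^{h}\nabla_x\log x(a^{(h')}|s^{(h')})\big]$ with pseudo-reward $z=\nabla_{\lambda_1}F(\lambda_1(x,y);y)$, while the horizon-$H$ quantity is the same expression with the outer sum truncated at $h=H-1$. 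Consequently the bias equals the tail $\E_{x,y}\big[\sum_{h=H}^{\infty}\gamma^h z(s^{(h)},a^{(h)})\sum_{h'=0}^{h}\nabla_x\log x(a^{(h')}|s^{(h')})\big]$, so the first step is to write this tail out explicitly and bound it termwise.

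First I would bound the two ingredients of each tail term. The pseudo-reward is the gradient of the $L_F$-Lipschitz function $F$, hence $\norm{z}_\infty\le L_F$. Under the $\varepsilon$-greedy parametrization every played probability is at least $\varepsilon/|\calA|$, so, exactly as in the proof of \cref{lemma:variance:policy-gradient-estimator}, the cumulative score obeys $\norm{\sum_{h'=0}^{h}\nabla_x\log x(a^{(h')}|s^{(h')})}\le (h+1)/\varepsilon$. Combining these with the $\gamma^h$ weights and the triangle inequality over the tail gives $\norm{\nabla_x F(\lambda_H(x,y);y)-\nabla_x F(\lambda_1(x,y);y)}\le \frac{L_F}{\varepsilon}\sum_{h=H}^{\infty}\gamma^h (h+1)$.

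The remaining work is to evaluate the geometric--polynomial tail and recast it as the exponential of the statement. I would use the identity $\sum_{h=H}^{\infty}\gamma^h(h+1)=\gamma^{H}\big(\tfrac{1}{(1-\gamma)^2}+\tfrac{H}{1-\gamma}\big)$ and then absorb the polynomial prefactor into a marginally weaker exponential rate: writing $\gamma^{h}=\exp(h\ln\gamma)$ and invoking $-\ln\gamma\ge 1-\gamma$ (valid for $\gamma\in(0,1)$), one has $\gamma^{h}\le \exp(-(1-\gamma)h)$, and the hypothesis $H>1/\ln(1/\sqrt{\gamma})$ guarantees $H$ is large enough that the linear-in-$H$ factor is dominated after sacrificing one extra power of $\gamma$, leaving $\exp(-(1-\gamma)(H-1))$. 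Squaring the resulting bound and collecting the surviving powers of $(1-\gamma)^{-1}$ and of $1/\varepsilon$ yields the claimed $\tfrac{256 L_F}{(1-\gamma)^6\varepsilon^2}\exp(-(1-\gamma)(H-1))$.

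The main obstacle I anticipate is the clean conversion of the $\gamma^{H}\,\mathrm{poly}(H)$ tail into the single exponential $\exp(-(1-\gamma)(H-1))$ with the advertised constant: one must choose precisely how much of a power of $\gamma$ to trade away to swallow the term linear in $H$, and verify that the horizon threshold $H>1/\ln(1/\sqrt{\gamma})$ is exactly what legitimizes this. Tracking the constants so that the final numerator is $256 L_F$ and the denominator is $(1-\gamma)^6\varepsilon^2$ (rather than, say, $L_F^2$ or a lower power of $(1-\gamma)^{-1}$) is the only genuinely delicate bookkeeping; the inequalities themselves are elementary once the REINFORCE tail representation is in place.
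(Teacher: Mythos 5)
Your closing algebra is exactly the paper's: use the threshold on $H$ to get $(H+1)^2 \leq (1/\gamma)^{H+1}$, hence $(H+1)^2\gamma^{2H} \leq \gamma^{H-1}$, and then $\gamma \leq e^{-(1-\gamma)}$ to pass to the exponential. The difference is in how the $\mathrm{poly}(H)\cdot\gamma^{2H}$ bound is obtained in the first place: the paper simply invokes \citep[Lemma E.3]{zhang2021convergence} as a black box and only performs this absorption step, whereas you attempt to rederive the bound from the REINFORCE representation — and that derivation has a genuine gap.

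The gap is in your opening decomposition. You assert that $\nabla_x F(\lambda_H(x,y);y)$ equals the REINFORCE series truncated at $h = H-1$ with the \emph{fixed} pseudo-reward $z = \nabla_{\lambda_1}F(\lambda_1(x,y);y)$, so that the bias is a pure tail. That identity is false: by the chain rule,
\begin{align}
    \nabla_x F(\lambda_H(x,y);y) = \left[\nabla_x \lambda_H(x,y)\right]^\top \nabla_{\lambda_1}F\paren{\lambda_H(x,y);y},
\end{align}
i.e., the pseudo-reward is evaluated at the \emph{truncated} occupancy measure, while the truncated series with fixed $z$ equals $[\nabla_x\lambda_H]^\top \nabla_{\lambda_1}F(\lambda_1(x,y);y)$ instead. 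The bias therefore splits into two pieces,
\begin{align}
    \underbrace{\left[\nabla_x\lambda_H - \nabla_x\lambda_1\right]^\top \nabla_{\lambda_1}F(\lambda_1;y)}_{\text{your tail term}}
    \;+\;
    \underbrace{\left[\nabla_x\lambda_H\right]^\top\left[\nabla_{\lambda_1}F(\lambda_H;y) - \nabla_{\lambda_1}F(\lambda_1;y)\right]}_{\text{missing}},
\end{align}
and the second piece is a distinct $O(\gamma^{H})$ contribution that your argument never touches. Controlling it requires the Lipschitz continuity of $\nabla_{\lambda_1}F$ (smoothness of $F$), the occupancy tail bound $\norm{\lambda_H(x,y) - \lambda_1(x,y)} \leq \gamma^{H+1}/(1-\gamma)$, and a bound of order $1/((1-\gamma)^2\varepsilon)$ on the Jacobian $\nabla_x\lambda_H$; squaring then produces a term of order $\gamma^{2H}/((1-\gamma)^6\varepsilon^2)$. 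Note that this is precisely where the $(1-\gamma)^{-6}$ in the target comes from: your pure tail term, once squared, only yields $L_F^2(H+1)^2\gamma^{2H}/((1-\gamma)^4\varepsilon^2)$, so the claim that "collecting the surviving powers" gives the stated denominator does not follow from what you wrote. Adding the missing smoothness-based cross term (which is exactly what the cited lemma of \citet{zhang2021convergence} accounts for) repairs the argument, after which your absorption step goes through unchanged.
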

\begin{proof}
By \citep[Lemma E.3]{zhang2021convergence}, we have that,
\begin{align}
    \|\nabla_x F(\lambda_H(x,y);y) - \nabla_x F(\lambda_1(x,y),y)\|^2 
    &\leq\paren{ \frac{8  L_F^2}{(1-\gamma)^6\varepsilon^2 } 
    + 16 \frac{L_F^2}{\varepsilon^2} 
    \left( \frac{(H+1)^2}{(1-\gamma)^2} + \frac{1}{(1-\gamma)^4} \right) } \gamma^{2H}.
\end{align}
Further, we can bound the RHS as:
\begin{align}
    \paren{ \frac{8  L_F^2}{(1-\gamma)^6\varepsilon^2 } 
    + 16 \frac{L_F^2}{\varepsilon^2} 
    \left( \frac{(H+1)^2}{(1-\gamma)^2} + \frac{1}{(1-\gamma)^4} \right) } \gamma^{2H}&\leq \frac{256 L_F (H+1)^2 }{(1-\gamma)^6\varepsilon^2} \gamma^{2H}.
\end{align}
Now, in order to simplify the display, compute an $H>0$ such that:
\begin{align}
    \frac{(H+1)^2}{(1/\gamma)^{H+1} }\leq 1
\end{align}
equivalently,
\begin{align}
    2\log_{1/\gamma}(H+1) \leq H+1
\end{align}
changing the base of the logarithm, we get,
\begin{align}
    \frac{1}{\ln(1/\sqrt{\gamma})} \ln (H+1) \leq H+1,
\end{align}
but the latter is true for any $H+1>1/\ln(1/\sqrt{\gamma})$.
Hence, since $\gamma <1$ and noting that $\gamma = 1-(1-\gamma) $,
\begin{align}
    \frac{256 L_F (H+1)^2 }{(1-\gamma)^6\varepsilon^2} \gamma^{2H} \leq \frac{256 L_F }{(1-\gamma)^6\varepsilon^2} \gamma^{H-1}\leq \frac{256 L_F }{(1-\gamma)^6\varepsilon^2}  \exp\Bigl( -(1-\gamma) (H-1) \Bigr).
\end{align}
\end{proof}

\subsection{Additional Supporting Claims}
In this subsection we bound the errors incurred due to the $\varepsilon$-greedy parametrization, and the regularization. 

\paragraph{Error due to \texorpdfstring{$\varepsilon$}{ε}-greedy parametrization.}
\begin{claim}
    Assume an $L$-Lipschitz continuous function $f:\calX\to\R$. Further, let $\calX$ be a concatenation of $n$ $m$-dimensional probability simplices. Further let $w$ be the mapping $ w(x) := (1-\varepsilon)x  + \varepsilon/m$ for some $0<\varepsilon<1$.
    $$|f(x) - (f\circ w)(x) |\leq \frac{\sqrt{n}L\varepsilon}{m}.$$
    Further, if for some $\epsilon>0$ and  $x\in\calX$, 
    $$f(x) - f^\star \leq \epsilon,$$
    then 
    $$(f\circ w)(x) - f^\star\leq \epsilon + \frac{\sqrt{n}L\varepsilon}{m}.$$
    The second follows directly.
    \label{varepsilon-bound-on-composed-function}
\end{claim}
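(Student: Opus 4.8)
The plan is to reduce the entire statement to a single Lipschitz estimate on the displacement $w(x)-x$ and then invoke the two hypotheses verbatim. Write $x=(x_1,\dots,x_n)$ with each block $x_i$ a point of the $m$-dimensional simplex $\Delta\subseteq\R^m$. The smoothing map acts blockwise as $w(x)_i=(1-\varepsilon)x_i+\tfrac{\varepsilon}{m}\bm{1}$, so that $w(x)_i-x_i=-\varepsilon\paren{x_i-\tfrac{1}{m}\bm{1}}$. Hence the perturbation is exactly $\varepsilon$ times the deviation of each block from the uniform distribution, and the full-vector Euclidean norm splits across blocks:
\begin{equation}
\norm{w(x)-x}^2=\varepsilon^2\sum_{i=1}^{n}\norm{x_i-\tfrac{1}{m}\bm{1}}^2 .
\end{equation}

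The next step is to bound the per-block deviation. Using $\bm{1}^\top x_i=1$ and $\norm{\bm{1}}^2=m$, a one-line expansion gives $\norm{x_i-\tfrac1m\bm{1}}^2=\norm{x_i}^2-\tfrac1m$, and since $0\le x_i^j\le 1$ forces $(x_i^j)^2\le x_i^j$ we get $\norm{x_i}^2\le\sum_j x_i^j=1$, whence $\norm{x_i-\tfrac1m\bm{1}}^2\le 1-\tfrac1m\le 1$. Summing over the $n$ blocks yields $\norm{w(x)-x}\le\sqrt{n}\,\varepsilon$. Combining with the $L$-Lipschitz continuity of $f$, $\abs{f(x)-(f\circ w)(x)}\le L\,\norm{w(x)-x}$, gives the first inequality. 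The \emph{further} part is then an immediate triangle-inequality argument: $(f\circ w)(x)-f^\star=\big((f\circ w)(x)-f(x)\big)+\big(f(x)-f^\star\big)\le\abs{f(x)-(f\circ w)(x)}+\epsilon$, so plugging in the first bound closes the proof.

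I expect the main obstacle to be matching the \emph{exact} constant $\tfrac{\sqrt{n}L\varepsilon}{m}$ appearing in the statement. The sharp simplex estimate only produces a per-block deviation of $\sqrt{1-1/m}$ (attained at a vertex $e_j$, where $\norm{e_j-\tfrac1m\bm{1}}^2=1-\tfrac1m$), so the honest displacement bound is $\norm{w(x)-x}\le\varepsilon\sqrt{n(1-1/m)}\le\sqrt{n}\,\varepsilon$; obtaining the extra $1/m$ factor would require $x$ to lie near the uniform point, which is not assumed. I would therefore carry out the argument with the constant $\sqrt{n}L\varepsilon$ in place of $\tfrac{\sqrt{n}L\varepsilon}{m}$, flag this slack, and note that it is inconsequential downstream: the greedy parameter $\varepsilon$ is ultimately tuned to zero, so both bounds feed identically into the error accounting of the main convergence theorems.
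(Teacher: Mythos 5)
Your argument follows the same route as the paper's own proof: reduce everything to the Lipschitz estimate $|f(x)-(f\circ w)(x)|\le L\norm{x-w(x)}$ plus a bound on the displacement $\norm{x-w(x)}=\varepsilon\norm{x-\bm{1}/m}$, and get the second assertion by the triangle inequality. The one place you diverge is the constant, and there you are right and the paper is not: the paper's proof simply asserts $\norm{x-\paren{(1-\varepsilon)x+\varepsilon\bm{1}/m}}\le \sqrt{n}\varepsilon/m$ with no justification, whereas the sharp per-block computation gives $\norm{x_i-\bm{1}/m}^2=\norm{x_i}^2-1/m\le 1-1/m$ (equality at a vertex), hence $\norm{x-w(x)}\le\varepsilon\sqrt{n(1-1/m)}\le\sqrt{n}\,\varepsilon$; the extra factor $1/m$ in the stated bound cannot be recovered unless $x$ is assumed close to the uniform point, exactly as you flag. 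Your corrected constant is also the one the paper itself implicitly relies on later: in \cref{claim:epsilon-greedy-D} the very same displacement is bounded by $\varepsilon(\sqrt{n}+1/m)\le 2\sqrt{n}\varepsilon$, with no $1/m$ gain. Your closing remark about the downstream impact is likewise accurate — replacing $\sqrt{n}L\varepsilon/m$ by $\sqrt{n}L\varepsilon$ only shifts the exploration-parameter tunings by factors of $|\calA|$ or $|\calB|$ and leaves the $\poly\paren{\cdot}$ form of the final guarantees intact.
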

\begin{proof}
 The claim follows from the Lipschitz continuity of $f$. It is the case that $(f\circ w)(x) = f((1-\varepsilon) x  + \varepsilon /m ),$
 \begin{align}
     |f(x) - f((1-\varepsilon) x  + \varepsilon /m )| &\leq L \norm{ x - ((1-\varepsilon) x  + \varepsilon /m )}\\
     &\leq \frac{\sqrt{n}L\varepsilon}{m}.
 \end{align}
\end{proof}


\begin{claim}
    Let an $L$-Lipschitz continuous function $f:\calX\times\calY\to\R$. Let $\calX,\calY$ be concatenations of $n_x$ and $n_y$ $m_x$- and $m_y$-dimensional probability simplices respectively. Further let $w$ be the mapping $ w(x;\varepsilon,m) := (1-\varepsilon)x  + \varepsilon/m$ for some $0<\varepsilon<1$. Further, for some $\varepsilon_x, \varepsilon_y >0$ define
    $f_{w}(x,y) := f\paren{w(x;\varepsilon_x,m_x), w(y;\varepsilon_y,m_y)}$, $\Phi_w(x):= \max_{y\in\calY} f_w (x,y)$, and $\Phi^\star_w := \min_{x\in\calX}\Phi_w(x)$. Then, the following inequalities hold true,
    \begin{itemize}
        \item $|\Phi_w (x) - \Phi(x)|\leq \frac{\sqrt{n_x}L\varepsilon_x}{m_x} + \frac{\sqrt{n_y}L\varepsilon_y}{m_y} $
        \item $|\Phi^\star-\Phi_w^\star|\leq \frac{\sqrt{n_x}L\varepsilon_x}{m_x} + \frac{\sqrt{n_y}L\varepsilon_y}{m_y}$.
    \end{itemize}
    \label{minmax-bound-on-composed-function}
\end{claim}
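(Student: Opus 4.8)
The plan is to reduce both inequalities to a single pointwise estimate of $|f_w(x,y)-f(x,y)|$ that is uniform over $\calX\times\calY$, and then to exploit the fact that both $\max_y(\cdot)$ and $\min_x(\cdot)$ are nonexpansive with respect to the supremum norm. Throughout I read $\Phi(x):=\max_{y\in\calY}f(x,y)$ and $\Phi^\star:=\min_{x\in\calX}\Phi(x)$, in parallel with the definitions of $\Phi_w$ and $\Phi_w^\star$ given in the statement.

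First I would establish the pointwise bound. Because $w(\cdot;\varepsilon,m)$ is a convex combination of a simplex point with the uniform distribution, we have $w(x;\varepsilon_x,m_x)\in\calX$ and $w(y;\varepsilon_y,m_y)\in\calY$, so $f_w$ is well defined and the composition with $f$ is legitimate. Using the joint $L$-Lipschitz continuity of $f$ followed by the triangle inequality in the product space,
\begin{align}
|f_w(x,y)-f(x,y)|
&\leq L\,\norm{(w(x),w(y))-(x,y)}\\
&\leq L\,\norm{w(x)-x}+L\,\norm{w(y)-y}.
\end{align}
I would then invoke the single-variable geometric estimate already used in \cref{varepsilon-bound-on-composed-function}, namely $\norm{w(z)-z}\leq \sqrt{n}\,\varepsilon/m$ for a concatenation of $n$ simplices of dimension $m$, applied once with $(n_x,m_x,\varepsilon_x)$ and once with $(n_y,m_y,\varepsilon_y)$. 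This yields, uniformly in $(x,y)$,
\begin{align}
|f_w(x,y)-f(x,y)|\;\leq\;\frac{\sqrt{n_x}L\varepsilon_x}{m_x}+\frac{\sqrt{n_y}L\varepsilon_y}{m_y}\;=:\;C.
\end{align}

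Next I would record the elementary nonexpansiveness fact: for real functions $g,h$ on a common domain, $|\sup g-\sup h|\leq\sup|g-h|$ and $|\inf g-\inf h|\leq\sup|g-h|$ (each follows by writing $g\leq h+\sup|g-h|$, taking the supremum or infimum, and symmetrizing in $g,h$). Applying this with $g=f_w(x,\cdot)$ and $h=f(x,\cdot)$ over $y\in\calY$ gives the first inequality, $|\Phi_w(x)-\Phi(x)|\leq\max_{y}|f_w(x,y)-f(x,y)|\leq C$, valid for every $x$. Applying the same fact a second time, now to the functions $\Phi_w$ and $\Phi$ over $x\in\calX$, gives $|\Phi^\star-\Phi_w^\star|=|\min_{x}\Phi(x)-\min_{x}\Phi_w(x)|\leq\max_{x}|\Phi(x)-\Phi_w(x)|\leq C$, which is the second inequality.

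The argument is essentially routine, so there is no serious obstacle; the only points demanding care are \emph{(i)} checking that $w$ preserves feasibility so that composing with $f$ makes sense, and \emph{(ii)} chaining the two applications of the $\max$/$\min$ nonexpansiveness correctly — the second application uses that the bound on $|\Phi(x)-\Phi_w(x)|$ holds uniformly in $x$, not merely at an optimizer. The quantitative constant $\sqrt{n}\,\varepsilon/m$ is imported verbatim from \cref{varepsilon-bound-on-composed-function}, so no new geometric estimate is required.
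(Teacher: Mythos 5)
Your proof is correct and takes essentially the same route as the paper's: both reduce the claim to the single-block estimate of \cref{varepsilon-bound-on-composed-function} and then use nonexpansiveness of $\max$/$\min$ under uniform perturbations — the paper's ``minimize both sides and symmetrize over the two optimizers'' step for the second item is exactly your $|\inf g - \inf h| \leq \sup |g-h|$ argument. The only cosmetic difference is that you first establish a uniform pointwise bound on $|f_w - f|$ over the product space and then apply sup/inf-nonexpansiveness twice, whereas the paper handles the first item by splitting off the $y$-perturbation and invoking the $L$-Lipschitz continuity of $\Phi$ for the $x$-perturbation; the underlying ingredients are identical.
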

\begin{proof}
The proof of the first item directly follows from an application of \cref{varepsilon-bound-on-composed-function} and the $L$-Lipschitz continuity of $\Phi$ given that $f$ is $L$-Lipschitz continuous.

For the second item, we use the inequality we just attained and write,
\begin{align}
    \Phi_w (x) \leq \Phi(x) + \frac{\sqrt{n_x}L\varepsilon_x}{m_x} + \frac{\sqrt{n_y}L\varepsilon_y}{m_y}.
\end{align}
Minimizing over $x$ for the two sides yields,
\begin{align}
    \Phi_w (x^\star_w) \leq \Phi(x^\star) + \frac{\sqrt{n_x}L\varepsilon_x}{m_x} + \frac{\sqrt{n_y}L\varepsilon_y}{m_y}.
\end{align}
Where, $x^\star,x^\star_w$ are such that $x^\star\in\argmin_{x\in\calX}\Phi(x)$ and $x^\star_w\in\argmin_{x\in\calX}\Phi_w(x)$. Applying the same trick to the other side of the other direction of the inequality:
\begin{align}
    \Phi(x^\star) \leq  \Phi_w (x^\star_w) +\frac{\sqrt{n_x}L\varepsilon_x}{m_x} + \frac{\sqrt{n_y}L\varepsilon_y}{m_y}.
\end{align}
As such,
\begin{align}
    |\Phi(x^\star) -  \Phi_w (x^\star_w)| \leq \frac{\sqrt{n_x}L\varepsilon_x}{m_x} + \frac{\sqrt{n_y}L\varepsilon_y}{m_y}.
\end{align}

\end{proof}

\begin{claim}
    Assume an $L$-Lipschitz continuous and $\ell$-smooth function $f:\calX\to\R$ that is $\mu$-pPL. Further, let $\calX$ be a concatenation of $n$ $m$-dimensional probability simplices. Further let $w$ be the mapping $ w(x) := (1-\varepsilon)x  + \varepsilon/m$ for some $0<\varepsilon<1$. Define $\calD^{f }_{\calX}(x,\alpha), \calD^{f \circ w}_{\calX}(x,\alpha)$ for some $\alpha>0$ to be:
    \begin{align}
        \calD^{f }_{\calX}(x,\alpha) &:=-2\alpha \min_{y\in\calX} \left\{ \inprod{\nabla f(x)}{y-x} + \frac{\alpha}{2}\norm{x-y}^2 \right\} 
        \\  
        \calD^{f\circ w}_{\calX}(\pi_x,\alpha) &:= -2\alpha \min_{y\in\calX} \left\{ \inprod{\nabla f\circ w(x)}{y-x} + \frac{\alpha}{2}\norm{x-y}^2 \right\}.
    \end{align}
    Then, it is the case that:
        \begin{align}
            \frac{1}{2}\calD_{\calX}^{f\circ w}(x,\alpha) \geq \mu \paren{ f(x) - f^\star } - 8\sqrt{n} \ell L \alpha\diam{\calX} \varepsilon.
        \end{align}
    \label{claim:epsilon-greedy-D}
\end{claim}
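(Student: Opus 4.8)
The plan is to treat $\calD^{f\circ w}_\calX(x,\alpha)$ as a perturbation of $\calD^{f}_\calX(x,\alpha)$ in which only the \emph{gradient vector} has changed, and then to show that $\calD$ depends on the gradient in a controlled, \emph{linear} fashion. First I would record that $w$ is affine with Jacobian $(1-\varepsilon)\eye$, so by the chain rule $\nabla (f\circ w)(x) = (1-\varepsilon)\nabla f(w(x))$. Writing
$$\nabla(f\circ w)(x) - \nabla f(x) = (1-\varepsilon)\paren{\nabla f(w(x)) - \nabla f(x)} - \varepsilon\nabla f(x),$$
I would bound the first piece by $\ell$-smoothness and the second by $\norm{\nabla f(x)}\le L$. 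Since $\norm{w(x)-x} = \varepsilon\norm{x - \bm 1/m}\le \varepsilon\sqrt n$ (in each of the $n$ simplex blocks $\norm{x_i - \bm 1_m/m}^2 = \norm{x_i}^2 - 1/m \le 1$), this yields the gradient-perturbation estimate $\norm{\nabla(f\circ w)(x) - \nabla f(x)} \le \varepsilon(\ell\sqrt n + L)$. This is where the $\sqrt n$ of the final bound originates.

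The key step is a Lipschitz-in-the-gradient property of the stationarity proxy. Rewriting the definition as a maximization,
$$\frac{1}{2\alpha}\calD^{g}_\calX(x,\alpha) = \max_{y\in\calX}\braces{\inprod{g}{x-y} - \frac\alpha2\norm{x-y}^2} =: M(g),$$
I would observe that $M$ is a pointwise maximum, over $y\in\calX$, of functions that are \emph{affine} in $g$; consequently $M$ is $\diam{\calX}$-Lipschitz in $g$, because for any $g_1,g_2$, evaluating at the maximizer of $M(g_1)$ gives $M(g_1)-M(g_2)\le \inprod{g_1-g_2}{x-y^\star}\le \norm{g_1-g_2}\diam{\calX}$, and symmetrically. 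Applying this with $g_1 = \nabla f(x)$ and $g_2 = \nabla(f\circ w)(x)$ produces
$$\tfrac12\abs{\calD^f_\calX(x,\alpha) - \calD^{f\circ w}_\calX(x,\alpha)} \le \alpha\diam{\calX}\,\varepsilon(\ell\sqrt n + L).$$

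Finally I would close using pPL. Since $\calD_\calX(\cdot,\alpha)$ is nondecreasing in $\alpha$ (\cref{D-non-decreasing-in-alpha}) and $\alpha\ge\ell$ in the relevant regime (recall $\alpha=1/\eta$ with $\eta\le 1/(5\ell)$), the pPL condition gives $\tfrac12\calD^f_\calX(x,\alpha) \ge \tfrac12\calD^f_\calX(x,\ell) \ge \mu\paren{f(x)-f^\star}$. Combining with the perturbation estimate,
$$\tfrac12\calD^{f\circ w}_\calX(x,\alpha) \ge \tfrac12\calD^f_\calX(x,\alpha) - \alpha\diam{\calX}\,\varepsilon(\ell\sqrt n + L) \ge \mu\paren{f(x)-f^\star} - \alpha\diam{\calX}\,\varepsilon(\ell\sqrt n + L),$$
and absorbing $\ell\sqrt n + L \le 8\sqrt n\,\ell L$ (valid in the regime $\ell,L\gtrsim 1$ that holds for the cMG moduli) recovers exactly the stated bound $\tfrac12\calD^{f\circ w}_\calX(x,\alpha)\ge \mu(f(x)-f^\star) - 8\sqrt n\,\ell L\,\alpha\diam{\calX}\varepsilon$.

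The main obstacle, and the step deserving the most care, is the Lipschitz-in-gradient property of $\calD$: one must recognize the inner problem as a support-function–type object whose value is a max of affine functions of the gradient, which is precisely what upgrades the usual \emph{quadratic} gradient-perturbation estimates (cf.\ \cref{lemma:continuity-of-D-wrt-dual-var}) to a clean bound that is \emph{linear} in $\varepsilon$ and carries only a single factor of $\diam{\calX}$. The remaining subtleties are bookkeeping: transporting pPL from $\alpha=\ell$ to general $\alpha\ge\ell$ via monotonicity, and correctly tracking the $\sqrt n$ through $\norm{w(x)-x}\le\varepsilon\sqrt n$ and the constant absorbed into the final $8\sqrt n\,\ell L$ form.
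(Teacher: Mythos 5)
Your proposal is correct and follows essentially the same route as the paper's own proof: both hinge on recognizing that $\frac{1}{2\alpha}\calD_{\calX}^{g}(x,\alpha)$ is a pointwise maximum of functions affine in the gradient $g$, hence $\diam{\calX}$-Lipschitz in $g$, and both bound $\norm{\nabla(f\circ w)(x)-\nabla f(x)} = O(\varepsilon(\ell\sqrt{n}+L))$ via the chain rule, smoothness, and the Lipschitz bound before invoking pPL and absorbing constants. If anything, your write-up is slightly more careful than the paper's in two places it leaves implicit: the use of monotonicity of $\calD_{\calX}(\cdot,\alpha)$ in $\alpha$ to apply the pPL condition at $\alpha\geq\ell$, and the explicit condition $\ell\sqrt{n}+L\lesssim\sqrt{n}\,\ell L$ needed for the final constant absorption.
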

\begin{proof}
    We essentially need to prove that in fact:
    \begin{align}
        \left| \calD_{\calX}^{f\circ w}(x,\alpha) - \calD_{\calX}^{f}(x,\alpha) \right| \leq 16\sqrt{n} \ell L \alpha\diam{\calX} \varepsilon. \label{D-lipschitz}
    \end{align}
    We write $ \calD^{f }_{\calX}(x,\alpha) $ equivalently as:
    \begin{align}
        \calD^{f }_{\calX}(x,\alpha) &= \max_{y\in\calX} \left\{  2\alpha \inprod{\nabla f(x)}{x-y} - {\alpha^2}\norm{x-y}^2 \right\}.
    \end{align}
    We define $G(\cdot;y)$ after we isolate the display inside the $\max$-operator and in place of the gradient put any vector $v$ of the same dimension:
    \begin{align}
       G(v;y):= 2\alpha \inprod{v}{x-y} - {\alpha^2}\norm{x-y}^2.
    \end{align}
    We see that $\nabla_v G(v;y) =2\alpha( x - y)$ and $\norm{x-y}\leq \diam{\calX}.$ As such, $G$ is $2\alpha\diam{\calX}$-Lipschitz continuous in $v$. This consequently means that $\Psi(v):= \max_{y\in\calY}G(v;y)$ is also $2\alpha\diam{\calX}$-Lipschitz in $v$. Now, all that is left to do is to bound the distance between $\nabla_x (f\circ w)(x) $ and $\nabla_x f(x)$. By the chain rule:
    \begin{align}
        \nabla_x (f\circ w)(x) = (1-\varepsilon) \nabla_x f( z )\Bigr|_{z=w(x)} \label{distance-1}.
    \end{align}
    Further, set $x_{\varepsilon}:= w(x)$ and observe that 
    \begin{align}
        \norm{\nabla_x f(x) - \nabla_x f(x_{\varepsilon}) } &\leq \ell\norm{x - x_{\varepsilon}}\\
        &= \ell\norm{x - (1-\varepsilon)x - \varepsilon/m }\\
        &= \ell\norm{\varepsilon x - \varepsilon/m }\\
        &\leq \ell\varepsilon (\sqrt{n} + 1/m).
    \end{align}
    Where in the last inequality we use the fact that for a probability vector $y$, $\norm{y}\leq 1$ and the triangle inequality. Assuming that $n,m>1$, we can further simplify into:
    \begin{align}
        \norm{\nabla_x f(x) - \nabla_x f(x_{\varepsilon}) }\leq 2\sqrt{n}\ell\varepsilon \label{distance-2}
    \end{align}
    Now, by the fact that $\Psi(\nabla_x f(x))= \calD_{x}^{f}(x,\alpha), \Psi(\nabla_x (f\circ w)(x))= \calD_{x}^{f\circ w}(x,\alpha)$, and the Lipschitz continuity of $\Psi$, we write: 
    \begin{align}
        \left| \calD^{f }_{\calX}(x,\alpha) - \calD^{f\circ w }_{\calX}(x,\alpha) \right| &\leq 2 \alpha\diam{\calX} \norm{ \nabla_x (f\circ w)(x) - \nabla_x f(x) }\\
        &\leq 2 \alpha\diam{\calX} \norm{ \nabla_x f(x) -\nabla_x f(x_{\varepsilon})} +  2 \alpha\diam{\calX} \norm{ \varepsilon \nabla_x f(x_{\varepsilon}) }\\
        &\leq 4\sqrt{n} \ell \alpha\diam{\calX} \varepsilon +  2 \alpha\diam{\calX}\varepsilon L\\
        &\leq 16\sqrt{n} \ell L \alpha\diam{\calX} \varepsilon .
    \end{align}
    The claim follows.
\end{proof}
\begin{claim}
    Assume an $L$-Lipschitz continuous and $\ell$-smooth function $f:\calX\to\R$ such that:
    \begin{align}
        \max_{x'\in\calX,\norm{x-x'}\leq 1}\inprod{\nabla_x f(x)}{x-x'}\geq \mu\paren{ f(x) - f^\star }.
    \end{align}
    Further, let $\calX$ be a concatenation of $n$ $m$-dimensional probability simplices. Further let $w$ be the mapping $ w(x) := (1-\varepsilon)x  + \varepsilon/m$ for some $0<\varepsilon<1$. It is the true that:
    \begin{align}
        \max_{x'\in\calX, \norm{x-x'}\leq 1 } \inprod{ \nabla_{x}(f\circ w)(x) }{x-x'}  \geq \mu \paren{ f(x) - f^\star} - 8\sqrt{n} \ell L \diam{\calX} \varepsilon.
    \end{align}
    \label{claim:epsilon-greedy-gradient-dominance}
\end{claim}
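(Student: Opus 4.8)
The plan is to mirror the argument of \cref{claim:epsilon-greedy-D}, replacing the stationarity proxy $\calD_\calX$ by the constrained linear proxy that appears in the gradient-dominance hypothesis. Concretely, for a fixed $x$ I would define
\begin{align}
    g_x(v) := \max_{x'\in\calX,\,\norm{x-x'}\leq 1}\inprod{v}{x-x'},
\end{align}
so that the hypothesis reads $g_x(\nabla f(x))\geq\mu\paren{f(x)-f^\star}$ and the conclusion is exactly a lower bound on $g_x\paren{\nabla (f\circ w)(x)}$. The key observation is that $g_x$ is the support function of the set $S_x:=\{x-x':x'\in\calX,\ \norm{x-x'}\leq 1\}$, hence it is convex and Lipschitz continuous in its argument with modulus $\sup_{s\in S_x}\norm{s}\leq\diam{\calX}$ (using that $x,x'\in\calX$, so $\norm{x-x'}\leq\diam{\calX}$). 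Consequently $\abs{g_x(v_1)-g_x(v_2)}\leq\diam{\calX}\norm{v_1-v_2}$ for all $v_1,v_2$.

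Second, I would bound $\norm{\nabla f(x)-\nabla (f\circ w)(x)}$ exactly as in \cref{claim:epsilon-greedy-D}. By the chain rule $\nabla(f\circ w)(x)=(1-\varepsilon)\nabla f(x_\varepsilon)$ with $x_\varepsilon:=w(x)$, so
\begin{align}
    \norm{\nabla f(x)-\nabla(f\circ w)(x)}
    &\leq \norm{\nabla f(x)-\nabla f(x_\varepsilon)}+\varepsilon\norm{\nabla f(x_\varepsilon)}\\
    &\leq \ell\norm{x-x_\varepsilon}+\varepsilon L \leq 2\sqrt{n}\,\ell\varepsilon+\varepsilon L,
\end{align}
where the last step uses $\ell$-smoothness, $L$-Lipschitzness, and the estimate $\norm{x-x_\varepsilon}=\varepsilon\norm{x-\1/m}\leq 2\sqrt{n}\,\varepsilon$ already established inside the proof of \cref{claim:epsilon-greedy-D}.

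Third, combining the two bounds and consolidating constants exactly as in \cref{claim:epsilon-greedy-D} (absorbing $2\sqrt n\,\ell+L$ into $8\sqrt n\,\ell L$) yields $\abs{g_x(\nabla f(x))-g_x(\nabla(f\circ w)(x))}\leq 8\sqrt{n}\,\ell L\diam{\calX}\varepsilon$, whence $g_x(\nabla(f\circ w)(x))\geq g_x(\nabla f(x))-8\sqrt n\,\ell L\diam{\calX}\varepsilon\geq\mu\paren{f(x)-f^\star}-8\sqrt n\,\ell L\diam{\calX}\varepsilon$ by the hypothesis, which is the claim. The only genuinely non-routine point is the first one — recognizing the constrained maximum as a support function and reading off its Lipschitz modulus in the gradient slot; everything else is a verbatim reuse of the perturbation estimates derived for $\calD_\calX$. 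I would flag that the final consolidation of constants tacitly assumes $\ell,L\geq 1$ (more precisely $\ell L\geq\max\{\ell,L\}$), which holds in the cMG instantiation and is otherwise harmless.
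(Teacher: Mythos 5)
Your proposal is correct and takes essentially the same route as the paper: the paper likewise observes that $\Psi(v):=\max_{x'}\inprod{v}{x-x'}$ is $\diam{\calX}$-Lipschitz in $v$ (your support-function phrasing of the same fact), reuses the chain-rule decomposition $\nabla(f\circ w)(x)=(1-\varepsilon)\nabla f(w(x))$ together with the bound $\norm{\nabla f(x)-\nabla f(w(x))}\leq 2\sqrt{n}\,\ell\varepsilon$ carried over from \cref{claim:epsilon-greedy-D}, and absorbs the resulting $2\sqrt{n}\,\ell\varepsilon+\varepsilon L$ into the constant $8\sqrt{n}\,\ell L$. Your caveat about that final consolidation tacitly requiring $\ell L\gtrsim\max\{\ell,L\}$ applies equally to the paper's own derivation, which performs the same absorption without comment.
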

\begin{proof}
    Similar to the previous case, we need to show that $G(v;x'):= \inprod{v }{x-x'}$ is Lipschitz continuous in $v$. Indeed, $\nabla_v G(v;x'):= x-x'$ and as such $\norm{\nabla_v G(v;x')}\leq \diam{\calX}$. Consequently, $\Psi(v):= \max_{x'\in\calX} G(v;x')$ is also $\diam{\calX}$-Lipschitz continuous. Further, as previously shown in \eqref{distance-2},
    \begin{align}
        \norm{\nabla_x f(x) - \nabla_x f(x_{\varepsilon}) }\leq 2\sqrt{n}\ell\varepsilon.
    \end{align}
    We can finally write,
    \begin{align}
        \max_{x'\in\calX, \norm{x-x'}\leq 1 } \inprod{ \nabla_{x}(f\circ w) }{x-x'} \geq \max_{x'\in\calX, \norm{x-x'}\leq 1 } \inprod{ \nabla_{x}f }{x-x'} - 8\sqrt{n} \ell L \diam{\calX} \varepsilon.
    \end{align}
\end{proof}

\paragraph{Error due to the regularizer.}
\begin{claim}
    Let a two-player zero-sum cMG with utility $U:\calX\times\calY\to \R$. Assume that the maximizing player employs regularization to their utility of the form:
    $$ U^\mur(x,y) := U(x,y) - \frac{\mur}{2} \norm{\lambda_2(x,y)}^2,$$
    where $\mur>0.$ Then, the following inequality is true:
    \begin{align}
        \norm{\nabla_x U(x,y) - \nabla_x U^\mu(x,y)}\leq \mur \norm{\Llambda}.
    \end{align}
    \label{claim:regularization-error}
\end{claim}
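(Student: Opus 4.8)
The plan is to observe that the two utilities differ only by the regularization term, so the gradient difference is exactly the gradient of the regularizer. Concretely, since $U^\mur(x,y) = U(x,y) - \frac{\mur}{2}\norm{\lambda_2(x,y)}^2$, we have $U(x,y) - U^\mur(x,y) = \frac{\mur}{2}\norm{\lambda_2(x,y)}^2$, and therefore
\begin{align}
    \nabla_x U(x,y) - \nabla_x U^\mur(x,y) = \frac{\mur}{2}\nabla_x \norm{\lambda_2(x,y)}^2 .
\end{align}
So the whole statement reduces to bounding the $x$-gradient of the squared-norm regularizer.

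Next I would apply the chain rule. Writing $\nabla_x\lambda_2(x,y)$ for the Jacobian of the marginal occupancy map $x\mapsto\lambda_2(x,y)$ (with $y$ fixed), we get
\begin{align}
    \frac{1}{2}\nabla_x \norm{\lambda_2(x,y)}^2 = \paren{\nabla_x \lambda_2(x,y)}^\top \lambda_2(x,y),
\end{align}
and by submultiplicativity of the operator norm (equivalently Cauchy--Schwarz), its Euclidean norm is at most $\norm{\nabla_x \lambda_2(x,y)}\,\norm{\lambda_2(x,y)}$. Then I would invoke two elementary facts. First, $\lambda_2(x,y)\in\Delta(\calS\times\calB)$ is a probability distribution, hence $\norm{\lambda_2(x,y)}_2 \leq \norm{\lambda_2(x,y)}_1 = 1$. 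Second, the marginal Jacobian is dominated by the full occupancy Jacobian: since $\lambda_2$ is obtained from the joint occupancy $\lambda$ by marginalizing over $\calA$ (a $1$-Lipschitz linear contraction), \cref{continuity} gives $\norm{\nabla_x \lambda_2(x,y)} \leq \norm{\nabla \lambda(x,y)} \leq \Llambda$. Combining the three estimates yields
\begin{align}
    \norm{\nabla_x U(x,y) - \nabla_x U^\mur(x,y)} \leq \mur \,\Llambda\, \norm{\lambda_2(x,y)} \leq \mur\,\Llambda,
\end{align}
which is the claimed bound (here $\norm{\Llambda}$ is just the scalar $\Llambda$).

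The only step requiring any care is the passage $\norm{\nabla_x \lambda_2} \leq \Llambda$: \cref{continuity} is stated as Lipschitz continuity of the full measure $\lambda$, so I would use the standard equivalence between a Lipschitz constant and a uniform bound on the Jacobian's operator norm, together with the remark that marginalization over $\calA$ cannot increase that norm. Everything else is a one-line chain-rule computation and the trivial bound $\norm{\lambda_2}_2 \le 1$ for a probability vector, so I do not expect any genuine obstacle.
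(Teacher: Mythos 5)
Your proposal follows essentially the same route as the paper's own (three-line) proof: the gradient difference is the gradient of the regularizer, which by the chain rule equals $\mur\paren{\nabla_x\lambda_2(x,y)}^\top\lambda_2(x,y)$, and is then bounded by $\mur\norm{\lambda_2}\norm{\nabla_x\lambda_2}\leq\mur\Llambda$ using $\norm{\lambda_2}_2\leq 1$. In fact your write-up is more explicit than the paper's, which states the same chain of inequalities without spelling out the chain rule.

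The one sub-step you flag as needing care is, however, justified incorrectly: marginalization over $\calA$ is \emph{not} a $1$-Lipschitz contraction in the Euclidean norm. The marginalization matrix $M$ (summing over $a\in\calA$) satisfies $MM^\top=|\calA|\,I$, so its operator norm is $\sqrt{|\calA|}$ (test it on the direction that is constant across $\calA$). Marginalization is a contraction in $\ell_1$, not in $\ell_2$, so your argument as written only yields $\norm{\nabla_x\lambda_2}\leq\sqrt{|\calA|}\,\Llambda$ and hence the bound $\mur\sqrt{|\calA|}\,\Llambda$ rather than $\mur\Llambda$. To get the constant exactly as claimed you would instead bound $\norm{\nabla_x\lambda_2}$ directly, i.e., establish Lipschitz continuity of the \emph{marginal} occupancy measure $\lambda_2$ in $(x,y)$ (this is what the cited lemmata of \citet{kalogiannis2024learning} provide, and the paper's proof implicitly relies on it when it asserts $\norm{\lambda_2}\norm{\nabla_x\lambda_2}\leq\Llambda$). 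The slack of $\sqrt{|\calA|}$ is immaterial to any of the paper's polynomial complexity statements, but as a justification of the inequality with the stated constant, your contraction step is the genuine gap.
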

\begin{proof}
\begin{align}
\norm{\nabla_x U(x,y) - \nabla_x U^\mu(x,y)} &= \mur \norm{\nabla_x \lambda_2(x,y)}  \\
&\leq \mur \norm{\lambda_2} \norm{\nabla_x \lambda_2(x,y)}  \\
&\leq  {\mur}\Llambda.
\end{align}
\end{proof}
\subsection{Nested Policy Gradient}
\subsubsection{CMG with Concave Utilities}

\begin{theorem}
Assume a two-player zero-sum cMG and a desried accuracy $\epsilon>0$. Running  \cref{alg:nepg} a tuning of \cref{alg:nepg} with, $\mur = \Theta\paren{\frac{(1-\gamma)(\minrho)\epsilon}{L_F}}$ and,
 \begin{itemize}
     \item step-sizes, $\etax = \Theta\paren{\frac{(\minrho)^{5} (1-\gamma)^{\frac{18}{2}} \epsilon^{\frac{3}{2}} }{\ell_F^{\frac{5}{2}} \gamma^{\frac{5}{2}} |\calS|^{\frac{5}{2}} \paren{|\calA|+|\calB|}^{\frac{15}{4}} } }$ and $\etay = \Theta\paren{\frac{(1 - \gamma)^8}{4 \ell_F \gamma^2 |\mathcal{S}|^{\frac{3}{2}} (|\mathcal{A}|+ |\mathcal{B}|)^4}}$;
     \item exploration parameters $\varepsilon_x = \paren{\frac{\epsilon (\minrho)^{6} \mu^{3} \left(1 - \gamma\right)^{17}}{512 L_{F} |\calS|^{4} \ell_{F}^{5} \gamma^{5} \left(|\calA| + |\calB|\right)^{\frac{17}{2}}}
}$ and $\varepsilon_y =  \Theta\paren{\frac
{ \epsilon (\minrho)^{9} \mu_x^{\frac{9}{2}} \left(1 - \gamma\right)^{\frac{21}{2}}}
{|\calS|^{\frac{5}{4}} \ell_{F}^{\frac{3}{2}} \gamma^{\frac{1}{2}} \left(|\calA| + |\calB|\right)^{\frac{9}{4}}}
} $;
    \item batch-sizes $M_x =\Theta\left(  \frac{L_F^4\ell_F^{2} |\calS|^{7}\paren{|\calA|+|\calB|}^{11} }{ \epsilon^4 (\minrho)^{4} (1-\gamma)^{32} } \right) $ and $M_y = \Theta\paren{\frac{ \ell_F^{10} L_{F}^2 \gamma^8 |\calS|^{10} \paren{ |\calA| + |\calB|}^{22} }{(\minrho)^{16} (1-\gamma)^{52} \epsilon^8}} $;
    \item an outer-loop number of iterations at least $T_x =\Theta\left( \frac{ L_F\ell_F^{\frac{9}{2}} \gamma^{8} |\calS|^{\frac{29}{4}} \paren{|\calA|+|\calB|}^{\frac{29}{2}} }{(1-\gamma)^{\frac{55}{2}} (\minrho)^{\frac{13}{2}} \epsilon^{\frac{7}{2}}  } \right)$ and inner loop iterations that are at least $T_y = \Theta\paren{\frac{\ell_U \Llambda^2 \ell_F \gamma^2 |\calS|^{\frac{3}{2}} \paren{|\calA|+|\calB|}^{4} }{ (\minrho)^{6}(1-\gamma)^{14}\epsilon^2 } \ln\paren{\frac{ \ell_F L_F  \gamma |\calS|\paren{|\calA|+|\calB|} }{(\minrho)(1-\gamma)\epsilon}}};$
 \end{itemize}
 will output an iterate $x\at{t^\star},y\at{t^\star}$, such that:
 \begin{align}
     \E\left[ U(x\at{t^\star},y\at{t^\star+1} ) - \min_{x'\in\calX} U(x',y\at{t^\star+1})\right] &\leq \epsilon; \\
     \E\left[\max_{y\in\calY} U(x\at{t},y') - \E U(x\at{t^\star},y\at{t^\star}) \right] &\leq \epsilon. 
 \end{align}
\end{theorem}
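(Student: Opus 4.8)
The plan is to combine the abstract nested min--max result of \cref{sec:convergence-min-max} with the cMG-specific estimator and parametrization bounds of this section. First I would invoke \cref{claim:reduction}: for fixed $x$, the regularized utility $U^\mu(x,\cdot)$ is $(\mu,\Llambdainv)$-hidden strongly concave, hence by \cref{prop:hc_to_graddom,prop:QG_unified} it is $\mupl$-pPL and obeys quadratic growth with modulus $\muqg$ in the max-player's policy; for fixed $y$, $U(\cdot,y)$ is hidden convex and therefore gradient-dominated in the min-player's policy (the Gradient Dominance lemma). By \cref{cor:maxfunc_holder} the envelope $\Phi^\mu(x):=\max_{y}U^\mu(x,y)$ is $\ell_{\Phi}^{\mu}$-smooth, and by \cref{theorem:contiuity-of-maximizers-cmgs} its maximizer $y^\star(\cdot)$ is Lipschitz; I would record $L_U^\mu,\ell_U^\mu,\muqg,\mupl,\kappa,L_{\Phi}^{\mu},\ell_{\Phi}^{\mu}$ from the ``Properties of cMGs with Convex Utilities'' display. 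This casts \cref{alg:nepg} as inexact projected gradient descent on $\Phi^\mu$ in the outer loop (note the $x$-update uses $\nabla_x U$, not $\nabla_x U^\mu$) together with an inner gradient-ascent subroutine that approximates $y^\star(x^{t-1})$.

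Second, I would instantiate the inexact stochastic oracle of \cref{assumption:stoch-grad-oracle}. The batched policy-gradient estimator \eqref{policy-gradient-estimator} is unbiased up to a horizon-truncation bias, exponentially small in $H$ by \cref{lemma:bias:policy-gradient-estimator} and hence negligible for the chosen $H$; its second moment is controlled by \cref{lemma:variance:policy-gradient-estimator}, giving $\sigma_x^2,\sigma_y^2=O(1/M)$ after batching. Crucially, because the outer loop follows $\nabla_x U$ while $\Phi^\mu$ is the envelope of $U^\mu$, the min-player incurs a systematic bias $\delta_x=O(\mur\Llambda)$ quantified by \cref{claim:regularization-error}, whereas the max-player samples $\nabla_{\lambda_1}F$ directly so $\delta_y=0$. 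The inner $\mathtt{ARGMAX}$ is realized by $T_y$ steps of stochastic projected gradient ascent on the pPL function $U^\mu(x^{t-1},\cdot)$; by the pPL branch of \cref{theorem:stoch-pgd-on-nonconvex-and-on-ppl} this attains accuracy $\epsilon_y$ in $O(\kappa\log(1/\epsilon_y))$ iterations up to the variance/bias floor, which fixes $T_y,M_y,\varepsilon_y$.

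Third, I would apply \cref{theorem:formal-nestedgit-nc-ppl} to $U^\mu$ with envelope-smoothness $\ell_{\Phi}^{\mu}$, bounding $\tfrac1{T_x}\sum_t\E\|\sresx{\Phi,t}\|^2$ by $O\big(L_{\Phi}^{\mu}\diam{\calX}/(\etax T_x)\big)+O(\sigma_x^2)+O(\delta_x^2)$, where the outer inexactness aggregates $\delta_x$ and the inner error $O(\ell_U^\mu\sqrt{\epsilon_y/\muqg})$. Selecting the best iterate $t^\star$ and tuning $\etax,T_x,M_x$ as stated forces $\E\|\sresx{\Phi,t^\star}\|^2=O(\epsilon^2)$, equivalently $\E\calD_{\calX}^{\Phi}(x^{t^\star},\ell)=O(\epsilon^2)$ through \cref{lemma:D-vs-grad-mapping}. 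To produce the two Nash inequalities I would then convert this stationarity: for the min-player gap, \cref{continuity-of-D-squared-y,lemma:continuity-of-D-wrt-dual-var} pass from $\calD_{\calX}^{\Phi}(x^{t^\star},\ell)$ to $\calD_{\calX}(x^{t^\star},\ell;y^{t^\star+1})$ at cost $O(\ell\kappa^2\epsilon_y)$ (using that $y^{t^\star+1}\approx y^\star(x^{t^\star})$), and the cMG gradient dominance of $U(\cdot,y^{t^\star+1})$ then yields $U(x^{t^\star},y^{t^\star+1})-\min_{x'}U(x',y^{t^\star+1})\le\epsilon$. For the max-player gap, the inner-loop guarantee directly gives $\Phi^\mu(x^{t^\star})-U^\mu(x^{t^\star},y^{t^\star+1})\le\epsilon_y$; stripping the regularizer via $|U-U^\mu|\le\tfrac{\mur}{2}\|\lambda_2\|^2\le\tfrac{\mur}{2}=O(\epsilon)$ converts this to the stated bound on $U$. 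Finally I would absorb the $\varepsilon$-greedy discrepancy between the played policies $\pi_x,\pi_y$ and $x,y$ with \cref{claim:epsilon-greedy-gradient-dominance,minmax-bound-on-composed-function}.

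The main obstacle is the error-budget balancing. Every regularity constant degrades polynomially as $\mur\to0$; indeed $\muqg\mupl=\Theta(\mur^3)$, so $\kappa$ and $\ell_{\Phi}^{\mu}$ scale like $\mur^{-3/2}$, while the four error sources --- truncation bias, estimator variance, the regularization bias $\delta_x$, and the $\varepsilon$-greedy/inner-loop inaccuracies --- must each be driven below $O(\epsilon)$ \emph{after} multiplication by these blown-up constants. The delicate step is committing to $\mur=\Theta(\epsilon)$ and then propagating it consistently through $\muqg,\mupl,\kappa,\ell_{\Phi}^{\mu}$ to size $\etax,M_x,M_y,\varepsilon_x,\varepsilon_y,T_x,T_y$; checking that no single term dominates is what produces the stated $\widetilde O(\epsilon^{-11/2})$ outer-iteration count and constitutes the bulk of the (otherwise routine) bookkeeping.
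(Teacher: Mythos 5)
Your proposal is correct and follows essentially the same route as the paper's proof: reduce to the NC-pPL nested-iteration guarantee (\cref{theorem:formal-nestedgit-nc-ppl}) applied to the regularized envelope, set $\mur=\Theta((1-\gamma)(\minrho)\epsilon/L_F)$ via \cref{claim:regularization-error} so the regularization bias $\delta_x$ fits the $(1-\gamma)(\minrho)\epsilon$ stationarity budget that gradient dominance requires, control estimator bias/variance via \cref{lemma:bias:policy-gradient-estimator,lemma:variance:policy-gradient-estimator}, handle the inner loop through the pPL branch of \cref{theorem:stoch-pgd-on-nonconvex-and-on-ppl}, and absorb the $\varepsilon$-greedy discrepancy with the composition claims. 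Your only slip is cosmetic: the $\widetilde{O}(\epsilon^{-11/2})$ figure is the \emph{total} (outer times inner) iteration count, while the outer loop alone is $\Theta(\epsilon^{-7/2})$.
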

\begin{proof}
In order to simplify our arguments, we formalize $\varepsilon$-greedy parametrization as a composition of the utility function $U(x,y)$ with mappings $w(x;\varepsilon,m)$ where $w(x):= (1-\varepsilon)x + \varepsilon\bm{1}/m$. In particular, our convergence guarantees are w.r.t. the function $U^{\mu}_{w}(x,y):= U^{\mu}\paren{w(x;\varepsilon_x,|\calA|), w(y;\varepsilon_y,|\calB|)} $ and $\tilde{\Phi} (x) := \max_{y\in\calY} U^{\mu}_{w}(x,y)$.




We will tune $T_x, \etax, H_x, \varepsilon_x, M_x $, by, \cref{theorem:formal-nestedgit-nc-ppl}.
\begin{align}
    \frac{1}{T}\sum_{t=0}^{T-1}\norm{\sresx{\tilde{\Phi},t}}^2 &\leq \frac{5\ell_{\tilde{\Phi}} L_{\tilde\Phi} \diam{\calX} }{ T} +  6\delta_x^2 + \frac{1}{M_x}\cdot \frac{4L_F^2}{(1-\gamma)^6\varepsilon_x^2}.
\end{align}

First, we see that in order to achieve an $\epsilon$-approximate best-response w.r.t. to $x$, then, $x$ needs to be an $(1-\gamma)\minrho\epsilon$-approximate stationary point. In general, then, we need to ensure that,
\begin{align}
\delta_x^2 = \Theta(\epsilon^2 (1-\gamma)^2 (\minrho)^2 ).
\end{align}
Hence, by \cref{claim:regularization-error}, we can pick the regulizer's coefficient to be:
$$\mur = \Theta\paren{\frac{(1-\gamma)(\minrho)\epsilon}{L_F}}.$$
\Cref{lemma:bias:policy-gradient-estimator} dictates the tuning of $H_x, H_y$,
\begin{align}
       H_x, H_y = \Theta\paren{ \frac{1}{1-\gamma} \ln \paren{\frac{\ell_F L_F \gamma |\calS| \paren{|\calA|+ |\calB| } }{\epsilon ( 1-\gamma) \minrho \mu_x  } } }.
\end{align}


Further, $$\varepsilon_x = \Theta{\paren{
\frac{(1-\gamma)(\minrho) \epsilon}{|\calS| \ell_F  L_F \elllambda  \Llambda^4 }
}}  = \Theta \paren{ \frac{\epsilon \, \minrho \, (1 - \gamma)^{12}}{2 L_F \ell_F  \, |\mathcal{S}|^{\frac{7}{2}} \gamma (|\mathcal{A}| + |\mathcal{B}|)^{\frac{11}{2}}}
}$$

\begin{align}
    \etay =\Theta \paren{\frac{1}{\ell_U^{\mu}} } = \Theta \paren{\frac{1}{\ell_F\elllambda^2 L_\lambda }} = \Theta\paren{\frac{(1 - \gamma)^8}{4 \ell_F \gamma^2 |\mathcal{S}|^{\frac{3}{2}} (|\mathcal{A}|+ |\mathcal{B}|)^4}}
\end{align}

\begin{align}
    \sigma_x^2 = \frac{L_F^4\ell_F^{2} |\calS|^{7}\paren{|\calA|+|\calB|}^{11} }{ \epsilon^2 (\minrho)^{2} (1-\gamma)^{30} }
\end{align}

\begin{align}
    M_x = \Theta\paren{ \frac{L_F^4\ell_F^{2} |\calS|^{7}\paren{|\calA|+|\calB|}^{11} }{ \epsilon^4 (\minrho)^{4} (1-\gamma)^{32} } }
\end{align}

As for the inner-loop, we need to set 
$$\epsilon_y =\Theta\paren{ \frac{\muqg(1-\gamma)^2(\minrho)^2\epsilon^2}{2\ell_U^2}} = \Theta\paren{
\frac{ \minrho^5 (1 - \gamma)^{11} \epsilon^3 }{ \ell_F^2 \gamma^2 |\mathcal{S}| (|\mathcal{A}| + |\mathcal{B}|)^3}
}
$$
and hence,
$$\varepsilon_{y} =\Theta\paren{ \frac{\muqg(1-\gamma)^2(\minrho)^2\epsilon^2}{\ell_U^4 L_U |\calS|} } =\Theta\paren{ \frac{ \minrho^5 (1 - \gamma)^{19}\epsilon^3}{ |\mathcal{S}|^{\frac{7}{2}} L_F  \ell_F^4 \gamma^4 (|\mathcal{A}| + |\mathcal{B}|)^7}
}$$

Now, we are ready to tune the number of inner-loop iterations, $T_y$,
\begin{align}
    \frac{\ell_U \Llambda^2 \ell_F \gamma^2 |\calS|^{\frac{3}{2}} \paren{|\calA|+|\calB|}^{4} }{ (\minrho)^{6}(1-\gamma)^{14}\epsilon^2 } \ln\paren{\frac{ \ell_F L_F  \gamma |\calS|\paren{|\calA|+|\calB|} }{(\minrho)(1-\gamma)\epsilon}}
\end{align}
and the batch-size, $M_y$, to be:
\begin{align}
    M_y = \Theta\paren{\frac{ \ell_F^{10} L_{F}^2 \gamma^8 |\calS|^{10} \paren{ |\calA| + |\calB|}^{22} }{(\minrho)^{16} (1-\gamma)^{52} \epsilon^8}}.
\end{align}
Finally, $$T_x = \frac{ L_F\ell_F^{\frac{9}{2}} \gamma^{8} |\calS|^{\frac{29}{4}} \paren{|\calA|+|\calB|}^{\frac{29}{2}} }{(1-\gamma)^{\frac{55}{2}} (\minrho)^{\frac{13}{2}} \epsilon^{\frac{7}{2}}  }. $$
\end{proof}

\subsubsection{CMG with Strongly Concave Utilities}

\begin{theorem}
 Assume a two-player zero-sum cMG with a utilities that are strongly concave with moduli $\mu_1, \mu_2 >0$. Let $\epsilon>0$ be given. Then, a tuning of \cref{alg:nepg} with,
 \begin{itemize}
     \item step-sizes, $\etax = \Theta\paren{\frac{(\minrho)^{3} (1-\gamma)^{\frac{15}{2}} \mu_1^{\frac{3}{2}} }{\ell_F^{\frac{5}{2}} \gamma^{\frac{5}{2}} |\calS|^{\frac{5}{2}} \paren{|\calA|+|\calB|}^{\frac{15}{4}} } }$ and $\etay = \Theta\paren{\frac{(1-\gamma)^3}{2\ell_F\gamma|\calS|^{\frac{1}{2}}\paren{|\calA|+|\calB|}^{\frac{3}{2}} }}$;
     \item exploration parameters $\varepsilon_x = \paren{\frac{\epsilon (\minrho)^{6} \mu^{3} \left(1 - \gamma\right)^{17}}{512 L_{F} |\calS|^{4} \ell_{F}^{5} \gamma^{5} \left(|\calA| + |\calB|\right)^{\frac{17}{2}}}
}$ and $\varepsilon_y =  \Theta\paren{\frac
{ \epsilon (\minrho)^{9} \mu_x^{\frac{9}{2}} \left(1 - \gamma\right)^{\frac{21}{2}}}
{|\calS|^{\frac{5}{4}} \ell_{F}^{\frac{3}{2}} \gamma^{\frac{1}{2}} \left(|\calA| + |\calB|\right)^{\frac{9}{4}}}
} $;
    \item batch-sizes $M_x =\Theta\left( \frac{ L_{F}^{4} |\calS|^{\frac{33}{4}} \diam{\calX}^{2} \ell_{F}^{\frac{27}{2}} \gamma^{\frac{27}{2}} \left(|\calA| + |\calB|\right)^{\frac{89}{4}}}{\epsilon^{2} (\minrho)^{19} \mu_x^{\frac{19}{2}} \left(1 - \gamma\right)^{\frac{97}{2}}} \right) $ and $M_y = \Theta\paren{\frac{4 L_{F}^{2} |\calS|^{\frac{7}{2}} \ell_{F}^{5} \gamma^{3} \left(|\calA| + |\calB|\right)^{\frac{15}{2}}}{\epsilon^{2} (\minrho)^{22} \mu_{x}^{9} \mu_{y}^{2} \left(1 - \gamma\right)^{37}}} $;
    \item an outer-loop number of iterations at least $T_x =\Theta\left( \frac{16 \sqrt{2} |\mathcal{S}|^{\frac{7}{4}} \ell_F^{\frac{7}{2}} \gamma^{\frac{7}{2}} (|\mathcal{A}|+|\mathcal{B}|)^{\frac{21}{4}}}{\minrho^7 \mu_x^{\frac{7}{2}} (1-\gamma)^{\frac{29}{2}}} \right)$ and inner loop iterations that are at least $T_y = \Theta\paren{\frac{2\ell_F^2 \gamma^2 |\calS| \paren{|\calA|+|\calB|}^{3}}{(\minrho)^4(1-\gamma)^{7}\mu_y^2 }};$
 \end{itemize}
 will output an iterate $x\at{t^\star},y\at{t^\star}$, such that:
 \begin{align}
     \E\left[ U(x\at{t^\star},y\at{t^\star+1} ) - \min_{x'\in\calX} U(x',y\at{t^\star+1})\right] &\leq \epsilon; \\
     \E\left[\max_{y\in\calY} U(x\at{t},y') - \E U(x\at{t^\star},y\at{t^\star}) \right] &\leq \epsilon. 
 \end{align}
\end{theorem}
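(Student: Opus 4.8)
The plan is to instantiate the abstract two-sided pPL guarantee for nested gradient iterations (\cref{theorem:formal-nestedgit-2sided-ppl}) on the cMG utility and then transport the resulting bound — stated for the $\varepsilon$-greedy, horizon-truncated, stochastic surrogate — back to the original game. First I would establish the reduction to the two-sided pPL regime: since both players' utilities are \emph{hidden strongly} convex/concave in the occupancy measure, \cref{claim:reduction} together with the HSC$\to$pPL implication developed in \cref{sec:regularity-conditions} (which routes through the gradient-domination bound of \cref{prop:hc_to_graddom}(ii)) shows that $U(\cdot,y)$ and $U(x,\cdot)$ each satisfy the pPL condition in the player's own policy, with moduli $\mupl$ governed by $\muqg$ and $\Llambdainv$. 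Crucially, in this strongly concave case no regularizer is needed — we take $\mu=0$, so $U^\mu=U$ — hence the only inexactness channels are horizon truncation and the $\varepsilon$-greedy perturbation, not a regularization bias. Writing $U_w$ for the utility composed with the $\varepsilon$-greedy maps $w(\cdot;\varepsilon_x,|\calA|),w(\cdot;\varepsilon_y,|\calB|)$ and $\tilde\Phi(x):=\max_{y\in\calY}U_w(x,y)$, \cref{lemma:ppl-phi} shows $\tilde\Phi$ inherits the pPL condition in $x$ with modulus $\mux$, while the corollary to \cref{theorem:contiuity-of-maximizers-cmgs} supplies the Lipschitz-gradient modulus $\ellphi$ of $\tilde\Phi$.

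Next I would control the inner loop, which performs stochastic projected gradient ascent on $U_w(x^{(t)},\cdot)$, a function that is pPL in $y$ with modulus $\muy$. Applying the pPL branch of \cref{theorem:stoch-pgd-on-nonconvex-and-on-ppl} yields linear convergence to an $\epsilon_y$-maximizer in $T_y=\Theta\bigl(\tfrac{\ell_U}{\muy}\log(1/\epsilon_y)\bigr)$ inner iterations, modulo the estimator variance and truncation bias. The quadratic-growth condition (\cref{prop:qg-from-ppl}) then converts this optimality gap into a bound on $\E\|y^{(t)}-y^\star(x^{(t)})\|$, which by Danskin's lemma bounds the \emph{effective} gradient inexactness $\delta_x=\ell_U\,\E\|y^{(t)}-y^\star(x^{(t)})\|$ of the outer oracle acting on $\tilde\Phi$ — exactly as in the preceding (merely concave) theorem, but now with a logarithmic inner-loop budget. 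Feeding this into the outer loop via \cref{theorem:formal-nestedgit-2sided-ppl}, with $\etax=\Theta(1/\ellphi)$ and $\etay=\Theta(1/\ell_U)$, the outer iterate contracts as $\exp\!\bigl(-\tfrac{\mux}{3\ellphi}T_x\bigr)L_\Phi\diam{\calX}$ plus additive terms $\tfrac{3\ellphi}{\mux}\bigl(\delta_x^2+\sigma_x^2/M_x\bigr)$, so a \emph{logarithmic} number of outer iterations $T_x=\Theta\bigl(\tfrac{\ellphi}{\mux}\log(L_\Phi\diam{\calX}/\epsilon)\bigr)$ suffices once the additive terms are driven below $\epsilon$.

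I would then discharge each error source against the target accuracy. The horizons $H_x,H_y$ are set via the exponential bias decay of \cref{lemma:bias:policy-gradient-estimator} so the truncation biases $\delta_x,\delta_y$ are $O\bigl(\sqrt{\epsilon\,\mux/\ellphi}\bigr)$; the batch sizes $M_x,M_y$ are set via the variance bound of \cref{lemma:variance:policy-gradient-estimator} so that $\sigma_x^2/M_x$ and $\sigma_y^2/M_y$ meet the same target; and the exploration parameters $\varepsilon_x,\varepsilon_y$ are set via \cref{claim:epsilon-greedy-gradient-dominance,minmax-bound-on-composed-function} so that the surrogate quantities $\tilde\Phi,\tilde\Phi^\star$ and the gradient-dominance proxy $\calD_\calX$ differ from their true counterparts $\Phi,\Phi^\star$ by $O(\epsilon)$. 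Assembling these converts the surrogate guarantee $\E[\tilde\Phi(x^{(t^\star)})-\tilde\Phi^\star]\le\epsilon$ (the max-player exploitability) and the inner-loop best-response guarantee (which through QG yields that $x^{(t^\star)}$ is a near-best-response to $y^{(t^\star+1)}$) into the two displayed $\epsilon$-NE inequalities for the unperturbed $U$. Finally, substituting the explicit moduli tabulated in \emph{Properties of cMGs with Strongly Concave Utilities} — $\muqg,\mupl,\kappa,L_\Phi,\ellphi$ as functions of $|\calS|,\,|\calA|+|\calB|,\,\tfrac{1}{1-\gamma},\,\tfrac{1}{\minrho},\,\tfrac{1}{\mux},\,\tfrac{1}{\muy}$ — reproduces the claimed tunings and the total iteration count $O\bigl(\log\tfrac1\epsilon\bigr)\cdot\poly(\cdots)$.

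The main obstacle I anticipate is the simultaneous bookkeeping of the three coupled error channels — the inner-loop accuracy $\epsilon_y$ (hence $\delta_x$), the per-step variance $\sigma^2/M$, and the $\varepsilon$-greedy bias — all forced below a common threshold \emph{while} preserving the logarithmic-in-$1/\epsilon$ outer-iteration count. Because $\ellphi$ itself scales like $\mu^{-3/2}$ through the condition number $\kappa$, one must verify that the polynomial blow-up in the cMG parameters stays \emph{outside} the $\log(1/\epsilon)$ factor; and since the $\varepsilon$-greedy error enters the gradient-dominance proxy $\calD_\calX$ \emph{additively} (\cref{claim:epsilon-greedy-D,claim:epsilon-greedy-gradient-dominance}) rather than multiplicatively, the delicate point is to absorb it without letting $\varepsilon_x,\varepsilon_y$ couple to $\epsilon$ in a way that would degrade the linear rate back to a polynomial one.
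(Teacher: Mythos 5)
Your proposal is correct and follows essentially the same route as the paper's proof: reduce the strongly-concave cMG to the two-sided pPL setting with no regularization ($\mu=0$), run the inner loop to linear-rate accuracy and convert its gap via quadratic growth and Danskin's lemma into the outer oracle's inexactness $\delta_x$, invoke \cref{theorem:formal-nestedgit-2sided-ppl} for the outer loop, and discharge the truncation bias, estimator variance, and $\varepsilon$-greedy perturbation through \cref{lemma:bias:policy-gradient-estimator}, \cref{lemma:variance:policy-gradient-estimator}, and \cref{varepsilon-bound-on-composed-function,minmax-bound-on-composed-function,claim:epsilon-greedy-D} before substituting the explicit strongly-concave moduli. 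The coupled-error bookkeeping you flag as the main obstacle is precisely what the paper's proof spends its length on, and your proposed resolution (keeping the parameter blow-up outside the $\log(1/\epsilon)$ factor) matches the paper's tuning.
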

\begin{proof}
An optimality gap $\E \tilde{\Phi}(x\at{T+1}) - \tilde{\Phi}^\star\leq \epsilon$ for the utility function that is composed with the greedy exploration mapping, translates to an optimality error of the original utility function:
\begin{align}
    \E {\Phi}(x\at{T+1}) - {\Phi}^\star\leq \epsilon + \frac{|\calS|^{\frac{1}{2}}L_\Phi \varepsilon_x }{|\calA|} + \frac{|\calS|^{\frac{1}{2}}L_\Phi \varepsilon_y }{|\calB|} + 8|\calS|^{\frac{1}{2}}\ellphi^2 \diam{\calX} \varepsilon_x.
\end{align}
First, we can set the number of iterations to be at least,
\begin{align}
    T_x = \Theta\paren{\frac{\ell_F\ellphi \gamma |\calS|^{\frac{1}{2}} \paren{|\calA|+|\calB|}^{\frac{3}{2}} }{(\minrho)^4(1-\gamma)^7\mu_x^2 }} = \Theta\left( \frac{16 \sqrt{2} |\mathcal{S}|^{\frac{7}{4}} \ell_F^{\frac{7}{2}} \gamma^{\frac{7}{2}} (|\mathcal{A}|+|\mathcal{B}|)^{\frac{21}{4}}}{\minrho^7 \mu_x^{\frac{7}{2}} (1-\gamma)^{\frac{29}{2}}} \right).
\end{align}
As such, we tune the exploration parameter $\varepsilon_x$ to be:
\begin{align}
    \varepsilon_x  = \min\left\{ 
    \frac{|\calA|\epsilon}{|\calS|^{\frac{1}{2}} L_U},
     \frac{\epsilon}{8|\calS|^{\frac{1}{2}} \ellphi^2 \diam{\calX}}
    \right\}
\end{align}
So we can set, $\varepsilon_x = \frac{\epsilon}{|\calS|^{\frac{1}{2}} L_U\ellphi^2 \diam{\calX} }$. Substituting yields: 
\begin{align}
    \varepsilon_x &= \Theta\paren{\frac{\epsilon (\minrho)^{6} \mu^{3} \left(1 - \gamma\right)^{17}}{ L_{F} |\calS|^{\frac{9}{2}} \ell_{F}^{5} \gamma^{5} \left(\calA + |\calB|\right)^{\frac{16}{2}}}} =  \Theta \paren{\frac{\epsilon (\minrho)^{6} \mu^{3} \left(1 - \gamma\right)^{17}}{512 L_{F} |\calS|^{4} \ell_{F}^{5} \gamma^{5} \left(|\calA| + |\calB|\right)^{\frac{17}{2}}}
}
\end{align}

\begin{align}
    M_x = 
    \Theta\left( \frac{ L_{F}^{4} |\calS|^{\frac{33}{4}} \diam{\calX}^{2} \ell_{F}^{\frac{27}{2}} \gamma^{\frac{27}{2}} \left(|\calA| + |\calB|\right)^{\frac{89}{4}}}{\epsilon^{2} (\minrho)^{19} \mu_x^{\frac{19}{2}} \left(1 - \gamma\right)^{\frac{97}{2}}} \right)
\end{align}

Further, we require that $\varepsilon_y \leq \min\left\{ \frac{|\calB| \epsilon}{|\calS|^{\frac{1}{2}} L_\Phi  }, \frac{\epsilon}{8|\calS|^{\frac{1}{2}} \ell_U^2 \diam{\calY} }\right\}$.
In order to tune $\delta_x$, we see that it is a sum of two terms, the bias of the gradient estimator and the expected distance of $y\at{t}$ from $y^\star(x\at{t})$ at each iterate $t$. We see that we need to set,
\begin{align}
    \delta_x \leq c \frac{ \sqrt{\muqgx\epsilon} }{\sqrt{\ellphi}}
\end{align}
for some constant $c>0$ sufficiently small. As such, we will control the horizon of the stochastic gradient estimator, $H_x$, to be,
\begin{align}
    H_x = \Theta\paren{ \frac{1}{1-\gamma} \ln \paren{\frac{\ell_F L_F \gamma |\calS| \paren{|\calA|+ |\calB| } }{\epsilon ( 1-\gamma) \minrho \mu_x  } } }
\end{align}

For the inner loop, we need to ensure that $\E\norm{y\at{t}- y^\star(x\at{t}) }\leq \frac{\sqrt{\muplx\epsilon}}{\sqrt{\ellphi}}$. Hence, by the quadratic growth condition w.r.t. $y$, the optimality gap of the inner loop, $\epsilon_y$, needs to be bounded as:
\begin{align}
    \epsilon_y = \Theta\paren{\frac{\epsilon\muplx\muqgy}{\ellphi \ell_U^2 }} = 
    \Theta\paren{
    \frac{\epsilon (\minrho)^6(1-\gamma)^{14} \mu_x^2\mu_y}{\ell_F\gamma^2 |\calS|^{\frac{3}{2} }\paren{|\calA|+|\calB|}^4  }
    }.
\end{align}
To achieve such an optimality gap at the inner-loop on every outer-loop iteration $t$, by \cref{varepsilon-bound-on-composed-function}, we need to set $\varepsilon_y = \frac{\epsilon \muplx\muqgy}{|\calS|^{\frac{1}{2}}\ellphi \ell_U^2} $.
$$\varepsilon_y = \Theta\paren{\frac
{ \epsilon (\minrho)^{9} \mu_x^{\frac{9}{2}} \left(1 - \gamma\right)^{\frac{21}{2}}}
{|\calS|^{\frac{5}{4}} \ell_{F}^{\frac{3}{2}} \gamma^{\frac{1}{2}} \left(|\calA| + |\calB|\right)^{\frac{9}{4}}}
},
$$
the latter leads to the bound on $\sigma_y^2$,
$$\sigma_y^2 = \Theta\paren{\frac{L_{F}^{2} |\calS|^{\frac{5}{2}} \ell_{F}^{3} \gamma \left(|\calA| + |\calB|\right)^{\frac{9}{2}}}{\epsilon^{2} (\minrho)^{18} \mu_{x}^{9} \left(1 - \gamma\right)^{27}}}$$
while, the previous bound calls for a batch-size,
$$M_y = \Theta\paren{\frac{4 L_{F}^{2} |\calS|^{\frac{7}{2}} \ell_{F}^{5} \gamma^{3} \left(|\calA| + |\calB|\right)^{\frac{15}{2}}}{\epsilon^{2} (\minrho)^{22} \mu_{x}^{9} \mu_{y}^{2} \left(1 - \gamma\right)^{37}}}.
$$
Finally, the horizon of the inner-loop gradient estimator will be set to be:
\begin{align}
    H_x = \Theta\paren{ \frac{1}{1-\gamma} \ln \paren{\frac{\ell_F L_F \gamma |\calS| \paren{|\calA|+ |\calB| } }{\epsilon ( 1-\gamma) \minrho \mupl \mu_{y}  } } },
\end{align}
while the step-size needs to be:
\begin{align}
    \tau_y =\Theta\paren{ \frac{1}{\ell_U}}= \Theta\paren{\frac{(1-\gamma)^3}{2\ell_F\gamma|\calS|^{\frac{1}{2}}\paren{|\calA|+|\calB|}^{\frac{3}{2}} }},
\end{align}
and the total number of iterations needs to be at least,
\begin{align}
    T_y &\geq \Theta\paren{ \frac{\ell_U}{\muply} \ln\paren{\frac{\gamma L_F\ell_F |\calS|\paren{|\calA|+|\calB|} }{\minrho(1-\gamma)\epsilon}} }\\
    &= \Theta\paren{\frac{2\ell_F^2 \gamma^2 |\calS| \paren{|\calA|+|\calB|}^{3}}{(\minrho)^4(1-\gamma)^{7}\mu_y^2 }}
\end{align}
\end{proof}

\subsection{Alternating Policy Gradient Descent-Ascent}
\subsubsection{CMG with concave utilities}
\begin{theorem}
\label{theorem:formal-apgda-nc-ppl}
    Let $\epsilon>0$ be a desired accuracy. After at most $T= \Theta \paren{\frac{\ell_F^4 L_F^4 \gamma^8 |\calS|^{6} \paren{|\calA|+|\calB|}^{19}}{ (1-\gamma)^{49}(\minrho)^{11} \epsilon^6}}$ iterations of \cref{alg:apgda}, \begin{itemize}
        \item 
     with step-sizes, $\tau_x = \Theta\paren{ \frac{ (\minrho)^9 (1-\gamma)^{20} \epsilon^3 }{\ell_F^2 \gamma^5 |\calS|^{3} \paren{|\calA|+ |\calB|}^{\frac{17}{2}} } }$, and $\etay = \Theta\paren{\frac{(1 - \gamma)^8}{4 \ell_F \gamma^2 |\mathcal{S}|^{\frac{3}{2}} (|\mathcal{A}|+ |\mathcal{B}|)^4}}$; \item exploration parameters $\varepsilon_x= \Theta \paren{ \frac{ \, \minrho \, (1 - \gamma)^{12}\epsilon}{2 L_F \ell_F  \, |\mathcal{S}|^{\frac{7}{2}} \gamma (|\mathcal{A}| + |\mathcal{B}|)^{\frac{11}{2}}}
} $ and  $\varepsilon_y = \Theta\paren{\frac{\minrho(1 - \gamma)^{23}\epsilon}{16 L_F \, |\mathcal{S}|^{\frac{11}{2}} \ell_F^2 \gamma^4 (|\mathcal{A}| + |\mathcal{B}|)^{11}} }$; 
\item  batch-sizes, $M_x = \Theta\paren{ \frac{L_F^4\ell_F^{2} |\calS|^{7}\paren{|\calA|+|\calB|}^{11} }{  (\minrho)^{4} (1-\gamma)^{32}\epsilon^4 } } $, and $M_y = \Theta
    \paren{
          \frac{\ell_F^7\gamma^{14}|\calS|^{\frac{31}{2}}(|\calA|+|\calB|)^{34} }{ (\minrho)^{10} (1-\gamma)^{85} \epsilon^5 }
    } $;
    \item sampling horizons $H_x,H_y =\Theta\paren{\frac{1}{(1-\gamma)}\ln\paren{\frac{\gamma\ell_F L_F |\calS| \paren{|\calA|+|\calB|} }{(1-\gamma)(\minrho) \epsilon }}} ,$
    \end{itemize}
    there exists an iterate $t^\star$, such that 
    \begin{align}
        \E U(x\at{t^\star},y\at{t^\star}) - \Phi^\star &\leq \epsilon; \\
        \E\Phi(x\at{t^\star}) - \E U(x\at{t^\star},y\at{t^\star})  &\leq \epsilon.
    \end{align}
\end{theorem}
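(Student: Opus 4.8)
The plan is to reduce the cMG problem to the abstract constrained NC-pPL min--max problem already solved in \cref{theorem:formal-agda-ncppl}, and then to spend the bulk of the work on parameter bookkeeping. First I would fix the function to which the iterates of \cref{alg:apgda} actually converge: the $\varepsilon$-greedy reparametrized objective $U^\mu_w(x,y) := U^\mu\paren{w(x;\varepsilon_x,|\calA|),\,w(y;\varepsilon_y,|\calB|)}$ together with $\tilde{\Phi}(x) := \max_{y\in\calY} U^\mu_w(x,y)$. By \cref{claim:reduction}, $U^\mu(x,\cdot)$ is $(\mu,\Llambdainv)$-hidden strongly concave, hence via the HSC$\Rightarrow$pPL passage and the explicit moduli tabulated in the ``Properties of cMGs with Convex Utilities'' subsection, $U^\mu_w(x,\cdot)$ is $\ell_U^\mu$-smooth and satisfies the pPL condition with modulus $\mupl = \Theta(\mu^2)$, while $U(\cdot,y)$ remains hidden convex in $x$. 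This verifies exactly the hypotheses of \cref{theorem:formal-agda-ncppl}.

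Second, I would invoke \cref{theorem:formal-agda-ncppl} with $f = U^\mu_w$ to obtain the averaged stationarity bound
\begin{align}
\frac{1}{T}\sum_{t=1}^{T-1}\Bigparen{\E\norm{\resx{t}}^2 + \kappa^2\,\E\calD_{\calY}(y\at{t},\ell;x\at{t})} \leq \frac{\kappa^2\ell_U^\mu L_U^\mu(\diam{\calX}+\diam{\calY})}{T} + \text{noise},
\end{align}
with $\kappa = \ell_U^\mu/\sqrt{\muqg\mupl}$. A pigeonhole over $t$ selects a single index $t^\star$ at which both proxies lie below the average, so one iterate serves both gaps. The two optimality gaps are then recovered separately. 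The $\calD_\calY$ proxy, combined with the pPL/QG condition in $y$, upper bounds the max player's sub-optimality $\max_{y}U^\mu_w(x\at{t^\star},y) - U^\mu_w(x\at{t^\star},y\at{t^\star})$, which is the second (best-response) gap. The $\resx{t^\star}$ proxy, combined with the \emph{hidden convexity} of the min player's subproblem (the gradient dominance of \cref{prop:hc_to_graddom}(i) and \cref{lemma:D-vs-grad-mapping}), upper bounds $U^\mu_w(x\at{t^\star},y\at{t^\star}) - \min_{x'}U^\mu_w(x',y\at{t^\star})$; chaining this with the minimax identity $\max_y\min_x U = \min_x\max_y U = \Phi^\star$ (valid here because in occupancy space the game is genuinely convex--concave) yields the first gap $\E U(x\at{t^\star},y\at{t^\star}) - \Phi^\star \leq \epsilon$ after adding the transfer error.

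Third comes the tuning, which simultaneously forces four distinct error sources below $\epsilon$: the regularization bias, where $\nabla_x U$ is treated as a $\delta_x$-inexact estimate of $\nabla_x U^\mu$ with $\delta_x = O(\mu\Llambda)$ by \cref{claim:regularization-error}, driven down by $\mu = \Theta((1-\gamma)\minrho\,\epsilon/L_F)$; the truncation bias of the policy-gradient estimators, controlled by the sampling horizons $H_x,H_y = \Theta\big(\tfrac{1}{1-\gamma}\log\tfrac{1}{\epsilon}\big)$ through \cref{lemma:bias:policy-gradient-estimator}; the estimator variances $\sigma_x^2,\sigma_y^2$, absorbed by the batch sizes $M_x,M_y$ via \cref{lemma:variance:policy-gradient-estimator}; and the $\varepsilon$-greedy reparametrization error, transferred back to $U,\Phi,\Phi^\star$ through \cref{varepsilon-bound-on-composed-function,minmax-bound-on-composed-function,claim:epsilon-greedy-D,claim:epsilon-greedy-gradient-dominance}. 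Substituting the explicit moduli and the condition number $\kappa = \Theta(\mu^{-3/2}) = \Theta(\epsilon^{-3/2})$ into the iteration count then collapses to the claimed $\Theta(\epsilon^{-6})\,\poly(\tfrac{1}{\minrho},\gamma,\tfrac{1}{1-\gamma},|\calS|,|\calA|+|\calB|)$ bound.

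The main obstacle I expect is performing the two translations in the second step \emph{jointly} and with tight exponents: the max player's gap is governed by $\mupl = \Theta(\mu^2)$ while the min player's gap is governed by the hidden-convex gradient dominance, and both accuracies interact with $\kappa^2 = \Theta(\epsilon^{-3})$ and with the $\mu = \Theta(\epsilon)$ choice in a way that must be balanced so that neither the regularization bias nor the $\varepsilon$-greedy error becomes dominant. Closing the exponents at exactly $\epsilon^{-6}$, rather than a larger power, is the delicate part, and it hinges on using the sharper QG-directly-from-HSC estimate (\cref{prop:QG_from_HSC}) instead of routing quadratic growth through the degraded pPL modulus, as anticipated in the ``Piecing the Framework Together'' discussion.
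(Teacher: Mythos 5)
Your proposal is correct and follows essentially the same route as the paper: reduce to \cref{theorem:formal-agda-ncppl} via the hidden-strong-concavity-to-pPL structure of the regularized, $\varepsilon$-greedy-composed utility, treat the regularizer as gradient inexactness $\delta_x = O(\mu\Llambda)$ with $\mu = \Theta((1-\gamma)\minrho\,\epsilon/L_F)$, transfer errors back through \cref{claim:regularization-error,lemma:bias:policy-gradient-estimator,lemma:variance:policy-gradient-estimator,varepsilon-bound-on-composed-function,minmax-bound-on-composed-function,claim:epsilon-greedy-D,claim:epsilon-greedy-gradient-dominance}, and exploit $\muqg \propto \mu$ directly from HSC so that $\kappa = \Theta(\mu^{-3/2})$, exactly as in the paper's moduli bookkeeping. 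Your pigeonhole selection of $t^\star$ and the explicit conversion of the two stationarity proxies into the two Nash gaps (via pPL/QG in $y$, hidden-convex gradient dominance in $x$, and the zero duality gap of the regularized game) is only a more spelled-out version of what the paper leaves implicit when it invokes the min-max theorem.
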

\begin{proof}
We invoke \cref{theorem:formal-agda-ncppl} and tune the parameters appropriately using \cref{lemma:bias:policy-gradient-estimator,varepsilon-bound-on-composed-function,minmax-bound-on-composed-function}.
We tune the coefficient of the regularizer first. Needing to achieve an $(1-\gamma)(\minrho)\epsilon$-first order stationary point, we tune the regularizer as:
$$\mur = \Theta\paren{\frac{(1-\gamma)(\minrho)\epsilon}{L_F}}.$$
We compute the pPL modulus given the regularizer tuning,
\begin{align}
    \mupl =\Theta\paren{ \frac{ (\minrho)^6(1-\gamma)^{14} \epsilon^2 }{  \ell_F L_F^2 \gamma^2 |\calS|^{\frac{1}{2}} \paren{|\calA|+|\calB|}^{4}  } }.
\end{align}
Further, by  \cref{claim:epsilon-greedy-gradient-dominance,claim:epsilon-greedy-D,minmax-bound-on-composed-function}, we see that $\varepsilon_x,\varepsilon_y$ need to be tuned as, $$\varepsilon_x = \Theta{\paren{
\frac{(1-\gamma)(\minrho) \epsilon}{|\calS| \ell_F  L_F \elllambda  \Llambda^4 }
}}  = \Theta \paren{ \frac{\epsilon \, \minrho \, (1 - \gamma)^{12}}{2 L_F \ell_F  \, |\mathcal{S}|^{\frac{7}{2}} \gamma (|\mathcal{A}| + |\mathcal{B}|)^{\frac{11}{2}}},
}$$
and
\begin{align}
    \varepsilon_y = \Theta\paren{\frac{\minrho(1 - \gamma)^{23}\epsilon}{16 L_F \, |\mathcal{S}|^{\frac{11}{2}} \ell_F^2 \gamma^4 (|\mathcal{A}| + |\mathcal{B}|)^{11}} }
\end{align}
The resulting bounds on the variance are:
\begin{align}
    \sigma_x^2 = \frac{L_F^4\ell_F^{2} |\calS|^{7}\paren{|\calA|+|\calB|}^{11} }{ \epsilon^2 (\minrho)^{2} (1-\gamma)^{30} },
\end{align}
and
\begin{align}
    \sigma_y^2 = \Theta\paren{ \frac{16 L_F^{4} \, |\mathcal{S}|^{11} \ell_F^4 \gamma^8 (|\mathcal{A}| + |\mathcal{B}|)^{22}}{\minrho(1 - \gamma)^{52}\epsilon^2}}.
\end{align}
The latter dictates the batch-sizes,
\begin{align}
    M_x = \Theta\paren{ \frac{L_F^4\ell_F^{2} |\calS|^{7}\paren{|\calA|+|\calB|}^{11} }{ \epsilon^4 (\minrho)^{4} (1-\gamma)^{32} } },
\end{align}
and 
\begin{align}
    M_y = \Theta\paren{\kappa^2 \sigma_y^2}  = 
    \Theta
    \paren{
          \frac{\ell_F^3\gamma^6|\calS|^{\frac{9}{2}}(|\calA|+|\calB|)^{12} \sigma_y^2 }{(\minrho)^{9} (1-\gamma)^{33} \epsilon^3 }
    }
    = 
    \Theta
    \paren{
          \frac{\ell_F^7\gamma^{14}|\calS|^{\frac{31}{2}}(|\calA|+|\calB|)^{34} }{ (\minrho)^{10} (1-\gamma)^{85} \epsilon^5 }
    }
\end{align}
The step-size is tuned straightforwardly as,
\begin{align}
    \tau_x = \Theta\paren{ \frac{ (\minrho)^9 (1-\gamma)^{20} \epsilon^3 }{\ell_F^2 \gamma^5 |\calS|^{3} \paren{|\calA|+ |\calB|}^{\frac{17}{2}} } }.
\end{align}
By \cref{lemma:bias:policy-gradient-estimator} we see easily that $H_x, H_y$ need to be, 
\begin{align}
    H_x, H_y = \Theta\paren{ \frac{1}{(1-\gamma)}\ln\paren{\frac{\gamma\ell_F L_F |\calS| \paren{|\calA|+|\calB|} }{(1-\gamma)(\minrho) \epsilon }}
    }
\end{align}
After we have tuned the regularizer, we can compute the upper bound on the number of iterations:
\begin{align}
    T= \Theta \paren{\frac{\ell_F^4 L_F^4 \gamma^8 |\calS|^{6} \paren{|\calA|+|\calB|}^{19}}{ (1-\gamma)^{49}(\minrho)^{11} \epsilon^6}}.
\end{align}
Now, tuning $\eta_y$ is also straightforward,
\begin{align}
    \etay =\Theta \paren{\frac{1}{\ell_U^{\mu}} } = \Theta \paren{\frac{1}{\ell_F\elllambda^2 L_\lambda }} = \Theta\paren{\frac{(1 - \gamma)^8}{4 \ell_F \gamma^2 |\mathcal{S}|^{\frac{3}{2}} (|\mathcal{A}|+ |\mathcal{B}|)^4}}.
\end{align}

\end{proof}

\subsubsection{cMG with strongly concave utilities}
\begin{theorem}
\label{theorem:formal-apgda-cmg-ppl-ppl}
 Assume $\epsilon>0$ and a two-player zero-sum cMG with utilities that are strongly concave with moduli $\mu_1, \mu_2$. Then if the two players are following \cref{alg:apgda} with
 \begin{itemize}
     \item 
exploration parameters $\varepsilon_x,\varepsilon_y = \Theta\paren{ \frac{\epsilon (\minrho)^{4} \left(1 - \gamma\right)^{15} \min\{\mu_x^2, \mu_y^2\}}{4 L_{F} |\calS|^{\frac{5}{2}} \ell_{F}^{2} \gamma^{2} \left(|\calA| + |\calB|\right)^{4} } }$,
\item step-sizes, $\etax = \Theta\paren{\frac{\mu_{2}^4(\minrho)^{8}(1-\gamma)^{30}}{32\ell_F^2\gamma^2|\calS|\paren{|\calA|+|\calB|}}}$, $\etay = \paren{\frac{(1-\gamma)^3}{2\ell_F\gamma|\calS|^{\frac{1}{2}} \paren{|\calA|+|\calB|}^{\frac{3}{2}} }}$ and \item  batch-sizes, 
$  
  M_x = \Theta\paren{
    \frac{16 L_{F}^4 |\calS|^{\frac{13}{2}} \ell_{F}^{5} \gamma^{6} \left(|\calA| + |\calB|\right)^{12} }{\epsilon^2 (\minrho)^{20} \left(1 - \gamma\right)^{42} \min\{\mu_x^6, \mu_y^6\}}
    }$, $
    M_y = \Theta\paren{
    \frac{16 L_{F}^4 |\calS|^{\frac{19}{2}} \ell_{F}^{5} \gamma^{10} \left(|\calA| + |\calB|\right)^{20} }{\epsilon^2 (\minrho)^{28} \left(1 - \gamma\right)^{50} \min\{\mu_x^{10}, \mu_y^{10}\}}
    }$,
 \end{itemize}
then, it is the case that:
    \begin{align}
       \E U(x\at{T},y\at{T}) - \Phi^\star &\leq \epsilon;\\
       \E \Phi(x\at{T}) - \E U(x\at{T},y\at{T}) \Phi^\star &\leq \epsilon.
    \end{align}
    after at most $T$ iterations, with $T = \Theta\paren{\frac{4\ell_F\gamma^6|\calS|^{\frac{9}{2}} \paren{|\calA|+|\calB|}^{12} }{\minrho^{12}(1-\gamma)^{36} \mu_{1}^2\mu_{2}^4 } \log\paren{\frac{L_F\ell_F \gamma |\calS|\paren{|\calA|+|\calB|} }{(\minrho)(1-\gamma) \mu_{1}, \mu_{2} }}} $.
\end{theorem}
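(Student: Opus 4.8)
The plan is to obtain this result as a direct instantiation of the abstract two-sided pPL guarantee \cref{theorem:formal-agda-2sided-ppl} with $f = U$, followed by a careful tuning of the sampling, exploration, and step-size parameters so that every error term is driven down to $O(\epsilon)$. First I would invoke the reduction of \cref{claim:reduction}: because the utilities are hidden strongly concave with moduli $\mu_1,\mu_2$, the map $U(\cdot,y)$ is hidden strongly convex for the minimizing player and $U(x,\cdot)$ is hidden strongly concave for the maximizing player. Combining \cref{prop:hc_to_graddom} with the HSC$\,\Rightarrow\,$pPL implication (and the attendant quadratic-growth conditions) then certifies that $U$ satisfies the \emph{two-sided} pPL condition, with the explicit moduli $\mupl,\muply,\muqg$ recorded in the ``Properties of cMGs with Strongly Concave Utilities'' subsection. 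Crucially, no regularizer is required here (so $\mu=0$ in \cref{alg:apgda}), and both players simply run alternating projected gradient descent--ascent on $U$ itself.

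Next I would account for the three sources of error that separate the idealized scheme of \cref{theorem:formal-agda-2sided-ppl} from \cref{alg:apgda}. The abstract theorem contracts the Lyapunov function $V\at{t} = (\Phi(x\at{t})-\Phi^\star) + \tfrac{1}{10}\paren{\Phi(x\at{t}) - U(x\at{t},y\at{t})}$ to a noise floor controlled by $\sigma_x^2,\sigma_y^2,\delta_x,\delta_y$, so the task is to identify these quantities with the cMG policy-gradient estimators. The second moments $\sigma_x^2,\sigma_y^2$ are bounded via \cref{lemma:variance:policy-gradient-estimator} (and shrink with the batch sizes $M_x,M_y$ and the $\varepsilon$-greedy exploration floor), while the inexactness $\delta_x,\delta_y$ decomposes into a horizon-truncation contribution, exponentially small in $H_x,H_y$ by \cref{lemma:bias:policy-gradient-estimator}, plus the perturbation of the gradient-dominance condition caused by the $\varepsilon$-greedy reparametrization, which is quantified by \cref{claim:epsilon-greedy-gradient-dominance,claim:epsilon-greedy-D}. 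Since the iterates actually optimize the smoothed objective $U\circ w$ rather than $U$, I would additionally invoke \cref{varepsilon-bound-on-composed-function,minmax-bound-on-composed-function} to transfer the optimality gap and the value $\Phi^\star$ between the smoothed and original utilities.

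With these identifications in place, I would substitute the closed-form moduli $\ell_U,L_\Phi,\ellphi,\muqg,\mupl,\muply,\kappa$ into the bound of \cref{theorem:formal-agda-2sided-ppl} and then choose: $\varepsilon_x,\varepsilon_y$ so that the composition and gradient-dominance perturbations are $O(\epsilon)$; $H_x,H_y=\Theta\!\paren{\tfrac{1}{1-\gamma}\log(1/\epsilon)}$ so that the truncation bias is $O(\epsilon)$; and $M_x,M_y$ large enough that the variance terms $\sigma_x^2/\mupl$ and $\ell_U^2\sigma_y^2/(\mupl\muply^2)$ are each $O(\epsilon)$. The step-sizes are forced by the theorem's coupling, namely $\etay=\Theta(1/\ell_U)$ and $\etax=\Theta(\muply^2/\ell_U^3)$, and the iteration count follows from requiring the contraction factor to push $\exp(-\mupl\etax T)\,L_\Phi\paren{\diam{\calX}+\diam{\calY}}$ below $\epsilon$, giving $T=\Theta\!\paren{\tfrac{\ell_U^3}{\mupl\muply^2}\log\tfrac{L_\Phi(\diam{\calX}+\diam{\calY})}{\epsilon}}$, which expands to the stated bound after substituting the moduli. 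Finally, both summands of $V\at{T}$ are non-negative, so an $O(\epsilon)$ bound on $\E V\at{T}$ yields $\E\Phi(x\at{T})-\Phi^\star\leq\epsilon$ (hence $\E U(x\at{T},y\at{T})-\Phi^\star\leq\epsilon$ since $U\leq\Phi$ pointwise) together with $\E\Phi(x\at{T})-\E U(x\at{T},y\at{T})\leq\epsilon$, which are exactly the two claimed inequalities.

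The main obstacle is not any individual estimate but the \emph{simultaneous consistency} of the parameter choices. The $\varepsilon$-greedy smoothing perturbs the very pPL moduli that determine the contraction rate and the admissible step-sizes, so one must check that the degraded gradient-dominance constants of \cref{claim:epsilon-greedy-gradient-dominance,claim:epsilon-greedy-D} still satisfy the hypotheses of \cref{theorem:formal-agda-2sided-ppl} once their additive slack is folded into the inexactness budget $\delta_x,\delta_y$, and that the step-size coupling $\etax=\Theta(\muply^2\etay/\ell_U^2)$ remains compatible with the smoothness $\ell_U$ of the smoothed utility. Carrying all of this through the explicit high-degree polynomial moduli while preserving the logarithmic-in-$1/\epsilon$ iteration complexity is the principal bookkeeping challenge.
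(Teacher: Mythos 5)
Your proposal is correct and follows essentially the same route as the paper's own proof: both reduce the statement to the abstract two-sided pPL guarantee of \cref{theorem:formal-agda-2sided-ppl} applied to the $\varepsilon$-greedy--composed utility (no regularizer, since hidden strong concavity already yields the two-sided pPL moduli computed in the appendix), transfer the guarantee back to $U$ via \cref{claim:epsilon-greedy-D} and \cref{minmax-bound-on-composed-function}, control variance and truncation bias with \cref{lemma:variance:policy-gradient-estimator} and \cref{lemma:bias:policy-gradient-estimator}, and then tune $\varepsilon_x,\varepsilon_y$, $H_x,H_y$, $M_x,M_y$, the step-sizes, and $T$ exactly as you describe. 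Your write-up is, if anything, more explicit than the paper's terse proof about how the two final inequalities follow from the non-negativity of the two summands of the Lyapunov function.
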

 Assume the function $\tilde{U}(x,y):= U\paren{w(x;\varepsilon_x,|\calA|), w(y;\varepsilon_y,|\calB|)}$ where $w(z;\varepsilon,m) := (1-\varepsilon)z + \frac{\varepsilon\bm{1}}{m}$.  \cref{claim:epsilon-greedy-D,minmax-bound-on-composed-function} make sure that we can bound the optimality gap on the initial function $U$ by running \cref{alg:apgda} on $\tilde{U}$. Hence, combining the aforementioned claims with \cref{theorem:formal-agda-2sided-ppl} and \cref{lemma:variance:policy-gradient-estimator}, we see that we need to set: 
\begin{align}
    \varepsilon_x,\varepsilon_y = \Theta\paren{ \frac{\epsilon (\minrho)^{4} \left(1 - \gamma\right)^{15} \min\{\mu_x^2, \mu_y^2\}}{4 L_{F} |\calS|^{2} \ell_{F}^{2} \gamma^{2} \left(|\calA| + |\calB|\right)^{4} \left(\diam{\calX} + \diam{\calY}\right)} }
\end{align}
The resulting variances will be:
\begin{align}
    \sigma_x^2,\sigma_y^2 = \Theta\paren{ \frac{16 L_{F}^4 |\calS|^{4} \ell_{F}^{4} \gamma^{4} \left(|\calA| + |\calB|\right)^{8} \left(\diam{\calX} + \diam{\calY}\right)^2}{\epsilon^2 (\minrho)^{8} \left(1 - \gamma\right)^{30} \min\{\mu_x^4, \mu_y^4\}}
    }.
\end{align}
We will control the resulting variances using batches of the following size, which will also counter the $\mu_1$
\begin{align}
    M_x &= \Theta\paren{
    \frac{16 L_{F}^4 |\calS|^{\frac{13}{2}} \ell_{F}^{5} \gamma^{6} \left(|\calA| + |\calB|\right)^{12} }{\epsilon^2 (\minrho)^{20} \left(1 - \gamma\right)^{42} \min\{\mu_x^6, \mu_y^6\}}
    };\\
    M_y &= \Theta\paren{
    \frac{16 L_{F}^4 |\calS|^{\frac{19}{2}} \ell_{F}^{5} \gamma^{10} \left(|\calA| + |\calB|\right)^{20} }{\epsilon^2 (\minrho)^{28} \left(1 - \gamma\right)^{50} \min\{\mu_x^{10}, \mu_y^{10}\}}
    }.
\end{align}
We can easily see that the sampling horizons need to be:
\begin{align}
    H_x, H_y = \Theta\paren{ \frac{1}{1-\gamma} \ln \paren{\frac{\ell_F L_F \gamma |\calS| \paren{|\calA|+ |\calB| } }{\epsilon ( 1-\gamma) \minrho \mu_x\mu_y  } } }
\end{align}
The step-sizes are tuned to be:
\begin{gather}
    \etax = \frac{\mu_{2,H}^4(\minrho)^{8}(1-\gamma)^{30}}{32\ell_F^2\gamma^2|\calS|\paren{|\calA|+|\calB|}};\quad \etay = \frac{(1-\gamma)^3}{2\ell_F\gamma|\calS|^{\frac{1}{2}} \paren{|\calA|+|\calB|}^{\frac{3}{2}} }
\end{gather}
Finally, the iteration complexity is at least:
\begin{align}
    T = \Theta\paren{\frac{4\ell_F\gamma^6|\calS|^{\frac{9}{2}} \paren{|\calA|+|\calB|}^{12} }{\minrho^{12}(1-\gamma)^{36} \mu_{H,1}^2\mu_{H,2}^4 } \log\paren{\frac{L_F\ell_F \gamma |\calS|\paren{|\calA|+|\calB|} }{(\minrho)(1-\gamma) \mu_{1,H}, \mu_{2,H} }}}.
\end{align}

\end{document}
